\documentclass[prx,aps,superscriptaddress,twocolumn,notitlepage,floatfix,10pt]{revtex4-2}
\usepackage{amsmath,mathtools,amsthm,amssymb,pifont}
\usepackage{mathrsfs}
\usepackage[utf8]{inputenc}
\usepackage[american]{babel}
\usepackage{graphicx,xcolor,bbold}
\usepackage{braket}
\usepackage{MnSymbol}
\usepackage{color}
\usepackage[colorlinks,
citecolor=red,
linkcolor=red,
urlcolor=red]{hyperref}
\usepackage{orcidlink}

\newcommand{\Tr}{\mathrm{Tr}}
\newcommand{\Fmu}{\mathcal{F}_{\boldsymbol\mu}}
\newcommand{\nuup}{\boldsymbol\nu^{\uparrow}}
\newcommand{\nudown}{\boldsymbol\nu^{\downarrow}}
\newcommand{\Ftwo}{\mathbb F_2}
\newcommand{\PW}{P_{W}}
\newcommand{\PiF}{\Pi_{\mathrm F}}
\newcommand{\cF}{c_{\mathrm F}}
\newcommand{\Fpm}{\mathcal F_{\mathrm P}}
\newcommand{\Afr}{\mathfrak A}
\newcommand{\Qa}{Q_{(1^4)}}
\newcommand{\Qsym}{Q_{(4)}}
\newcommand{\Qtt}{Q_{(2,2)}}
\newcommand{\Gm}{G_{\boldsymbol\mu}}

\usepackage{physics}
\usepackage{tikz}
\usetikzlibrary{calc,tikzmark,arrows.meta}

\newtheorem{theorem}{Theorem}
\newtheorem{proposition}{Proposition}
\newtheorem{lemma}{Lemma}
\newtheorem{corollary}{Corollary}
\newtheorem{conjecture}{Conjecture}
\theoremstyle{remark}

\begin{document}
\title{Spectral geometry of nonlocal stabilizer entropy}

\author{Piotr Sierant~\orcidlink{0000-0001-9219-7274}}
\email{piotr.sierant@bsc.es}
\affiliation{Barcelona Supercomputing Center Plaça Eusebi G\"uell, 1-3 08034, Barcelona, Spain}

\begin{abstract}
Magic, or nonstabilizerness, is the resource that promotes stabilizer operations to universal quantum computation.
For bipartite pure states, its component intrinsic to the correlations between the subsystems, the \emph{nonlocal magic}, is obtained by minimizing a magic measure over local unitaries.
For the stabilizer R\'enyi entropy (SRE), the minimum is conjectured to be attained by the computational-basis (CB) representative, the state obtained by assigning the Schmidt coefficients in decreasing order to matching computational-basis labels.
We prove this conjecture for two families of states at every system size and bipartition: states with dyadic-staircase Schmidt spectra and states of Schmidt rank at most six. 
For arbitrary bipartite pure states, we show that the CB value exceeds the nonlocal SRE by at most $4$, fixing the leading term of any divergent scaling of nonlocal SRE.
We further show that the nonlocal SRE grows at most logarithmically with the entanglement entropy.
Consequently, one-dimensional area-law states have bounded nonlocal SRE, while critical states with logarithmic entanglement entropy permit at most doubly logarithmic growth with system size.
We apply these results to the transverse-field Ising chain, where we tightly bound the nonlocal SRE in the gapped phases and identify its double-logarithmic growth at criticality.
\end{abstract}

\maketitle

\section{Introduction}
\label{sec:intro}
Entanglement and magic quantify complementary obstructions to classical descriptions of quantum systems.  
Entanglement enables tasks that local operations and classical communication alone cannot accomplish~\cite{Chitambar2019Resource, Horodecki2009Entanglement}, whereas magic, or nonstabilizerness, promotes stabilizer computation to universality~\cite{Bravyi2005magic,Veitch2014,Howard2017}. Neither resource implies the other: Clifford circuits can create highly entangled stabilizer states that remain classically tractable~\cite{Gottesman1998heis,Aaronson2004Simulation}, while a product of nonstabilizer states may contain extensive magic resources without any entanglement between qubits.

Entanglement has long been the organizing perspective on quantum many-body systems.  Scaling of the entanglement entropy with subsystem size separates the physically relevant corner of Hilbert space from the generic one: gapped ground states obey area laws~\cite{Srednicki1993,Hastings2007,Eisert2010,Ge2016AreaLaws,Anshu2022Spread} whose logarithmic violation at criticality measures the central charge of the underlying conformal field theory~\cite{Osterloh2002,Vidal2003,CalabreseCardy2004}. The structure of entanglement encodes phase information inaccessible to local order parameters: universal subleading constants and the entanglement spectrum diagnose topological order~\cite{KitaevPreskill2006,LevinWen2006,Li2008,Zhang2012MES}, and entanglement growth separates thermalizing from localized dynamics~\cite{Abanin2019, Sierant2025MBL}. 

Nonstabilizerness became accessible to investigations of many-body systems through the stabilizer R\'enyi entropy (SRE)~\cite{Leone2022SRE}, which is computed directly from expectation values of Pauli operators~\cite{Sierant2026Computing,Huang2025XOR,Xiao2026Sampling}.  
This direct formulation avoids the costly optimizations involved in other nonstabilizerness measures such as the relative entropy of magic, robustness of magic, and stabilizer extent~\cite{Veitch2014, Howard2017,Bravyi2019LowRank, Liu2022ManyBody}, whose evaluation is limited to a few qubits~\cite{Hamaguchi2024Handbook, Hamaguchi2025Faster}.
The SRE is a magic monotone for pure states~\cite{Haug2023Monotones, Leone2024Monotones} with operational meaning in stabilizer testing and fidelity estimation~\cite{Bittel2026Operational, Leone2023Fidelity, Bu2025Testing}, and can be efficiently estimated by tensor-network, Monte Carlo, and quantum algorithms~\cite{Lami23perfect, Tarabunga2024PauliMPS,
Haug23quanti, Haug2024Algorithms, Ding2025Sampling}.  
This tractability has opened magic to systematic many-body study, revealing structure in conformal ground states~\cite{White2021CFT,Hoshino2026CFT}, quantum critical points~\cite{Tarabunga2023ManyBody,Sarkar2020Critical,Tarabunga2024RokhsarKivelson}, and chaotic dynamics~\cite{Goto2022Chaos,Leone2021QuantumChaos,Turkeshi2025Spreading,Turkeshi2025Pauli} beyond that captured by entanglement alone.
Its dynamics have been studied after quenches and during thermalization~\cite{Sewell2022Thermalization, Rattacaso2023Quench, Tirrito25anti, Sierant2026magic}, as well as across monitored circuit transitions~\cite{Leone2024PhaseTransition, Niroula2024MagicTransition, Oliviero2021Measurements, Bejan2024Dynamical}, and recent work further delineated behavior of magic in many-body systems~\cite{Gu2025Separation,Zhang2024Advantage, Smith25pxp, Falcao25mbl,   Hallam2026Spectral, Sun2026TFD, Ebner2026Barrier, Zhang2026CoupledSYK, Li2026Stark, Li2026Fibonacci, Xiao2026Diffusive, Paviglianiti26true, Cao2026Sudden, Aditya26mbe, Jha2026su2, Salazar26frame, Bhakuni26free, Aditya26sprea, Sarma2026universal}.

Local basis changes can remove part of the magic of a bipartite state without changing its correlations across the bipartition. Magic measures evaluated in a fixed local basis do not distinguish this removable contribution from the magic intrinsic to those correlations.
\emph{Nonlocal magic} isolates the latter by minimizing a magic measure over all local unitaries~\cite{Cao2025magical, Qian2025Nonlocal, Sierant26exact, Viscardi26nonlocal}.   
  By construction, nonlocal magic is invariant under local unitaries: it is the magic that neither party can create or remove by acting on its own subsystem.
Nonlocal magic is the relevant quantity for magic-state distillation~\cite{Bao2022Distillation}, operator entanglement~\cite{Andreadakis2026Operator}, and holography, where it governs area operators, gravitational backreaction, and Schwinger pair production~\cite{Cao2023Area, Cao2025magical, Grieninger2026Holographic}.  It has been computed for high-energy scattering and decays~\cite{Robin2026Scattering, Low2026Wigner, Gargalionis2026Scattering, Banacki2026Higgs, Kadam26nuclear}, dense neutrino gases~\cite{Hite2026Neutrino}, resource harvesting from quantum fields~\cite{Cepollaro2025Harvesting}, and in works on gauge theories and gluon amplitudes~\cite{Chen2026SU2, Grieninger2026StringBreaking, Chu2026PhaseIndependent}. Nonlocal magic has also been measured on a superconducting quantum processor~\cite{Ahmad2025Experimental}.

The minimization defining the nonlocal SRE reintroduces a difficult optimization problem, now over a group of local unitaries whose dimension grows exponentially with subsystem size. Bounds based on entanglement-spectrum flatness, results for random states, and exact solutions for small systems constrain the nonlocal SRE without identifying a general optimizer~\cite{Cao2025magical,Tirrito2024Flatness,  Qian2025Nonlocal, Iannotti2025Random}.
A natural candidate for the optimizer is the sorted computational-basis Schmidt representative state~\cite{Torre2026Spectrum, Liu2026Schmidt, Franchini2026Schmidt, Huang2026Spectral}, obtained by placing the Schmidt coefficients in descending order on the binary labels of the computational basis.  The computational-basis (CB) optimality conjecture asserts that this representative maximizes stabilizer purity, equivalently minimizes the SRE, on the local-unitary orbit.  Analytic results cover a single qubit on one side, where the Schmidt rank is at most two~\cite{Qian2025Nonlocal, Franchini2026Schmidt}, and the conjecture is open in general.

In this work, we develop an analytic framework for the nonlocal SRE of bipartite pure states.
We reduce the optimization over two local unitaries exactly to one over a single unitary and express the stabilizer purity through an operator compression onto a stabilizer subspace.
Using this reduction, we prove the CB optimality conjecture for two families of spectra at every system size and every bipartition: dyadic-staircase spectra~\footnote{Here, dyadic refers to the powers of two. We group the computational-basis labels $x=0,1,\dots,d-1$ into the shells $\{0\}$, $\{1\}$, $\{2,3\}$, $\{4,\dots,7\}$, and so on, where the $k$-th shell consists of the $2^{k-1}$ labels with exactly $k$ binary digits. A dyadic-staircase spectrum is a Schmidt spectrum that is constant on each shell.} and spectra of Schmidt rank at most six.
For an arbitrary spectrum, we show that the nonlocal SRE is fixed, up to bounded constants, by the largest Schmidt probability of each dyadic shell, and that the CB value exceeds it by at most four, so that the conjecture holds up to an additive constant and all scaling laws derived from the CB value are unconditional.
The same construction bounds the nonlocal SRE by $3\log_{2}S_{1}+O(1)$ in terms of the entanglement entropy $S_{1}$ of the cut, which is exponentially stronger than the bounds available so far~\cite{Cao2025magical, Torre2026Spectrum} and confines the nonlocal SRE of area-law, critical, and volume-law states to $O(1)$, $O(\log\log N)$, and $O(\log N)$, respectively.
We apply these results to the transverse-field Ising chain, for which the nonlocal SRE diverges as $\log_{2}\ln N$ at the critical point and reaches finite plateaus in the gapped phases.

The remainder of this paper is organized as follows.
Section~\ref{sec:nlm} defines the nonlocal SRE and the CB optimality conjecture and gives an overview of our results.
In Sec.~\ref{sec:reduction}, we reduce the minimization over local unitaries to a single unitary.
Sections~\ref{sec:dyadic} and~\ref{sec:rank6} prove CB optimality for dyadic-staircase spectra and for spectra of rank at most six.
In Sec.~\ref{sec:witness}, we derive the bounds on the nonlocal SRE of an arbitrary state, and Sec.~\ref{sec:allrank} applies them to many-body systems.
Section~\ref{sec:conclusions} contains our conclusions and an outlook, and the details of proofs are given in the appendices.

\section{Overview}
\label{sec:nlm}

\begin{figure*}[t]
\centering
\begingroup
\definecolor{ovred}{HTML}{B2182B}
\definecolor{ovblue}{HTML}{2166AC}
\colorlet{ovredfg}{ovred!30}
\colorlet{ovredbg}{ovred!5}
\colorlet{ovbluefg}{ovblue!30}
\colorlet{ovbluebg}{ovblue!5}
\colorlet{ovgrayfg}{black!12}
\colorlet{ovgraybg}{black!3}
\def\ovlw{11.55}      
\def\ovgap{0.35}      
\def\ovrx{11.90}      
\def\ovtw{17.80}      
\def\ovh{8.05}        
\def\ovrowb{-2.30}    
\def\ovbranchh{3.60}  
\def\ovrowc{-6.25}    
\def\ovapph{1.80}     
\def\ovbtw{5.106}     
\def\ovatw{11.056}    
\tikzset{ovbox/.style={rounded corners=4pt, line width=1.1pt, align=left,
                       inner xsep=7pt, inner ysep=5pt}}
\begin{tikzpicture}[remember picture]
\node[ovbox, draw=ovgrayfg, fill=ovgraybg, inner xsep=11pt, inner ysep=7pt,
      anchor=north, align=center] (master) at (0.5*\ovlw,0) {\large
  $\mathcal M_{\mathrm{NL}}(\Psi)\;=\;
   \tikzmarknode{ovsnopt}{{\color{ovred}\displaystyle\min_{V_{A},V_{B}}}}\;
   \mathcal M_{2}\bigl((V_{A}\otimes V_{B})\ket{\Psi}\bigr)$\\[2pt]
  {\footnotesize a function of the Schmidt spectrum
   $\tikzmarknode{ovsnmu}{{\color{ovblue}\boldsymbol\mu}}$ alone
   \ \ (Eq.~\ref{eq:MNLsre})}};
\node[ovbox, draw=ovredfg, fill=ovredbg, anchor=north west,
      text width=\ovbtw cm, minimum height=\ovbranchh cm]
      (exact) at (0,\ovrowb) {%
  {\small\bfseries Exact optimizer}\\[4pt]
  {\footnotesize
   $\ket{\Psi_{\mathrm{CB}}}=\sum_x\mu_x\ket{x,x}$: descending\\
   $\mu_x$ on lexicographic labels\hfill(Conj.~\ref{conj:cb})\\[5pt]
   proved at any $N$ and any cut for\\[2pt]
   $\bullet$ dyadic staircases\hfill(Thm~\ref{thm:resDyadic})\\
   $\bullet$ Schmidt rank ${\le}\,6$\hfill(Thm~\ref{thm:resRank6})\\[5pt]
   $\Rightarrow$ no optimization left:\\[2pt]
   \mbox{}\hfill$\mathcal M_{\mathrm{NL}}
    =\mathcal M_{2}(\Psi_{\mathrm{CB}})$\hfill\mbox{}\par}};
\node[ovbox, draw=ovbluefg, fill=ovbluebg, anchor=north east,
      text width=\ovbtw cm, minimum height=\ovbranchh cm]
      (cert) at (\ovlw,\ovrowb) {%
  {\small\bfseries Bounds for an arbitrary state}\\[6pt]
  {\footnotesize
   \mbox{}\hfill$\begin{aligned}
     \mathcal M_{\mathrm{NL}}&=\mathcal M_{\mathrm{CB}}-c_{1}\\[1pt]
     \mathcal M_{\mathrm{CB}}&=-\log_{2}Q+c_{2}
   \end{aligned}$\hfill(Thm~\ref{thm:resValue})\\[7pt]
   $\bullet$ $Q=\sum_{k}\hat w_{k}^{4}$: shell participation\\
   $\bullet$ universal: $0\le c_{1}\le4$, $-10\le c_{2}\le4$\\[6pt]
   $\bullet$ search over comparators raises\\
   \phantom{$\bullet$ }the lower bound\hfill(Thm~\ref{thm:resComparator})\par}};
\coordinate (ovlevel) at ($(master.south)+(0,-0.25)$);
\tikzset{ovarrow/.style={-{Stealth[length=2.6mm,width=2.0mm]}, line width=1.1pt,
                         rounded corners=3pt, shorten >=0.6pt}}
\draw[ovarrow, ovred]
  (ovsnopt.south |- master.south) -- (ovsnopt.south |- ovlevel)
  -- (exact.north |- ovlevel) -- (exact.north);
\draw[ovarrow, ovblue]
  (ovsnmu.south |- master.south) -- (ovsnmu.south |- ovlevel)
  -- (cert.north |- ovlevel) -- (cert.north);
\node[ovbox, draw=ovgrayfg, fill=ovgraybg, anchor=north west,
      text width=\ovatw cm, minimum height=\ovapph cm]
      (ceiling) at (0,\ovrowc) {%
  {\small\bfseries Universal ceiling and many-body consequences}\hfill
   {\footnotesize(Sec.~\ref{sec:allrank})}\\[4pt]
  {\footnotesize
   \mbox{}\hfill$\mathcal M_{\mathrm{NL}}\le\mathcal M_{\mathrm{CB}}
    \le3\log_{2}(2S_{1}{+}3)+8$,\qquad
   $\mathcal M_{\mathrm{NL}}\le3\log_{2}(n{+}1)+4$\hfill\mbox{}\\[5pt]
   \mbox{}\hfill area law $\Rightarrow O(1)$\hfill
   1D critical $\Rightarrow O(\log\log N)$\hfill
   volume law $\Rightarrow O(\log n)$\hfill\mbox{}\par}};
\draw[draw=ovgrayfg, fill=ovgraybg, rounded corners=4pt, line width=1.1pt]
  (\ovrx,0) rectangle (\ovtw,-\ovh);
\node[anchor=north west, text width=5.45cm, align=left, inner sep=0pt]
  at ($(\ovrx,0)+(0.24,-0.20)$) {%
  {\small\bfseries Transverse-field Ising chain}\\[2pt]
  {\footnotesize critical: $\mathcal M_{\mathrm{NL}}
   =\tfrac32\log_{2}\ln N+O(1)$\par}};
\node[anchor=south, inner sep=0pt]
  at ($(0.5*\ovrx+0.5*\ovtw,-\ovh)+(0,0.14)$)
  {\includegraphics[width=5.65cm]{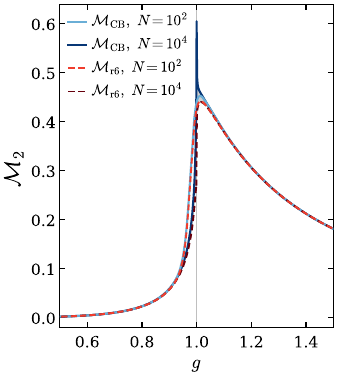}};
\end{tikzpicture}%
\endgroup
\caption{Overview of the results. The nonlocal SRE $\mathcal M_{\mathrm{NL}}$
minimizes the stabilizer R\'enyi-2 entropy over local unitaries,
Eq.~\eqref{eq:MNLsre}. \emph{Red branch}: 
we prove that the minimization is solved exactly by $\ket{\Psi_{\mathrm{CB}}}$ at
any system size and any cut for states $\ket{\Psi}$ with dyadic-staircase spectra and at Schmidt rank at most six (Theorems~\ref{thm:resDyadic} and~\ref{thm:resRank6}). \emph{Blue
branch}: for an arbitrary spectrum, and without assuming
Conjecture~\ref{conj:cb}, we prove that the nonlocal SRE is fixed up to universal constants
by the order-four participation $Q$ of the Schmidt weight across dyadic rank
shells (Theorem~\ref{thm:resValue}); the constant $c_{1}$ is the gap, bounded by $4$, between $\mathcal M_{\mathrm{CB}}$ and $\mathcal M_{\mathrm{NL}}$; a comparator search sharpens the lower bound for an individual spectrum
(Theorem~\ref{thm:resComparator}). \emph{Lower box}: nonlocal SRE is bounded from below by the logarithm of the entanglement
entropy. \emph{Right}: bounds for nonlocal SRE in the transverse-field
Ising chain. Solid curves show the CB value $\mathcal M_{\mathrm{CB}} \geq \mathcal{M}_{\mathrm{NL}}$ across
the transverse field $g$ for chains of $N=10^{2}$ and $10^{4}$ sites; dashed
curves show the rank-six comparator lower bound $\mathcal
M_{\mathrm{r6}}\le\mathcal M_{\mathrm{NL}}$ of
Theorem~\ref{thm:resComparator}, which is tight away from criticality; the shaded window between the two bounds
contains $\mathcal M_{\mathrm{NL}}$. At criticality
$\mathcal M_{\mathrm{NL}}=\frac32\log_{2}\ln N+O(1)$
(Sec.~\ref{ssec:critical}).}
\label{fig:overview}
\end{figure*}

\subsection{Nonlocal magic and the CB optimality}
We consider a pure state $\ket{\Psi}$ of $N=n_{A}+n_{B}$ qubits, divided into subsystems $A$ and $B$ of $n_{A}$ and $n_{B}$ qubits, with local Hilbert-space dimensions $d_{A}=2^{n_{A}}$ and $d_{B}=2^{n_{B}}$, respectively.
We label the computational basis (CB) of each factor by $\Ftwo^{n_{A}}$ and $\Ftwo^{n_{B}}$ and read the labels as binary integers.
Every pure state has a Schmidt decomposition~\cite{Nielsen2000Quantum}
  \begin{equation}
       \ket{\Psi}=\sum_{x=0}^{d-1}\mu_{x}\,
       \ket{\phi_{x}}_{A}\otimes\ket{\chi_{x}}_{B},
       \qquad \mu_{x}\ge0,\qquad \sum_{x}\mu_{x}^{2}=1,
       \label{eq:schmidt}
  \end{equation}
where $d:=\min(d_{A},d_{B})$ and the sets $\{\ket{\phi_{x}}\}\subset\mathbb C^{d_{A}}$ and $\{\ket{\chi_{x}}\}\subset\mathbb C^{d_{B}}$ are orthonormal.
We order the Schmidt coefficients as $\mu_{0}\ge\mu_{1}\ge\cdots\ge\mu_{d-1}$, and we denote the resulting Schmidt spectrum by $\boldsymbol\mu=(\mu_{0},\dots,\mu_{d-1})$.
Local basis changes act as $\ket{\Psi}\mapsto(V_A\otimes V_B)\ket{\Psi}$ with $V_A\in U(d_{A})$ and $V_B\in U(d_{B})$.

Magic, or nonstabilizerness, is the resource that promotes stabilizer protocols to universal quantum computation~\cite{Gottesman1998heis, Bravyi2005magic, Veitch2014, Howard2017}.  A magic monotone is a function $\mathfrak M$ of the state that vanishes precisely on stabilizer states, is invariant under Clifford unitaries, and does not increase under stabilizer protocols~\cite{Veitch2014, Liu2022ManyBody}.
Such a measure generally depends on the local bases: local unitaries preserve the correlations across the bipartition but can change $\mathfrak M$. The associated \emph{nonlocal magic} removes this dependence by minimizing over local unitaries,
\begin{equation}
     \mathfrak M_{\mathrm{NL}}(\Psi):=\min_{V_A\in U(d_A),\,V_B\in U(d_B)}
     \mathfrak M\bigl((V_A\otimes V_B)\ket{\Psi}\bigr).
     \label{eq:MNLmin}
\end{equation}
For pure states the Schmidt spectrum is a complete local-unitary invariant~\cite{Nielsen2000Quantum}: local unitaries leave the Schmidt values $\mu_{x}$ unchanged and can map any pair of Schmidt bases to any other. Hence $\mathfrak M_{\mathrm{NL}}$ is a function of $\{\mu_{x}\}$ alone.
The central difficulty is how to extract it from the spectrum without resorting to the formidable optimization over the local unitaries $V_A,V_B$.

In this work, we take $\mathfrak M$ to be the stabilizer $2$-R\'enyi entropy (SRE).  We denote by $\mathcal P_N$ the set of Hermitian Pauli strings on $N$ qubits, normalized by $\Tr(PQ)=2^N\delta_{P,Q}$.  For a pure $N$-qubit state the SRE is~\cite{Leone2022SRE}
\begin{equation}
       \mathcal M_{2}(\Psi)=-\log_{2}\sum_{P\in\mathcal P_{N}}
       \frac{\bigl(\bra{\Psi}P\ket{\Psi}\bigr)^{4}}{2^{N}} ,
       \label{eq:sre}
\end{equation}
and it is faithful, Clifford invariant, additive, and monotone under pure-state stabilizer protocols~\cite{Haug2023Monotones, Leone2024Monotones, Bittel2026Operational}.  
Specializing Eq.~\eqref{eq:MNLmin} to this choice gives the nonlocal SRE
\begin{equation}
       \mathcal M_{\mathrm{NL}}(\Psi)=\min_{V_{A}\in U(d_{A}),\,V_{B}\in U(d_{B})}
       \mathcal M_{2}\bigl((V_{A}\otimes V_{B})\ket{\Psi}\bigr),
       \label{eq:MNLsre}
\end{equation}
that we focus on in this work.

For the bipartite state $\ket{\Psi}$ of Eq.~\eqref{eq:schmidt}, we take as candidate minimizer in Eq.~\eqref{eq:MNLsre} the computational-basis representative
\begin{equation}
       \ket{\Psi_{\mathrm{CB}}}:=\sum_{x=0}^{d-1}\mu_{x}\,\ket{x}_{A}\otimes\ket{x}_{B},
       \label{eq:cbstate}
\end{equation}
where in each factor $\ket{x}$ denotes the computational basis vector labelled by the binary integer $x$. When $n_{A}\neq n_{B}$, the sum occupies only the first $d$ labels of the larger factor, that is, the labels whose leading $|n_{A}-n_{B}|$ bits vanish. 
Local unitaries $V_A, V_B$ can map the Schmidt vectors to the corresponding computational-basis vectors, so $\ket{\Psi_{\mathrm{CB}}}$ lies on the local-unitary orbit of $\ket{\Psi}$.
We study the following conjecture.

  \begin{conjecture}[CB optimality]
  \label{conj:cb}
  For every pure state $\ket{\Psi}$ and every bipartition $N=n_{A}+n_{B}$, the minimum in Eq.~\eqref{eq:MNLsre} is attained at the computational-basis representative of
  Eq.~\eqref{eq:cbstate},
  \begin{equation}
   \mathcal M_{\mathrm{NL}}(\Psi)=\mathcal M_{2}(\Psi_{\mathrm{CB}}).
   \label{eq:cbSRE}
  \end{equation}
  \end{conjecture}
This conjecture was proposed in Refs.~\cite{Liu2026Schmidt, Torre2026Spectrum, Franchini2026Schmidt} (see also Ref.~\cite{Busoni2026Qudits} for related conjectures in qudit systems).
Previous proofs cover $N=2$~\cite{Qian2025Nonlocal} and, more generally, every bipartition with $\min\{n_A,n_B\}=1$~\cite{Franchini2026Schmidt}.


Notably, for the nonlocal min-relative entropy of nonstabilizerness (NMRE), $D_{\mathrm{min}}^{\mathrm{NL}}$, the corresponding optimization has been solved in closed form at every system size and bipartition, with the sorted CB representative attaining the minimum~\cite{Sierant26exact,Viscardi26nonlocal}.
This quantity can be evaluated directly from the Schmidt spectrum and accessed experimentally through entanglement spectroscopy~\cite{Sierant26exact}. A related result was obtained for nonlocal trace distance magic~\cite{EbnerToAppear}.
The minimization for the nonlocal SRE is harder, because the SRE weighs the entire Pauli spectrum state, and Conjecture~\ref{conj:cb} remains open in general.

\subsection{Overview of the results}
\label{ssec:results}

In this section, we give an overview of our results, summarized in Fig.~\ref{fig:overview}. We begin with the two regimes in which CB optimality is established globally, and then turn to the two-sided bounds for the nonlocal SRE that apply to an arbitrary Schmidt spectrum.

\subsubsection{States with proven CB optimality}
We prove Conjecture~\ref{conj:cb} globally in two complementary regimes, in both cases at an arbitrary bipartition $N=n_{A}+n_{B}$.

 \begin{theorem}[Global CB optimality for dyadic-staircase spectra]
  \label{thm:resDyadic}
  Let $\ket{\Psi}$ be a pure state of $N$ qubits, and let it have a \emph{dyadic-staircase} spectrum, that is, let its Schmidt values be
  constant on each dyadic shell $D_k$,
  \begin{equation}
   \mu_{x}=\theta_{k}\ \ \text{for}\ \ x\in D_{k},
   \quad
   D_{0}=\{0\},
   \,\,\,
   D_{k}=\{2^{k-1},\dots,2^{k}-1\},
   \label{eq:dyadicspec}
  \end{equation}
  with $k=1,\dots,\log_{2}d$.
  Then Conjecture~\ref{conj:cb} holds for $\ket{\Psi}$.
  \end{theorem}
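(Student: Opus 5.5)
We will first reduce the two-sided optimization to a single unitary. Write $M=\mathrm{diag}(\mu)$. Every state on the orbit is $(V_A\otimes V_B)\ket{\Psi_{\mathrm{CB}}}$, and the transpose trick $(V_A\otimes V_B)\sum_x\mu_x\ket{x,x}=(V_A M V_B^{T}\otimes \mathbb 1)\ket{\Phi}$ (restricted to the relevant $d$-dimensional block) lets us write the Pauli expectations as $\bra{\Psi}P_A\otimes P_B\ket{\Psi}=\Tr(P_A\, V_A M V_B^T P_B^{T} \overline{V_B} M V_A^\dagger)$ up to signs. Summing $\langle P\rangle^4$ over Pauli pairs, we will use the fact that $\sum_{P}P^{\otimes 4}/2^{n}$ is, up to normalization, the projector $\Pi_4$ onto the stabilizer code space $\{\ket{x_1,x_2,x_3,x_4}: x_1\oplus x_2\oplus x_3\oplus x_4=0\}$ (with phases). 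This gives a stabilizer purity of the form $\Tr[(\Pi_4\otimes \Pi_4)(\rho_A$-type operators$)]$. Because the Pauli sum on $B$ is invariant under $P_B\mapsto U^\dagger P_B U$ only for Cliffords, we will instead absorb $V_B$ by choosing $V_B=\overline{V_A}$ up to a relative unitary $W$, so that the purity becomes a function $F(W)=\Tr\bigl[\Pi_4\, (W M^2 W^\dagger)^{\otimes 2}$-type compression$\bigr]$ of a single unitary. Concretely, we aim at an expression $\sum_{a}\bigl|\Tr(\sigma\, X^{a}\ldots)\bigr|^4$ in which the optimization is over one Hermitian operator $\sigma=WM^2W^\dagger$ with spectrum $\mu^2$.

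Next we upper bound $F$ over the unitary orbit of a fixed spectrum. The key structural input for dyadic staircases is that $M^2=\sum_k \theta_k^2 \Pi_{D_k}$, where the $\Pi_{D_k}$ are CB projectors; and the nested spaces $E_k=\mathrm{span}\{\ket{x}:x<2^k\}=D_0\cup\cdots\cup D_k$ are \emph{stabilizer subspaces} (codes fixing the top qubits to $0$). Via a layer-cake decomposition, $M^2=\sum_k \lambda_k \Pi_{E_k}$ with $\lambda_k=\theta_k^2-\theta_{k+1}^2\ge0$. The stabilizer purity is a degree-four (in $\sigma$, through the squared overlaps) form; we expand it multilinearly in the $\Pi_{E_k}$ pieces so that it becomes $\sum_{k_1..k_4}\lambda_{k_1}\cdots\lambda_{k_4}\,G(W\Pi_{E_{k_1}}W^\dagger,\dots)$ with nonnegative coefficients. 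We would then prove a mixed inequality: for projectors of ranks $2^{k_i}$, each term $G$ is maximized when all projectors are nested CB stabilizer-code projectors, i.e. $W=\mathbb 1$. The approach is a Cauchy–Schwarz/Hölder step, $G(P_1,\dots,P_4)\le\prod_i G(P_i,\dots,P_i)^{1/4}$, followed by the single-projector statement that among rank-$2^k$ projectors the (normalized) stabilizer purity is maximized by stabilizer code projectors. This last statement we would prove by expressing $\sum_P\Tr(PQ)^4$ for a projector $Q$ and bounding it with $\sum_P \Tr(PQ)^2=2^n\Tr Q^2$ and $|\Tr(PQ)|\le\Tr Q$, which is saturated exactly when $Q$ is a stabilizer code.

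The main obstacle is the Hölder step: the multilinear expansion must have all coefficients nonnegative, and the mixed terms must be attained with equality simultaneously at $W=\mathbb 1$. Equality holds because the nested codes $E_k$ share a common stabilizer group chain, so the Pauli supports are nested and the Hölder inequality is saturated termwise. The nonnegativity is where the compression onto the stabilizer subspace matters. If the naive expansion produces signed cross terms, our first fallback is to write the purity as $\|\Pi_4 (\sigma\otimes\bar\sigma)^{\otimes\cdot}\Pi_4\|_2^2$, a squared norm of a compressed positive operator, so that monotonicity in the Loewner order of each layer gives positive coefficients. Combined with the reduction above, this shows that $F(W)\le F(\mathbb 1)$, which is precisely $\mathcal M_2(\Psi_{\mathrm{CB}})\le \mathcal M_2((V_A\otimes V_B)\ket\Psi)$.
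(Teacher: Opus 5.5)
Your skeleton matches the paper's: decompose the staircase into truncation projectors $E_{2^k}$, expand multilinearly with nonnegative weights, and compare each term with the computational basis. But the termwise comparison, which carries the whole theorem, does not work as you propose it, so there is a genuine gap.

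\textbf{The H\"older step is not tight.} H\"older's inequality is saturated only when the moduli $|f_i|^4$ of the factors are pointwise proportional. Nested supports are not enough: for truncation projectors of different ranks the H\"older product strictly exceeds the mixed term already at $V=\mathbb 1$. In your four-slot picture, $\Tr(PE_{2^k})=2^k$ on the $2^{n-k}$ $Z$-strings supported on the top $n-k$ qubits and vanishes otherwise. Hence $\sum_P\prod_{i=1}^4\Tr(PE_{2^{k_i}})=2^{\,n-\max_ik_i+\sum_ik_i}$, whereas $\prod_i[\sum_P\Tr(PE_{2^{k_i}})^4]^{1/4}=2^{\,n-\bar k+\sum_ik_i}$, with $\bar k$ the mean of the $k_i$.

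The same happens for the actual eight-projector terms $\Tr(A_WB_W)=d^{-4}\sum_{R,S}\prod_j\Tr(RP_jSQ_j)$. For exponents $\alpha=(n,0,0,0)$, $\beta=(0,0,0,0)$ the CB value is $1/d^2$, while H\"older over $(R,S)$ gives $d^{1/4}/d^{2}$. Summing your termwise bounds with the weights therefore gives an upper bound on $\Gm(V)$ strictly larger than $\Gm(\mathbb 1)$ for every non-flat staircase, and the chain does not close. At best it yields CB optimality up to a constant factor. The Loewner-order fallback only restores nonnegativity of the weights, which was never the difficulty.

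No bound of product form can be tight here. The CB value of a term is $2^{\dim\mathcal S_{\alpha\beta}}/d^2$, where $\dim\mathcal S_{\alpha\beta}$ is the minimum of twenty linear forms in the eight rank exponents. For $\beta=0$ it equals $\min_i\alpha_i$, which is not a sum of single-slot contributions.

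\textbf{What is missing.} The missing idea is the paper's rank-level inequality (Theorem~\ref{thm:rank}), which bounds each mixed term directly. Replacing chosen projectors by $\mathbb 1$, and evaluating with $\PW Q^{(i)}\PW=(\Tr Q/d)\PW$, $\Tr_i\PW=\mathbb 1^{\otimes3}/d$ and Pauli orthogonality, gives twelve drop-one and eight cross bounds that depend only on the ranks. A minimum-weight-basis count over eight vectors of $\Ftwo^{4}$ then shows that, at dyadic ranks, the smallest of these bounds equals $|\mathcal S_{\alpha\beta}|/d^2$, the CB value. Dyadicity enters through the linearity of $\{0,\dots,2^{\alpha}-1\}\subset\Ftwo^n$. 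Your single-projector bound $\sum_P\Tr(PQ)^4\le 2^nr^3$ is correct but plays no role.

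\textbf{The setup also needs repair.}

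First, the functional is not of degree four in $\sigma=WM^2W^\dagger$. After the reduction it is $\Gm(V)=d^{-2}\sum_{P,Q}[\Tr(PNQN)]^4=d^2\Tr[(\PW N^{\otimes4}\PW)^2]$, of degree eight in $N=VMV^\dagger$. Already at $V=\mathbb 1$ it contains the plane sums $\sum_x\mu_x\mu_{x+a}\mu_{x+b}\mu_{x+a+b}$, which are not polynomials in the $\mu_x^2$. So the layer cake must be taken in $M=\sum_k(\theta_k-\theta_{k+1})E_{2^k}$ rather than in $M^2$, and each term contains eight projectors, four in each compressed factor.

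Second, setting $V_B=\overline{V_A}$ ``up to a relative unitary'' is a reparametrization, not a reduction, because the purity depends on the relative unitary. For $V_A\neq V_B$ the $(\alpha,\beta)$ terms are $d^2\Tr[\PW Y_\alpha\PW Y_\beta^\dagger]$ with $Y_\alpha=\bigotimes_jV_AE_{2^{\alpha_j}}V_B^\dagger$. These are not traces of products of compressed projectors, so no rank inequality applies to them. The paper removes the second unitary by Cauchy--Schwarz: with $a_P=M^{1/2}V_A^\dagger PV_AM^{1/2}$ and $b_Q$ analogous, $d^2\Fmu(V_A,V_B)=\langle\sum_Pa_P^{\otimes4},\sum_Qb_Q^{\otimes4}\rangle$. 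This gives $\Fmu(V_A,V_B)^2\le\Fmu(V_A,V_A)\Fmu(V_B,V_B)$ (Lemma~\ref{lem:diag}). Unbalanced cuts are then handled by padding with $\ket{0}$ ancillas.
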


\begin{theorem}[Global CB optimality at rank at most six]
  \label{thm:resRank6}
  Let $\ket{\Psi}$ be a pure state of $N$ qubits, and let its Schmidt rank be at most six, $\mu_{x}=0$ for $x\ge6$.
  Then Conjecture~\ref{conj:cb} holds for $\ket{\Psi}$.
\end{theorem}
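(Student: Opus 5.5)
The plan is to work with the stabilizer purity $W(\Psi)=2^{-N}\sum_{P}\bra{\Psi}P\ket{\Psi}^{4}$ and to prove $W\bigl((V_A\otimes V_B)\Psi\bigr)\le W(\Psi_{\mathrm{CB}})$ for all local unitaries. The first step is to reduce the two-unitary problem to a single-unitary one. Write any state on the orbit as $(V_A\otimes V_B)\ket{\Psi_{\mathrm{CB}}}=(V_A M V_B^{T}\otimes \mathbb 1)\ket{\Omega}$, with $M=\mathrm{diag}(\boldsymbol\mu)$ and $\ket{\Omega}$ the unnormalized maximally entangled state; embed into $d\times d$ blocks when $n_A\neq n_B$.

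For Pauli strings $P=P_A\otimes P_B$ one has $\bra{\Psi}P_A\otimes P_B\ket{\Psi}=\Tr(X^{\dagger}P_A X P_B^{T})$ with $X=V_A M V_B^{T}$. Summing the fourth powers over $P_B$ gives, via the Pauli twirl identity $\sum_{P}P^{\otimes 4}\propto$ the projector onto the stabilizer code $\{\ket{a,b,c,e}:a\oplus b\oplus c\oplus e=0\}$, the expression $W\propto\Tr\bigl[(\rho_A^{\otimes 4}\text{-type operator})\,\Pi\bigr]$. Here $\Pi$ is the projector onto the 4-copy stabilizer subspace. I would then rewrite the purity as $\Tr\bigl[\Pi\,(Y\otimes\bar Y\otimes Y\otimes\bar Y)\,\Pi\,(\ldots)\bigr]$ with $Y=X X^{\dagger}$-type factors, so that one unitary (say $U=V_A$) survives, acting by conjugation on $\Lambda=\mathrm{diag}(\mu_x^2)$ and an analogous one on the $B$ side. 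I expect Cauchy--Schwarz on the two sides to decouple them, $W(V_A,V_B)\le\sqrt{F(V_A)F(V_B)}$, with $F(U)=\Tr[\Pi\,(U\Lambda U^{\dagger})^{\otimes 4}$-compression$]$. The problem therefore becomes maximizing a single function $F$ over $U$, where the CB point $U=\mathbb 1$ saturates Cauchy--Schwarz.

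For the dyadic staircase, decompose $\Lambda=\sum_k(\theta_k^2-\theta_{k+1}^2)\,\Pi_{<2^k}$ as a positive combination of projectors onto the first $2^k$ labels. These are nested stabilizer (coordinate) subspaces. Expanding $F$ multilinearly gives a nonnegative combination of terms $\Tr[\Pi\,(U\Pi_{k_1}U^\dagger\otimes\cdots)]$. Each term is bounded, by a majorization or rearrangement argument for compressions onto nested subspaces, by its value at $U=\mathbb 1$, where the projectors are themselves stabilizer codes and commute with $\Pi$. The main obstacle is this termwise bound when the four indices differ. I would first try the Hölder-type inequality $\Tr[\Pi\,A_1\otimes\cdots\otimes A_4]\le\prod_i\Tr[\Pi A_i^{\otimes 4}]^{1/4}$. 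Combined with the fact that the stabilizer purity of a maximally mixed state on a $2^k$-dimensional subspace is maximized when that subspace is a stabilizer code, this should give each term at most its CB value, with equality at $U=\mathbb 1$.

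For rank at most six, $\Lambda$ lives in a 6-dimensional subspace, and I would use the same layer decomposition but with the ordered spectrum $\mu_0\ge\cdots\ge\mu_5$. The cases $k=1,2,4$ (dimensions 1, 2, 4) are handled as above. For the non-dyadic cases (3, 5, 6), I would prove a bound on the compressed purity of $3$-, $5$-, and $6$-dimensional subspaces relative to the coordinate ones by direct analysis: write these subspaces inside an 8-dimensional stabilizer code and show that the cross terms are dominated by the lexicographic CB choice. This finite check is the hardest and most technical step, and if the layerwise argument fails I would fall back on explicit Lagrange-multiplier stationarity for $F$ restricted to rank six.
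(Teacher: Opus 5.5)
Your opening step, Cauchy--Schwarz between the $A$ and $B$ sides to reach a single unitary, is exactly the paper's Lemma~\ref{lem:diag}. The single-unitary functional you then work with, however, is not the right one. After the reduction the stabilizer purity is $\Gm(V)=d^{2}\|\PW N^{\otimes4}\PW\|_{\mathrm{HS}}^{2}$, with $N=VMV^{\dagger}$ built from the amplitudes $\mu_x$. It is quadratic in $N^{\otimes4}$, so a layer expansion of $M$ into truncation projectors produces terms $\Tr[X_\alpha(V)X_\beta(V)]$ containing eight projectors.

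The object you expand is linear in $(U\Lambda U^{\dagger})^{\otimes4}$ with $\Lambda=\mathrm{diag}(\mu_x^{2})$. It equals $\Tr[\PW (N^{2})^{\otimes4}]=d^{-2}\Fpm(N^{2})$, the Pauli fourth moment of $\rho_A$, which enters the paper only as one piece of the symmetric-sector bound. Even for that linear object your H\"older step does not give termwise CB dominance at mixed ranks. For ranks $(1,2,2,2)$ the CB value of the term is $4/d$, whereas the product of the four stabilizer-code maxima is $2^{9/4}/d$. So the dyadic layers you intend to reuse inside the rank-six argument are not actually in hand.

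The decisive gap is the rank-six step itself. Expanding into truncation projectors of ranks $1,\dots,6$ couples dyadic and non-dyadic ranks in every cross term. At non-dyadic ranks the rank-only bounds that drive the dyadic proof are strictly loose. For example, at $d=4$ with $\boldsymbol r=(1,1,2,2)$ and $\boldsymbol s=(3,3,3,3)$ the CB value is $5/16$, against a best rank-only bound of $6/16$. So termwise CB dominance is neither available nor established. Your proposed remedy is also not a finite check:
\begin{itemize}
\item $V$ ranges over $U(2^{n})$ for every $n$.
\item For $n\ge4$ the rotated subspaces $VE_rV^{\dagger}$ generically lie in no eight-dimensional stabilizer code.
\item The functional depends on how these subspaces sit relative to the full $n$-qubit Pauli group.
\item Lagrange stationarity only locates critical points and cannot certify a global maximum on this non-convex manifold, whose dimension grows with $n$.
\end{itemize}

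What is missing is a mechanism that controls $\Gm$ on the whole orbit uniformly in $d$. The paper supplies it by abandoning the termwise expansion altogether, in four steps:
\begin{itemize}
\item It uses the $S_4$ splitting $\Gm=d^{2}[\Qsym+2\Qtt+\Qa]$.
\item Dimension-free sum-of-squares certificates reduce the symmetric and mixed sectors to antisymmetric data, giving the exact form $\Gm(V)=\mathcal I(\boldsymbol\mu)+6d^{2}\Qa(N)+18d\,\cF(N^{2})-\mathcal D(N)$ with $\mathcal D\ge0$ (Proposition~\ref{prop:master}).
\item For rank at most seven, the compression bound $\Qa(N)\le q\,\cF(N^{2})$ follows from $P_{\Lambda^4\mathcal T}\PiF P_{\Lambda^4\mathcal T}\preceq q\,P_{\Lambda^4\mathcal T}$, proved by hyperplane-resolution induction (Theorem~\ref{thm:rsevencap}).
\item For rank at most six, the overlap bound $\cF(N)\le q(t_1t_2t_3t_4+t_1t_2t_5t_6+t_3t_4t_5t_6)$ holds (Theorem~\ref{thm:cfsix}). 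It is proved by showing that the fifteen numbers $\|\PiF u_A\|^{2}/q$ satisfy the degree, triangle, and five-vertex odd-set constraints of the matching polytope of $K_6$.
\end{itemize}
That last step is what fixes the threshold at six. Nothing in your plan explains why rank six, rather than seven or eight, should be provable.
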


For the spectra covered by Theorems~\ref{thm:resDyadic} and~\ref{thm:resRank6}, the minimization of Eq.~\eqref{eq:MNLsre} is solved \textit{exactly}.
The nonlocal SRE $\mathcal M_{\mathrm{NL}}(\Psi)$ is obtained from Eq.~\eqref{eq:cbSRE} by evaluating $\mathcal M_{2}$ on the single state $\ket{\Psi_{\mathrm{CB}}}$, cf. Eq.~\eqref{eq:cbstate}, and no optimization over local unitaries is left to perform.

The two theorems apply under complementary restrictions on the Schmidt spectrum.
The dyadic-staircase condition constrains the shape of the spectrum, while allowing an arbitrary dyadic (i.e., power-of-two) rank and an arbitrary system size. The rank-six theorem, in contrast, allows for arbitrary Schmidt values at any system size, but restricts the rank of the spectrum.

The two proofs rely on different structures.
In Sec.~\ref{sec:dyadic}, we reduce CB optimality to a trace inequality for projectors of dyadic ranks.
We derive twenty upper bounds that depend only on these ranks and show that the tightest is saturated by projectors onto the leading computational-basis labels.
In Sec.~\ref{sec:rank6}, we reduce CB optimality to two inequalities involving the antisymmetric stabilizer projector.
We establish these using the structure of the antisymmetric stabilizer subspace and the matching polytope of the complete graph $K_{6}$.

\subsubsection{Bounds on nonlocal SRE}
Spectra of rank seven or larger that are not dyadic staircases are covered by neither theorem, and Conjecture~\ref{conj:cb} remains open for them.
These spectra are nevertheless not beyond reach.
For states with an arbitrary Schmidt spectrum we obtain two-sided bounds that determine the nonlocal SRE up to a uniformly bounded additive error.

We group the labels into the dyadic shells $D_{k}$ of Eq.~\eqref{eq:dyadicspec}, write $\lambda_{x}=\mu_{x}^{2}$ for the Schmidt probabilities, and define
\begin{equation}
     \hat w_0:=\lambda_0,
     \qquad
     \hat w_k:=2^{k-1}\lambda_{2^{k-1}},
     \qquad k=1,\dots,\log_2 d.
     \label{eq:resShellMasses}
\end{equation}
Each $\hat w_k$ is the size of the shell $D_k$ times its largest probability and therefore bounds the total probability in that shell from above.  We write $Q:=\sum_{k=0}^{\log_2 d}\hat w_k^4$ for the sum of their fourth powers.

\begin{theorem}[Universal nonlocal SRE approximation]
  \label{thm:resValue}
  For every pure state $\ket{\Psi}$ with Schmidt spectrum $\boldsymbol\mu$,
 \begin{equation}
   \mathcal M_{\mathrm{NL}}=\mathcal M_{\mathrm{CB}}-c_{1},
   \qquad
   \mathcal M_{\mathrm{CB}}=-\log_{2}Q+c_{2},
   \label{eq:resCombined}
  \end{equation}
  where $0\le c_{1}\le4$ and $-\log_{2}C_{\mathrm{sh}}\le c_{2}\le4$, with $C_{\mathrm{sh}}<2^{10}$ a universal constant.
\end{theorem}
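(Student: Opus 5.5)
We will prove the two halves of Eq.~\eqref{eq:resCombined} separately: a two-sided estimate of the CB value, $Q\,2^{-4}\le P_{\mathrm{CB}}\le C_{\mathrm{sh}}Q$, and a universal upper bound on the optimized stabilizer purity, $P_{\max}:=\max_{V_A,V_B}\sum_P \langle P\rangle^4/2^N\le 2^{4}\cdot 2^{-4}\cdot\ldots$; more precisely, we aim for $P_{\max}\le 2^{4}P_{\mathrm{CB}}$, or equivalently $P_{\max}\le C Q$ with $P_{\mathrm{CB}}\ge 2^{-4}\cdot\,$(the same quantity). Here $P=2^{-\mathcal M_2}$. Then $c_1=\log_2(P_{\max}/P_{\mathrm{CB}})\in[0,4]$, with the lower bound $0$ being trivial since $\Psi_{\mathrm{CB}}$ lies on the orbit, and $c_2=\log_2(Q/P_{\mathrm{CB}})$.

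\emph{CB value.} For $\ket{\Psi_{\mathrm{CB}}}=\sum_x\mu_x\ket{x,x}$, the nonzero Pauli expectations come from strings $X^a Z^b\otimes X^a Z^{b'}$, so the stabilizer purity can be computed explicitly as
\begin{equation}
 P_{\mathrm{CB}}=\sum_{a}\sum_{c}\Bigl(\sum_x \mu_x\mu_{x\oplus a}(-1)^{c\cdot x}\Bigr)^4\Big/ d^{\,\cdot},
\end{equation}
which is a sum of fourth moments of Walsh transforms of the autocorrelations $\mu_x\mu_{x\oplus a}$. For the lower bound we keep only the terms in which $a$ and $c$ are restricted to the low bits spanning shell $D_k$. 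By monotonicity of $\mu$, shell $D_k$ dominates a flat block of $2^{k-1}$ labels at height $\lambda_{2^{k-1}}$, and a flat block of size $m$ contributes $\ge(m\lambda)^4$, up to a factor $2^{-4}$ from the leading-bit misalignment. This gives $P_{\mathrm{CB}}\ge 2^{-4}Q$, that is, $c_2\le4$. For the upper bound we bound each Walsh coefficient by Cauchy--Schwarz and split the spectrum into shells. Cross-shell terms with $j<k$ are controlled using $\lambda_x\le\lambda_{2^{k-1}}$ on $D_k$ together with the geometric decay of shell sizes. Summing the resulting geometric series yields $P_{\mathrm{CB}}\le C_{\mathrm{sh}}Q$ with an explicit constant, which we expect to be below $2^{10}$.

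\emph{Universal ceiling (main obstacle).} We will use the reduction of Sec.~\ref{sec:reduction}, which writes the optimized purity over one unitary as $\Tr$ of $\rho^{\otimes4}$ compressed onto the stabilizer-code projector $\Pi$. The target is $P_{\max}\le 2^{4}\,\cdot$ the shell lower bound $2^{-4}Q$, i.e.\ $P_{\max}\le Q$. We first apply a majorization step, replacing the spectrum by its dyadic-staircase upper envelope with heights $\lambda_{2^{k-1}}$ on $D_k$, renormalized. Since $\Pi$ is a positive operator commuting with the relevant symmetrization, the purity is Schur-monotone in the appropriate sense: bounding $\rho\le\sum_k\lambda_{2^{k-1}}\Pi_k$ entrywise in operator order, with $\Pi_k$ the shell projectors, gives $\Tr(\Pi\rho^{\otimes4})\le\sum\ldots$. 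Then Theorem~\ref{thm:resDyadic}, via the projector trace inequalities of Sec.~\ref{sec:dyadic}, bounds each term by its CB value, which is computed as $\hat w$-type fourth powers. The cross terms $\prod_i\lambda_{k_i}\Tr(\Pi\,\Pi_{k_1}\otimes\cdots\otimes\Pi_{k_4})$ are handled with Hölder's inequality, so that they collapse to $\sum_k\hat w_k^4$ up to a constant. The delicate point is making this operator comparison legitimate for a fourth-degree, non-linear functional and keeping the constant at exactly $2^4$. If the direct Hölder approach loses too much, we would fall back on bounding $\Tr(\Pi A_1\otimes\cdots\otimes A_4)\le\prod_i\|A_i\|_{4}$-type estimates specific to stabilizer projectors, using that $\Tr(\Pi\,\Pi_{k}^{\otimes4})$ for a rank-$r$ projector is at most $r^{\cdot}$ times the flat-spectrum optimum established in Theorem~\ref{thm:resDyadic}.
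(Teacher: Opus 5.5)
\textbf{The gap is in $c_1\le4$.} You route the optimized purity through $Q$: upper envelope, operator monotonicity, dyadic CB optimality, then a H\"older/shell estimate for the staircase. That chain gives
$P_{\max}\le\mathcal F_{\mathrm{CB}}(\nuup)\le C_{\mathrm{sh}}Q\le16\,C_{\mathrm{sh}}P_{\mathrm{CB}}$,
which is only $c_1\le4+\log_2C_{\mathrm{sh}}$.

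To reach the constant $4$ along this route you set the target $P_{\max}\le Q$. That target is false. For a Bell pair $P_{\max}=1$ while $Q=\lambda_0^4+\lambda_1^4=1/8$, and for flat spectra of rank $2^m$ one has $Q\to1/15$ with $P_{\max}=1$. Note also that $P_{\max}\le2^4P_{\mathrm{CB}}$ and $P_{\max}\le CQ$ are not equivalent statements.

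The missing idea is to compare the upper partner with the target directly, never passing through $Q$: $\mathcal F_{\mathrm{CB}}(\nuup)\le16\,\mathcal F_{\mathrm{CB}}(\boldsymbol\mu)$. The paper does this as follows.
\begin{itemize}
\item Let $\boldsymbol\eta$ be $\nudown$ restricted to the labels $x<d/2$. Doubling every entry of $\boldsymbol\eta$ gives $\boldsymbol\eta\otimes(1,1)$ up to a qubit permutation, which is Clifford.
\item This doubled spectrum dominates $\nuup$ entrywise: it equals $\mu_0$ on labels $0,1$, and $\mu_{2^{k-1}-1}\ge\mu_{2^{k-1}}$ on $D_k$ for $k\ge2$.
\item Entrywise monotonicity, multiplicativity, and $\mathcal F_{\mathrm{CB}}((1,1))=16$ give $\mathcal F_{\mathrm{CB}}(\nuup)\le16\,\mathcal F_{\mathrm{CB}}(\boldsymbol\eta)$.
\item Since $(1,0)\otimes\boldsymbol\eta\le\nudown\le\boldsymbol\mu$, also $\mathcal F_{\mathrm{CB}}(\boldsymbol\eta)\le\mathcal F_{\mathrm{CB}}(\boldsymbol\mu)$.
\end{itemize}
Combined with $\mathcal F_*(\boldsymbol\mu)\le\mathcal F_*(\nuup)=\mathcal F_{\mathrm{CB}}(\nuup)$, this is all of $c_1\le4$; the shell-purity estimates play no role there.

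On your ``delicate point'' of legitimacy: the comparison is not made on $\Tr(\Pi\rho^{\otimes4})$ of the full pure state. No operator inequality relates $\ketbra{\Psi}{\Psi}$ to its envelope. It is made on the single-unitary functional $\Gm(V)=d^2\|\PW N^{\otimes4}\PW\|_{\mathrm{HS}}^2$ obtained from Lemma~\ref{lem:diag}. There $0\le VMV^\dagger\le VM^{\uparrow}V^\dagger$ passes to the fourth tensor powers and their compressions, and $0\le A\le B$ implies $\Tr A^2\le\Tr B^2$.

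\textbf{Two steps of your $c_2$ sketch also fail as written, although the target inequalities $Q/16\le\mathcal F_{\mathrm{CB}}(\boldsymbol\mu)\le C_{\mathrm{sh}}Q$ are the right ones.}

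In the lower bound the monotonicity runs the wrong way. On $D_k$ one has $\lambda_x\le\lambda_{2^{k-1}}$, so the shell is dominated by the flat block at that height, not the reverse. The minorant you need is $\nudown$, with height $\lambda_{2^k-1}$ on $D_k$, together with $\hat w_k\le2\check w_{k-1}$; that is where the $2^{-4}$ actually comes from. Dropping terms is also only legitimate in the $\Phi$-basis form $\mathcal F_{\mathrm{CB}}=\sum_{u,v}S(u,v)^2$, where each $S(u,v)$ is a sum of nonnegative plane products. A Walsh coefficient is a signed sum over all labels, so restricting $(a,c)$ does not isolate a single shell.

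In the upper bound, bounding each Walsh coefficient by Cauchy--Schwarz uniformly in $c$ already loses a factor equal to the rank for flat spectra. The paper instead passes to $\nuup$ and evaluates the purity, line, and plane channels exactly through the pivot formulas before estimating them. That is what produces $C_{\mathrm{sh}}<2^{10}$.
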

The constants $c_1, c_2$ depend on the state, but their ranges do not: they are the same for every spectrum, every system size $N$, and every bipartition.
Theorem~\ref{thm:resValue} is \textit{the central result} of this work: it determines the nonlocal SRE of an arbitrary pure state up to bounded constants, without assuming Conjecture~\ref{conj:cb}.
The conjecture asserts precisely that $c_{1}=0$, so a full global proof of the CB optimality conjecture would only fix that constant to zero.
In other words, Theorem~\ref{thm:resValue} demonstrates that Conjecture~\ref{conj:cb} holds up to an additive error of at most four, uniformly in the spectrum, the system size, and the bipartition.
In particular, for any family of states in which $\mathcal M_{\mathrm{CB}}$, or equivalently $-\log_{2}Q$, diverges with the system size, the leading term of that divergence is also the leading term of the nonlocal SRE $\mathcal M_{\mathrm{NL}}$.

Both statements in Theorem~\ref{thm:resValue} follow from the \emph{shell partners} of $\boldsymbol\mu$: the generally unnormalized dyadic staircases $\nudown$ and $\nuup$ obtained by replacing every entry in each dyadic shell with that shell's smallest or largest entry, respectively.
They satisfy $\nudown\le\boldsymbol\mu\le\nuup$ entrywise.
Homogeneity extends the CB optimality of Theorem~\ref{thm:resDyadic} to these unnormalized vectors. Monotonicity under entrywise increases then yields bounds on the target's optimal stabilizer purity. Section~\ref{sec:witness} states and proves the two bounds separately.

For general spectra, the constant $c_{1}$ is not narrowed further by the arguments of this work.
For individual spectra, however, a search over suitable \emph{comparator spectra} can yield sharper lower bounds on the nonlocal SRE.
We choose these comparators from the families covered by Theorems~\ref{thm:resDyadic} and~\ref{thm:resRank6}, whose optimal stabilizer purities are known exactly.
The mismatch between the target and comparator is controlled by a continuity bound on the fourth root of stabilizer purity that holds uniformly over all local unitaries.

\begin{theorem}[Comparator bound]
  \label{thm:resComparator}
  Let $\ket{\Psi}$ have Schmidt spectrum $\boldsymbol\mu$, and let $\boldsymbol\nu\ge0$ be a nonincreasing, possibly unnormalized spectrum of the same length.
  Write $c=\sum_{x}\nu_{x}^{2}$ and $\hat{\boldsymbol\nu}=\boldsymbol\nu/\sqrt c$.
  If $\hat{\boldsymbol\nu}$ is a dyadic staircase or has rank at most six, then
  \begin{equation}
   \mathcal M_{\mathrm{NL}}(\Psi)\ \ge\ -\log_{2}U(\boldsymbol\nu;\boldsymbol\mu),
   \label{eq:resLower}
\end{equation}
where
\begin{equation}
   U(\boldsymbol\nu;\boldsymbol\mu)=\min\Bigl\{1,
   \bigl[c\,2^{-\mathcal M_{\mathrm{CB}}(\hat{\boldsymbol\nu})/4}
   +(B^{\flat})^{1/4}\bigr]^{4}\Bigr\},
   \label{eq:resU}
\end{equation}
and
\begin{equation}
   B^{\flat}=\bigl(1+c^{2}-2z\bigr)\bigl((1+c)^{2}-4z\bigr),
   \qquad z=\Bigl(\sum_{x}\mu_{x}\nu_{x}\Bigr)^{2}.
   \label{eq:resBflat}
  \end{equation}
\end{theorem}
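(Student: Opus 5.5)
The plan is to prove the upper bound $2^{-\mathcal M_{2}((V_A\otimes V_B)\ket{\Psi})}\le U(\boldsymbol\nu;\boldsymbol\mu)$ for an \emph{arbitrary} pair of local unitaries. Once this holds for every pair, maximizing the stabilizer purity over the local-unitary orbit and taking $-\log_2$ gives Eq.~\eqref{eq:resLower}. The tool is the fourth root of stabilizer purity, extended to arbitrary Hermitian operators:
\begin{equation}
f(X):=\Bigl(\sum_{P\in\mathcal P_N}\frac{\Tr(PX)^{4}}{2^{N}}\Bigr)^{1/4}.
\end{equation}
This is, up to the factor $2^{-N/4}$, the $\ell_4$ norm of the real Pauli coefficient vector of $X$. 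It is therefore a seminorm obeying the triangle inequality and homogeneity $f(tX)=|t|f(X)$. For a normalized pure state, $f(\ket\psi\!\bra\psi)^4=2^{-\mathcal M_2(\psi)}$.

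First step: the comparator state. Fix local unitaries and write $(V_A\otimes V_B)\ket\Psi=\sum_x\mu_x\ket{a_x}\ket{b_x}$ in the rotated Schmidt bases. Using the \emph{same} bases, define the unnormalized comparator $\ket\Phi=\sum_x\nu_x\ket{a_x}\ket{b_x}$. Then $\braket{\Phi|\Phi}=c$, and the overlap is exactly $\braket{\Psi'|\Phi}=\sum_x\mu_x\nu_x$, so $|\braket{\Psi'|\Phi}|^2=z$. Here $\ket{\Psi'}:=(V_A\otimes V_B)\ket\Psi$. The normalized state $\ket\Phi/\sqrt c$ lies on the local-unitary orbit of a state with spectrum $\hat{\boldsymbol\nu}$. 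By Theorem~\ref{thm:resDyadic} or Theorem~\ref{thm:resRank6}, its stabilizer purity is at most that of its CB representative. Hence $f(\ket\Phi\!\bra\Phi)=c\,f(\ket{\hat\Phi}\!\bra{\hat\Phi})\le c\,2^{-\mathcal M_{\mathrm{CB}}(\hat{\boldsymbol\nu})/4}$.

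Second step: the triangle inequality. With $\Delta=\ket{\Psi'}\!\bra{\Psi'}-\ket\Phi\!\bra\Phi$, we have $f(\ket{\Psi'}\!\bra{\Psi'})\le f(\ket\Phi\!\bra\Phi)+f(\Delta)$. It then suffices to show $f(\Delta)^4\le B^\flat$. To do this I would use the Hölder-type estimate
\begin{equation}
f(\Delta)^4\le \max_P\Tr(P\Delta)^2\cdot\sum_P\frac{\Tr(P\Delta)^2}{2^N}=\max_P\Tr(P\Delta)^2\cdot\Tr(\Delta^2).
\end{equation}
The last equality is Pauli completeness. For the rank-two operator $\Delta$, direct expansion gives $\Tr\Delta^2=1+c^2-2z$. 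Its two eigenvalues have sum $1-c$ and product $z-c\le0$, the sign following from Cauchy--Schwarz. Hence they have opposite signs and $\|\Delta\|_1=\sqrt{(1-c)^2-4(z-c)}=\sqrt{(1+c)^2-4z}$. Since $\|P\|_\infty=1$, we get $|\Tr(P\Delta)|\le\|\Delta\|_1$, so $\max_P\Tr(P\Delta)^2\le(1+c)^2-4z$. Together these give exactly $B^\flat$.

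Third step: the final bound. Raising the triangle inequality to the fourth power gives $2^{-\mathcal M_2(\Psi')}\le[c\,2^{-\mathcal M_{\mathrm{CB}}(\hat{\boldsymbol\nu})/4}+(B^\flat)^{1/4}]^4$. Stabilizer purity never exceeds $1$, which justifies the $\min\{1,\cdot\}$. The bound holds uniformly in $V_A,V_B$, so the result follows.

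The only delicate point is the first step: the comparator must be built in the same rotated Schmidt bases as the target. This is what makes the overlap exactly $\sum_x\mu_x\nu_x$ rather than merely bounded by it. It is also what lets CB optimality of $\hat{\boldsymbol\nu}$ control $f(\ket\Phi\!\bra\Phi)$ for every local unitary simultaneously. The eigenvalue computation for $\Delta$ is routine.
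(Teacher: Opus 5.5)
Your proposal is correct and follows the paper's proof essentially step for step. The paper also applies the (reverse) triangle inequality to the $\ell_4$-type norm $\mathcal F^{1/4}$ pointwise over the local unitaries. It then bounds the fourth moment of $\Delta$ by $\max_R\Tr(R\Delta)^2$ times the Pauli Parseval sum, using $|\Tr(R\Delta)|\le\|\Delta\|_1$. Finally it evaluates $\Tr\Delta^2=1+c^2-2z$ and the eigenvalue product $z-c\le0$ to get $B^\flat$.

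The only differences are cosmetic. You prove just the upper side at each fixed pair of local unitaries by building the comparator in the same rotated basis. The paper instead states the two-sided transfer $|\mathcal F_*(\boldsymbol\mu)^{1/4}-\mathcal F_*(\boldsymbol\nu)^{1/4}|\le(B^\flat)^{1/4}$ and then uses its upper side.
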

Each admissible $\boldsymbol\nu$ defines a possibly unnormalized \emph{comparator state} $\ket{\phi_{\boldsymbol\nu}}=\sum_x\nu_x\ket{x}_A\otimes\ket{x}_B$.
Theorems~\ref{thm:resDyadic} and~\ref{thm:resRank6} determine the nonlocal SRE of its normalized counterpart $\ket{\phi_{\boldsymbol\nu}}/\sqrt c$ exactly.
The operator $\Delta:=\ketbra{\Psi_{\mathrm{CB}}}{\Psi_{\mathrm{CB}}}  -\ketbra{\phi_{\boldsymbol\nu}}{\phi_{\boldsymbol\nu}}$  has rank at most two.
Bounding its contribution to the Pauli fourth moment gives the correction $(B^\flat)^{1/4}$, which controls the difference between the fourth roots of the target and comparator stabilizer purities.
This correction depends only on the comparator's squared norm $c$ and squared overlap $z$ with the target, so the same bound holds at every pair of local unitaries.

Every admissible comparator yields a valid lower bound through Eq.~\eqref{eq:resLower}. We evaluate this bound for a set of comparators and retain the largest value. The estimate is accurate when $(B^\flat)^{1/4}$ is small compared with $c\,2^{-\mathcal M_{\mathrm{CB}}(\hat{\boldsymbol\nu})/4}$, and is exact when the target itself is an admissible comparator.
A simple choice retains the six leading Schmidt values and sets the rest to zero; the resulting bound becomes tight as the discarded weight $\epsilon_6:=\sum_{x=6}^{d-1}\mu_x^2$ tends to zero.

\subsubsection{Application to many-body systems}
We close this overview with the many-body consequences of Theorems~\ref{thm:resDyadic}--\ref{thm:resValue}.
By Eq.~\eqref{eq:resCombined}, a lower bound on $Q$ yields an upper bound on the nonlocal SRE. The derivation in Sec.~\ref{ssec:value} uses both the total number $n+1$ of dyadic shells and the fact that $O(1+S_1)$ leading shells suffice to contain a fixed fraction of the total
Schmidt probability.
We obtain
\begin{align}
     \mathcal M_{\mathrm{NL}}\ \le\ \mathcal M_{\mathrm{CB}}
     &\ \le\ 3\log_{2}(2S_{1}+3)+8,
     \label{eq:resEntropy}\\
     \mathcal M_{\mathrm{NL}}\ \le\ \mathcal M_{\mathrm{CB}}
     &\ \le\ 3\log_{2}(n+1)+4,
     \label{eq:resCeiling}
\end{align}
where $S_{1}=-\sum_{x}\lambda_{x}\log_{2}\lambda_{x}$ is the entanglement entropy for  the given cut.

The nonlocal SRE therefore grows at most logarithmically with the entanglement entropy and, since $S_1\le \mathrm{min}\{n_A,n_B\}$, also with the size of the smaller subsystem.
In one dimension, area-law entanglement, $S_1=O(1)$, implies $\mathcal M_{\mathrm{NL}}=O(1)$, while critical states with logarithmic entanglement, $S_1=O(\log N)$, have $\mathcal M_{\mathrm{NL}}=O(\log\log N)$.
Even volume-law entanglement permits at most $O(\log N)$ growth of nonlocal SRE.
The entanglement entropy enters Eq.~\eqref{eq:resEntropy} through an upper bound; the finer spectral quantity $-\log_2 Q$ determines the nonlocal SRE up to a uniformly bounded additive error, as stated in Eq.~\eqref{eq:resCombined}.

We apply the above theorems to determine the behavior of nonlocal SRE in the ground state of the transverse-field Ising chain~\cite{Mbeng24ising, Surace22ferms}.
The chain is ordered for transverse field $0\le g<1$ and disordered for $g>1$, with a quantum critical point at $g=1$.
In the gapped phases, the area law keeps the nonlocal SRE bounded as the system size grows.
In the gapped regime shown in Fig.~\ref{fig:overview}, the six leading Schmidt probabilities carry nearly all the spectral weight and the corresponding rank-six comparator gives, through Theorem~\ref{thm:resComparator}, a lower bound close to $\mathcal M_{\mathrm{CB}}$, tightly enclosing the nonlocal SRE.

At the critical point the Schmidt weight instead spreads over many dyadic shells, and the nonlocal SRE grows without bound.
Our numerical evaluation of the bounds of Theorems~\ref{thm:resValue} and~\ref{thm:resComparator}, exact for chains up to $N=3\times10^{6}$ and extended to $N\simeq10^{28}$, supports the doubly-logarithmic growth
\begin{equation}
   \mathcal M_{\mathrm{NL}}(N)=\tfrac{3}{2}\log_{2}\ln N+O(1),
   \label{eq:resIsing}
\end{equation}
without assuming Conjecture~\ref{conj:cb}.
The same behavior of $\mathcal{M}_{\mathrm{NL}}$ follows from Theorem~\ref{thm:resValue} for the Calabrese--Lefevre form of the entanglement spectrum~\cite{Calabrese2008spectrum}, whose Gaussian shell profile of width $\Theta(\sqrt{\ln N})$ gives $Q=\Theta((\ln N)^{-3/2})$, and it agrees with the double-logarithmic scaling scenario put forward for the CB value $\mathcal{M}_{\mathrm{CB}}$ in Ref.~\cite{Torre2026Spectrum}.
Section~\ref{ssec:critical} presents the numerical results and the bounds of Theorems~\ref{thm:resValue} and~\ref{thm:resComparator} for the Ising chain.

\section{Reduction of the stabilizer purity functional}
  \label{sec:reduction}

In this section, we reduce the optimization~\eqref{eq:MNLsre} that defines the nonlocal SRE to a single-unitary problem on an explicit subspace.
We begin by showing that it suffices to treat the balanced bipartitions ($n_A=n_B$), and then we obtain the stabilizer purity functional $\Fmu$, in terms of which every later statement is made.

\subsection{Balanced bipartitions}
\label{subsec:bal}
Throughout the analysis we take the two subsystems to be of equal size, $n_{A}=n_{B}=n$, so that $d_{A}=d_{B}=d=2^{n}$.
An unbalanced bipartition becomes a balanced one once the smaller subsystem is enlarged by ancillary qubits in a fixed product state, so every bipartition is covered.
To see this, let $n_{A}<n_{B}$ and write $\ket{\Psi'}\equiv\ket{\Psi}\otimes\ket{0}^{\otimes(n_{B}-n_{A})}$ for the padded state.
Adjoining the ancillas in the state $\ket{0}$ only fixes the leading bits of the enlarged factor to zero, so the labels $x<d_{A}$ of Eq.~\eqref{eq:cbstate} are read unchanged as labels of the larger space, and the computational-basis representative of $\ket{\Psi'}$ at the balanced bipartition is 
$\ket{\Psi'_{\mathrm{CB}}}=\ket{\Psi_{\mathrm{CB}}}\otimes\ket{0}^{\otimes(n_{B}-n_{A})}$.
If CB optimality holds at the balanced bipartition, then
\begin{equation}
  \begin{split}
   \mathcal M_{\mathrm{NL}}(\Psi)
   \ \ge\ &\mathcal M_{\mathrm{NL}}(\Psi')
   \ =\ \mathcal M_{2}(\Psi'_{\mathrm{CB}})\\
   &\ =\ \mathcal M_{2}(\Psi_{\mathrm{CB}})
   \ \ge\ \mathcal M_{\mathrm{NL}}(\Psi),
  \end{split}
  \label{eq:padchain}
\end{equation}
where the first step restricts the padded minimization to local unitaries acting trivially on the ancillas, the second is CB optimality at the balanced bipartition, the third is additivity of $\mathcal M_{2}$ together with $\mathcal M_{2}(\ket{0})=0$, and the last holds because $\ket{\Psi_{\mathrm{CB}}}$ lies on the local-unitary orbit of $\ket{\Psi}$.
Hence CB optimality at $n_{A}=n_{B}$ implies it at every bipartition. The two classes of Theorems~\ref{thm:resDyadic} and~\ref{thm:resRank6} are themselves stable under the padding, since appending zeros to a spectrum leaves its rank unchanged and extends
  a dyadic staircase by shells of vanishing weight.

\subsection{The stabilizer purity functional}
\label{subsec:stabfuntional}
We first absorb the Schmidt bases of Eq.~\eqref{eq:schmidt} into the local unitaries. Let $U_{A}$ and $U_{B}$ be unitaries with $U_{A}\ket{x}=\ket{\phi_{x}}$ and $U_{B}\ket{x}=\ket{\chi_{x}}$ for $x=0,\dots,d-1$, so that
\begin{equation}
   \ket{\Psi}=(U_{A}\otimes U_{B})\ket{\Psi_{\mathrm{CB}}}\;.
   \label{eq:absorb}
\end{equation}
As $V_{A}$ and $V_{B}$ range over $U(d)$, so do $V_{A}U_{A}$ and $V_{B}U_{B}$.
Equation~\eqref{eq:MNLsre} is therefore unchanged if we replace $\ket{\Psi}$ by $\ket{\Psi_{\mathrm{CB}}}$,
\begin{equation}
   \mathcal M_{\mathrm{NL}}(\Psi)=\min_{V_{A},V_{B}\in U(d)}
   \mathcal M_{2}\bigl((V_{A}\otimes V_{B})\ket{\Psi_{\mathrm{CB}}}\bigr)\;.
   \label{eq:MNLcb}
\end{equation}
Every Pauli string on the $2n$ qubits factorizes as $P\otimes Q$ with $P,Q\in\mathcal P_{n}$, and evaluating its expectation value on $\ket{\Psi_{\mathrm{CB}}}$ gives
\begin{equation}
\begin{split}
   \bra{\Psi_{\mathrm{CB}}}&(V_{A}^{\dagger}PV_{A})\otimes(V_{B}^{\dagger}QV_{B})\ket{\Psi_{\mathrm{CB}}}\\
   &=\Tr\bigl[V_{A}^{\dagger}PV_{A}\,M\,(V_{B}^{\dagger}QV_{B})^{\mathsf T}M\bigr]\;,
\end{split}
\label{eq:trace}
\end{equation}
where $M\equiv\mathrm{diag}(\mu_{0},\dots,\mu_{d-1})$ is a diagonal matrix.
The transpose in Eq.~\eqref{eq:trace} may be omitted.
Every $Q\in\mathcal P_{n}$ obeys $Q^{\mathsf T}=\pm Q'$ for some $Q'\in\mathcal P_{n}$, the exponent in Eq.~\eqref{eq:sre} is even, and $V_{B}\mapsto\overline{V_{B}}$ is a bijection of $U(d)$.
Relabeling the summation over $Q$ and the optimization variable $V_{B}$ therefore removes the transpose, and substituting Eq.~\eqref{eq:trace} into Eq.~\eqref{eq:sre} expresses the \emph{stabilizer purity} $2^{-\mathcal M_{2}}$ as
\begin{equation}
   \Fmu(V_A,V_B)\equiv
   \frac{1}{d^{2}}\sum_{P,Q\in\mathcal P_n}
   \Bigl(\Tr\bigl[V_A^{\dagger}PV_A\,M\,V_B^{\dagger}QV_B\,M\bigr]\Bigr)^{4},
   \label{eq:Fstar}
\end{equation}
so that the nonlocal SRE follows as
\begin{equation}
   \mathcal M_{\mathrm{NL}}=-\log_{2}\max_{V_{A},V_{B}}\Fmu(V_{A},V_{B})\;.
   \label{eq:MNLF}
\end{equation}
The relabeling makes $\Fmu(V_{A},V_{B})$ the stabilizer purity of $(V_{A}\otimes\overline{V_{B}})\ket{\Psi_{\mathrm{CB}}}$, which is immaterial in Eq.~\eqref{eq:MNLF} because conjugation is a bijection of $U(d)$.
Conjecture~\ref{conj:cb} states that the maximum in Eq.~\eqref{eq:MNLF} is attained at $V_{A}=V_{B}=\mathbb 1$. All results below are statements about the functional $\Fmu$.

\subsection{Diagonal reduction}
\label{subsec:diag}
The first simplification removes optimization over one of the two local unitaries.
\begin{lemma}[Diagonal reduction]
  \label{lem:diag}
  For every spectrum $\boldsymbol\mu$,
  \begin{equation}
   \max_{V_{A},V_{B}}\Fmu(V_{A},V_{B})=\max_{V}\Fmu(V,V),
  \end{equation}
  and, with $N=VMV^{\dagger}$,
  \begin{equation}
   \Fmu(V,V)=\Gm(V),\;\;
   \Gm(V)\equiv\frac1{d^{2}}\!\sum_{P,Q\in\mathcal P_{n}}\!
   \bigl[\Tr(PNQN)\bigr]^{4}.
   \label{eq:Ghat}
  \end{equation}
\end{lemma}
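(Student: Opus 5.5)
The plan is to rewrite $\Fmu$ as a Hilbert--Schmidt inner product of two positive operators, one carrying $V_A$ and one carrying $V_B$. The Cauchy--Schwarz inequality then bounds the off-diagonal value $\Fmu(V_A,V_B)$ by the geometric mean of the diagonal values $\Fmu(V_A,V_A)$ and $\Fmu(V_B,V_B)$. This immediately gives $\max_{V_A,V_B}\Fmu\le\max_V\Fmu(V,V)$. The reverse inequality is trivial, since the diagonal $V_A=V_B$ is a subset of the domain.

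First I would lift the fourth power to four tensor copies. Put $A_P=V_A^{\dagger}PV_A$ and $B_Q=V_B^{\dagger}QV_B$. Then
\begin{equation*}
\bigl(\Tr[A_PMB_QM]\bigr)^4=\Tr\bigl[A_P^{\otimes4}M^{\otimes4}B_Q^{\otimes4}M^{\otimes4}\bigr].
\end{equation*}
Summing over $P$ and $Q$ factorizes the double sum. I introduce the Hermitian operator $\Pi=d^{-2}\sum_{P\in\mathcal P_n}P^{\otimes4}$, which is in fact the projector onto the stabilizer code on four copies, although only Hermiticity and positivity will be needed. I also write $\Pi_{U}=(U^{\dagger})^{\otimes4}\Pi\,U^{\otimes4}$ and $K=M^{\otimes4}\ge0$. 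With this notation,
\begin{equation*}
\Fmu(V_A,V_B)=d^{2}\,\Tr\bigl[\Pi_{V_A}K\,\Pi_{V_B}K\bigr].
\end{equation*}
The one point to verify here is the positivity of $\Pi$. It follows because $\Pi^2=\Pi$: the Paulis form a group up to phases, and each phase cancels in the fourth tensor power because the phases lie in $\{\pm1,\pm i\}$ and their fourth powers equal $1$.

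Next I set $X_U=K^{1/2}\Pi_UK^{1/2}$, which is Hermitian and positive semidefinite. Then $\Fmu(V_A,V_B)=d^2\Tr[X_{V_A}X_{V_B}]$. Cauchy--Schwarz for the Hilbert--Schmidt inner product gives
\begin{equation*}
\Tr[X_{V_A}X_{V_B}]\le\bigl(\Tr X_{V_A}^2\,\Tr X_{V_B}^2\bigr)^{1/2}.
\end{equation*}
Since $d^2\Tr X_U^2=\Fmu(U,U)$, this reads $\Fmu(V_A,V_B)\le\sqrt{\Fmu(V_A,V_A)\Fmu(V_B,V_B)}\le\max_V\Fmu(V,V)$, which proves the first claim.

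Finally, for the identity $\Fmu(V,V)=\Gm(V)$, I would use cyclicity of the trace:
\begin{equation*}
\Tr[V^{\dagger}PVMV^{\dagger}QVM]=\Tr[P\,(VMV^{\dagger})\,Q\,(VMV^{\dagger})]=\Tr(PNQN).
\end{equation*}
Raising this to the fourth power and summing gives Eq.~\eqref{eq:Ghat}. The only step that requires care is the tensor-power rewriting together with the positivity of $\Pi$ (and hence of the operators $X_U$). Everything after that is the one-line Cauchy--Schwarz argument above.
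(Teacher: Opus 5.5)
Your proposal is correct and is essentially the paper's argument: your operator $X_{U}=K^{1/2}\Pi_{U}K^{1/2}$ is exactly $d^{-2}\sum_{P}a_{P}^{\otimes4}$ with $a_{P}=M^{1/2}U^{\dagger}PUM^{1/2}$, and the paper applies the same Hilbert--Schmidt Cauchy--Schwarz inequality to these fourth-tensor-power sums. The positivity of $\Pi$ that you single out as the delicate step is not needed, since Cauchy--Schwarz only requires the $X_{U}$ to be Hermitian, and $\Fmu\ge0$ holds trivially as a sum of fourth powers.
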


  \begin{proof}
The two unitaries enter Eq.~\eqref{eq:Fstar} only through the traces $\Tr[V_{A}^{\dagger}PV_{A}\,M\,V_{B}^{\dagger}QV_{B}\,M]$, and each such trace separates into a factor built from $V_{A}$ and a factor built from $V_{B}$.
With the Hermitian matrices
\begin{equation}
   a_{P}\equiv M^{1/2}V_{A}^{\dagger}PV_{A}M^{1/2},
   \qquad
   b_{Q}\equiv M^{1/2}V_{B}^{\dagger}QV_{B}M^{1/2},
\end{equation}
the trace is the real Hilbert--Schmidt inner product $\langle a_{P},b_{Q}\rangle\equiv\Tr(a_{P}b_{Q})$.
A fourth power of an inner product is again an inner product, this time of fourth tensor powers, $\langle a_{P},b_{Q}\rangle^{4}=\langle a_{P}^{\otimes4},b_{Q}^{\otimes4}\rangle$.
Summing over $P$ and $Q$ therefore collects the whole functional into a single inner product,
\begin{equation}
   d^{2}\,\Fmu(V_{A},V_{B})
   =\Bigl\langle\,\sum_{P}a_{P}^{\otimes4},\ \sum_{Q}b_{Q}^{\otimes4}\Bigr\rangle,
\end{equation}
between one vector built from $V_{A}$ alone and one built from $V_{B}$ alone.
The Cauchy--Schwarz inequality bounds this inner product by the product of the two norms.
The square of each norm is $d^{2}\Fmu$ evaluated with the same unitary in both arguments, and therefore
\begin{equation}
   \Fmu(V_{A},V_{B})^{2}\;\le\;\Fmu(V_{A},V_{A})\,\Fmu(V_{B},V_{B}).
\end{equation}
No pair $(V_{A},V_{B})$ therefore exceeds the larger of $\Fmu(V_{A},V_{A})$ and $\Fmu(V_{B},V_{B})$, both of which are themselves admissible in the maximization. The maximum is thus attained at some $V_{A}=V_{B}=V$.
\end{proof}

\begin{corollary}[Single-unitary form of the CB conjecture]
  \label{cor:cbG}
  For a Schmidt spectrum $\boldsymbol\mu$, Conjecture~\ref{conj:cb} holds if and only if
  \begin{equation}
   \Gm(V)\;\le\;\Gm(\mathbb 1)
   \qquad\text{for every }V\in U(d).
   \label{eq:cbGhat}
  \end{equation}
\end{corollary}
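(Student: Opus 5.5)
The corollary follows by chaining the reformulations already in place. By Eq.~\eqref{eq:MNLF}, $\mathcal M_{\mathrm{NL}}=-\log_{2}\max_{V_A,V_B}\Fmu(V_A,V_B)$. By Lemma~\ref{lem:diag}, this maximum equals $\max_V\Fmu(V,V)=\max_V\Gm(V)$. On the other hand, $\mathcal M_2(\Psi_{\mathrm{CB}})$ is $-\log_2\Fmu(\mathbb 1,\mathbb 1)=-\log_2\Gm(\mathbb 1)$. To see this, set $V_A=V_B=\mathbb 1$ in Eq.~\eqref{eq:Fstar} and note, as remarked after Eq.~\eqref{eq:MNLF}, that $\Fmu(V_A,V_B)$ is the stabilizer purity of $(V_A\otimes\overline{V_B})\ket{\Psi_{\mathrm{CB}}}$. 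With $\overline{\mathbb 1}=\mathbb 1$, the relabeling $Q\mapsto Q^{\mathsf T}$ only permutes Pauli strings up to sign, and the fourth power removes the sign. Also $N=M$ at $V=\mathbb 1$, so $\Fmu(\mathbb 1,\mathbb 1)=\Gm(\mathbb 1)$.

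Since $-\log_2$ is strictly decreasing, Conjecture~\ref{conj:cb}, i.e.\ Eq.~\eqref{eq:cbSRE}, is equivalent to
\begin{equation*}
\max_V \Gm(V)=\Gm(\mathbb 1).
\end{equation*}
Because $\mathbb 1\in U(d)$, we always have $\max_V\Gm(V)\ge\Gm(\mathbb 1)$. So the equality is equivalent to the reverse inequality, which is exactly Eq.~\eqref{eq:cbGhat} for every $V$. The maximum exists because $\Gm$ is continuous on the compact group $U(d)$.

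One caveat concerns bipartitions. The functional $\Gm$ is defined in the balanced setting $d_A=d_B=d$, so the corollary should be read for balanced bipartitions, where Conjecture~\ref{conj:cb} is stated in terms of the spectrum $\boldsymbol\mu$. For unbalanced cuts, the padding argument of Eq.~\eqref{eq:padchain} transfers the statement. Padding leaves the spectrum unchanged apart from appended zeros, and the CB representative of the padded state is $\ket{\Psi_{\mathrm{CB}}}\otimes\ket{0}^{\otimes(n_B-n_A)}$. Hence the balanced condition Eq.~\eqref{eq:cbGhat} for the padded spectrum implies CB optimality at the original cut.

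There is no real obstacle here. The only step that needs care is checking that $\Gm(\mathbb 1)$ is genuinely the stabilizer purity of $\ket{\Psi_{\mathrm{CB}}}$ and not of its conjugate. Since $\ket{\Psi_{\mathrm{CB}}}$ has real coefficients, this holds trivially.
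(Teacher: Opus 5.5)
Your proposal is correct and matches the paper's own argument, which the paper leaves implicit. You chain Eq.~\eqref{eq:MNLF} with Lemma~\ref{lem:diag}, identify $\Fmu(\mathbb 1,\mathbb 1)=\Gm(\mathbb 1)$ as the stabilizer purity of $\ket{\Psi_{\mathrm{CB}}}$, use strict monotonicity of $-\log_2$, and handle unbalanced cuts by the padding argument of Sec.~\ref{subsec:bal}. Your remarks on attainment of the maximum and on imposing the condition for the zero-padded spectrum at the enlarged dimension are correct refinements of points the paper passes over.
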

We attack the CB conjecture in this form throughout the paper. 
All global optimality results below are statements about the single-unitary functional $\Gm(V)$. The two-unitary form then follows from Lemma~\ref{lem:diag}, and unbalanced bipartitions from Sec.~\ref{subsec:bal}.

\subsection{Stabilizer subspace formulation}
The functional $\Gm$ admits an exact four-replica representation.
For $u,v\in\Ftwo^{n}$ we define the \emph{stabilizer vectors}
  \begin{equation}
     \ket{\Phi_{u,v}}\equiv\frac1{\sqrt d}\sum_{x\in\Ftwo^{n}}
     \ket{x,\,x{+}u,\,x{+}v,\,x{+}u{+}v}
     \label{eq:Phibasis}
  \end{equation}
in $(\mathbb{C}^{d})^{\otimes4}$, where each replica carries the computational basis $\{\ket{x}\}_{x\in\Ftwo^{n}}$ used in Eq.~\eqref{eq:cbstate}.
The four replicas carry the labels $x$, $x{+}u$, $x{+}v$ and $x{+}u{+}v$, whose pairwise differences are only $u$, $v$ and $u{+}v$.
  The vector $\ket{\Phi_{u,v}}$ is the uniform superposition of these quadruples over $x\in\Ftwo^{n}$.
A basis term of Eq.~\eqref{eq:Phibasis} determines $x$, $u$ and $v$, so distinct labels give disjoint terms and the $d^{2}$ vectors $\ket{\Phi_{u,v}}$ are orthonormal.
Their span $W$ is the \emph{stabilizer subspace}, the subspace fixed by every fourth Pauli power $P^{\otimes4}$, and its orthogonal projector is the Pauli average~\cite{Zhu16fails, Roth18recovering, Leone2021QuantumChaos, Turkeshi23measuring, Gross21comm, Bittel26complete} (see also Lemma~\ref{lem:Phi})
\begin{equation}
   \PW=\sum_{u,v\in\Ftwo^{n}}\ketbra{\Phi_{u,v}}{\Phi_{u,v}}
   =\frac1{d^{2}}\sum_{P\in\mathcal P_n}P^{\otimes4}.
   \label{eq:PWdef}
\end{equation}
Expanding the two Pauli sums of Eq.~\eqref{eq:Ghat} against Eq.~\eqref{eq:PWdef} gives the exact representation
\begin{equation}
   \Gm(V)=d^{2}\,\Tr\!\left[(\PW N^{\otimes4}\PW)^{2}\right]
   =d^{2}\,\bigl\|\PW N^{\otimes4}\PW\bigr\|_{\mathrm{HS}}^{2},
   \label{eq:Gcompressed}
\end{equation}
where $\|X\|_{\mathrm{HS}}\equiv[\Tr(X^{\dagger}X)]^{1/2}$ is the Hilbert--Schmidt norm and $N=VMV^{\dagger}$. 
Maximizing $\Gm$ is thus equivalent to maximizing $\|\PW N^{\otimes4}\PW\|_{\mathrm{HS}}$ over the unitary orbit of $M$. This compression is a large reduction, since $N^{\otimes4}$ acts on $d^{4}$ dimensional space while $\PW N^{\otimes4}\PW$ is supported on the $d^{2}$-dimensional stabilizer space $W$.

\section{Global CB optimality for dyadic-staircase spectra}
\label{sec:dyadic}
This section proves Theorem~\ref{thm:resDyadic}: at every $d=2^{n}$, $V=\mathbb 1 $, corresponding to the CB representative $\ket{\Psi_{\mathrm{CB}}}$, is a global maximizer of $\Gm(V)$ on the $(n{+}1)$-parameter family of dyadic-staircase spectra.
A dyadic staircase is a nonnegative combination of projectors onto the leading computational-basis labels, so the trace in Eq.~\eqref{eq:Gcompressed}, quadratic in the fourth tensor power, expands into a nonnegative combination of terms, each built from eight such projectors.

Each term is then bounded by its value in the computational basis, by an inequality that holds for arbitrary projectors of the given dyadic ranks and therefore for every $V$ at once.
Summing those bounds with nonnegative weights reassembles $\Gm(\mathbb 1)$ and proves Theorem~\ref{thm:resDyadic}.

  \subsection{Dyadic-staircase spectra}
  \label{ssec:staircase}

We first record a second description of the family of spectra in Theorem~\ref{thm:resDyadic}, which is the one the proof uses.
Let $\mathcal K_{n}$ denote the set of diagonal matrices $M=\mathrm{diag}(\mu_{0},\dots,\mu_{d-1})$ whose diagonal is a sorted dyadic staircase in the label order, i.e., obeys Eq.~\eqref{eq:dyadicspec}, and let $E_{r}\equiv\sum_{j=0}^{r-1}\ketbra{j}{j}$ be the \emph{truncation projectors}, which retain the $r$ leading computational-basis labels.
Writing $\theta_{k}$ for the common value of the spectrum on the dyadic shell $D_{k}$, every $M\in\mathcal K_{n}$ satisfies
\begin{equation}
   M=\sum_{k=0}^{n}c_{k}\,E_{2^{k}},
   \qquad
   c_{k}\equiv\theta_{k}-\theta_{k+1},
   \qquad
   \theta_{n+1}\equiv0 .
   \label{eq:cone}
\end{equation}
Since $\boldsymbol\mu$ is sorted in descending order, the shell values obey $\theta_{0}\ge\theta_{1}\ge\dots\ge\theta_{n}\ge0$, and every $c_{k}$ is nonnegative.
A dyadic staircase is therefore a nonnegative combination of truncation projectors at dyadic ranks.

Such a staircase has steps only at $1,2,4,8\dots$, and is parametrized by the $n{+}1$ shell values $\theta_{0},\dots,\theta_{n}$ rather than by all $2^{n}$ Schmidt values.
The family contains every flat spectrum of dyadic rank, and in particular the Schmidt spectrum of every bipartite stabilizer state.
It also contains single exceptions above a flat floor and block-geometric staircases, which are the natural dyadic caricatures of slowly varying physical spectra.

\subsection{Expansion into truncation projectors}
\label{ssec:expansion}

We now expand $\Gm$ using Eq.~\eqref{eq:cone}.
Let $M\in\mathcal K_{n}$ and $N=VMV^{\dagger}$, so that $N=\sum_{k}c_{k}VE_{2^{k}}V^{\dagger}$.
Expanding the fourth tensor power gives
\begin{equation}
   N^{\otimes4}=\sum_{\alpha\in\{0,\dots,n\}^{4}}c_{\alpha}\,P_{\alpha}(V),
   \qquad c_{\alpha}\equiv\prod_{j=1}^{4}c_{\alpha_{j}}\ \ge0,
\end{equation}
with $P_{\alpha}(V)\equiv\bigotimes_{j}VE_{2^{\alpha_{j}}}V^{\dagger}$.
Inserting this into Eq.~\eqref{eq:Gcompressed} and writing $X_{\alpha}(V)\equiv\PW P_{\alpha}(V)\PW$, we obtain
\begin{equation}
   \Gm(V)=d^{2}\sum_{\alpha,\beta}c_{\alpha}c_{\beta}\,
   \Tr\bigl[X_{\alpha}(V)X_{\beta}(V)\bigr].
   \label{eq:conicexp}
\end{equation}
All of the dependence on the local unitary $V$ is now contained in the traces $\Tr[X_{\alpha}(V)X_{\beta}(V)]$.
Each trace involves eight projectors of dyadic rank, four from $P_{\alpha}(V)$ and four from $P_{\beta}(V)$, with each group of four tensored and then compressed to $W$ by $\PW$.
Every weight $c_{\alpha}c_{\beta}$ is nonnegative, so it suffices to bound each trace separately by its value at $V=\mathbb 1$.
We prove such a bound in the next subsection.

\subsection{The rank-level inequality}
\label{ssec:rank}
Let $\boldsymbol r=(2^{\alpha_{1}},2^{\alpha_{2}},2^{\alpha_{3}},2^{\alpha_{4}})$ and $\boldsymbol s=(2^{\beta_{1}},2^{\beta_{2}},2^{\beta_{3}},2^{\beta_{4}})$ be rank vectors, and let $P_{j}$ and $Q_{j}$ be arbitrary orthogonal projectors of ranks $r_{j}$ and $s_{j}$.
We write $A_{W}\equiv\PW(P_{1}\otimes P_{2}\otimes P_{3}\otimes P_{4})\PW$ and $B_{W}\equiv\PW(Q_{1}\otimes Q_{2}\otimes Q_{3}\otimes Q_{4})\PW$.
Replacing every $P_{j}$ and $Q_{j}$ by the \textit{truncation projector} of the same rank gives the computational-basis reference operators $A_{W}^{\mathrm{CB}}\equiv\PW\bigl(\bigotimes_{j}E_{2^{\alpha_{j}}}\bigr)\PW$ and $B_{W}^{\mathrm{CB}}\equiv\PW\bigl(\bigotimes_{j}E_{2^{\beta_{j}}}\bigr)\PW$.
These are the operators $X_{\alpha}(\mathbb 1)$ and $X_{\beta}(\mathbb 1)$ of Eq.~\eqref{eq:conicexp}

\begin{theorem}
  \label{thm:rank}
For all dyadic rank vectors $\boldsymbol r$ and $\boldsymbol s$, and for arbitrary orthogonal projectors $P_{j}$ and $Q_{j}$ of these ranks,
\begin{equation}
   \Tr\bigl(A_{W}B_{W}\bigr)\ \le\
   \Tr\bigl(A_{W}^{\mathrm{CB}}B_{W}^{\mathrm{CB}}\bigr).
   \label{eq:rankineq}
  \end{equation}
\end{theorem}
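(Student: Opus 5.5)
The plan is to bound the left-hand side of Eq.~\eqref{eq:rankineq} by a finite list of expressions that depend only on the ranks $\boldsymbol r,\boldsymbol s$. We then show that, for every pair of dyadic rank vectors, the smallest expression in the list is attained with equality by the truncation projectors.

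\emph{Step 1 (the CB side).} First compute the right-hand side exactly. The key structural fact is that $E_{2^{k}}$ projects onto the labels whose leading $n-k$ bits vanish. These labels form a linear subspace $L_{k}\subseteq\Ftwo^{n}$, and the subspaces are nested, $L_{0}\subseteq L_{1}\subseteq\dots\subseteq L_{n}$. In the basis $\ket{\Phi_{u,v}}$ of Eq.~\eqref{eq:Phibasis} the operator $\bigotimes_{j}E_{2^{\alpha_{j}}}$ is diagonal in the label $x$, so the matrix element $\bra{\Phi_{u,v}}\bigotimes_jE_{2^{\alpha_j}}\ket{\Phi_{u',v'}}$ vanishes unless $(u,v)=(u',v')$. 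When it does not vanish, it equals $d^{-1}$ times the number of $x$ with $x\in L_{\alpha_{1}}$, $x+u\in L_{\alpha_{2}}$, $x+v\in L_{\alpha_{3}}$ and $x+u+v\in L_{\alpha_{4}}$. Hence $A_{W}^{\mathrm{CB}}$ and $B_{W}^{\mathrm{CB}}$ are simultaneously diagonal on $W$. The trace $\Tr(A_{W}^{\mathrm{CB}}B_{W}^{\mathrm{CB}})$ then reduces to a sum over $(u,v)$ of products of sizes of intersections of cosets of nested subspaces. Because the subspaces are nested, this count collapses to a closed monomial in the ranks, determined by the ordering of the $\alpha_{j}$ and $\beta_{j}$ (for instance $\min$'s of products such as $2^{\alpha_{(1)}+\alpha_{(2)}+\beta_{(1)}}/d$). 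I would tabulate this monomial for every ordering pattern of $(\alpha,\beta)$, using the permutation symmetry of $W$ under replica exchanges that preserve the pairing structure (the $S_{4}$ symmetry of $\PW$ in Eq.~\eqref{eq:PWdef}) to reduce the number of patterns.

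\emph{Step 2 (rank-only upper bounds for arbitrary projectors).} For arbitrary $P_{j},Q_{j}$ write
\[
\Tr(A_{W}B_{W})=\Tr\bigl[\PW X\PW Y\bigr],
\]
with $X=\bigotimes_jP_j$ and $Y=\bigotimes_jQ_j$, and expand $\PW=d^{-2}\sum_{P}P^{\otimes4}$. This yields Pauli-twirled traces of the form $d^{-4}\sum_{P,P'}\Tr[X P^{\otimes4}YP'^{\otimes4}]$. I would generate upper bounds by three elementary moves, each producing a quantity that depends only on ranks:
\begin{itemize}
\item using $0\le\PW\le\mathbb 1$ to drop one compression;
\item using $P_{j}\le\mathbb 1$ or $Q_{j}\le\mathbb 1$ on selected replicas;
\item applying Cauchy--Schwarz across a pairing $\{12|34\}$, $\{13|24\}$ or $\{14|23\}$ of the replicas. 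Across such a pairing, $\PW$ factorizes as a maximally entangled-type projector, so that for instance $\Tr[\PW(R_{1}\otimes R_{2}\otimes R_{3}\otimes R_{4})]\le d^{-1}\sqrt{\Tr(R_{1}R_{2})\Tr(R_{3}R_{4})\,\dots}$, bounded further by ranks.
\end{itemize}
Combining these moves for the pair $(X,Y)$ gives a list of bounds, expected to be about twenty after symmetry reduction. Typical entries are $\min(r_{i},s_{i})$-type products divided by powers of $d$, together with trivial bounds such as $\Tr(A_{W}B_{W})\le\min(\Tr A_{W},\Tr B_{W})$ with $\Tr A_{W}\le d^{-2}\prod$ of ranks along a pairing.

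\emph{Step 3 (matching).} For each ordering pattern from Step~1, exhibit one bound from Step~2 whose rank expression coincides with the CB count. This is where dyadicity enters. For powers of two, nested subspaces realize every Cauchy--Schwarz and $P_{j}\le\mathbb 1$ step with equality, because intersections of nested subspaces have size exactly $\min$ of the sizes, and the sum of two subspaces in a nested chain is simply the larger one. Non-dyadic ranks would break the exact subgroup counting, which is consistent with the theorem's restriction.

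The main obstacle is Step~3: guaranteeing that the finite list of bounds is rich enough to be tight in \emph{every} ordering pattern of the eight ranks, in particular the mixed patterns where the $\alpha$ and $\beta$ orderings interleave across different pairings. I would first enumerate the patterns by computer for small $n$, comparing the CB value with the minimum of the candidate bounds. I would then add the missing bounds (for example, three-term Hölder inequalities across two different pairings) until the minimum always equals the CB count, and finally prove each needed equality by the nested-subspace counting of Step~1.
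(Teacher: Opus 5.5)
Your architecture matches the paper's: rank-only upper bounds from the monotonicity $P_j,Q_j\le\mathbb 1$, with the tightest one saturated by the truncation projectors because dyadic supports are nested linear subspaces of $\Ftwo^{n}$. But the proposal stops where the proof starts, so there is a genuine gap.

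\textbf{What is missing or wrong in your plan.}
\begin{enumerate}
\item You never fix a concrete list of valid bounds.
\item You give no argument that, for \emph{every} pair of dyadic rank vectors and \emph{every} $n$, the minimum of the list equals the CB value. A computer search at small $n$, followed by adding bounds until the numbers match, does not cover all $n$ and is not guaranteed to close.
\item The justification in Step~3 is false. Nested subspaces do not make every $P_j\le\mathbb 1$ or Cauchy--Schwarz step tight; only the smallest bound is attained. For $\alpha=(0,0,n,n)$, $\beta=(n,n,0,0)$ the CB value is $2^{n}/d^{2}$, while the cross bound with exponent $\beta_1+\alpha_2+\alpha_3+\alpha_4=3n$ gives $2^{3n}/d^{2}$. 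Showing that \emph{some} bound is attained in every pattern is the whole content of the theorem.
\item Your first move is not valid as stated: $\PW\le\mathbb 1$ does not let you delete a compression. For $P_1=|0\rangle\langle 0|$, $Q_1=|1\rangle\langle 1|$ and all other projectors equal to $\mathbb 1$, one has $\Tr(\PW X\PW Y)=1$ but $\Tr(XY)=0$.
\end{enumerate}

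\textbf{The concrete bounds.} The paper uses twenty of them.
\begin{itemize}
\item Twelve drop-one bounds: set $P_i=Q_i=\mathbb 1$ and expand both copies of $\PW$ over Paulis. Collapse the double sum with $\Tr(RS)=d\,\delta_{R,S}$. Bound two of the factors $\Tr(P_jRQ_jR)$ by $\min(r_j,s_j)$, and sum the third exactly using $\sum_R RQ_aR=d\,s_a\mathbb 1$.
\item Eight cross bounds, from the single-replica identities $\PW Q^{(i)}\PW=(\Tr Q/d)\PW$ and $\Tr_i\PW=\mathbb 1^{\otimes 3}/d$.
\end{itemize}

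\textbf{The $n$-independent identification of their minimum.}
\begin{itemize}
\item The CB trace equals $|\mathcal S_{\alpha\beta}|/d^{2}$. Here $\mathcal S_{\alpha\beta}\subseteq\Ftwo^{4n}$ is the linear subspace of pairs of quadruples with common differences $(u,v)$ that are compatible with both rank vectors.
\item The constraints decouple bit by bit, giving $\dim\mathcal S_{\alpha\beta}=\sum_{t}\bigl(4-\mathrm{rk}\{e:w(e)<t\}\bigr)$.
\item By a greedy argument, this is the minimum weight of a basis drawn from the eight vectors $a_j=(1,p_j,0)$, $b_j=(0,p_j,1)$ of $\Ftwo^{4}$, where $p_1,\dots,p_4$ are the points of $\Ftwo^{2}$ and the vectors carry weights $\alpha_j,\beta_j$.
\item All $56$ bases are cross or drop-one bases, and their minimal weights are exactly the twenty exponents.
\end{itemize}

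Your Step~1 claim that the CB count is a monomial fixed by the ordering pattern is precisely this bit-decoupling statement, asserted without proof. Your Step~3 would have to rediscover the basis classification.
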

The complete proof of Theorem~\ref{thm:rank} is given in Appendix~\ref{app:rank}, and here we describe the mechanism behind it. 
Our strategy is to bound $\Tr(A_{W}B_{W})$ from above by quantities that depend only on the eight ranks $r_j, s_j$, and then to show that the computational-basis operators $A_{W}^{\mathrm{CB}}, B_{W}^{\mathrm{CB}}$ attain the tightest of these bounds, so that no other choice of the projectors can exceed them.
To obtain such bounds, we replace some of the projectors by the identity.
Since $P\le\mathbb 1$ for every projector and $\Tr(A_{W}B_{W})$ is monotone in each of its two positive-semidefinite arguments, this replacement can only increase the trace, and what remains can be bounded in terms of the ranks alone. Carrying this out in all admissible ways, we obtain twenty upper bounds that hold at every rank, dyadic or not.

The role of dyadicity is to make the tightest of these bounds sharp.
At dyadic ranks, the tightest of the twenty bounds is exactly the right-hand side of Eq.~\eqref{eq:rankineq}, which by Eq.~\eqref{eq:Scount} counts the pairs of label quadruples compatible with the eight ranks.
The two coincide because both are determined by the dimension of the same linear space of label configurations, so that the minimum over twenty combinatorial bounds reduces to a single dimension count.
Dyadicity is what makes this space linear, because for $r=2^{\alpha}$ the support $\{0,\dots,r-1\}$ of $E_{r}$ is a linear subspace of $\Ftwo^{n}$, and it is essential, since at non-dyadic ranks the twenty bounds are no longer guaranteed to be sharp.

Theorem~\ref{thm:rank} leads immediately to the CB optimality for the dyadic staircase spectra.
 \begin{theorem}[Global maximality of the sorted diagonal]
  \label{thm:cone}
  Let $M\in\mathcal K_{n}$. Then $\Gm(V)\le\Gm(\mathbb 1)$ for every $V\in U(d)$.
  \end{theorem}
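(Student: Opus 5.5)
The plan is to combine the conic expansion of Eq.~\eqref{eq:conicexp} with the rank-level inequality of Theorem~\ref{thm:rank}, term by term. First, for $M\in\mathcal K_{n}$, write $M=\sum_{k}c_{k}E_{2^{k}}$ with $c_{k}\ge0$ as in Eq.~\eqref{eq:cone}. Conjugating by $V$ gives $N=VMV^{\dagger}=\sum_{k}c_{k}\,VE_{2^{k}}V^{\dagger}$. Each $VE_{2^{k}}V^{\dagger}$ is an orthogonal projector of rank $2^{k}$, and for $V=\mathbb 1$ it reduces to the truncation projector $E_{2^{k}}$. Inserting this into Eq.~\eqref{eq:Gcompressed} yields the expansion~\eqref{eq:conicexp},
\begin{equation}
   \Gm(V)=d^{2}\sum_{\alpha,\beta}c_{\alpha}c_{\beta}\,\Tr\bigl[X_{\alpha}(V)X_{\beta}(V)\bigr],
\end{equation}
in which every weight $c_{\alpha}c_{\beta}$ is nonnegative and independent of $V$.

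Second, fix a pair of multi-indices $(\alpha,\beta)$. Set $P_{j}=VE_{2^{\alpha_{j}}}V^{\dagger}$ and $Q_{j}=VE_{2^{\beta_{j}}}V^{\dagger}$. These are orthogonal projectors with dyadic rank vectors $\boldsymbol r=(2^{\alpha_{j}})_{j}$ and $\boldsymbol s=(2^{\beta_{j}})_{j}$. With this choice, $X_{\alpha}(V)=A_{W}$ and $X_{\beta}(V)=B_{W}$ in the notation of Sec.~\ref{ssec:rank}, while $X_{\alpha}(\mathbb 1)=A_{W}^{\mathrm{CB}}$ and $X_{\beta}(\mathbb 1)=B_{W}^{\mathrm{CB}}$. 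Theorem~\ref{thm:rank} therefore gives
\begin{equation}
   \Tr\bigl[X_{\alpha}(V)X_{\beta}(V)\bigr]\le\Tr\bigl[X_{\alpha}(\mathbb 1)X_{\beta}(\mathbb 1)\bigr]
\end{equation}
for every pair $(\alpha,\beta)$ and every $V$. Note that Theorem~\ref{thm:rank} allows all eight projectors to be arbitrary and independent, so using the same $V$ throughout is merely a special case of it.

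Third, multiply each inequality by $d^{2}c_{\alpha}c_{\beta}\ge0$ and sum over all $(\alpha,\beta)$. The right-hand side reassembles exactly $\Gm(\mathbb 1)$ through the same expansion evaluated at $V=\mathbb 1$. This proves $\Gm(V)\le\Gm(\mathbb 1)$.

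The only point requiring care is that nonnegativity of the weights is essential: it is what lets the termwise inequalities be summed in the correct direction. This nonnegativity is precisely where the sorted, nonincreasing dyadic-staircase structure enters, since it gives $c_{k}=\theta_{k}-\theta_{k+1}\ge0$. All the genuine difficulty is contained in Theorem~\ref{thm:rank}, which we take as given.

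For completeness, one should also note that the combinations of Lemma~\ref{lem:diag} and Corollary~\ref{cor:cbG} are not needed here, since the statement is already in single-unitary form. Theorem~\ref{thm:resDyadic} then follows from the present result via Corollary~\ref{cor:cbG}, together with the padding argument of Sec.~\ref{subsec:bal} for unbalanced cuts.
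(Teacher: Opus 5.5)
Your proposal is correct and follows the paper's proof: expand $\Gm(V)$ via Eq.~\eqref{eq:conicexp} into a sum with nonnegative weights $c_{\alpha}c_{\beta}$, bound each trace by its $V=\mathbb 1$ value using Theorem~\ref{thm:rank} with the projectors $VE_{2^{\alpha_{j}}}V^{\dagger}$ and $VE_{2^{\beta_{j}}}V^{\dagger}$, and resum to get $\Gm(\mathbb 1)$. You also correctly identify where sortedness enters, namely $c_{k}=\theta_{k}-\theta_{k+1}\ge0$.
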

\begin{proof}
Each term in the sum of Eq.~\eqref{eq:conicexp} is the left-hand side of Eq.~\eqref{eq:rankineq} with the eight projectors $VE_{2^{\alpha_{j}}}V^{\dagger}$ and
$VE_{2^{\beta_{j}}}V^{\dagger}$, and is therefore bounded by its value at $V=\mathbb 1$.
The weights $c_{\alpha}c_{\beta}$ are nonnegative, so the bounds sum without cancellation, and the summed right-hand sides are Eq.~\eqref{eq:conicexp} at $V=\mathbb 1$, that is, $\Gm(\mathbb 1)$.
\end{proof}

By Corollary~\ref{cor:cbG}, this proves Theorem~\ref{thm:resDyadic}, and it isolates the mechanism: the sorted diagonal maximizes every term of Eq.~\eqref{eq:conicexp} separately, and  because all weights are nonnegative, it also maximizes their sum.
The inequality is saturated on the orbit of the sorted diagonal under Clifford unitaries and under the symmetries of $M$, which leave $\Gm$ unchanged.
In Appendix~\ref{app:rank}, we determine the gap $\Gm(\mathbb 1)-\Gm(V)$ exactly, not only its sign.
\begin{proposition}
  \label{prop:gapformula}
  For $M\in\mathcal K_{n}$ and every $V$,
  \begin{equation}
   \Gm(\mathbb 1)-\Gm(V)
   =d^{2}\sum_{\alpha,\beta}c_{\alpha}c_{\beta}\,\Delta_{\alpha\beta}(V),
   \label{eq:gapformula}
  \end{equation}
where $\Delta_{\alpha\beta}(V)\equiv\Tr[X_{\alpha}(\mathbb 1)X_{\beta}(\mathbb 1)]-\Tr[X_{\alpha}(V)X_{\beta}(V)]$ is the loss of the term $(\alpha,\beta)$ at $V$, nonnegative by Theorem~\ref{thm:rank}.
  \end{proposition}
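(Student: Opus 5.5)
This identity is essentially bookkeeping on the conic expansion, Eq.~\eqref{eq:conicexp}. Fix $M\in\mathcal K_{n}$ with its decomposition $M=\sum_{k}c_{k}E_{2^{k}}$ from Eq.~\eqref{eq:cone}. The coefficients $c_{k}$ depend only on the spectrum, not on $V$. Hence for every $V$ we have $N=VMV^{\dagger}=\sum_{k}c_{k}VE_{2^{k}}V^{\dagger}$, and Eq.~\eqref{eq:conicexp} writes $\Gm(V)=d^{2}\sum_{\alpha,\beta}c_{\alpha}c_{\beta}\Tr[X_{\alpha}(V)X_{\beta}(V)]$. The index set $\{0,\dots,n\}^{4}\times\{0,\dots,n\}^{4}$ and the weights $c_{\alpha}c_{\beta}$ are the same at every $V$.

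Specializing to $V=\mathbb 1$ gives $\Gm(\mathbb 1)=d^{2}\sum_{\alpha,\beta}c_{\alpha}c_{\beta}\Tr[X_{\alpha}(\mathbb 1)X_{\beta}(\mathbb 1)]$. Here $X_{\alpha}(\mathbb 1)=A_{W}^{\mathrm{CB}}$ is built from the truncation projectors. Subtracting the two finite sums term by term, with identical index sets and weights, yields Eq.~\eqref{eq:gapformula} with $\Delta_{\alpha\beta}(V)$ as defined. Linearity of the finite sum is the only ingredient, so this part needs no estimate.

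For nonnegativity, I will observe that $X_{\alpha}(V)=\PW\bigl(\bigotimes_{j}VE_{2^{\alpha_{j}}}V^{\dagger}\bigr)\PW$. Each $VE_{2^{\alpha_{j}}}V^{\dagger}$ is an orthogonal projector of dyadic rank $2^{\alpha_{j}}$, so Theorem~\ref{thm:rank} applies with $P_{j}=VE_{2^{\alpha_{j}}}V^{\dagger}$ and $Q_{j}=VE_{2^{\beta_{j}}}V^{\dagger}$. It gives $\Tr[X_{\alpha}(V)X_{\beta}(V)]\le\Tr[X_{\alpha}(\mathbb 1)X_{\beta}(\mathbb 1)]$, that is, $\Delta_{\alpha\beta}(V)\ge0$. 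Since $c_{k}\ge0$ by the ordering of the shell values, every term on the right-hand side of Eq.~\eqref{eq:gapformula} is nonnegative. As a consistency check, this recovers Theorem~\ref{thm:cone}.

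I expect no real obstacle. The only point needing care is that the expansion of Eq.~\eqref{eq:conicexp} be carried out with the same $V$-independent coefficients at both $V$ and $\mathbb 1$. That holds because the decomposition into $E_{2^{k}}$ is fixed by $M$ and conjugation by $V$ acts termwise. All the substantive content, the sign of each $\Delta_{\alpha\beta}$, is inherited from Theorem~\ref{thm:rank}, whose proof is deferred to Appendix~\ref{app:rank}.
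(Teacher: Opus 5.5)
Your proposal is correct and matches the paper's argument. The identity is the termwise difference of the conic expansion Eq.~\eqref{eq:conicexp} evaluated at $V$ and at $\mathbb 1$, with the same $V$-independent weights $c_{\alpha}c_{\beta}$, and the nonnegativity of each $\Delta_{\alpha\beta}(V)$ follows from Theorem~\ref{thm:rank} applied to the conjugated truncation projectors $VE_{2^{\alpha_{j}}}V^{\dagger}$ and $VE_{2^{\beta_{j}}}V^{\dagger}$.
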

The gap at an arbitrary unitary is thus a nonnegative combination of termwise losses, and this is the structure that the covering arguments of Sec.~\ref{sec:witness} extend beyond dyadic staircases.

\section{Global CB optimality at Schmidt rank at most six}
\label{sec:rank6}
In this section, we prove Theorem~\ref{thm:resRank6}: 
for every bipartite state whose Schmidt spectrum has at most six nonzero values, the CB state is a global maximizer of the stabilizer purity at every $d=2^n$.
This is the regime complementary to Sec.~\ref{sec:dyadic}, where the spectrum was constrained in shape but not in rank.

We begin by splitting the compressed functional of Eq.~\eqref{eq:Gcompressed} with the help of Schur--Weyl duality into three sectors, labeled by the irreducible representations of the symmetric group $S_{4}$ that permutes the four replicas: the symmetric sector $(4)$, the mixed sector $(2,2)$, and the antisymmetric sector $(1^{4})$.
We then show, by means of two sum-of-squares certificates that hold at every system size and for every spectrum, that the symmetric and the mixed sector are bounded by data of the  antisymmetric sector alone.
In this way we arrive at an exact reduced form of the stabilizer purity in which all dependence on the local unitary resides in two quantities built from a single projector, the \emph{antisymmetric stabilizer projector}.
Only at this point do we use the rank hypothesis, in two inequalities for this projector.
The first is a compression inequality, valid up to rank seven, which bounds the antisymmetric-sector part of the stabilizer purity by the overlap of the projector with $\Lambda^{4}(N^{2})$.
The second bounds that overlap, at rank at most six, by its computational-basis value, and we prove it with the matching polytope of $K_{6}$.
Substituting the two inequalities into the reduced form, we obtain Theorem~\ref{thm:resRank6}.

The Schur--Weyl analysis is developed in Appendix~\ref{app:SW}, and the proofs of the two sector inequalities in Appendix~\ref{app:flatcode}.
Here we state the results and describe the mechanisms behind them.

\subsection{Three sectors of the compressed functional}
\label{sec:sectors}

The symmetric group $S_{4}$ permutes the four replicas of $(\mathbb C^{d})^{\otimes4}$, and both $\PW$ and $N^{\otimes4}$ commute with this action.
The compressed operator $\PW N^{\otimes4}\PW$ of Eq.~\eqref{eq:Gcompressed} is therefore block diagonal, with one block for each of the three irreducible representation of $S_{4}$ that occur in the stabilizer subspace $W$.
In the $\Phi$-basis of Eq.~\eqref{eq:Phibasis}, exchanging the replicas in two disjoint pairs only relabels the summation variable $x$ and fixes every $\ket{\Phi_{u,v}}$.
The only irreducible representations of $S_{4}$ in which such double exchanges act trivially are the trivial, the sign, and the two-dimensional one, $[4]$, $[1^{4}]$, and $[2,2]$, so the two three-dimensional representations are absent.
The squared Hilbert--Schmidt norm of the block in the sector $\lambda$ is $\dim[\lambda]\,Q_{\lambda}(N)$, where $Q_{\lambda}(N)$ is the contribution of a single copy of the representation (Appendix~\ref{app:sectors}).
Since $[2,2]$ is two dimensional while $[4]$ and $[1^{4}]$ are one dimensional, we obtain for $N=VMV^{\dagger}$ the exact decomposition
\begin{equation}
   \Gm(V)=d^{2}\bigl[\Qsym(N)+2\,\Qtt(N)+\Qa(N)\bigr].
   \label{eq:split}
  \end{equation}
The antisymmetric sector lives in the totally antisymmetric subspace $\Lambda^{4}\mathbb C^{d}\subset(\mathbb C^{d})^{\otimes4}$, which is spanned by the wedge products $v_{1}\wedge v_{2}\wedge v_{3}\wedge v_{4}$ of four vectors and is left invariant by $N^{\otimes4}$.
On it, $N^{\otimes4}$ acts as the operator
\begin{equation}
   \Lambda^{4}(N)\,(v_{1}\wedge v_{2}\wedge v_{3}\wedge v_{4})
   =Nv_{1}\wedge Nv_{2}\wedge Nv_{3}\wedge Nv_{4}
   \label{eq:wedgeaction}
\end{equation} 
with the action extended by linearity to sums of such wedge products, which exhaust $\Lambda^{4}\mathbb C^{d}$. The operator $\PW$ acts as the projector
\begin{equation}
   \PiF\equiv\frac1{d^{2}}\sum_{P\in\mathcal P_{n}}\Lambda^{4}(P),
   \label{eq:flatprojector}
\end{equation}
which we refer to as the \emph{antisymmetric stabilizer projector}, since its range is the antisymmetric part $W_{\mathrm F}\equiv W\cap\Lambda^{4}\mathbb C^{d}$ of the stabilizer subspace, the \emph{antisymmetric stabilizer subspace}.

We work in a specific basis of the antisymmetric stabilizer subspace.
For a four-element subset $S=\{s_{1}<s_{2}<s_{3}<s_{4}\}\subset\Ftwo^{n}$ let $e_{S}\equiv\ket{s_{1}}\wedge\ket{s_{2}}\wedge\ket{s_{3}}\wedge\ket{s_{4}}$ be the wedge of the four computational-basis vectors labeled by $S$, with the labels in increasing order. These $e_{S}$ form an orthonormal basis of $\Lambda^{4}\mathbb C^{d}$.
A four-element set of labels of the form $F=\{x,\,x{+}u,\,x{+}v,\,x{+}u{+}v\}$, equivalently a set of four labels whose sum vanishes, is a \emph{plane} in $\Ftwo^{n}$, and $L=\{0,u,v,u{+}v\}$ is its \emph{direction}.
Planes with the same direction are \emph{parallel}, and $\Ftwo^{n}$ splits into $d/4$ parallel planes of each direction.
The $Z$ part of the Pauli average in Eq.~\eqref{eq:flatprojector} annihilates every $e_{S}$ whose labels do not form a plane, and the $X$ part averages $e_{F}$ over all planes parallel to $F$, with all signs positive (Appendix~\ref{app:flatcode}).
Hence
\begin{equation}
 \PiF=\sum_{L}\ketbra{\psi_{L}}{\psi_{L}},
 \qquad
 \psi_{L}\equiv\sqrt q\sum_{F\parallel L}e_{F},
 \qquad
 q\equiv\frac4d,
 \label{eq:codeform}
\end{equation}
where the sum runs over the $d/4$ parallel planes of direction $L$, so that $\psi_{L}$ is a unit vector, and the $\psi_{L}$ form an orthonormal basis of $W_{\mathrm F}$, one vector per direction.
In particular, $\langle e_{F},\PiF e_{F}\rangle=q$ for every plane $F$, and this constant appears in every statement below.

In terms of the antisymmetric stabilizer projector, the functional in the $W_{\mathrm F}$ subspace is
\begin{equation}
   \Qa(N)=\bigl\|\PiF\,\Lambda^{4}(N)\,\PiF\bigr\|_{\mathrm{HS}}^{2},
   \label{eq:Qadef}
\end{equation}
and alongside it we use the \emph{antisymmetric stabilizer overlap}
\begin{equation}
   \cF(C)\equiv\Tr\bigl[\PiF\,\Lambda^{4}(C)\bigr]
   \label{eq:flatoverlap}
\end{equation}
of an operator $C$ on $\mathbb C^{d}$.
In the following, we show that these two quantities control the functionals $\Qsym(N)$ and $\Qtt(N)$ of the other two sectors.

\subsection{Reduction to the antisymmetric sector}
\label{ssec:reduction6}

For the symmetric sector, a Pauli--Fourier identity expresses $\Qsym(N)-\Qa(N)$ as an explicit Pauli sum that is bounded by the Pauli fourth moment $\sum_{P}[\Tr(PN^{2})]^{4}$. The fourth moment is in turn bounded by $\cF(N^{2})$ and two spectral invariants of matrix $M$, the difference being an explicit sum of squares over pairs of anticommuting Pauli operators (Appendix~\ref{app:symSOS}).
For the mixed sector, the antisymmetric Pauli operators form an orthonormal frame of $\Lambda^{2}\mathbb C^{d}$, and two elementary sums of squares over that frame bound $\Qtt(N)$ by $\Qa(N)$, $\cF(N^{2})$, and a spectral invariant (Appendix~\ref{app:mixed}).
Both inequalities hold at every $d=2^{n}$ and for every Hermitian matrix $N$, and both are saturated at $V=\mathbb 1$, where $N=M$ is diagonal.
Inserting them into Eq.~\eqref{eq:split} gives the following exact reduced form of the $\Gm(V)$ functional (Appendix~\ref{app:reduced}).

\begin{proposition}
\label{prop:master}
For every $V\in U(d)$, with $N=VMV^{\dagger}$,
\begin{equation}
 \Gm(V)=\mathcal I(\boldsymbol\mu)
 +6d^{2}\,\Qa(N)+18d\,\cF(N^{2})-\mathcal D(N),
 \label{eq:master}
\end{equation}
where
\begin{equation}
 \mathcal I(\boldsymbol\mu)\equiv7\Bigl[\sum_{x}\mu_{x}^{4}\Bigr]^{2}-6\sum_{x}\mu_{x}^{8}
 \label{eq:specinv}
\end{equation}
depends on the spectrum of $M$ only, and $\mathcal D(N)\ge0$ is a sum of three nonnegative terms, one for each certificate, given explicitly in Eq.~\eqref{eq:defectdef}, which vanishes at $V=\mathbb 1$.
\end{proposition}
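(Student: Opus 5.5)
Starting from the sector decomposition \eqref{eq:split}, $\Gm(V)=d^{2}[\Qsym(N)+2\Qtt(N)+\Qa(N)]$, I will write each of $\Qsym$ and $\Qtt$ exactly as a combination of $\Qa(N)$, $\cF(N^{2})$, spectral invariants of $M$, and minus a manifestly nonnegative sum of squares. Adding these gives \eqref{eq:master}. The coefficients $6d^{2}$, $18d$ and $\mathcal I(\boldsymbol\mu)$ are fixed by requiring the identity to hold.

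\textbf{Symmetric sector.} I will use the Pauli--Fourier identity to express $\Qsym(N)-\Qa(N)$ as a Pauli sum. This sum is bounded by the fourth moment $\sum_{P}[\Tr(PN^{2})]^{4}$ plus spectral terms.

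\textbf{Fourth moment.} I will expand $\sum_{P}[\Tr(PN^{2})]^{4}$ using $\PW=d^{-2}\sum_{P}P^{\otimes4}$. This gives $d^{2}\Tr[\PW (N^{2})^{\otimes4}]$, which I will decompose by $S_4$ sectors. The antisymmetric piece is $d^{2}\cF(N^{2})$, up to the normalization $q=4/d$. The non-antisymmetric pieces are expressed through power sums $\Tr N^{8}$, $(\Tr N^{4})^{2}$ and the like, plus a remainder that I will write as a sum over anticommuting Pauli pairs of $|\Tr(N^{2}[P,P'])|^{2}$-type squares. That remainder is the first defect term $\mathcal D_{1}\ge0$.

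\textbf{Mixed sector.} I will realize the $[2,2]$ block in $\Lambda^{2}\otimes\Lambda^{2}$ using the orthonormal frame of antisymmetric Paulis $\{P/\sqrt d: P^{\mathsf T}=-P\}$. Two Cauchy--Schwarz-type completions of squares over this frame bound $\Qtt$ by $\Qa$, $\cF(N^{2})$ and $\sum_x\mu_x^{8}$. The slacks are the defects $\mathcal D_{2},\mathcal D_{3}$.

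\textbf{Assembly.} Substituting the three identities, I collect coefficients; the spectral parts combine to $\mathcal I(\boldsymbol\mu)=7(\sum\mu^{4})^{2}-6\sum\mu^{8}$. Vanishing at $V=\mathbb 1$ follows because for diagonal $N$ each square is a commutator or off-diagonal quantity. Every Pauli is a monomial matrix, and $\Tr(N^{2}P)$ is nonzero only for $Z$-type $P$. Those mutually commute, so every anticommuting-pair term vanishes. I will check the same for the frame squares.

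\textbf{Main obstacle.} The hardest step is getting the exact coefficients: the multiplicity $2$ of $[2,2]$, the normalization $q=4/d$ in \eqref{eq:codeform}, and the constants $7$ and $-6$. Each requires careful use of the $\Phi$-basis and Schur--Weyl projections. I would first verify the coefficients numerically for $n=2,3$ with random $V$, then derive them symbolically.
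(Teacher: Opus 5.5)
Your outline follows the paper's architecture: the split \eqref{eq:split}, the Pauli--Fourier collapse $\Qsym(N)-\Qa(N)\le d^{-3}\Fpm(N^{2})$, a certificate trading $\Fpm(N^{2})$ for $\cF(N^{2})$ and power sums, a two-square certificate for $\Qtt$ on the antisymmetric Pauli frame, and then substitution. The fourth-moment step, however, fails as you describe it. The antisymmetric sector of $\Fpm(C)=d^{2}\Tr[\PW C^{\otimes4}]$ does contribute $d^{2}\cF(C)$. The symmetric and mixed sectors, though, cannot be written as power sums minus squares. At a diagonal $C$ the $\Phi$-basis gives $\Tr[\PW C^{\otimes4}]=w_{0}+3\sum_{a\ne0}w_{a}+6\sum_{L}w_{L}$ (weights as in \eqref{eq:waweights}--\eqref{eq:wLweight}, built from the entries of $C$), while $\cF(C)=\sum_{L}w_{L}$. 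So those two sectors carry a further $5d^{2}\cF(C)$.

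The correct certificate is $\Fpm(C)=3d(\Tr C^{2})^{2}-2d\Tr C^{4}+6d^{2}\cF(C)-\Delta_{d}(C)$. With only $d^{2}\cF(C)$ and the spectral terms $3d(\Tr C^{2})^{2}-2d\Tr C^{4}$, your remainder equals $-5d^{2}\cF(M^{2})\le0$ at $V=\mathbb 1$. The assembly would then produce $13d\,\cF(N^{2})$ instead of $18d\,\cF(N^{2})$, and since $\cF(N^{2})$ depends on $V$, no choice of spectral constants repairs this.

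The remainder is also not a sum of squares such as $|\Tr(N^{2}[P,P'])|^{2}$. For anticommuting $P,P'$ that trace is a single Pauli coefficient of $C=N^{2}$, so these squares are quadratic in $C$, whereas $\Delta_{d}$ is quartic. Expanded in $c_{a}=\Tr(P_{a}C)$, the $c_{a}^{4}$ and commuting-pair terms of $\Delta_{d}$ cancel. What remains is $\frac{4(3d-4)}{d^{2}}\Afr(C)$ plus signed monomials $\frac{64}{d^{2}}\varepsilon_{F}\prod_{a\in F}c_{a}$ over noncommuting isotropic planes, and these can be negative. The missing idea is to regroup them into edge blocks $z_{s}^{\mathsf T}\bigl(\tfrac{3d-4}{4}I+A_{s}\bigr)z_{s}$. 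One then proves, through a quadratic-form character matrix, that $A_{s}^{2}=2A_{s}+\bigl(\tfrac{d^{2}}{4}-1\bigr)I$, so every block has spectrum $\{d/4,5d/4\}$. This positivity argument, not the bookkeeping of constants, is the hard step.

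Your defect ledger also omits a term. The collapse $\Qsym-\Qa\le d^{-3}\Fpm(N^{2})$ is an inequality, and its slack $D_{\mathrm{sym}}=d^{-3}\Fpm(N^{2})-\Qsym(N)+\Qa(N)$ must enter $\mathcal D$. It sits alongside $\frac1d\Delta_{d}(N^{2})$ and the mixed-sector defect; your two frame squares together form the single term $d^{2}D_{22}$. Your vanishing argument via $Z$-type Paulis handles $\Delta_{d}(M^{2})$ but not $D_{\mathrm{sym}}$. Up to a positive factor, $D_{\mathrm{sym}}$ at $V=\mathbb 1$ is $\sum_{Q}\sum_{P,R}a_{P}^{2}a_{R}^{2}$ over anticommuting pairs $P,R$, with $a_{P}=\Tr(PMQM)$, and $MQM$ is not diagonal. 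You need that for $Q\propto X^{a}Z^{b}$ the Pauli support of $MQM$ lies in $\{X^{a}Z^{b'}:a\cdot b'=a\cdot b\}$, because $x\mapsto\mu_{x}\mu_{x+a}$ is $a$-periodic, and that this set is pairwise commuting.

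Finally, the mixed-sector spectral term is $\frac{4}{d^{2}}\Tr[(\Lambda^{2}N)^{4}]=\frac{2}{d^{2}}\bigl[(\Tr N^{4})^{2}-\Tr N^{8}\bigr]$, not $\sum_{x}\mu_{x}^{8}$ alone. After multiplication by $2d^{2}$ it supplies the $4(\sum_{x}\mu_{x}^{4})^{2}-4\sum_{x}\mu_{x}^{8}$ part of $\mathcal I(\boldsymbol\mu)$. The remaining $3(\sum_{x}\mu_{x}^{4})^{2}-2\sum_{x}\mu_{x}^{8}$ comes from $\frac1d\Fpm(N^{2})$.
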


It is important to stress that Eq.~\eqref{eq:master} holds for every local unitary $V$ and for every diagonal matrix $M$, that is, for every Schmidt spectrum of the state $\ket{\Psi}$, and that no assumption on the rank of $M$ has been made so far.
All dependence on $V$ resides in the two nonnegative quantities $\Qa(N)$ and $\cF(N^{2})$ and in the nonnegative $\mathcal D(N)$.
Since $\mathcal D$ can only lower $\Gm$ and vanishes at $V=\mathbb 1$, CB optimality in the form of Eq.~\eqref{eq:cbGhat} follows for every spectrum for which the combination $6d^{2}\Qa(N)+18d\,\cF(N^{2})$ is maximized at $V=\mathbb 1$.
In the rest of this section we assume that the rank of $M$ does not exceed six, which is the hypothesis of Theorem~\ref{thm:resRank6}, and we prove the latter statement under this assumption.

\subsection{Two inequalities for the antisymmetric stabilizer projector}
\label{ssec:flatineq}

The rank hypothesis is required only for the following two statements, in which $q=4/d$ is the constant of Eq.~\eqref{eq:codeform}.

\begin{theorem}[Antisymmetric-sector bound at rank at most seven]
\label{thm:rsevencap}
Let $N\succeq0$ on $\mathbb C^{d}$, $d=2^{n}$, $n\ge3$, with $\mathrm{rank}\,N\le7$. Then
\begin{equation}
 \Qa(N)\ \le\ q\,\cF(N^{2}),
 \label{eq:r7cap}
\end{equation}
with equality whenever $N$ is diagonal with support in the first seven computational-basis labels.
\end{theorem}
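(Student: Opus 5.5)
The plan is to reduce the inequality to an operator-norm statement about the compression of $\PiF$ to a small antisymmetric subspace, and then to prove that statement using the rank hypothesis.

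Write $A\equiv\Lambda^{4}(N)\succeq0$, and let $R\equiv\mathrm{ran}\,N$, so that $r\equiv\dim R\le7$. Since $\Lambda^{4}$ is multiplicative, $\Lambda^{4}(N^{2})=A^{2}$. Moreover, $A$ is supported on $\Lambda^{4}R$, which has dimension $\binom{r}{4}\le35$. Let $P_{S}$ be the orthogonal projector onto $\Lambda^{4}R$, and define $K\equiv P_{S}\PiF P_{S}$. Because $A=P_{S}AP_{S}$, both sides of Eq.~\eqref{eq:r7cap} can be rewritten through $K$:
\begin{equation}
 \Qa(N)=\Tr(\PiF A\PiF A)=\Tr(AKAK),
 \qquad
 \cF(N^{2})=\Tr(\PiF A^{2})=\Tr(AKA).
\end{equation}
The operator $AKA$ is positive semidefinite, which gives $\Tr(AKA\,K)\le\|K\|_{\mathrm{op}}\Tr(AKA)$. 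Hence Theorem~\ref{thm:rsevencap} follows from the single spectral statement
\begin{equation}
 \langle\omega,\PiF\,\omega\rangle\le q\,\|\omega\|^{2}
 \qquad\text{for every }\omega\in\Lambda^{4}R,\ \dim R\le7 .
 \label{eq:planKey}
\end{equation}
In words, no 4-vector built from a subspace of dimension at most seven has more than a fraction $q=4/d$ of its weight in $W_{\mathrm F}$. The equality case is consistent with this reduction. When $N$ is diagonal on the first seven labels, the planes inside $\{0,\dots,6\}$ are exactly the seven planes of the Fano-like structure of $\{0,\dots,7\}$ that avoid $7$. These give $\PiF$-overlap $q$ on each $e_{F}$, and distinct planes $F\subset\{0,\dots,6\}$ have distinct directions, so $K$ acts as $q$ times a projector and both inequalities are tight.

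To prove Eq.~\eqref{eq:planKey}, I would expand $\PiF=\sum_{L}\ketbra{\psi_{L}}{\psi_{L}}$ as in Eq.~\eqref{eq:codeform}. This gives
\begin{equation}
 \langle\omega,\PiF\,\omega\rangle=q\sum_{L}\Bigl|\sum_{F\parallel L}\omega_{F}\Bigr|^{2},
\end{equation}
where $\omega_{F}\equiv\langle e_{F},\omega\rangle$ are the Pl\"ucker-type coordinates of $\omega$ on planes. The claim then becomes
\begin{equation}
 \sum_{L}\Bigl|\sum_{F\parallel L}\omega_{F}\Bigr|^{2}\le\sum_{S}|\omega_{S}|^{2}
 \qquad\text{for }\omega\in\Lambda^{4}R .
\end{equation}
For a general $\omega$ the left-hand side can be as large as $(d/4)\|\omega\|^{2}$, so the low dimension of $R$ is essential. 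My first attempt would use the Hodge isomorphism $\Lambda^{4}R\cong\Lambda^{r-4}R^{*}$, which for $r=7$ identifies $\omega$ with a 3-form on a seven-dimensional space. I would then bound, for each direction $L$, the coherent sum $\sum_{F\parallel L}\omega_{F}$ by Cauchy--Schwarz restricted to the planes that $\omega$ can actually populate. The goal is to show that $R$ meets the $d/4$ parallel planes of a given direction in such a way that the contributions from different directions form an orthogonal decomposition of a piece of $\omega$.

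A cleaner alternative is to average over the Pauli group. Write
\begin{equation}
 \langle\omega,\PiF\,\omega\rangle=d^{-2}\sum_{P}\langle\omega,\Lambda^{4}(C_{P})\,\omega\rangle,
 \qquad C_{P}\equiv P_{R}PP_{R},
\end{equation}
and exploit $\sum_{P}C_{P}\otimes C_{P}^{\dagger}=d\,P_{R}\otimes P_{R}$ (twirling) together with the bound $\|C_{P}\|\le1$. The aim is to show that the fourth-order correlation exceeds the second-order one only through terms that vanish once $\dim R\le7<8$. The condition $n\ge3$ enters here, because it guarantees $d\ge8$.

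The main obstacle is Eq.~\eqref{eq:planKey} itself: a uniform bound on the largest eigenvalue of $P_{S}\PiF P_{S}$ over all seven-dimensional subspaces $R$, not only the coordinate ones. If the direct argument fails, my fallback is a maximization argument. By compactness of the Grassmannian there is a maximizing pair $(R,\omega)$, and I would use first-order stationarity under unitaries commuting with the Clifford action to move $R$ toward a coordinate subspace. Rank seven, rather than eight, is exactly what rules out $R$ containing a full 3-dimensional affine subspace of labels. Such a subspace carries a codeword-like combination of its parallel planes with overlap exceeding $q$, and excluding it is why I expect the bound to hold precisely up to rank seven.
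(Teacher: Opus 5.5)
\textbf{The gap: your key estimate is never proved.} Your reduction is correct and matches the paper's. With $K=P_{S}\PiF P_{S}$, the bound $\Qa(N)=\Tr(AKAK)\le\|K\|\,\Tr(AKA)=\|K\|\,\cF(N^{2})$ is equivalent to the paper's identity $q\,\cF(N^{2})-\Qa(N)=\|(q\mathbb 1-\mathsf B)^{1/2}\mathsf A\mathsf B^{1/2}\|_{\mathrm{HS}}^{2}$. Your equality analysis is also right: there is one plane per direction inside $\{0,\dots,6\}$, so $K$ is $q$ times a projector commuting with a diagonal $A$. But the estimate $\langle\omega,\PiF\,\omega\rangle\le q\|\omega\|^{2}$ on $\Lambda^{4}R$ with $\dim R\le7$ is the entire content of the theorem (it is the paper's Eq.~\eqref{eq:ocseven}), and none of your three routes reaches it.
\begin{itemize}
\item The Pl\"ucker/Cauchy--Schwarz picture has content only for coordinate subspaces. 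For a generic $R$ the coordinates $\omega_{F}$ are nonzero on essentially every plane, so there is no restricted family of planes to sum over.
\item The maximization fallback has no mechanism. Unitaries commuting with the Clifford action are scalars. Clifford conjugations, which do preserve $\PiF$, form a finite group modulo phases and cannot move $R$ continuously. Stationarity under general unitaries gives conditions you have no way to analyze.
\item The Pauli-averaging route, done carefully, yields only the one-hole identity $\PiF n_{x}\PiF=q\,\PiF$ (Corollary~\ref{cor:occup}). Any $\mathcal T$ with $\dim\mathcal T\le7<d$ lies in some $x^{\perp}$, and $P_{\Lambda^{4}x^{\perp}}=\mathbb 1-n_{x}$ on four-forms. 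This gives only $P_{\Lambda^{4}\mathcal T}\PiF P_{\Lambda^{4}\mathcal T}\preceq(1-q)P_{\Lambda^{4}\mathcal T}$. That constant equals the required $q$ only at $d=8$ and is far too weak for $d\ge16$.
\end{itemize}
(Separately, the twirl you quote should read $\sum_{P}C_{P}XC_{P}^{\dagger}=d\,\Tr(P_{R}X)\,P_{R}$. As a tensor identity, $\sum_{P}C_{P}\otimes C_{P}$ is $d$ times the swap compressed to $R\otimes R$, not $d\,P_{R}\otimes P_{R}$.)

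\textbf{The missing idea: self-similar recursion over affine subspaces.} The paper runs the $d=8$ argument inside every three-space and propagates it upward.
\begin{itemize}
\item For each $k$-dimensional affine subspace $H\subseteq\Ftwo^{n}$ it builds the local projector $\Pi_{H}$ with parameter $q_{H}=4/2^{k}$ [Eq.~\eqref{eq:localflat}].
\item It proves the exact hyperplane resolution \eqref{eq:resfinal}. This writes the defect $q_{H}\mathbb 1-\Pi_{H}$ as a positive combination of the defects $q_{H''}P^{(4)}_{H''}-\Pi_{H''}$ of the affine hyperplanes $H''$ of $H$, using $2q_{H}=q_{H''}$. It rests on the two counting identities \eqref{eq:res1} and \eqref{eq:res2}.
\item The induction must be run in the map form \eqref{eq:mapcap}, for $A:E\to\mathbb C^{H}$ with $\dim E\le7$. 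Restricting to $H''$ replaces your subspace by its image under $P_{H''}$, which is no longer a subspace of $R$ but still has rank at most seven.
\item The rank hypothesis is used exactly once, at the base $k=3$. The range of a rank-seven map into $\mathbb C^{H}\cong\mathbb C^{8}$ misses some unit vector $x$. Local uniform occupation (Lemma~\ref{lem:localU}) then gives $\Pi_{H}n_{x}\Pi_{H}=\tfrac12\Pi_{H}$, which is the bound with $q_{H}=\tfrac12$.
\end{itemize}
This is the precise, basis-free form of your intuition that seven dimensions cannot hold a full three-space. Without this recursion, or a substitute for it, your proposal shows only that the theorem follows from the key estimate, not the estimate itself.
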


\begin{theorem}[CB optimality of the antisymmetric stabilizer overlap at rank at most six]
\label{thm:cfsix}
Let $N\succeq0$ on $\mathbb C^{d}$, $d=2^{n}$, $n\ge3$, with $\mathrm{rank}\,N\le6$ and eigenvalues $t_{1}\ge\cdots\ge t_{6}\ge0$, padded with zeros. Then
\begin{equation}
 \cF(N)\ \le\ q\,\bigl(t_{1}t_{2}t_{3}t_{4}+t_{1}t_{2}t_{5}t_{6}+t_{3}t_{4}t_{5}t_{6}\bigr),
 \label{eq:cf6}
\end{equation}
and the right-hand side is the CB value: it equals $\cF(T)$ for the sorted diagonal $T=\mathrm{diag}(t_{1},\dots,t_{6},0,\dots,0)$.
\end{theorem}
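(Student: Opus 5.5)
The plan is to diagonalize $N$ and reduce the statement to a linear program over the perfect matching polytope of $K_{6}$. Write $N=\sum_{i=1}^{6}t_{i}\ketbra{v_{i}}{v_{i}}$ with orthonormal $v_{i}$ spanning a subspace $V$, $\dim V\le6$. Since $\Lambda^{4}(N)$ is diagonal in the wedge basis $v_{I}=v_{i_{1}}\wedge\cdots\wedge v_{i_{4}}$, $I\subset\{1,\dots,6\}$, $|I|=4$, we get
\begin{equation}
 \cF(N)=\sum_{|I|=4}t_{I}\,w_{I},\qquad t_{I}\equiv\prod_{i\in I}t_{i},\qquad w_{I}\equiv\langle v_{I},\PiF v_{I}\rangle\in[0,1].
\end{equation}
We label each four-set $I$ by its complementary pair $\{i,j\}$, which is an edge of $K_{6}$, and regard $x_{ij}\equiv w_{I}/q$ as edge weights. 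The right-hand side of Eq.~\eqref{eq:cf6} is then $q$ times the objective evaluated on the perfect matching $\{\{5,6\},\{3,4\},\{1,2\}\}$. For the sorted diagonal $T$ one has $v_{i}=\ket{i-1}$, and the only planes inside $\{0,\dots,5\}$ are $\{0,1,2,3\}$, $\{0,1,4,5\}$ and $\{2,3,4,5\}$. By Eq.~\eqref{eq:codeform} each of these has $w=q$ and all other four-sets have $w=0$, which confirms that the right-hand side equals $\cF(T)$.

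The first key step is to show that the vector $x$ lies in the perfect-matching (or matching) polytope of $K_{6}$. By Edmonds' description it suffices to verify three families of inequalities.
\begin{itemize}
\item Degree constraints: $\sum_{j}x_{ij}\le1$ for each vertex $i$. This is equivalent to $\cF(P_{V\ominus v_{i}})\le q$ for the rank-five projector onto $V\ominus v_{i}$, because $\Lambda^{4}$ of a projector is the projector onto the wedges $v_{I}$ with $i\notin I$.
\item Triangle (odd-set) constraints: $x_{ab}+x_{bc}+x_{ac}\le1$ for each triple $\{a,b,c\}$. This reads $\Tr[\PiF\,\Lambda^{4}(\cdot)]$ restricted to wedges containing the fixed triple $v_{J}$, $J=\{1,\dots,6\}\setminus\{a,b,c\}$, is at most $q$.
\item Total weight: $\sum x\le3$, i.e.\ $\cF(P_{V})\le3q$. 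This follows from the degree constraints by summing them.
\end{itemize}
On $K_{6}$ the five-set odd constraints are implied by the degree constraints, so these families suffice.

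The second step is to optimize the linear functional $\sum_{e}t_{I(e)}x_{e}$ over the polytope, where $I(e)$ is the complement of the edge $e$. Because $t\ge0$ the optimum is attained at a matching, and we may add edges so that it is a perfect matching. Among the fifteen perfect matchings of $K_{6}$, a rearrangement argument on the sorted $t_{i}$ (or direct comparison) shows that the matching $\{12\},\{34\},\{56\}$ maximizes $\sum t_{I}$. This yields Eq.~\eqref{eq:cf6}.

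The main obstacle is the pair of projector inequalities: $\cF(P)\le q$ for every rank-five projector, and the analogous bound for wedges containing a fixed three-vector. I would first rewrite each as $\sum_{L}\|\Lambda^{4}(P)\psi_{L}\|^{2}$, using Eq.~\eqref{eq:codeform}. Each $\psi_{L}$ is a uniform superposition of the $e_{F}$ over disjoint parallel planes, so $\langle v_{I},\psi_{L}\rangle=\sqrt q\sum_{F\parallel L}\det(\text{minor of the frame on }F)$. I would then apply Cauchy--Schwarz over the parallel planes and the Cauchy--Binet identity $\sum_{S}|\det A_{S}|^{2}=1$, with $|S|=4$, for an orthonormal frame $A$. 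The remaining task is to show that the set of plane directions hit by a 5-dimensional subspace cannot collectively exceed weight $q$; this is where the special structure of planes in $\Ftwo^{n}$ enters. The restriction to a fixed three-vector reduces in the same way to a statement about $\Lambda^{1}$ of a rank-three projector. If this direct route fails, I would instead try to derive the rank-five bound from Theorem~\ref{thm:rsevencap} applied to suitable rank-five $N$, together with the projector identity $\Lambda^{4}(P)^{2}=\Lambda^{4}(P)$.
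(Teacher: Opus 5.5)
Your overall architecture matches the paper's: the spectral expansion $\cF(N)=\sum_A t_A p_A$, edge variables $x_e=p_{\hat e}/q$, Edmonds' theorem, and an exchange argument on the sorted $t_i$. The polytope step, however, has a genuine gap. Your claim that on $K_6$ the five-vertex odd-set rows follow from the degree rows is false. You only have $x(\delta(i))\le 1$, not equality, so summing over $|U|=5$ gives merely $x(E(U))\le\frac52$. Take $x_e=\frac14$ on the ten edges of the $K_5$ on $\{2,\dots,6\}$ and zero on the edges at vertex $1$. This point satisfies every degree row and every triangle row. For $t=(T,1,1,1,1,1)$ its objective is $\frac52T$, while the CB value is $2T+1$, so for $T>2$ your relaxed LP does not give Eq.~\eqref{eq:cf6}. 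The implication you have in mind holds for the perfect matching polytope, where degree equalities force $x(\delta(U))=x(\delta(r))=1$; it does not hold here. You therefore need the blossom rows $\sum_{A\ni r}p_A\le 2q$ as a separate input. The paper gets them from the two-link bound $\widetilde T_{x}^{\dagger}\widetilde T_{x}\preceq 2q\,\mathbb 1$ of Theorem~\ref{thm:twolink}, combined with the identification $\|\widetilde T_{u_r}^{\dagger}\Omega\|^2=\sum_{A\ni r}p_A$. That bound rests on pair exchange and the compressed-swap identity $\mathcal X_x=q\,\mathsf S_x$.

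The two rows you do list are also left unproven, and your proposed routes cannot deliver them.
\begin{itemize}
\item Cauchy--Schwarz over the $d/4$ parallel planes of a direction throws away exactly the coherence that makes $\PiF$ small. Combined with Cauchy--Binet it gives only $\cF(P_Y)\le5$ for a five-dimensional $Y$, not $\cF(P_Y)\le q=4/d$.
\item The fallback through Theorem~\ref{thm:rsevencap}, or the underlying bound $P_{\Lambda^4Y}\PiF P_{\Lambda^4Y}\preceq q\,P_{\Lambda^4Y}$, controls only the largest eigenvalue on the five-dimensional space $\Lambda^4Y$. It therefore gives only $\cF(P_Y)\le5q$.
\end{itemize}
The triangle row needs one-hole unitarity, $DD^{\dagger}=q\,\mathbb 1$ on $\Lambda^3\mathbb C^d$, which holds because every three labels complete uniquely to a plane. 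It gives $\sum_z\|\PiF(e_z\wedge\tau)\|^2=q$ for every unit three-form $\tau$, in any orthonormal basis. Keeping only $z\in\{u_i,u_j,u_k\}$ then yields the row. The degree row is the five-plane bound $\|C^{\dagger}\Omega_Y\|^2\le q$. In the paper this requires the strictly stronger operator inequality $C^{\dagger}n_xn_yC\preceq q\,\mathbb 1$, proved via the occupancy decomposition, pair exchange, and the particle--hole transport identity built on the reference eight-form. Nothing in your sketch supplies a substitute for it.
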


We prove the two theorems in Appendix~\ref{app:flatcode}. Here we show how they imply Theorem~\ref{thm:resRank6}.

\begin{proof}[Proof of Theorem~\ref{thm:resRank6}]
By Corollary~\ref{cor:cbG} it suffices to prove $\Gm(V)\le\Gm(\mathbb 1)$ for every $V\in U(d)$. Let first $n\ge3$, and let $N=VMV^{\dagger}$, of rank at most six. Proposition~\ref{prop:master} with $\mathcal D\ge0$, then Theorem~\ref{thm:rsevencap} applied to $N$ (its rank is at most $6\le7$, and $6d^{2}q=24d$), then Theorem~\ref{thm:cfsix} applied to $N^{2}=VM^{2}V^{\dagger}$ (positive semidefinite, rank at most six, sorted spectrum $\mu_{x}^{2}$) give
\begin{align}
 \Gm(V)&\le\mathcal I(\boldsymbol\mu)+6d^{2}\Qa(N)+18d\,\cF(N^{2})\notag\\
 &\le\mathcal I(\boldsymbol\mu)+42d\,\cF(N^{2})\notag\\
 &\le\mathcal I(\boldsymbol\mu)+42d\,\cF(M^{2})
 \ =\ \Gm(\mathbb 1).
 \label{eq:chain6}
\end{align}
The final equality holds because at $V=\mathbb 1$ both bounds are saturated: the defect $\mathcal D(M)$ vanishes at the computational diagonal, and the equality case of Theorem~\ref{thm:rsevencap} at $M$ (diagonal, support in the first six labels) gives $6d^{2}\Qa(M)=24d\,\cF(M^{2})$. Hence, by Proposition~\ref{prop:master} at $V=\mathbb 1$,
$\Gm(\mathbb 1)=\mathcal I(\boldsymbol\mu)+6d^{2}\Qa(M)+18d\,\cF(M^{2})=\mathcal I(\boldsymbol\mu)+42d\,\cF(M^{2})$. 
Finally, the thesis is valid also for $n <3$. At $n=1$ global CB optimality holds for every spectrum~\cite{Qian2025Nonlocal}, while for $n=2$ the space $\Lambda^{4}\mathbb C^{4}$ is one-dimensional with $\PiF$ the identity on it, so $\Qa(N)=\cF(N^{2})=(\det M)^{2}$ are constant on the orbit, the second and third steps of \eqref{eq:chain6} become equalities, and the chain reduces to the defect inequality $\Gm(V)\le\Gm(\mathbb 1)$ of Proposition~\ref{prop:master}.
\end{proof}

\subsection{The antisymmetric stabilizer subspace and the matching polytope of $K_{6}$}
\label{ssec:mechanism}

We now describe the mechanisms behind Theorems~\ref{thm:rsevencap} and~\ref{thm:cfsix}, whose proofs are detailed in Appendix~\ref{app:flatcode}.
Both rest on the basis of Eq.~\eqref{eq:codeform}.

\textit{Rank six and the matching polytope.---}
We first explain how Theorem~\ref{thm:cfsix} arises. Let $N\succeq0$ have rank at most six, and let $u_{1},\dots,u_{6}$ be orthonormal eigenvectors of $N$ with eigenvalues $t_{1}\ge\dots\ge t_{6}\ge0$.
The operator $\Lambda^{4}(N)$ is diagonal in the wedges $u_{A}\equiv u_{a_{1}}\wedge\cdots\wedge u_{a_{4}}$ of four eigenvectors, $A=\{a_{1}<\dots<a_{4}\}\subset\{1,\dots,6\}$, with eigenvalues $t_{A}\equiv\prod_{a\in A}t_{a}$.
Hence the antisymmetric stabilizer overlap is
\begin{equation}
   \cF(N)=\sum_{A}t_{A}\,p_{A},
   \qquad p_{A}\equiv\|\PiF u_{A}\|^{2}\ge0 .
   \label{eq:cFexp}
\end{equation}
The eigenvalues are fixed by the spectrum, while the eigenvectors carry the dependence on the local unitary, which therefore influences $\cF(N)$ only through the fifteen numbers $p_{A}$.
To organize them, we identify each four-subset $A$ with the complementary edge $e=\{1,\dots,6\}\setminus A$ of the complete graph $K_{6}$ on the six eigenvector labels, and we write $x_{e}\equiv p_{A}/q$.
We show in Appendix~\ref{app:flatcode} that three properties of $W_{\mathrm F}$ constrain these variables.
(i) Contracting one vector out of a vector of $W_{\mathrm F}$ is $\sqrt q$ times a signed permutation between bases, because every three labels complete uniquely to a plane. Applied to the wedge of three eigenvectors, this gives $x_{ij}+x_{ik}+x_{jk}\le1$ for every triangle $\{i,j,k\}$.
(ii) For every five-dimensional subspace $Y$, the squared norms of $\PiF$ on the five wedges of four vectors of an orthonormal basis of $Y$ sum to at most $q$. Applied to the span of five eigenvectors, this gives $\sum_{j\ne i}x_{ij}\le1$ for every vertex $i$.
(iii) The map $\alpha\otimes\psi\mapsto\alpha\wedge a_{x}\psi$, which contracts the vector $x$ out of $\psi\in W_{\mathrm F}$ and wedges the result with a two-form $\alpha$, has operator norm at most $\sqrt{2q}$.
Applied to $x=u_{r}$, this gives $\sum_{e\not\ni r}x_{e}\le2$ for every vertex $r$.

These three families are exactly the triangle, degree, and five-vertex odd-set constraints of the matching polytope of $K_{6}$, so by Edmonds' theorem~\cite{Edmonds1965matching} the vector $(x_{e})$ is a convex combination of indicator vectors of matchings. Consequently
  \begin{equation}
   \cF(N)\le q\max_{\mathcal M}\sum_{e\in\mathcal M}t_{\hat e},
  \end{equation}
where the maximum runs over the perfect matchings $\mathcal M$ of $K_{6}$ and $t_{\hat e}$ is the weight of the four-set complementary to $e$.
For sorted eigenvalues the maximum is attained by the matching $\{12,34,56\}$, whose three weights are the three products in Eq.~\eqref{eq:cf6}. This bound is the CB value.
Among the six labels $\{0,\dots,5\}$, the four-sets with vanishing sum are exactly $\{0,1,2,3\}$, $\{0,1,4,5\}$, and $\{2,3,4,5\}$, each a plane with $p_{A}=q$, so the sorted diagonal saturates Eq.~\eqref{eq:cf6}.

\textit{Rank seven and the compression bound.---}
We now turn to Theorem~\ref{thm:rsevencap}.
Its content is the operator inequality
\begin{equation}
   P_{\Lambda^{4}\mathcal T}\,\PiF\,P_{\Lambda^{4}\mathcal T}\ \preceq\ q\,P_{\Lambda^{4}\mathcal T}
   \label{eq:ocseven}
\end{equation}
for every subspace $\mathcal T\subset\mathbb C^{d}$ of dimension at most seven, where $P_{\Lambda^{4}\mathcal T}$ is the projector onto $\Lambda^{4}\mathcal T$.
The threshold has a simple geometric origin. On two parallel planes the antisymmetric stabilizer projector has a two-by-two block with all four entries equal to $q$, whose largest eigenvalue is $2q$, and two parallel planes span a three-dimensional affine subspace of eight points, which does not fit into seven dimensions.
We prove Eq.~\eqref{eq:ocseven} in Appendix~\ref{app:selfsim} by an induction over affine subspaces of $\Ftwo^{n}$, using that the antisymmetric stabilizer subspace built inside an affine subspace resolves exactly into those built inside its hyperplanes.
To derive Theorem~\ref{thm:rsevencap} from Eq.~\eqref{eq:ocseven}, let $\mathcal T=\mathrm{ran}\,N$ and consider the operators $\mathsf A\equiv\Lambda^{4}(N|_{\mathcal T})$ and $\mathsf B\equiv P_{\Lambda^{4}\mathcal T}\PiF P_{\Lambda^{4}\mathcal T}$ on $\Lambda^{4}\mathcal T$, in terms of which $\Qa(N)=\Tr(\mathsf B\mathsf A\mathsf B\mathsf A)$ and $\cF(N^{2})=\Tr(\mathsf B\mathsf A^{2})$. Since $q\,\mathbb 1-\mathsf B\succeq0$ by Eq.~\eqref{eq:ocseven}, we obtain
\begin{equation}
   q\,\cF(N^{2})-\Qa(N)
   =\bigl\|(q\,\mathbb 1-\mathsf B)^{1/2}\mathsf A\,\mathsf B^{1/2}\bigr\|_{\mathrm{HS}}^{2}\ \ge\ 0 ,
\end{equation}
which is Eq.~\eqref{eq:r7cap}.
Equality holds for a diagonal $N$ supported on seven labels.
In this case no two parallel planes fit into the support, so $\mathsf B$ is $q$ times a projector that commutes with $\mathsf A$, and the two traces coincide.

It is important to stress that each rank hypothesis is used exactly once.
Rank at most seven is needed for Eq.~\eqref{eq:ocseven}, which fails at rank eight for the reason given above. Rank at most six is needed for the matching polytope of $K_{6}$, whose facets are supplied by the three properties of $W_{\mathrm F}$.
We also note that Theorem~\ref{thm:resRank6} turns the low-rank comparator family of Sec.~\ref{sec:witness} into an unconditional certificate at every $d=2^{n}$, on the same footing as the dyadic family.

\section{Bounds on the nonlocal SRE of an arbitrary state}
\label{sec:witness}

In this section, we prove Theorems~\ref{thm:resValue} and~\ref{thm:resComparator}, which bound the nonlocal SRE of an arbitrary state.
The lower and upper bounds on $\mathcal M_{\mathrm{NL}}$ are of different nature.
Since $\mathcal M_{\mathrm{NL}}$ is a minimum over local unitaries, every pair of local unitaries at which the SRE is evaluated gives an upper bound for $\mathcal{M}_{\mathrm{NL}}$, and in particular the CB representative state yields $\mathcal M_{\mathrm{NL}}\le\mathcal M_{\mathrm{CB}}$.
A lower bound on $\mathcal M_{\mathrm{NL}}$, by contrast, is a statement about all local unitaries at once.
We obtain such statements in two ways.
We first enclose the Schmidt spectrum between two dyadic staircases, whose nonlocal SRE is known exactly by Theorem~\ref{thm:resDyadic}, and we arrive at Theorem~\ref{thm:resValue}, which bounds $\mathcal M_{\mathrm{NL}}$ from both sides.
We then approximate the spectrum of $\ket{\Psi}$ by a nearby spectrum with exactly known nonlocal SRE, which leads to Theorem~\ref{thm:resComparator}.

Throughout, $\mathcal F_{*}(\boldsymbol\mu)\equiv\max_{V_{A},V_{B}}\Fmu(V_{A},V_{B})$ denotes the optimum of the stabilizer purity, so that $\mathcal M_{\mathrm{NL}}=-\log_{2}\mathcal F_{*}$, and $\mathcal F_{\mathrm{CB}}(\boldsymbol\mu)\equiv\Fmu(\mathbb 1,\mathbb 1)=\Gm(\mathbb 1)$ is its computational-basis value, with $\mathcal M_{\mathrm{CB}}\equiv-\log_{2}\mathcal F_{\mathrm{CB}}$.
Both stabilizer purities are homogeneous of degree eight in the Schmidt amplitudes.
We therefore evaluate the same expressions also for unnormalized nonnegative vectors $\boldsymbol\nu$, for which $\mathcal F(\boldsymbol\nu)=c^{4}\mathcal F(\boldsymbol\nu/\sqrt c)$ with $c=\sum_{x}\nu_{x}^{2}$.
This extension is needed below since the shell partners and the comparator spectra are, in general, not normalized.

\subsection{Shell partners}
\label{ssec:partners}

We group the labels $x=0,\dots,d-1$ into the dyadic shells $D_{k}$ of Eq.~\eqref{eq:dyadicspec}, so that $D_{0}=\{0\}$ and $D_{k}=\{2^{k-1},\dots,2^{k}-1\}$ contains $2^{k-1}$ labels for $k\ge1$.
For a sorted nonnegative spectrum $\boldsymbol\mu$, possibly unnormalized, we define the two \emph{shell partners} by rounding every shell to one of its endpoints,
\begin{equation}
 (\nuup)_{x}\equiv\mu_{2^{k-1}},
 \qquad
 (\nudown)_{x}\equiv\mu_{2^{k}-1},
 \qquad x\in D_{k},\ k\ge1,
 \label{eq:shellpartners}
\end{equation}
with $(\nuup)_{0}=(\nudown)_{0}\equiv\mu_{0}$.
Both are sorted dyadic staircases, and they enclose the target entrywise, $\nudown\le\boldsymbol\mu\le\nuup$, as the largest staircase below it and the smallest staircase above it [Fig.~\ref{fig:witnessgeom}(a)].

\begin{figure}[t]
\centering
\includegraphics[width=0.96\columnwidth]{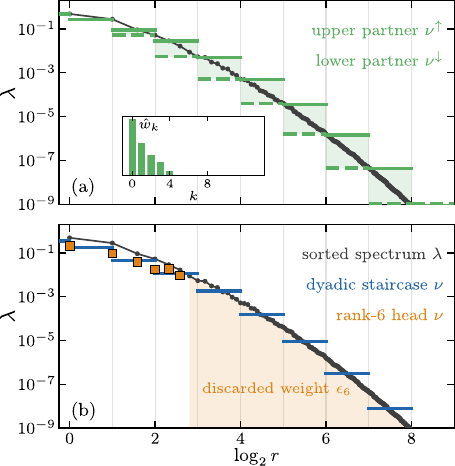}
\caption{Shell partners and comparators for the critical Ising (see Sec.~\ref{ssec:ising}) half-chain  spectrum at $N=3\times10^{6}$ (gray), on a $\log_{2}$ axis of the rank $r$, on which each dyadic shell is a unit column. (a)~The upper and lower partners $\nuup$ (solid) and $\nudown$ (dashed) of Eq.~\eqref{eq:shellpartners} enclose the target. Inset: the shell masses $\hat w_{k}$ of $\nuup$, which enter $Q$. (b)~The two comparator families of Sec.~\ref{ssec:comparator}: the optimized dyadic staircase (blue) follows the target across all scales, whereas the rank-six head (orange) keeps the six leading amplitudes and discards the shaded weight $\epsilon_{6}$.}
\label{fig:witnessgeom}
\end{figure}

To turn the shell partners $\nuup,\nudown$ into bounds, we use that $\mathcal F_{\mathrm{CB}}$ is nondecreasing under entrywise increases of a nonnegative spectrum.
Indeed, by Eq.~\eqref{eq:Phimatel} the operator $\PW M^{\otimes4}\PW$ is diagonal in the $\Phi$-basis, with entries that are polynomials with nonnegative coefficients in the amplitudes $\mu_{x}$, and $\mathcal F_{\mathrm{CB}}=\Gm(\mathbb 1)$ is $d^{2}$ times the sum of their squares.
Since the partners are dyadic staircases, we further make use of the fact that the computational-basis value of a staircase is controlled from both sides by its shell masses, as stated in the following theorem.

\begin{theorem}[Stabilizer purity of dyadic staircases]
\label{thm:shellpurity}
For every dyadic-staircase spectrum $\boldsymbol\nu$, possibly unnormalized, with shell masses $w_{k}\equiv\sum_{x\in D_{k}}\nu_{x}^{2}$,
\begin{equation}
 \sum_{k}w_{k}^{4}\ \le\ \mathcal F_{\mathrm{CB}}(\boldsymbol\nu)\ \le\ C_{\mathrm{sh}}\sum_{k}w_{k}^{4},
 \label{eq:shellpurity}
\end{equation}
where $C_{\mathrm{sh}}\equiv\tfrac{790}3+120\sqrt2+60\sqrt3+48\sqrt6<2^{10}$ is an absolute constant.
\end{theorem}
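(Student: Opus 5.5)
The plan is to work directly with the computational-basis form of the stabilizer purity. At $V=\mathbb 1$, Eq.~\eqref{eq:Phimatel} makes $\PW M^{\otimes4}\PW$ diagonal in the $\Phi$-basis, with entries
\[
\langle\Phi_{u,v}|M^{\otimes4}|\Phi_{u,v}\rangle=\tfrac1d S_{u,v},
\qquad
S_{u,v}\equiv\sum_{x}\nu_x\nu_{x+u}\nu_{x+v}\nu_{x+u+v}.
\]
Eq.~\eqref{eq:Gcompressed} then gives $\mathcal F_{\mathrm{CB}}(\boldsymbol\nu)=\sum_{u,v}S_{u,v}^2$. All terms are nonnegative, so both bounds reduce to combinatorics of planes $\{x,x+u,x+v,x+u+v\}$ relative to the shells. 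For $k\ge1$ write $m_k=2^{k-1}$, $\theta_k$ for the value of $\boldsymbol\nu$ on $D_k$ (so $w_k=m_k\theta_k^2$), and set $L_{k}\equiv\{0,\dots,2^{k}-1\}$, which is a linear subspace of $\Ftwo^n$.

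\emph{Lower bound.} Fix $k\ge1$ and take $u,v\in L_{k-1}$. Adding $u$ or $v$ does not change the leading bit, so every $x\in D_k$ has its whole plane inside $D_k$. Hence $S_{u,v}\ge\sum_{k:\,u,v\in L_{k-1}}m_k\theta_k^4$, where the $k=0$ shell contributes $\theta_0^4$ at $u=v=0$. Since all summands are nonnegative, $(\sum_k a_k)^2\ge\sum_k a_k^2$ gives
\[
\sum_{u,v}S_{u,v}^2\ \ge\ \sum_k\ \sum_{u,v\in L_{k-1}}m_k^2\theta_k^8=\sum_k m_k^4\theta_k^8=\sum_k w_k^4 .
\]

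\emph{Upper bound.} Expand
\[
\mathcal F_{\mathrm{CB}}=\sum_{u,v}\sum_{x,y}\prod(\text{8 amplitudes}),
\]
a sum over pairs of parallel, equally labeled planes $(x;u,v)$ and $(y;u,v)$. Classify each plane by the multiset of shell indices of its four labels. Because the labels of a plane sum to zero, the highest leading-bit position occurs an even number of times. So each plane has shell pattern $(a,a,b,c)$ with $a\ge b\ge c$, and $b=c$ unless $b=a$. Each pattern is one of $(a,a,a,a)$, $(a,a,b,b)$, $(a,a,a,b)$ (the latter forcing further structure), or the like.

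For a fixed pair of patterns, I would count the admissible $(u,v,x,y)$. The count is a product of shell sizes $m$'s determined by which bits of $u$ and $v$ are free. I would then bound the summand by the product of the corresponding $\theta$'s and rewrite the result via $\theta_k=\sqrt{w_k/m_k}$. The outcome should be a term of the form $\prod_i w_{k_i}^{\alpha_i}$ with $\sum\alpha_i=4$, times a factor $2^{-\gamma\,(\text{index gaps})}$ coming from the mismatch between the count and $\prod m_{k_i}^{\alpha_i}$. Weighted AM--GM, or equivalently Hölder/Schur-test estimates on the matrix $2^{-|j-k|/2}$, then reduces each such family to $\mathrm{const}\cdot\sum_k w_k^4$, with the constant given by a geometric series.

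Summing the pattern families should reproduce $C_{\mathrm{sh}}$. The irrational pieces $120\sqrt2$, $60\sqrt3$ and $48\sqrt6$ presumably arise from the sums $\sum_{j\ge1}2^{-j/2}$ (together with analogous series) and from optimizing the AM--GM weights, while the rational part $790/3$ collects the diagonal pattern counts.

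The main obstacle is the upper bound's bookkeeping. First, I must enumerate all pairs of shell patterns of two parallel planes sharing $(u,v)$; the mixed patterns are where the count can exceed the naive one. Second, I must verify that every family carries genuine geometric decay in the index gaps, so that no family yields a non-summable $\sum_{j,k}w_j^2w_k^2$ without a decaying weight. I would first settle the case where both planes are of type $(a,a,b,b)$ to check that the decay exponent is $1/2$ per unit gap, and then treat the remaining patterns the same way. The explicit value of $C_{\mathrm{sh}}$ is not essential to the argument; only its finiteness and the bound $C_{\mathrm{sh}}<2^{10}$ are needed later.
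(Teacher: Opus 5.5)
Your lower bound is correct and is the paper's argument: for $u,v<m_k$ the plane based at $x\in D_k$ stays in $D_k$, and $(\sum_k a_k)^2\ge\sum_k a_k^2$ over the $m_k^2$ admissible pairs $(u,v)$ gives $\sum_k w_k^4$.

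\textbf{The upper bound is not established.} It is only a plan, and the one structural fact you state about it is wrong. Parity of the top bit does force the highest shell of a plane to occur twice or four times. The other two labels, however, need not share a shell. For example, $x=0,u=1,v=4$ gives $\{0,1,4,5\}$ with shell pattern $(3,3,1,0)$, and $x=1,u=2,v=8$ gives $\{1,3,9,11\}$ with pattern $(4,4,2,1)$. Conversely, $(a,a,a,b)$ with $b<a$ never occurs, because three labels carrying the top bit force the fourth to carry it too. The correct list is $(a,a,a,a)$, $(a,a,t,t)$ with $t<a$, and $(a,a,b,j)$ with $a>b>j$.

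The family you exclude is exactly the third term of the pivot formula, Eq.~\eqref{eq:planepivot}. Its weighted norm $4\sqrt3(1+\sqrt2)$ is the largest of the three plane contributions in the paper's estimate, and it is the source of every $\sqrt2$ and $\sqrt6$ in $C_{\mathrm{sh}}$. Bookkeeping built on your classification would therefore undercount $\mathcal F_{\mathrm{CB}}$ and yield no valid bound. Separately, you leave open both the geometric decay in each family and the value of the constant. The statement asserts the specific $C_{\mathrm{sh}}$, so an uncomputed finite constant does not prove it.

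\textbf{The missing idea is to evaluate $S_{u,v}$ class by class.} The paper does not expand into the eight-amplitude sum over pairs of planes. Instead it uses the fact that, for a staircase, $S_{u,v}$ depends only on $\mathrm{sh}(\delta)$ when $u,v$ span a line $\{0,\delta\}$, and only on the pivot pair $(a,b)$ when they span a two-dimensional $L$ (Proposition~\ref{prop:pivot}). The corresponding class sizes are $m_r$ and $m_am_b/2$. This collapses the purity to
$P_2^2+3\sum_r m_r\mathsf C_r^2+48\sum_{a>b}m_am_b\mathsf B_{ab}^2$, with $P_2=\sum_x\nu_x^4$ and explicit $\mathsf C_r$, $\mathsf B_{ab}$. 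Minkowski's inequality then splits each channel into its pattern families. Cauchy--Schwarz, AM--GM and geometric sums bound each family by an explicit multiple of $\sum_k w_k^4$, and collecting these multiples gives $C_{\mathrm{sh}}$. Your pattern-pair counting is equivalent in principle, but only once this per-class evaluation is in hand. None of the required counts or estimates is carried out in the proposal.
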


The proof is given in Appendix~\ref{app:shell}.
It evaluates $\mathcal F_{\mathrm{CB}}$ of a staircase through the channel decomposition of Eq.~\eqref{eq:channels} and bounds each channel by the shell masses.
We apply it below to the two partners.
On the shell $D_{k}$, every amplitude of $\nuup$ equals $\mu_{2^{k-1}}$ and every amplitude of $\nudown$ equals $\mu_{2^{k}-1}$, so that in terms of the probabilities  $\lambda_{x}=\mu_{x}^{2}$ their shell masses in Eq.~\eqref{eq:shellpurity} are
\begin{equation}
 \hat w_{k}\equiv2^{k-1}\lambda_{2^{k-1}},
 \qquad
 \check w_{k}\equiv2^{k-1}\lambda_{2^{k}-1},
 \qquad
 \hat w_{0}=\check w_{0}\equiv\lambda_{0}.
 \label{eq:shellmasses}
\end{equation}
The participation $Q\equiv\sum_{k}\hat w_{k}^{4}$ of Theorem~\ref{thm:resValue} is built from the masses of the upper partner.

\subsection{The nonlocal SRE up to bounded constants}
\label{ssec:value}

Theorem~\ref{thm:resValue} makes two statements.
The nonlocal SRE differs from the computational-basis value by a constant $c_{1}\in[0,4]$, and the computational-basis value differs from $-\log_{2}Q$ by a constant $c_{2}\in[-\log_{2}C_{\mathrm{sh}},4]$.
We prove both by enclosing the target between its partners.
The upper partner controls the optimum $\mathcal F_{*}(\boldsymbol\mu)$ from above, the lower partner controls the computational-basis value from below, and Theorem~\ref{thm:shellpurity} converts both into shell masses.
We begin by comparing the optimum of the target with the computational-basis value of the upper shell partner,
\begin{equation}
   \mathcal F_{\mathrm{CB}}(\boldsymbol\mu)
   \ \le\ \mathcal F_{*}(\boldsymbol\mu)
   \ \le\ \mathcal F_{*}(\nuup)
   \ =\ \mathcal F_{\mathrm{CB}}(\nuup).
   \label{eq:shellsandwich}
\end{equation}
The first inequality is the evaluation at $V_{A}=V_{B}=\mathbb 1$.
For the second, we use Lemma~\ref{lem:diag}, which reduces both optima to maxima of the single-unitary functional, and the entrywise domination $\boldsymbol\mu\le\nuup$.
With $M=\mathrm{diag}(\boldsymbol\mu)$ and $M^{\uparrow}=\mathrm{diag}(\nuup)$, we have $0\le VMV^{\dagger}\le VM^{\uparrow}V^{\dagger}$ at every $V$, hence the same order for the fourth tensor powers, for their compressions by $\PW$, and for the squared Hilbert--Schmidt norms of the compressions, so that $\Gm(V)\le G_{\nuup}(V)$ at every $V$ and therefore $\mathcal F_{*}(\boldsymbol\mu)\le\mathcal F_{*}(\nuup)$.
The final equality is Theorem~\ref{thm:resDyadic}, since $\nuup$ is a dyadic staircase.

Next we show that the upper partner exceeds the target by at most a factor of sixteen,
\begin{equation}
   \mathcal F_{\mathrm{CB}}(\nuup)\ \le\ 16\,\mathcal F_{\mathrm{CB}}(\boldsymbol\mu).
   \label{eq:doubling}
\end{equation}
Let $\boldsymbol\eta$ be the restriction of $\nudown$ to the labels $x<d/2$, a sorted spectrum on $n-1$ qubits per party.
Doubling every entry of $\boldsymbol\eta$ gives a sorted spectrum on $n$ qubits per party, which coincides with $\boldsymbol\eta\otimes(1,1)$ up to a permutation of the qubits, a Clifford operation that leaves $\mathcal F_{\mathrm{CB}}$ unchanged.
The doubled spectrum dominates $\nuup$ entrywise.
On the labels $0$ and $1$ it equals $\mu_{0}$, while $\nuup$ equals $\mu_{0}$ and $\mu_{1}$.
On the shell $D_{k}$ with $k\ge2$ it equals the smallest amplitude $\mu_{2^{k-1}-1}$ of the preceding shell, while $\nuup$ equals $\mu_{2^{k-1}}\le\mu_{2^{k-1}-1}$.
Since $\mathcal F_{\mathrm{CB}}$ is entrywise nondecreasing, multiplicative over tensor factors, and of degree eight, we obtain $\mathcal F_{\mathrm{CB}}(\nuup)\le\mathcal F_{\mathrm{CB}}(\boldsymbol\eta\otimes(1,1))=16\,\mathcal F_{\mathrm{CB}}(\boldsymbol\eta)$, where $16=\mathcal F_{\mathrm{CB}}((1,1))$ is the stabilizer purity of the unnormalized Bell pair.
In the other direction, $(1,0)\otimes\boldsymbol\eta$ is the copy of $\boldsymbol\eta$ padded with zeros to dimension $d$, which lies entrywise below $\nudown\le\boldsymbol\mu$, so $\mathcal F_{\mathrm{CB}}(\boldsymbol\eta)\le\mathcal F_{\mathrm{CB}}(\boldsymbol\mu)$, and Eq.~\eqref{eq:doubling} follows.
Combining Eqs.~\eqref{eq:shellsandwich} and~\eqref{eq:doubling} gives $\mathcal F_{\mathrm{CB}}\le\mathcal F_{*}\le16\,\mathcal F_{\mathrm{CB}}$, that is,
\begin{equation}
   \mathcal M_{\mathrm{CB}}-4\ \le\ \mathcal M_{\mathrm{NL}}\ \le\ \mathcal M_{\mathrm{CB}},
   \label{eq:cbgapM}
\end{equation}
which is the first statement of Theorem~\ref{thm:resValue}, with $0\le c_{1}\le4$. The factor of sixteen in~\eqref{eq:doubling} is sharp for this comparison, since for the flat spectrum of rank $2^{m}+1$ the ratio $\mathcal F_{\mathrm{CB}}(\nuup)/\mathcal F_{\mathrm{CB}}(\boldsymbol\mu)$ tends to $16$ as $m\to\infty$.
It bounds the rounding to the upper partner, not the optimization itself, and Conjecture~\ref{conj:cb} asserts that $c_{1}=0$.

Finally, we express $\mathcal F_{\mathrm{CB}}(\boldsymbol\mu)$ through $Q$.
Since the lower partner lies entrywise below the target, the fact that $\mathcal F_{\mathrm{CB}}$ is nondecreasing and the lower inequality in Eq.~\eqref{eq:shellpurity}, applied to $\nudown$, give
\begin{equation}
   \frac1{16}\,Q
   \ \le\ \sum_{k}\check w_{k}^{4}
   \ \le\ \mathcal F_{\mathrm{CB}}(\nudown)
   \ \le\ \mathcal F_{\mathrm{CB}}(\boldsymbol\mu).
   \label{eq:shellfloor}
\end{equation}
The first inequality relates the masses of the two partners.
Because the spectrum is sorted, the largest probability of a shell does not exceed the smallest probability of the preceding shell, so that $\hat w_{1}=\lambda_{1}\le\lambda_{0}=\check w_{0}$ and $\hat w_{k}\le2\check w_{k-1}$ for $k\ge2$, where the factor two is the ratio of the sizes of consecutive shells.
Raising these inequalities to the fourth power and summing over $k$ gives $\sum_{k}\hat w_{k}^{4}\le16\sum_{k}\check w_{k}^{4}$.
In the other direction, Eq.~\eqref{eq:shellsandwich} and the upper inequality in Eq.~\eqref{eq:shellpurity}, applied to $\nuup$, give $\mathcal F_{\mathrm{CB}}(\boldsymbol\mu)\le\mathcal F_{*}(\boldsymbol\mu)\le C_{\mathrm{sh}}Q$.
Writing both bounds in terms of $\mathcal M_{\mathrm{NL}}=-\log_{2}\mathcal F_{*}$ and $\mathcal M_{\mathrm{CB}}=-\log_{2}\mathcal F_{\mathrm{CB}}$, we obtain
\begin{equation}
   -\log_{2}Q-\log_{2}C_{\mathrm{sh}}
   \ \le\ \mathcal M_{\mathrm{NL}}
   \ \le\ \mathcal M_{\mathrm{CB}}
   \ \le\ -\log_{2}Q+4,
   \label{eq:shellparticipation}
\end{equation}
which is the second statement of Theorem~\ref{thm:resValue}, with $- \log_{2}C_{\mathrm{sh}}\le c_{2}\le4$.
This completes the proof.

Three consequences of Eq.~\eqref{eq:shellparticipation} will be used in Sec.~\ref{sec:allrank}.
The constants in Eq.~\eqref{eq:shellparticipation} do not depend on the spectrum, so for any family of states we have
\begin{equation}
   \mathcal M_{\mathrm{NL}}
   =\mathcal M_{\mathrm{CB}}+O(1)
   =-\log_{2}Q+O(1),
   \label{eq:cbgapscaling}
\end{equation}
and the growth of the nonlocal SRE with the system size, both its functional form and its leading coefficient, is independent of whether Conjecture~\ref{conj:cb} is true.
Moreover, $Q$ involves one probability per shell, namely the largest one, so the $n{+}1$ probabilities $\lambda_{0},\lambda_{1},\lambda_{2},\lambda_{4},\dots,\lambda_{2^{n-1}}$ determine the nonlocal SRE of any state of $2n$ qubits to the accuracy of Eq.~\eqref{eq:shellparticipation}.
We can also bound the nonlocal SRE by a single shell mass on each side.
Because the spectrum is sorted, the largest probability of the shell $D_{k}$ is at most the average probability of the shell $D_{k-1}$, so that $\hat w_{k}\le2\sum_{x\in D_{k-1}}\lambda_{x}$ for $k\ge2$, and together with $\hat w_{0}=\lambda_{0}$ and $\hat w_{1}\le\lambda_{0}$ this gives $\sum_{k}\hat w_{k}\le2$ for a normalized spectrum.
Hence $Q\le2(\max_{k}\hat w_{k})^{3}$, while trivially $\sum_{k}\check w_{k}^{4}\ge(\max_{k}\check w_{k})^{4}$.
Inserting these two estimates into Eqs.~\eqref{eq:shellfloor} and~\eqref{eq:shellsandwich} and using $2C_{\mathrm{sh}}<2^{11}$, we obtain
\begin{equation}
   3\log_{2}\frac1{\max_{k}\hat w_{k}}-11
   \ \le\ \mathcal M_{\mathrm{NL}}
   \ \le\ \mathcal M_{\mathrm{CB}}
   \ \le\ 4\log_{2}\frac1{\max_{k}\check w_{k}}.
   \label{eq:shellwindow}
\end{equation}
The lower bound grows when the weight of the spectrum is spread over many shells, so that no single shell mass is large, whereas the upper bound stays finite as soon as one shell carries a finite fraction of the weight.

The number of shells that carry appreciable mass is limited by their total number $n+1$ and by the entanglement entropy of the cut, and Eq.~\eqref{eq:shellfloor} converts either limit into an upper bound on the nonlocal SRE.
A spectrum with entanglement entropy $S_{1}$ keeps more than half of its weight on $O(S_{1})$ dyadic shells.  The fourth moment of the masses of these shells is at least their total mass to the fourth power divided by the cube of their number. Equation~\eqref{eq:shellfloor} then turns this fourth moment into a lower bound on the stabilizer purity, and hence into an upper bound on the nonlocal SRE.

\begin{theorem}[Nonlocal SRE is upper bounded by entanglement entropy]
  \label{thm:ceiling}
  Let $\ket{\Psi}$ be a state of $N=n_{A}+n_{B}$ qubits with sorted Schmidt spectrum $\boldsymbol\mu$ and entanglement entropy $S_{1}\equiv-\sum_{x}\lambda_{x}\log_{2}\lambda_{x}$ across
  the cut.
  Then
  \begin{align}
   \mathcal F_{\mathrm{CB}}(\boldsymbol\mu)\ &\ge\ \frac1{256\,(2S_{1}+3)^{3}},
   \qquad\text{hence}\notag\\
   \mathcal M_{\mathrm{NL}}\ \le\ \mathcal M_{\mathrm{CB}}\ &\le\ 3\log_{2}(2S_{1}+3)+8 .
   \label{eq:ceiling}
  \end{align}
  Moreover, with $n\equiv\min(n_{A},n_{B})$ the number of qubits of the smaller party, the Schmidt rank is at most $2^{n}$, so that $S_{1}\le n$, and counting the $n+1$ dyadic shells
  directly gives
  \begin{equation}
   \mathcal F_{\mathrm{CB}}(\boldsymbol\mu)\ \ge\ \frac1{16\,(n+1)^{3}},
   \qquad
   \mathcal M_{\mathrm{NL}}\ \le\ 3\log_{2}(n+1)+4 .
   \label{eq:ceilingn}
  \end{equation}
  \end{theorem}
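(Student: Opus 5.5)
Both bounds will come from the chain in Eq.~\eqref{eq:shellfloor}, $\mathcal F_{\mathrm{CB}}(\boldsymbol\mu)\ge\sum_k\check w_k^4$, together with the power-mean inequality. For any set $K$ of shells, $\sum_{k\in K}\check w_k^4\ge(\sum_{k\in K}\check w_k)^4/|K|^3$. The task is therefore to find a small set of shells whose lower-partner masses $\check w_k$ carry a fixed fraction of the weight.

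First I convert true shell masses $W_k=\sum_{x\in D_k}\lambda_x$ into lower-partner masses. By sortedness, every probability in $D_{k+1}$ is at most $\lambda_{2^k-1}$, so $W_{k+1}\le 2^k\lambda_{2^k-1}=2\check w_k$ for $k\ge1$. Also $W_1=\lambda_1\le\lambda_0=\check w_0$, and trivially $W_0=\check w_0$. Hence $\sum_k W_k\le 2\sum_k\check w_k+\check w_0$, which gives $\sum_k\check w_k\ge 1/3$ over all $n+1$ shells. More usefully, for a contiguous block $K=\{0,\dots,m\}$ I get $\sum_{k\le m}\check w_k\ge\tfrac13\sum_{k\le m+1}W_k$.

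For Eq.~\eqref{eq:ceilingn} the crude count over all $n+1$ shells gives $\mathcal F_{\mathrm{CB}}\ge 3^{-4}(n+1)^{-3}$. This is weaker than the stated $1/16$. To sharpen it, I would instead pair $\hat w$ and $\check w$ through the first inequality of Eq.~\eqref{eq:shellfloor} and the bound $\sum_k\hat w_k\ge\sum_k W_k=1$, since $\hat w_k\ge W_k$. Then $Q\ge(n+1)^{-3}$ and $\mathcal F_{\mathrm{CB}}\ge Q/16\ge 1/(16(n+1)^3)$, which is exactly Eq.~\eqref{eq:ceilingn}. Taking logarithms gives $3\log_2(n+1)+4$.

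For Eq.~\eqref{eq:ceiling} I would run the same argument restricted to the leading shells: $Q\ge\sum_{k\le m}\hat w_k^4\ge(\sum_{k\le m}W_k)^4/(m+1)^3$. The remaining step, which I expect to be the main obstacle, is a Markov-type tail bound. Labels $x\in D_k$ with $k\ge2$ satisfy $x\ge 2^{k-1}$, and sortedness gives $\lambda_x\le 1/(x+1)$, so $-\log_2\lambda_x\ge\log_2(x+1)\ge k-1$. Then
\begin{equation}
S_1\ge\sum_k (k-1)W_k ,
\end{equation}
and by Markov the tail weight satisfies $\sum_{k>m}W_k\le S_1/m$. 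I would choose $m=\lceil 2S_1\rceil+1\le 2S_1+2$, so the head $K=\{0,\dots,m\}$ has $|K|\le 2S_1+3$ shells carrying mass at least $1/2$. This gives $Q\ge 2^{-4}(2S_1+3)^{-3}$ and $\mathcal F_{\mathrm{CB}}\ge Q/16=1/(256(2S_1+3)^3)$, hence $\mathcal M_{\mathrm{CB}}\le 3\log_2(2S_1+3)+8$.

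The delicate points are getting the off-by-one in the Markov step and the definition of $m$ so that the constants $2S_1+3$ and $256$ come out exactly. If they do not, I would adjust $m$ to $\lfloor 2S_1\rfloor+2$. Finally, $\mathcal M_{\mathrm{NL}}\le\mathcal M_{\mathrm{CB}}$ is immediate, since the CB representative lies on the orbit.
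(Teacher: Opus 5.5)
Your proof is correct and is essentially the paper's argument, with the same constants. Both use condensation of the shell masses onto the lower partner, the power-mean inequality over the occupied shells, and a Markov truncation after $\lceil 2S_1\rceil+1$ shells. The only differences are bookkeeping: you drop the terms $k>m$ from $Q$ where the paper truncates the spectrum itself to an entrywise minorant, and you route the factor $1/16$ through the first inequality of Eq.~\eqref{eq:shellfloor}.

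The detour through $\hat w_k$ was not needed. Since $W_0+W_1\le2\check w_0$, the three inequalities you list already give $\sum_k W_k\le2\sum_k\check w_k$; the extra $+\check w_0$ was a slip. That is the paper's condensation bound $\sum_k\check w_k\ge\frac12$, and it yields $1/(16(n+1)^3)$ directly.
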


The proof is given in Appendix~\ref{app:shell}.
No bipartite pure state of $2n$ qubits has a nonlocal SRE larger than $3\log_{2}n+O(1)$, irrespectively of its, possibly extensive, scaling of  SRE with system size.
An area law implies $\mathcal M_{\mathrm{NL}}=O(1)$, the logarithmic entanglement $S_{1}=\Theta(\ln N)$ of a one-dimensional critical state implies $\mathcal M_{\mathrm{NL}}\le3\log_{2}\ln N+O(1)$, and even a volume law only allows $\mathcal M_{\mathrm{NL}}\le3\log_{2}n+O(1)$.
The theorem uses the entropy only, and it therefore gives away information.
When the full Schmidt spectrum is known, the lower bound $\sum_{k}\check w_{k}^{4}$ on the stabilizer purity in Eq.~\eqref{eq:shellfloor} is sharper than the bound of Eq.~\eqref{eq:ceiling}, which replaces it by a function of $S_{1}$.

We note that corollary~4.1 of Ref.~\cite{Cao2025magical} (see also~\cite{Torre2026Spectrum}) gives, in base-two units,
\begin{equation}
  \mathcal M_{\mathrm{NL}}
  \le \min\bigl\{2S_{2},\,4(S_{0}-S_{1/2})\bigr\},
  \label{eq:caoceiling}
\end{equation}
where $S_{\alpha}$ are the R\'enyi entropies of the cut and $S_{0}$ is the logarithm of the Schmidt rank.
The first branch is a bound by the entanglement, $2S_{2}\le2S_{1}$, and therefore allows the nonlocal SRE to grow linearly with the entropy, whereas Eq.~\eqref{eq:ceiling} allows only $3\log_{2}S_{1}+O(1)$.
In terms of the stabilizer purity, Ref.~\cite{Cao2025magical} gives $\mathcal F_{*}\ge2^{-2S_{2}}$, while Theorem~\ref{thm:ceiling} gives $\mathcal F_{*}\ge[256(2S_{1}+3)^{3}]^{-1}$, so the improvement is exponential in the entropy.
For a one-dimensional critical state the two bounds scale as $O(\ln N)$ and $O(\log\ln N)$, and under a volume law as $O(n)$ and $O(\log N)$.

\subsection{Comparator bounds}
\label{ssec:comparator}
\label{ssec:principle}

Theorem~\ref{thm:resValue} fixes $c_{1}$ only within an interval of size $4$, which is the same for every spectrum and takes no advantage of the spectrum at hand.
Here, for an individual spectrum, we obtain a sharper lower bound by comparing the state $\ket{\Psi}$ with a nearby \emph{comparator state} $\ket{\phi_{\boldsymbol\nu}}=\sum_{x}\nu_{x}\ket{x,x}$, with $\boldsymbol\nu\ge0$ sorted and possibly unnormalized, whose optimal stabilizer purity $\mathcal F_{*}(\boldsymbol\nu)$ is known exactly.
To compare the two states, we write the stabilizer purity as a sum over the Pauli operators of the full $2n$-qubit system.
With $d^{2}=4^{n}$ the dimension of its Hilbert space, $\mathcal P_{2n}$ the Pauli basis, $U=V_{A}\otimes V_{B}^{*}$, and $\rho=\ketbra{\Psi}{\Psi}$ for $\ket{\Psi}=\sum_{x}\mu_{x}\ket{x,x}$,
\begin{equation}
   \mathcal F(V;\rho)=\frac1{d^{2}}\sum_{R\in\mathcal P_{2n}}\bigl[\Tr(RU\rho U^{\dagger})\bigr]^{4},
   \label{eq:pauliform}
\end{equation}
which reproduces $\Fmu(V_{A},V_{B})$ term by term.
At fixed local unitaries, $\mathcal F^{1/4}$ is therefore an $\ell_{4}$ norm of a linear image of the density matrix.
By the triangle inequality, the stabilizer purities of target and comparator differ, at every pair of local unitaries simultaneously, by at most a norm of the difference $\Delta=\ketbra{\Psi}{\Psi}-\ketbra{\phi_{\boldsymbol\nu}}{\phi_{\boldsymbol\nu}}$, and this norm is bounded by the two unitary invariants of the rank-two operator $\Delta$ (Fig.~\ref{fig:comparator}).
Since the bound is uniform over the unitary orbit, it transfers between the two optima even though they are attained at different unitaries.

\begin{figure}[t]
\centering
\includegraphics[width=\columnwidth]{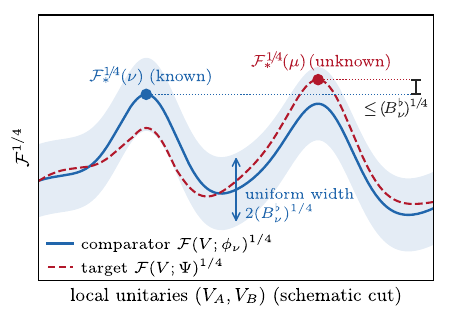}
\caption{The comparator bound. Over the local-unitary landscape
(schematic one-dimensional cut), the stabilizer purity of the target (dashed) deviates from that of a comparator (solid) by at most the uniform error term
$(B^{\flat}_{\nu})^{1/4}$ on the norm scale $\mathcal F^{1/4}$. The two
maxima may be attained at different unitaries, but the same error term bounds
their separation (right bracket): the unknown optimum
$\mathcal F_{*}(\boldsymbol\mu)$ lies within $(B^{\flat}_{\nu})^{1/4}$ of the
known optimum $\mathcal F_{*}(\boldsymbol\nu)$ on this scale, which yields a
lower bound on $\mathcal M_{\mathrm{NL}}$.}
\label{fig:comparator}
\end{figure}

\begin{theorem}[Comparator bound]
\label{thm:witness}
For a sorted comparator spectrum $\boldsymbol\nu\ge0$, possibly unnormalized, let $c_{\nu}=\sum_{x}\nu_{x}^{2}$, $z_{\nu}=(\sum_{x}\mu_{x}\nu_{x})^{2}$, and
\begin{equation}
 B^{\flat}_{\nu}=\bigl(1+c_{\nu}^{2}-2z_{\nu}\bigr)\bigl((1+c_{\nu})^{2}-4z_{\nu}\bigr).
 \label{eq:Bflat}
\end{equation}
Then, for every normalized $\boldsymbol\mu$,
\begin{equation}
 \bigl|\mathcal F_{*}(\boldsymbol\mu)^{1/4}-\mathcal F_{*}(\boldsymbol\nu)^{1/4}\bigr|\;\le\;\bigl(B^{\flat}_{\nu}\bigr)^{1/4}.
 \label{eq:transfer}
\end{equation}
If moreover $\hat{\boldsymbol\nu}=\boldsymbol\nu/\sqrt{c_{\nu}}$ is a dyadic staircase or has rank at most six, so that $\mathcal F_{*}(\hat{\boldsymbol\nu})=\mathcal F_{\mathrm{CB}}(\hat{\boldsymbol\nu})$ by Theorems~\ref{thm:resDyadic} and~\ref{thm:resRank6}, then
\begin{align}
 \mathcal M_{\mathrm{NL}}(\boldsymbol\mu)\;&\ge\;-\log_{2}U(\boldsymbol\nu),\notag\\
 U(\boldsymbol\nu)&=\min\Bigl\{1,\bigl[c_{\nu}\,\mathcal F_{\mathrm{CB}}(\hat{\boldsymbol\nu})^{1/4}+\bigl(B^{\flat}_{\nu}\bigr)^{1/4}\bigr]^{4}\Bigr\},
 \label{eq:Unu}
\end{align}
and consequently, for every family $\mathcal C$ of such comparators,
\begin{equation}
 -\log_{2}\inf_{\boldsymbol\nu\in\mathcal C}U(\boldsymbol\nu)\;\le\;\mathcal M_{\mathrm{NL}}(\boldsymbol\mu).
 \label{eq:bracket}
\end{equation}
\end{theorem}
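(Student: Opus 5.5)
The plan is to prove the continuity estimate \eqref{eq:transfer} at a fixed pair of local unitaries and then pass to the optima; the lower bound \eqref{eq:Unu} will follow by a short rearrangement.

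Fix $V_{A},V_{B}$, put $U=V_{A}\otimes V_{B}^{*}$, and use the Pauli form \eqref{eq:pauliform}. For any Hermitian operator $X$ on the $2n$ qubits define
\begin{equation}
 \|X\|_{\mathcal P}\equiv\Bigl(\tfrac1{d^{2}}\sum_{R\in\mathcal P_{2n}}\bigl[\Tr(RUXU^{\dagger})\bigr]^{4}\Bigr)^{1/4}.
\end{equation}
This is an $\ell_{4}$ norm of a linear image of $X$, hence a seminorm, and $\mathcal F(V;\rho)^{1/4}=\|\rho\|_{\mathcal P}$. The same identity holds for the unnormalized comparator projector $\sigma=\ketbra{\phi_{\boldsymbol\nu}}{\phi_{\boldsymbol\nu}}$, whose purity is homogeneous of degree eight in the amplitudes. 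The triangle inequality then gives $|\|\rho\|_{\mathcal P}-\|\sigma\|_{\mathcal P}|\le\|\Delta\|_{\mathcal P}$ with $\Delta=\rho-\sigma$.

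The key step, and the main obstacle, is a bound on $\|\Delta\|_{\mathcal P}^{4}$ that is independent of $U$. I would first try the interpolation estimate
\begin{equation}
 \sum_{R}\Tr(R\Delta')^{4}\le\max_{R}\Tr(R\Delta')^{2}\sum_{R}\Tr(R\Delta')^{2},
\end{equation}
with $\Delta'=U\Delta U^{\dagger}$. Parseval gives $\sum_{R}\Tr(R\Delta')^{2}=d^{2}\,\Tr\Delta^{2}$, so the prefactor $1/d^{2}$ cancels. Since each $R$ has operator norm one, $|\Tr(R\Delta')|\le\|\Delta\|_{1}$. Together these give $\|\Delta\|_{\mathcal P}^{4}\le\|\Delta\|_{1}^{2}\,\Tr\Delta^{2}$.

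Now $\Delta$ has rank at most two, with eigenvalues $\lambda_{\pm}$ satisfying $\lambda_{+}+\lambda_{-}=1-c$ and $\lambda_{+}\lambda_{-}=z-c$, where the overlap is $\langle\Psi|\phi_{\boldsymbol\nu}\rangle=\sum_{x}\mu_{x}\nu_{x}$. From these two invariants:
\begin{itemize}
\item $\Tr\Delta^{2}=1+c^{2}-2z$;
\item $\|\Delta\|_{1}^{2}=(\lambda_{+}-\lambda_{-})^{2}=(1-c)^{2}-4(z-c)=(1+c)^{2}-4z$, using $\lambda_{+}\lambda_{-}\le0$, which holds because $z\le c$ by Cauchy--Schwarz.
\end{itemize}
Their product is exactly $B^{\flat}_{\nu}$ of \eqref{eq:Bflat}, which is reassuring. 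Should the sup-times-$\ell_{2}$ interpolation be too lossy, I would fall back to expanding $\Delta$ in its eigenbasis, but the match with \eqref{eq:Bflat} suggests this route is the intended one.

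Since the bound $|\mathcal F(V;\rho)^{1/4}-\mathcal F(V;\sigma)^{1/4}|\le(B^{\flat}_{\nu})^{1/4}$ is uniform in $(V_{A},V_{B})$, it passes to the maxima. The maximizer of one function is, up to this error, admissible for the other, so $|\max f-\max g|\le\sup|f-g|$, which is \eqref{eq:transfer}.

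For \eqref{eq:Unu}, homogeneity gives $\mathcal F_{*}(\boldsymbol\nu)^{1/4}=c_{\nu}\,\mathcal F_{*}(\hat{\boldsymbol\nu})^{1/4}$. Under the hypothesis on $\hat{\boldsymbol\nu}$, Theorems~\ref{thm:resDyadic} and~\ref{thm:resRank6} replace $\mathcal F_{*}(\hat{\boldsymbol\nu})$ by $\mathcal F_{\mathrm{CB}}(\hat{\boldsymbol\nu})$. Hence $\mathcal F_{*}(\boldsymbol\mu)\le[c_{\nu}\mathcal F_{\mathrm{CB}}(\hat{\boldsymbol\nu})^{1/4}+(B^{\flat}_{\nu})^{1/4}]^{4}$. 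Combining this with the trivial bound $\mathcal F_{*}\le1$ for normalized states and taking $-\log_{2}$ gives \eqref{eq:Unu}. Taking the infimum over any family $\mathcal C$ of admissible comparators gives \eqref{eq:bracket}.
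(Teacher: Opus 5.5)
Your proposal is correct and follows essentially the same route as the paper's proof. The paper also treats $\mathcal F^{1/4}$ as an $\ell_4$ seminorm, applies the triangle inequality pointwise in $(V_A,V_B)$, bounds the fourth moment of the rank-two $\Delta$ by $\max_R y_R^2$ times the Parseval sum, giving $\|\Delta\|_1^2\Tr\Delta^2=B^{\flat}_{\nu}$, and passes to the optima. It then obtains Eq.~\eqref{eq:Unu} from degree-eight homogeneity, the exactness of the CB value for $\hat{\boldsymbol\nu}$, and $\mathcal F_*\le1$.
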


The proof is given in Appendix~\ref{app:witness}, and Eq.~\eqref{eq:Unu} is the statement of Theorem~\ref{thm:resComparator}.
Theorem~\ref{thm:witness} replaces the minimization over the local unitaries by a search for a good comparator, and every comparator that is tried yields a valid lower bound, since the reported number is the exact evaluation of $U(\boldsymbol\nu)$.
A search that misses the best comparator therefore only loses tightness, while still resulting in a valid lower bound for the nonlocal SRE $\mathcal{M}_{NL}$.
By construction, the comparator states belong to the families in which the  exactness of the nonlocal SRE solution, $\mathcal F_{*}(\hat{\boldsymbol\nu})=\mathcal F_{\mathrm{CB}}(\hat{\boldsymbol\nu})$, is proved.

\textit{Dyadic staircases.---}
\label{ssec:dwitness}
The first family is $\mathcal C=\mathcal K_{n}$, the dyadic staircases of Sec.~\ref{sec:dyadic} [Fig.~\ref{fig:witnessgeom}(a)].
Theorem~\ref{thm:witness} gives
\begin{equation}
 \mathcal M_{\mathrm{NL}}(\boldsymbol\mu)\ \ge\ -\log_{2}\inf_{\boldsymbol\nu\in\mathcal K_{n}}U(\boldsymbol\nu),
 \label{eq:dyadicfloor}
\end{equation}
an optimization over the $n{+}1$ shell values of the comparator, in which every candidate $\boldsymbol\nu$ contributes the exactly evaluated $U(\boldsymbol\nu)$, with $\mathcal F_{\mathrm{CB}}(\hat{\boldsymbol\nu})$ from Eq.~\eqref{eq:channels}.
The optimization itself may be performed by any heuristic method.

\textit{Spectra of rank at most six.---}
\label{ssec:rkwitness}
The second family is $\mathcal C=\mathcal R_{k}=\{\boldsymbol\nu\ge0:\ \mathrm{rank}\,\boldsymbol\nu\le k\}$ with $k\le6$ [Fig.~\ref{fig:witnessgeom}(b)].
Its computational-basis value is independent of the dimension: placing the $k$ amplitudes on the first labels of $\Ftwo^{m}$ with $k\le2^{m}$,
\begin{equation}
 \mathcal F_{\mathrm{CB}}(\hat{\boldsymbol\nu})
 =\frac1{2^{m}}\sum_{a,c\in\Ftwo^{m}}
 \Bigl[\sum_{z}(-1)^{c\cdot z}\hat\nu_{z}\hat\nu_{z\oplus a}\Bigr]^{4},
 \label{eq:walsh}
\end{equation}
which is evaluated by fast Walsh--Hadamard transforms~\cite{Huang2025XOR,Sierant2026Computing,Xiao2026Sampling}.
The simplest member of the family requires no search.
  Taking $\boldsymbol\nu$ to be the $k$ leading amplitudes of $\boldsymbol\mu$ itself, with discarded weight $\epsilon_{k}\equiv\sum_{x\ge k}\mu_{x}^{2}$, we have $c_{\nu}=1-\epsilon_{k}$
  and $z_{\nu}=(1-\epsilon_{k})^{2}$, so that Eq.~\eqref{eq:transfer} becomes
\begin{equation}
  \begin{split}
   \bigl|\mathcal F_{*}(\boldsymbol\mu)^{1/4}-(1-\epsilon_{k})\,\mathcal F_{\mathrm{CB}}(\hat{\boldsymbol\nu})^{1/4}\bigr|
   &\le\beta(\epsilon_{k}),
  \end{split}
  \label{eq:headcert}
  \end{equation}
where 
   $\beta(\epsilon)\equiv\sqrt{\epsilon}\,\bigl(8-10\epsilon+3\epsilon^{2}\bigr)^{1/4}$.

The two families are suited to different spectra.
When most of the weight is concentrated on a few leading Schmidt values with distinct magnitudes, as is typical of area-law states, the six-term truncation keeps these values individually and loses only the discarded weight $\epsilon_{6}$, whereas a dyadic comparator is forced to take a single value on every shell.
We can quantify this advantage when $\epsilon_{6}\ll1$.
Let $X\equiv(1-\epsilon_{6})\,\mathcal F_{\mathrm{CB}}(\hat{\boldsymbol\nu})^{1/4}$ for the six-term truncation and $\beta_{6}\equiv\beta(\epsilon_{6})$.
Equation~\eqref{eq:headcert} places $\mathcal F_{*}(\boldsymbol\mu)^{1/4}$ in the interval $[X-\beta_{6},X+\beta_{6}]$.
The inequality behind Eq.~\eqref{eq:transfer} holds at every fixed pair of local unitaries, so applying it at $V_{A}=V_{B}=\mathbb 1$ places $\mathcal F_{\mathrm{CB}}(\boldsymbol\mu)^{1/4}$ in the same interval.
For $X>\beta_{6}$ we therefore obtain
  \begin{equation}
   \mathcal M_{\mathrm{CB}}-\mathcal M_{\mathrm{NL}}\ \le\ 4\log_{2}\frac{X+\beta_{6}}{X-\beta_{6}}
   \ =\ O\bigl(\sqrt{\epsilon_{6}}\bigr),
   \label{eq:arealaw}
  \end{equation}
so that the six leading Schmidt values determine the nonlocal SRE within an explicit error whenever the discarded weight is small.
When the weight of the spectrum is spread over many shells, as for the critical spectrum of Fig.~\ref{fig:witnessgeom}(a), the discarded weight $\epsilon_{6}$ grows with the system size, and the dyadic comparators give the better bound.


\section{Nonlocal SRE in many-body systems}
\label{sec:allrank}

In this section, we apply the bounds of Sec.~\ref{sec:witness} to two families of many-body states.
We begin with dimerized spin chains, in which the number of bonds crossing the cut sets the entanglement entropy and the bond weight sets the Schmidt spectrum.
For this family we show that states with the same logarithmic entanglement entropy have nonlocal SRE ranging from zero to $\frac32\log_2\ln N$, and that at a fixed number of crossing bonds the nonlocal SRE takes exact values in the area-law regime.
We then study the ground state of the transverse-field Ising chain, in which the nonlocal SRE grows as $\frac32\log_2\ln N$ at the critical point and reaches finite plateaus in the gapped phases.

\subsection{Dimer chains: from critical to area-law entanglement}
  \label{ssec:dimer}

We consider states of $N$ qubits in which the qubits are grouped into pairs, called dimers, and each dimer is in the same two-qubit state $\ket{\phi}$,
\begin{equation}
   \ket{\Psi}=\bigotimes_{\text{dimers }(i,j)}\ket{\phi}_{ij},
   \qquad
   \ket{\phi}=\sqrt p\,\ket{0,0}+\sqrt q\,\ket{1,1},
   \label{eq:dimerchain}
\end{equation}
with $\frac12\le p\le1$ and $q=1-p$.
Since every two-qubit state has Schmidt rank at most two, Eq.~\eqref{eq:dimerchain} covers every product of identical two-qubit states up to local unitaries.
A dimer that lies entirely in $A$ or entirely in $B$ does not contribute to the entanglement across the cut, so the Schmidt spectrum of $\ket{\Psi}$ is the $m$-fold tensor product of the spectrum $(\sqrt p,\sqrt q)$ of a single dimer, where $m$ is the number of dimers crossing the cut.
The number $m$ sets the entanglement entropy, $S_1=m\,h_2(p)$ with $h_2$ the binary entropy, while $p$ sets the shape of the Schmidt spectrum.
We first show that for $p=\frac12$ the nonlocal SRE vanishes for every $m$, whereas for $\frac12<p<1$ it equals $\frac32\log_2 m+O(1)$, and we then evaluate it exactly at fixed $m$.

The case $p=\frac12$ is realized by the random-singlet fixed point of strongly disordered spin chains~\cite{RefaelMoore2004}, in which the strong-disorder renormalization group describes the ground state as a product of two-qubit singlets.
If $m$ singlets cross the cut, the state is equivalent under local unitaries to $m$ Bell pairs together with unentangled qubits.
It is therefore a stabilizer state up to local unitaries, and
  \begin{equation}
   S_1=m,
   \qquad
   \mathcal M_{\mathrm{NL}}=0
   \label{eq:rszero}
  \end{equation}
exactly, for every realization of the disorder and every $m$.
At the fixed point, the mean number of crossing singlets grows as $\overline m=\Theta(\ln N)$~\cite{RefaelMoore2004}.
The entanglement entropy of this family thus scales as in a critical chain, while the nonlocal SRE vanishes identically, because the entropy is carried by an increasing number of Bell pairs with flat Schmidt spectra rather than by a Schmidt spectrum spread over many scales.
This statement concerns the idealized fixed-point wave function, and finite-disorder corrections deform the spectrum.

Replacing the singlets by imperfect dimers changes this behavior completely.
  
\begin{theorem}[Nonlocal SRE of imperfect dimers]
  \label{thm:dimer}
  Let $\ket{\Psi}$ be the state of Eq.~\eqref{eq:dimerchain} with $\frac12<p<1$ and $m$ dimers crossing the cut.
  Then, without assuming Conjecture~\ref{conj:cb},
  \begin{equation}
   \mathcal M_{\mathrm{NL}}(\Psi)
   =\frac32\log_2 m+O(1),
   \quad
   \mathcal M_{\mathrm{CB}}-\mathcal M_{\mathrm{NL}}=O(1),
   \label{eq:dimerlaw}
  \end{equation}
  where the constants implied by $O(1)$ depend on $p$ but not on $m$.
\end{theorem}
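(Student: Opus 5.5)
The plan is to reduce everything to the shell participation $Q$ through Theorem~\ref{thm:resValue}. That theorem already gives $\mathcal M_{\mathrm{CB}}-\mathcal M_{\mathrm{NL}}=c_{1}\in[0,4]$, which is the second claim of Eq.~\eqref{eq:dimerlaw}. It also gives $\mathcal M_{\mathrm{NL}}=-\log_{2}Q+O(1)$ with universal constants. It therefore suffices to prove
\begin{equation*}
Q=\Theta\bigl(m^{-3/2}\bigr),
\end{equation*}
with constants depending only on $p$. The Schmidt probabilities are the $m$-fold product of $(p,q)$. A label $x$ with $j$ bits equal to one carries probability $p^{m-j}q^{j}$. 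Sorted in decreasing order, the spectrum consists of blocks: block $j$ has $\binom{m}{j}$ equal values $p^{m-j}q^{j}$, and the blocks are arranged by increasing $j$ because $q<p$. The block boundaries sit at ranks $R_{j}=\sum_{i<j}\binom{m}{i}$.

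The next step is to compute the upper shell masses $\hat w_{k}=2^{k-1}\lambda_{2^{k-1}}$. For rank $r=2^{k-1}$, let $j(k)$ be the block containing it, so that $\hat w_{k}=2^{k-1}p^{m-j}q^{j}$. The total weight of block $j$ is the binomial mass $b_{j}=\binom{m}{j}p^{m-j}q^{j}$. This mass concentrates on a window of width $\Theta(\sqrt m)$ around $j=qm$. Because consecutive block sizes $\binom{m}{j}$ grow by the factor $(m-j)/(j+1)\approx p/q>1$ in the bulk, the ranks $R_{j}$ grow geometrically in $j$. So $\log_{2}R_{j}\approx\log_{2}\binom{m}{j}+O(1)$, and each dyadic shell meets $O(1)$ blocks, roughly $1/\log_{2}(p/q)$ of them.

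It follows that for $j$ in the bulk, $\hat w_{k}$ equals $\Theta(b_{j})$ up to $p$-dependent constant factors. The reason is that $2^{k-1}$ is comparable to $\binom{m}{j}$, within a factor bounded by the geometric ratio and by $R_{j}/\binom{m}{j}=O(1)$. Conversely, every bulk block is hit by some shell, or lies in the same shell as a neighbouring block whose mass is comparable to $b_{j}$. Outside the bulk, the relevant masses are dominated by the binomial tails.

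With this correspondence the estimate follows. For the lower bound, the $\Theta(\sqrt m)$ central blocks have $b_{j}=\Theta(m^{-1/2})$ and map onto $\Theta(\sqrt m)$ distinct shells, each with $\hat w_{k}=\Theta(m^{-1/2})$. This gives $Q\ge\Theta(\sqrt m)\cdot m^{-2}=\Theta(m^{-3/2})$. For the upper bound, each $\hat w_{k}\lesssim b_{j(k)}$ and each $j$ is used by $O(1)$ shells, so $Q\lesssim\sum_{j}b_{j}^{4}\le(\max_{j}b_{j})^{3}=O(m^{-3/2})$, using $\sum_{j}b_{j}=1$ and $\max_{j}b_{j}=O(m^{-1/2})$. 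Hence $-\log_{2}Q=\tfrac32\log_{2}m+O(1)$, which proves the first claim of Eq.~\eqref{eq:dimerlaw}.

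The main obstacle is making the block-to-shell correspondence rigorous uniformly in $m$. Two things need care. First, the estimate $R_{j}=\Theta(\binom{m}{j})$ fails near $j\approx m/2$ and beyond, because block sizes stop growing there. I would handle that region separately: its total binomial weight is exponentially small in $m$ when $p>1/2$, and the bound $\hat w_{k}\le2\sum_{x\in D_{k-1}}\lambda_{x}$ from Sec.~\ref{ssec:value} controls those shells by the tail mass. Second, the small-$j$ region needs the same treatment. In the bulk I would use Stirling or local-CLT estimates for the ratios $\binom{m}{j+1}/\binom{m}{j}$ and the sums $R_{j}/\binom{m}{j}$.
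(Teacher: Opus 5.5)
Your proposal is correct, and its top-level structure (reduce to $Q=\Theta(m^{-3/2})$ via Theorem~\ref{thm:resValue}, then read $Q$ off the binomial level structure) is the paper's; the two bounds on $Q$, however, are obtained differently.

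\textbf{Upper bound.} You prove a termwise domination $\hat w_k\lesssim b_{j(k)}$ with boundedly many shells per level. This forces you to split off two regions and control them by exponentially small tail masses. Near $j\approx m/2$, $R_j/\binom{m}{j}$ grows like $\sqrt m$. At small $j$, a level of size $\binom{m}{j}\gg R_j$ contains about $\log_2(m/j)$ powers of two. The paper avoids this with one estimate valid for every shell at once. If the label $2^{k-1}$ lies in level $j$, then
$\hat w_k\le\lambda^{(j)}\sum_{\ell\le j}\binom{m}{\ell}=\sum_{\ell\le j}b_\ell(q/p)^{j-\ell}\le\frac{p}{p-q}\max_\ell b_\ell$.
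Combined with $\sum_k\hat w_k\le2$, this gives $Q\le2(\max_k\hat w_k)^3$, with no block-to-shell bookkeeping and no tail splitting.

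\textbf{Lower bound.} Your argument needs three inputs: a pointwise local-CLT bound $b_j\gtrsim m^{-1/2}$ on the central window, the estimate $\hat w_k\ge R_j\lambda^{(j)}\ge(q/p)\,b_{j-1}$, and a lower bound of order $\sqrt m$ on the number of distinct shells hit. The paper instead uses Chebyshev to get a window of mass at least $\tfrac34$ and an \emph{upper} bound $K=O(\sqrt m)$ on the number of shells meeting it. It then uses domination of the true shell masses by $\hat w_k$ and the power-mean inequality $\sum_{k\in\mathcal K}\hat w_k^4\ge(3/4)^4/K^3$. It therefore needs neither a pointwise lower bound on $b_j$ nor a lower bound on the shell count.

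\textbf{Trade-off.} Your route yields more information, namely the bulk shell profile $\hat w_k=\Theta(b_{j(k)})$ itself. The paper's route is shorter and more robust. It uses only the geometric ratio $q/p<1$ between consecutive levels, a second-moment (Chebyshev) bound, and the Stirling estimate $\max_\ell b_\ell=O(m^{-1/2})$, which both arguments need.
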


We prove Theorem~\ref{thm:dimer} in Appendix~\ref{app:dimer} with explicit constants.
The Schmidt probabilities of $\ket{\Psi}$ are $p^{m-j}q^{j}$ with multiplicity $\binom mj$, so that the total weight of level $j$ is given by the binomial distribution $\mathrm{Bin}(m,q)$.
Most of the weight is concentrated on $O(\sqrt m)$ levels of mass $O(m^{-1/2})$ each, and the probabilities of consecutive levels differ by the fixed factor $p/q$.
The weight is therefore spread over $\Theta(\sqrt m)$ dyadic shells, which gives $Q=\sum_k\hat w_k^4=\Theta(m^{-3/2})$, and Eq.~\eqref{eq:shellparticipation} yields Eq.~\eqref{eq:dimerlaw}.

For $m=\Theta(\ln N)$ crossing dimers, the entanglement entropy is $S_1=\Theta(\ln N)$ and $\mathcal M_{\mathrm{NL}}=\frac32\log_2\ln N+O(1)$.
In the basis of the individual dimers, the SRE is additive, $\mathcal M_2=m\,[-\log_2\gamma(p)]=\Theta(\ln N)$, where
\begin{equation}
   \gamma(p)\equiv1-4pq+16p^{2}q^{2}
   \label{eq:gamma2}
\end{equation}
is the stabilizer purity of a single dimer $\ket{\phi}$ in the computational basis.
Local unitaries thus remove all but a $\Theta(\log\log N)$ part of the SRE computed in the basis of the dimers.
As $p\to\frac12$, the constant in Eq.~\eqref{eq:dimerlaw} diverges, and the crossover to the vanishing value of Eq.~\eqref{eq:rszero} is controlled by $|p-\frac12|\sqrt m$.

For a cut crossing one or two dimers, the Schmidt rank is at most four, and Theorem~\ref{thm:resRank6} gives the nonlocal SRE exactly, $\mathcal M_{\mathrm{NL}}=-\log_2\gamma(p)$ for one dimer and $-2\log_2\gamma(p)$ for two, since the rank-four spectrum $(p^{2},pq,pq,q^{2})$ is already sorted in the tensor-product order.
This is the area-law regime of the family, realized by ground states of sums of commuting two-qubit projectors and by dimerized spin-$\frac12$ chains~\cite{Asoudeh2007dimer}, with the Majumdar--Ghosh chain~\cite{MajumdarGhosh1969} as the singlet case $p=\frac12$.

\subsection{The transverse-field Ising chain}
\label{ssec:critical}
\label{ssec:ising}
\label{ssec:gapped}

We consider the transverse-field Ising chain of $N$ qubits,
\begin{equation}
   H=-\sum_{i=1}^{N-1}X_{i}X_{i+1}-g\sum_{i=1}^NZ_{i},
   \label{eq:tfim}
\end{equation}
where $X_{i},Z_{i}\in\mathcal P_{N}$ are the Pauli strings that act as the Pauli matrices $X$ and $Z$ on qubit $i$ and as the identity on the other qubits.
The ground state is ferromagnetic for $g<1$, paramagnetic for $g>1$, and critical at $g=1$.
We study its nonlocal SRE across the cut between the two halves of the chain.
The SRE of this ground state in a fixed basis was studied in Refs.~\cite{Oliviero2022Ising,Haug23quanti}, whereas here we bound the nonlocal SRE, which involves the minimization over all local unitaries, in terms of the Schmidt spectrum alone.

\textit{Numerical approach.---}
The ground state of the transverse-field Ising chain~\eqref{eq:tfim} is a fermionic Gaussian state, so the reduced density matrix of half of the chain is fixed by the occupations of the single-particle modes of the entanglement Hamiltonian~\cite{Peschel2003RDM}, which we compute from the Majorana covariance matrix of the chain.
For $N\le10^{4}$ we diagonalize the $2N\times2N$ Majorana coupling matrix at every $g$ and obtain the Schmidt spectrum of the equal bipartition of the ground state.
At the critical point, $g=1$, the singular value decomposition of the coupling matrix is known in closed form for every $N$~\cite{Pfeuty1970}, and the resulting correlation matrix of the half chain is a sum of a Toeplitz and a Hankel matrix~\cite{FagottiCalabrese2011}, which allows us to obtain the Schmidt spectrum by a Lanczos iteration for system sizes up to $N=3\times10^{6}$.
For larger critical chains we use a model of the single-particle entanglement energies, whose leading term is the equidistant ladder with spacing proportional to $1/\ln N$~\cite{PeschelKaulkeLegeza1999,PeschelEisler2009} and whose finite-size corrections are fitted to the exact spectra, which extends the evaluation to $N\simeq10^{28}$. Appendix~\ref{app:numerics} describes the details of calculations in each of the system size regimes.

\begin{figure}[t]
\centering
\includegraphics[width=\columnwidth]{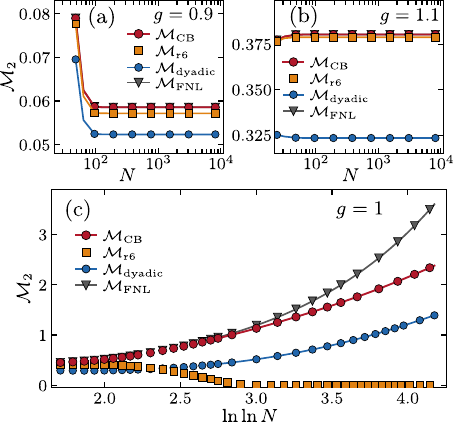}
\caption{Nonlocal SRE of the transverse-field Ising chain.
Panels (a) and (b): gapped phases at $g=0.9$ and $g=1.1$, from exact half-chain spectra of open chains up to $N=8192$.
Shown are the fermionic value $\mathcal M_{\mathrm{FNL}}$ of Eq.~\eqref{eq:mfnl}, the CB value $\mathcal M_{\mathrm{CB}}$, and the lower bounds $\mathcal M_{\mathrm{r6}}$ and $\mathcal M_{\mathrm{dyadic}}$ defined in the text.
Every quantity reaches an $N$-independent plateau once $N$ exceeds a few correlation lengths.
Panel (c): the critical point versus $\ln\ln N$, combining exact chains up to $N=3\times10^{6}$ with the ladder of entanglement energies of Eq.~\eqref{eq:ladder} up to $N\simeq10^{28}$.
The error of $\mathcal M_{\mathrm{CB}}$ due to the truncation of the spectrum is below $10^{-3}$, smaller than the symbols.}
\label{fig:ising}
\end{figure}

\textit{Quantities of interest.---}
We apply the bounds of Sec.~\ref{sec:witness} to determine the behavior of the nonlocal SRE across the phase diagram of the transverse-field Ising chain.

To find a lower bound on $\mathcal M_{\mathrm{NL}}$, we use Theorem~\ref{thm:witness} with the two families of comparators for which the CB value is proven to be optimal: the dyadic bound $\mathcal M_{\mathrm{dyadic}}$ is  Eq.~\eqref{eq:dyadicfloor} optimized over the $n+1$ shell values of the dyadic comparator, and the rank-six bound $\mathcal M_{\mathrm{r6}}$ uses the comparator of Schmidt rank six built from the six leading Schmidt values.
We use the CB value $\mathcal M_{\mathrm{CB}}$ of Eq.~\eqref{eq:cbgapM}, which we evaluate from the Schmidt spectrum with Eq.~\eqref{eq:walsh} to bound $\mathcal{M}_\mathrm{NL}$ from above.
Both lower bounds hold without any assumption, and together with $\mathcal M_{\mathrm{CB}}$ they enclose $\mathcal M_{\mathrm{NL}}$ in an interval, independently of the validity of  Conjecture~\ref{conj:cb}.

As an additional comparison, we consider the fermionic nonlocal SRE $\mathcal M_{\mathrm{FNL}}$, defined by restricting the local unitaries in Eq.~\eqref{eq:MNLsre} to fermionic Gaussian  unitaries~\cite{Iannotti2026Fermionic, Collura2026FreeFermion}.
For R\'enyi index $2$, the minimum is attained in the Gaussian Schmidt basis~\cite{Iannotti2026Fermionic} and equals
\begin{equation}
     \mathcal M_{\mathrm{FNL}}
     =-\log_{2}\prod_{j}\gamma(p_{j}),
     \label{eq:mfnl}
\end{equation}
where each entanglement mode has probabilities $p_j$ and $1-p_j$, and $\gamma$ is the stabilizer purity of a single dimer, Eq.~\eqref{eq:gamma2}.
In this basis, the reduced density matrix factorizes over the entanglement modes, so the many-body Schmidt probabilities are $\lambda_{\boldsymbol n}=\prod_j p_j^{1-n_j}(1-p_j)^{n_j}$, indexed by the binary mode occupations $\boldsymbol n$.
These probabilities need not decrease as the binary label increases.
The CB representative assigns the same Schmidt coefficients $\sqrt{\lambda_{\boldsymbol n}}$ in decreasing order to matching computational-basis labels. The two representatives thus have the same Schmidt spectrum but differ in its assignment to basis labels.
Since Gaussian local unitaries are included in the unrestricted minimization, $\mathcal M_{\mathrm{FNL}}$ is an upper bound on $\mathcal M_{\mathrm{NL}}$.
The fermionic value is additive over the modes and grows as $\Theta(\ln N)$ for the critical chain~\cite{Iannotti2026Fermionic}.

Figure~\ref{fig:ising} shows two upper bounds and two lower bounds on $\mathcal M_{\mathrm{NL}}$ in the two phases of the transverse-field Ising model and at the critical point $g=1$.

\textit{Gapped phases.---}
Away from the critical point the ground state obeys an area law, $S_{1}=O(1)$, so Theorem~\ref{thm:ceiling} bounds the nonlocal SRE by a constant at every $g\ne1$.
The Schmidt spectrum decays quickly there, and Eq.~\eqref{eq:arealaw} determines $\mathcal M_{\mathrm{NL}}$ within $O(\sqrt{\epsilon_{6}})$ of the CB value, with $\epsilon_{6}$ the weight beyond the six leading Schmidt values.
The rank-six bound $\mathcal M_{\mathrm{r6}}$ therefore lies close to the CB value: it differs from $\mathcal M_{\mathrm{CB}}$ by less than $10^{-2}$ at $|g-1|=0.05$ and by less than $2\times10^{-3}$ at $|g-1|=0.1$, with the difference decreasing rapidly deeper in either phase [right panel of Fig.~\ref{fig:overview} and Figs.~\ref{fig:ising}(a) and (b)].
Together, $\mathcal M_{\mathrm{r6}}$ and $\mathcal M_{\mathrm{CB}}$ enclose the nonlocal SRE of the Ising ground state in a narrow interval at every $g\ne1$. In passing, we note that the dyadic bound $\mathcal{M}_{\mathrm{dyadic}}$ is less tight away from the critical point, since a dyadic comparator cannot resolve the individual leading Schmidt values.

\textit{Critical point.---}
At $g=1$ the Schmidt spectrum of the ground state is spread over many dyadic shells.
Our analytical input is the Calabrese--Lefevre distribution of the entanglement spectrum~\cite{Calabrese2008spectrum}, which assumes the conformal form  $\mathrm{Tr}\rho_{A}^{q}=e^{-b(q-1/q)}$ of the moments of the reduced density matrix, with a single parameter $b=-\ln\lambda_{0}=\frac1{24}\ln N+O(1)$ for the half of an open chain with central charge $c=\tfrac12$~\cite{CalabreseCardy2004,Its2005XY,FagottiCalabrese2011}, and which has been tested against the entanglement spectra of critical chains~\cite{Pollmann2010spectra,Pollmann2009finite,Alba2018Cardy}.
The number of Schmidt probabilities larger than $\lambda$ is then $I_{0}\bigl(2\sqrt{b\ln(\lambda_{0}/\lambda)}\bigr)$, with $I_{0}$ the modified Bessel function, so for large $N$ the sorted probabilities are $\lambda_{r}\simeq\exp[-b-(\ln r)^{2}/4b]$. The mass of the shell $D_{k}$ is a Gaussian in $k$ centered at $2b/\ln2$ with a width of $\Theta(\sqrt b)$ shells: $\Theta(\sqrt{\ln N})$ shells carry masses $\hat w_{k}=\Theta((\ln N)^{-1/2})$ each, and the participation is $Q_{N}\equiv\sum_{k}\hat w_{k}^{4}=\Theta((\ln N)^{-3/2})$.

For this participation, Eqs.~\eqref{eq:shellfloor} and~\eqref{eq:shellsandwich} bound the stabilizer purity from both sides,
\begin{equation}
   \frac1{16}\,Q_{N}
   \ \le\ \mathcal F_{\mathrm{CB}}(N)
   \ \le\ \mathcal F_{*}(N)
   \ \le\ C_{\mathrm{sh}}\,Q_{N},
   \label{eq:isingbracket}
\end{equation}
and taking the negative base-two logarithm gives
  \begin{align}
   \mathcal M_{\mathrm{NL}}(N)
   &=\frac32\log_{2}\ln N+O(1),\notag\\
   \mathcal M_{\mathrm{CB}}(N)
   &=\frac32\log_{2}\ln N+O(1),
   \label{eq:isingcritical}
  \end{align}
so that the CB value and the minimum over local unitaries agree up to $O(1)$ without assuming Conjecture~\ref{conj:cb}.
The only input specific to the critical chain is the scaling $Q_{N}=\Theta((\ln N)^{-3/2})$ of the Calabrese--Lefevre profile. The coefficient $\frac32$ comes from the fourth power in $Q_{N}$ and from the width $\Theta(\sqrt{\ln N})$ of the profile.
The same profile led Ref.~\cite{Torre2026Spectrum} to propose a scaling scenario with a double-logarithmic growth of $\mathcal M_{\mathrm{CB}}$, conditional on an algebraic decay of the Pauli sum of Eq.~\eqref{eq:walsh} in $\ln N$.
Eq.~\eqref{eq:isingbracket} establishes that decay, fixes the coefficient, and extends the statement by providing both lower and upper bounds on $\mathcal M_{\mathrm{NL}}$.

In Fig.~\ref{fig:ising}(c), we present the size dependence of the bounds on the nonlocal SRE at the critical point, together with the fermionic value $\mathcal M_{\mathrm{FNL}}$.
We observe that the CB value grows without saturation up to the largest size $N\simeq10^{28}$, and that its local slope $d\mathcal M_{\mathrm{CB}}/d\log_{2}\ln N$ increases monotonically towards the asymptotic value $\frac32$ of Eq.~\eqref{eq:isingcritical} while remaining below it even there.
The slow onset of the asymptotic regime reflects the width of the profile, whose standard deviation is only about $0.42\sqrt{\ln N}$ shells, that is, three shells at $N\simeq10^{28}$: the asymptotic law assumes many shells of comparable mass, whereas at the sizes shown a few shells carry most of the weight and the largest shell mass may still be the largest Schmidt probability.

The two lower bounds are complementary.
The rank-six bound $\mathcal{M}_{\mathrm{r6} }$ resolves the leading Schmidt values individually and is the sharper one at moderate sizes, but its discarded weight $\epsilon_{6}$ grows with $N$, so the dyadic bound overtakes it near $N\approx3\times10^{4}$ and the rank-six bound decays to zero near $N\approx10^{8}$.
The dyadic bound $\mathcal M_{\mathrm{dyadic}}$ grows steadily, so the interval $\mathcal M_{\mathrm{dyadic}}\le\mathcal M_{\mathrm{NL}}\le\mathcal M_{\mathrm{CB}}$ remains of the order of one throughout, narrower than the window of $4+\log_{2}C_{\mathrm{sh}}$ that Eq.~\eqref{eq:shellparticipation} gives at any fixed size.

Finally, we observe that $\mathcal M_{\mathrm{FNL}}$ and $\mathcal M_{\mathrm{CB}}$ nearly coincide for smaller chains, while for $N\gtrsim10^{5}$ reassigning the Schmidt coefficients from the mode-occupation order to decreasing order produces an appreciably lower CB value.
At large $N$, the separation between  $\mathcal M_{\mathrm{FNL}}$ and  $\mathcal M_{\mathrm{NL}} \leq \mathcal{M}_{\mathrm{CB}}$  grows without bound, since $\mathcal M_{\mathrm{FNL}}$ keeps its $\Theta(\ln N)$ growth while $\mathcal M_{\mathrm{CB}}$ follows the $\Theta(\log\log N)$ scaling. This demonstrates that general local unitaries reduce the nonlocal SRE of the Gaussian minimization from logarithmic to doubly logarithmic in $N$.

\section{Conclusions}
\label{sec:conclusions}

\subsection{Summary}
\label{ssec:conclusion-summary}

In this work, we have developed an analytic framework for the nonlocal SRE of bipartite pure states, the stabilizer R\'enyi entropy minimized over all local unitaries.

\textit{Exact results.---} We have shown that the minimization over the two local unitaries reduces exactly to a minimization over a single unitary, which is the basis of all our results.
We have proved the CB optimality conjecture, which asserts that the sorted computational-basis representative state minimizes the SRE, for two families of states at every system size and every bipartition: the dyadic-staircase spectra, whose Schmidt values are constant on consecutive blocks of labels with sizes given by powers of two, and the spectra of Schmidt rank at most six.
For these families of states the minimization over local unitaries is solved exactly, and the nonlocal SRE is an explicit function of the Schmidt spectrum.

\textit{Bounds for arbitrary states.---}
An arbitrary Schmidt spectrum lies entrywise between two generally unnormalized dyadic staircases, obtained by replacing all entries in each dyadic shell with its smallest or largest Schmidt value.
The CB optimality for these staircase spectra, together with homogeneity and entrywise monotonicity of the optimized stabilizer purity, yields upper and lower bounds on the target's nonlocal SRE.
We have shown in this way that the nonlocal SRE of every pure state is fixed, up to bounded constants, by the largest Schmidt probability of each block, and that the CB value exceeds the nonlocal SRE by at most four, demonstrating that the CB optimality conjecture holds up to an additive constant.
  
Comparing a state with a nearby state of known nonlocal SRE gives sharper lower bounds for an individual spectrum. The dyadic shell construction also gives an upper bound logarithmic in the entanglement entropy, improving the asymptotic dependence of earlier linear bounds. In one dimension, this implies $O(1)$ nonlocal SRE for area-law states, at most $O(\log\log N)$ for critical states with logarithmic entanglement, and at most $O(\log N)$ for volume-law states.
The entanglement entropy thus constrains the nonlocal SRE from above, while the distribution of Schmidt weight across dyadic shells, or logarithmic rank scales, determines its value up to a bounded additive error.

\textit{Many-body systems.---}
We have applied these results to products of dimers and to the transverse-field Ising chain. When $m=\Theta(\ln N)$ dimers cross the cut, singlet products and products of imperfect dimers have the same logarithmic scaling of the entanglement entropy. Their nonlocal SRE behaves differently: it vanishes for singlets and grows as $\frac32\log_2\ln N+O(1)$ when the dimers are imperfect.

For the Ising chain, our numerical upper and lower bounds approach finite plateaus in the gapped phases and enclose the nonlocal SRE in narrow intervals.
At criticality, applying our universal bounds to the Calabrese--Lefevre form of the entanglement spectrum gives $\mathcal M_{\mathrm{NL}}=\frac32\log_2\ln N+O(1)$, a scaling supported by our numerical results.
Enlarging the minimization from local fermionic Gaussian unitaries to all local unitaries therefore changes the critical growth from logarithmic to doubly logarithmic in $N$~\cite{Iannotti2026Fermionic, Collura2026FreeFermion}.
These applications demonstrate a practical route to computing bounds on nonlocal SRE from accessible many-body Schmidt spectra.

\subsection{Outlook}
\label{ssec:conclusion-outlook}

\textit{CB optimality.---}
A central open problem is to settle Conjecture~\ref{conj:cb} for arbitrary Schmidt spectra. A related question is whether CB optimality extends to higher R\'enyi indices.
One obstacle to extending the argument of Sec.~\ref{sec:dyadic} is that its elementary rank bounds need not be saturated by the CB projectors at non-dyadic ranks.
The Schur--Weyl analysis of Sec.~\ref{sec:rank6} offers another route through sufficient conditions involving the antisymmetric sector.
Extending this approach requires new bounds: the present compression inequality holds up to rank seven but fails at rank eight, while the overlap inequality is established only up to rank six.

A proof would make the CB value an exact formula for the nonlocal SRE of every bipartite pure state.
A counterexample would exhibit a representative of the same local-unitary orbit with lower SRE than the sorted CB state.
At R\'enyi index $2$, Theorem~\ref{thm:resValue} already guarantees that the CB value and the nonlocal SRE share the leading term of any divergent scaling with system size, regardless of the conjecture's validity.

\textit{Universality at criticality and beyond.---}
Our derivation of the $\frac32\log_{2}\ln N$ law takes the Calabrese--Lefevre distribution as input.
This distribution is expected to hold for every one-dimensional conformal critical point, with the central charge entering only the additive constant~\cite{Calabrese2008spectrum,Alba2018Cardy}, so the coefficient $\frac32$ should be universal.
Testing this universality in interacting critical chains~\cite{Hoshino2026CFT,Catalano2026SPT,Korbany2025LongRange} is a natural next step.
Topologically ordered states in two dimensions~\cite{LevinWen2005,LuoWu2016} are another natural target, since their entanglement spectra have a structure similar to the dimers of Theorem~\ref{thm:dimer}.
More broadly, our bounds apply to any state whose Schmidt spectrum is known, exactly or numerically, and thus allow the nonlocal SRE to be tracked across phase diagrams, quenches, and eigenstates of many-body systems on the same footing as the entanglement entropy, for example in the thermalization and localization of isolated systems~\cite{Abanin2019, Sierant2025MBL} or in monitored dynamics~\cite{Potter2022mipt}.

\textit{Operational meaning.---}
The nonlocal SRE is equivalent, up to dimension-independent multiplicative constants, to the nonlocal min-relative entropy of nonstabilizerness (NMRE) $D_{\mathrm{min}}^{\mathrm{NL}}$, defined through the stabilizer fidelity~\cite{Sierant26exact}. The NMRE diverges along precisely those families of pure states that universally embezzle entanglement under local operations and classical communication~\cite{Sierant26exact, vanDam2003embezzling, vanLuijk2025embezzlers}.  The same criterion therefore holds for the nonlocal SRE.

A complementary interpretation relates $D_{\mathrm{min}}^{\mathrm{NL}}$ to the magic of purification, defined as the minimum $D_{\mathrm{min}}$ over all purifications of a density matrix. Specifically, $D_{\mathrm{min}}^{\mathrm{NL}}$ equals the minimum magic of purification along the unitary orbit of the reduced density matrix~\cite{Viscardi26nonlocal}. Identifying an operational task whose optimal performance is determined directly by the nonlocal SRE remains an open problem.

\begin{acknowledgments}
P.S. acknowledges collaboration with Sreemayee Aditya and Xhek Turkeshi on related topics, and useful discussions with members of Quantic group at BSC.
P.S. acknowledges fellowship within the “Generación D” initiative, Red.es, Ministerio para la Transformación Digital y de la Función Pública, for talent attraction (C005/24-ED CV1), funded by the European Union NextGenerationEU funds through PRTR.
The author used ChatGPT~5.5 and~5.6 (OpenAI) as an aid in formalizing the theorems and their proofs, and Claude Fable~5 and~5.1 (Anthropic) for assistance in manuscript preparation; all results were verified independently by the author, who takes full responsibility for the content of this manuscript.
\end{acknowledgments}
\textit{Note added.---}
While this manuscript was being finalized, Ref.~\cite{LiuCui2026Spectral} appeared, which announces related bounds on the computational-basis value of the nonlocal SRE in terms of shell-flat spectra.

\appendix

\section{Proof of the dyadic rank inequality}
  \label{app:rank}

In this appendix, we prove Theorem~\ref{thm:rank} in full.
Throughout, $d=2^{n}$ and the computational-basis labels are the elements of $\Ftwo^{n}$, that is, bit strings $x=(x_{1},\dots,x_{n})$ of length $n$, which we read as the binary integers $x=\sum_{t=1}^{n}x_{t}2^{t-1}$ from $0$ to $d-1$.
Addition in $\Ftwo^{n}$ is bitwise addition modulo two, that is, the XOR of the two bit strings, and this is the operation written $x{+}u$ in Eq.~\eqref{eq:Phibasis}.
A \emph{linear subspace} of $\Ftwo^{n}$ is a set of bit strings closed under this addition, and it automatically contains the all-zeros string $0=x{+}x$.
We write $L_{r}\equiv\{0,\dots,r-1\}$ for the support of the truncation projector $E_{r}$.
For dyadic $r=2^{\alpha}$ the set $L_{2^{\alpha}}$ consists of the bitstrings whose bits $x_{t}$ with $t>\alpha$ vanish, and it is therefore a linear subspace of $\Ftwo^{n}$, whereas for other values of $r$ the set $L_{r}$ is not closed under addition.
The sets $L_{1}\subset L_{2}\subset L_{4}\subset\cdots\subset L_{d}$ form a nested chain of such subspaces, and this nested linear structure is the only property of the label order used below.
As in Sec.~\ref{ssec:rank}, the two rank vectors are dyadic, $r_{j}=2^{\alpha_{j}}$ and $s_{j}=2^{\beta_{j}}$, and we write $\alpha=(\alpha_{1},\dots,\alpha_{4})\in\{0,\dots,n\}^{4}$ and $\beta=(\beta_{1},\dots,\beta_{4})$ for the two quadruples of exponents.
All linear algebra in this appendix takes place in $\Ftwo^{n}$ and its powers, never in the Hilbert space $\mathbb C^{d}$.

\subsection{The stabilizer subspace and the CB value}
  \label{app:lexvalue}

We start by recalling the stabilizer basis $\ket{\Phi_{u,v}}$ of Eq.~\eqref{eq:Phibasis}, which we refer to as the $\Phi$-basis in what follows.
The following lemma is standard~\cite{Zhu16fails, Gross21comm, Bittel26complete}, and we include its short proof to fix the normalization and the explicit basis used throughout.
\begin{lemma}[Stabilizer subspace]
  \label{lem:Phi}
The Pauli average $d^{-2}\sum_{P\in\mathcal P_{n}}P^{\otimes4}$ is the orthogonal projector onto $\mathrm{span}\{\ket{\Phi_{u,v}}\}$, that is, Eq.~\eqref{eq:PWdef} holds.
\end{lemma}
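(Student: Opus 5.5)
The plan is to compute the Pauli average directly in the computational basis by splitting every Pauli string into its $X$ and $Z$ parts. Up to phases that cancel in the fourth tensor power, every $P\in\mathcal P_{n}$ can be written as $P=\omega\,X^{a}Z^{b}$ with $a,b\in\Ftwo^{n}$, where $X^{a}\ket{x}=\ket{x{+}a}$, $Z^{b}\ket{x}=(-1)^{b\cdot x}\ket{x}$, and $\omega\in\{\pm1,\pm i\}$. Since $\omega^{4}=1$, we have $P^{\otimes4}=(X^{a})^{\otimes4}(Z^{b})^{\otimes4}$. Summing over $P$ is therefore the same as summing independently over $a$ and $b$:
\begin{equation}
 \frac1{d^{2}}\sum_{P}P^{\otimes4}=\Bigl[\frac1d\sum_{a}(X^{a})^{\otimes4}\Bigr]\Bigl[\frac1d\sum_{b}(Z^{b})^{\otimes4}\Bigr].
\end{equation}

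The second factor is evaluated first, on a basis vector $\ket{x_{1},x_{2},x_{3},x_{4}}$. There it acts as $d^{-1}\sum_{b}(-1)^{b\cdot(x_{1}+x_{2}+x_{3}+x_{4})}$, which equals the indicator of $x_{1}+x_{2}+x_{3}+x_{4}=0$. It is thus the orthogonal projector $\Pi_{Z}$ onto the span of quadruples with vanishing sum. Every such quadruple can be written uniquely as $(x,x{+}u,x{+}v,x{+}u{+}v)$ with $u=x_{1}+x_{2}$ and $v=x_{1}+x_{3}$.

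The first factor, $\Pi_{X}=d^{-1}\sum_{a}(X^{a})^{\otimes4}$, is the group average of a unitary representation of $\Ftwo^{n}$. It is therefore the orthogonal projector onto the vectors invariant under simultaneous translation of all four labels by $a$. The two factors commute, because $X^{a}$ and $Z^{b}$ commute up to the sign $(-1)^{a\cdot b}$, which appears four times in the tensor power and so cancels. Hence the product $\Pi_{X}\Pi_{Z}$ is the orthogonal projector onto the intersection of the two ranges.

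It remains to identify that intersection with $\mathrm{span}\{\ket{\Phi_{u,v}}\}$. A vector in the range of $\Pi_{Z}$ is a combination of the quadruples $(x,x{+}u,x{+}v,x{+}u{+}v)$. A translation by $a$ sends each such quadruple to the one with the same $(u,v)$ and with $x\mapsto x{+}a$, so it preserves $(u,v)$. Translation invariance therefore forces the coefficients to be constant in $x$ at fixed $(u,v)$, and the invariant vectors are exactly the combinations of the $\ket{\Phi_{u,v}}$. The $\ket{\Phi_{u,v}}$ are orthonormal, as noted after Eq.~\eqref{eq:Phibasis}, so this projector equals $\sum_{u,v}\ketbra{\Phi_{u,v}}{\Phi_{u,v}}$, which is Eq.~\eqref{eq:PWdef}. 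As a cross-check, $\Tr$ of the Pauli average is $d^{-2}\sum_{P}(\Tr P)^{4}=d^{-2}d^{4}=d^{2}$, which matches the $d^{2}$ basis vectors.

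The only delicate step is the bookkeeping of the phases in the Hermitian Pauli strings. One must check that the correspondence $P\leftrightarrow(a,b)$ is a bijection onto $\Ftwo^{2n}$ and that all phases, including the $\pm i$ in factors of $Y$, disappear in $P^{\otimes4}$. Both facts are routine, and the rest is a direct computation.
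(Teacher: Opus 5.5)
Your proof is correct, and it takes a somewhat different route from the paper's. The paper does not factor the average. It checks that every $P^{\otimes4}$ fixes each $\ket{\Phi_{u,v}}$, because the four $Z$-phases multiply to one and $X^{a}$ only permutes the terms. It then observes that $\{P^{\otimes4}\}$ is an abelian group, since the fourth power removes the Pauli phases, so the average is the projector onto the common fixed space. The reverse inclusion follows from the trace, $d^{-2}\sum_{P}(\Tr P)^{4}=d^{2}$, which shows that the $d^{2}$ orthonormal $\ket{\Phi_{u,v}}$ already exhaust that space. You instead split the average into the commuting projectors $\Pi_{X}\Pi_{Z}$ and identify the range directly as the translation-invariant part of the zero-sum subspace. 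This proves both inclusions explicitly and leaves the trace as a consistency check only. The paper's version is shorter, because the dimension count replaces any description of the invariant vectors. Yours yields the structural picture: a $Z$ part that enforces vanishing label sum and an $X$ part that averages over translates. The paper itself uses that same split later, when it evaluates $\PiF$ in Corollary~\ref{cor:codeform}.
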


\begin{proof}
Write $P=X^{a}Z^{b}$ up to a phase, with $a,b\in\Ftwo^{n}$.
Acting on a basis term of Eq.~\eqref{eq:Phibasis}, the four $Z$-phases multiply to $(-1)^{b\cdot[x+(x+u)+(x+v)+(x+u+v)]}=1$, while $X^{a}$ shifts $x\mapsto x+a$ and permutes the terms of the sum.  Hence $P^{\otimes4}\ket{\Phi_{u,v}}=\ket{\Phi_{u,v}}$ for every $P$.
The set $\{P^{\otimes4}\}$ is an ordinary abelian group, since the fourth power removes the Pauli phases: for $P_{a}P_{b}=\omega(a,b)P_{a+b}$ with $\omega(a,b)\in\{\pm1,\pm i\}$ one has  $P_{a}^{\otimes4}P_{b}^{\otimes4}=\omega(a,b)^{4}P_{a+b}^{\otimes4}=P_{a+b}^{\otimes4}$.
The average $d^{-2}\sum_{P}P^{\otimes4}$ is therefore the orthogonal projector onto the common invariant subspace of this group, and each $\ket{\Phi_{u,v}}$ lies in it.
Its trace is $d^{-2}\sum_{P}(\Tr P)^{4}=d^{-2}\,d^{4}=d^{2}$, since only $P=\mathbb 1$ contributes.
The $d^{2}$ orthonormal vectors $\ket{\Phi_{u,v}}$ therefore exhaust it.
\end{proof}

  We now return to the rank inequality.
  For operators $M_{1},\dots,M_{4}$ on $\mathbb{C}^{d}$, a direct computation gives the matrix elements
  \begin{align}
   &\bra{\Phi_{u,v}}M_{1}\otimes M_{2}\otimes M_{3}\otimes M_{4}\ket{\Phi_{u',v'}}
   \notag\\
   &=\frac1d\sum_{x,x'}
   (M_{1})_{x,x'}(M_{2})_{x+u,x'+u'}
   (M_{3})_{x+v,x'+v'}\notag\\
   &\hspace{4.2em}\times(M_{4})_{x+u+v,x'+u'+v'} .
   \label{eq:Phimatel}
  \end{align}
Equation~\eqref{eq:Phimatel} gives the matrix, in the $\Phi$-basis, of the operator $\PW(M_{1}\otimes M_{2}\otimes M_{3}\otimes M_{4})\PW$, which we call the compression of $M_{1}\otimes\cdots\otimes M_{4}$ to the stabilizer subspace $W$ and denote, as in Sec.~\ref{ssec:rank}, by a subscript $W$.
If all four operators are diagonal in the computational basis, each term of the sum forces $x=x'$, $u=u'$, and $v=v'$, so the compressed operator is diagonal in the $\Phi$-basis.

In particular, for the computational-basis choice $A^{\mathrm{CB}}=\bigotimes_{j}E_{2^{\alpha_{j}}}$, writing $q_{1}=0$, $q_{2}=u$, $q_{3}=v$, $q_{4}=u{+}v$, the diagonal entries are
  \begin{equation}
   \bigl(A^{\mathrm{CB}}_{W}\bigr)_{(u,v),(u,v)}=\frac{a_{\alpha}(u,v)}{d},
   \qquad
   a_{\alpha}(u,v)=\bigl|\mathcal L_{\alpha}(u,v)\bigr|,
   \label{eq:adiag}
  \end{equation}
where $\mathcal L_{\alpha}(u,v)\equiv\{x:\,x+q_{j}\in L_{2^{\alpha_{j}}}\ \text{for all}\ j\}$ is the set of labels $x$ for which the $j$th entry of the quadruple $(x,x{+}u,x{+}v,x{+}u{+}v)$ lies in $L_{2^{\alpha_{j}}}$ for every $j$.

\subsection{The twenty bounds}
\label{app:cuts}
Throughout this subsection, $A=P_{1}\otimes P_{2}\otimes P_{3}\otimes P_{4}$ and $B=Q_{1}\otimes Q_{2}\otimes Q_{3}\otimes Q_{4}$ are built from arbitrary orthogonal projectors $P_{j}$ and $Q_{j}$ of arbitrary ranks $r_{j}$ and $s_{j}$, which need not be powers of two.
The twenty bounds derived here hold at every rank, and it is only in Appendix~\ref{app:mincut} that we restrict to $r_{j}=2^{\alpha_{j}}$ and $s_{j}=2^{\beta_{j}}$ and show that the tightest of them is attained.
We refer to the $j$th tensor factor of $(\mathbb C^{d})^{\otimes4}$ as the $j$th replica, so that $P_{j}$ and $Q_{j}$ are the two projectors acting on replica $j$.
The starting point is that the trace $\Tr(A_{W}B_{W})=\Tr[\PW A\PW B]$ is monotone in each of its two arguments, since it is the Hilbert--Schmidt inner product of $A$ with the positive-semidefinite operator $\PW B\PW$.
Replacing any $P_{j}$ or $Q_{j}$ by $\mathbb 1$ can therefore only increase it, and after a suitable replacement the trace can be evaluated in terms of the ranks alone.
Two kinds of replacement achieve this.
Removing both projectors of one replica gives the twelve \emph{drop-one} bounds of Eq.~\eqref{eq:dropone}
Keeping a single projector on each replica, three from $A$ and one from $B$ or the other way around, gives the eight \emph{cross} bounds of Eq.~\eqref{eq:crosscut}. We derive the two families in turn.

\begin{lemma}[Drop-one bounds]
  \label{lem:dropone}
  For every pair of distinct indices $i,a\in\{1,2,3,4\}$,
  \begin{equation}
   \Tr(A_{W}B_{W})\ \le\
   \frac{r_{a}s_{a}\prod_{j\ne i,a}\min(r_{j},s_{j})}{d^{2}} .
   \label{eq:dropone}
  \end{equation}
\end{lemma}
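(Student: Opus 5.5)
The plan is to combine the monotonicity of $\Tr(A_{W}B_{W})=\Tr[\PW A\PW B]$ in each projector, already noted above, with the Pauli-average form of $\PW$ in Eq.~\eqref{eq:PWdef}. After suitable replacements by the identity, the trace becomes a Pauli twirl that can be evaluated exactly in terms of the ranks.

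\emph{Step 1: replacements.} Fix distinct $i,a$.
\begin{itemize}
\item On replica $i$, replace both $P_{i}$ and $Q_{i}$ by $\mathbb 1$.
\item On each replica $j\notin\{i,a\}$, replace by $\mathbb 1$ whichever of $P_{j},Q_{j}$ has the larger rank. The surviving projector then has rank $\min(r_{j},s_{j})$.
\item On replica $a$, keep both $P_{a}$ and $Q_{a}$.
\end{itemize}
Each replacement can only increase the trace. The reason is that $\Tr[\PW A\PW B]$ is the Hilbert--Schmidt inner product of a positive semidefinite operator with $\PW B\PW\succeq0$ (respectively with $\PW A\PW$). Enlarging a tensor factor from $P\le\mathbb 1$ to $\mathbb 1$ preserves positivity and increases the operator in the Loewner order. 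Call the resulting operators $A'=\mathbb 1_{i}\otimes A''$ and $B'=\mathbb 1_{i}\otimes B''$.

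\emph{Step 2: exact evaluation.} Insert $\PW=d^{-2}\sum_{P}P^{\otimes4}$ twice to get
\begin{equation*}
\Tr[\PW A'\PW B']=d^{-4}\sum_{P,Q\in\mathcal P_{n}}\Tr(PQ)\,\Tr\bigl[P^{\otimes3}A''Q^{\otimes3}B''\bigr],
\end{equation*}
where the factor $\Tr(PQ)$ comes from replica $i$. Since $\Tr(PQ)=d\,\delta_{P,Q}$, this collapses to
\begin{equation*}
d^{-3}\sum_{P}\Tr\bigl[P^{\otimes3}A''P^{\otimes3}B''\bigr]=d^{-3}\sum_{P}\prod_{j\ne i}\Tr\bigl(PX_{j}PY_{j}\bigr),
\end{equation*}
with $X_{j}$ and $Y_{j}$ the surviving factors on replica $j$.

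For each $j\notin\{i,a\}$, one of $X_{j},Y_{j}$ is $\mathbb 1$. Because $P^{2}=\mathbb 1$, the factor equals $\Tr$ of the other projector, namely $\min(r_{j},s_{j})$, independently of $P$. The only $P$-dependent factor is $\Tr(PP_{a}PQ_{a})$. Summing it with the twirl identity $\sum_{P}PXP=d\,\Tr(X)\,\mathbb 1$ gives $d\,r_{a}s_{a}$. Altogether,
\begin{equation*}
\Tr[\PW A'\PW B']=d^{-3}\cdot d\,r_{a}s_{a}\prod_{j\ne i,a}\min(r_{j},s_{j}),
\end{equation*}
which is exactly the right-hand side of Eq.~\eqref{eq:dropone}.

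\emph{Expected obstacle.} The only delicate points are bookkeeping. First, one must confirm the normalizations in $\Tr(PQ)=d\,\delta_{P,Q}$ and in the twirl identity, and that $\mathcal P_{n}$ contains $d^{2}$ Hermitian strings with no phase issues. Phases are harmless here, since each $P$ appears an even number of times in every factor. Second, one must justify the monotonicity step, which rests on the positivity of $\PW B\PW$ and on tensor products of positive operators being ordered factorwise. Neither step should present real difficulty.
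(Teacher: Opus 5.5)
Your proposal is correct and follows essentially the same route as the paper's proof: drop both projectors on replica $i$, let the factor $\Tr(RS)=d\,\delta_{R,S}$ collapse the double Pauli sum, and evaluate replica $a$ exactly by the twirl $\sum_R \Tr(P_aRQ_aR)=d\,r_as_a$. The only difference is cosmetic: the paper keeps both projectors on each replica $j\ne i,a$ and uses the pointwise bound $0\le\Tr(P_jRQ_jR)\le\min(r_j,s_j)$, whereas you obtain the same factor exactly by making one further monotone replacement by $\mathbb 1$ on each such replica.
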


\begin{proof}
We replace both $P_{i}$ and $Q_{i}$ by $\mathbb 1$, which can only increase the trace, and evaluate the result. Expanding both copies of $\PW=d^{-2}\sum_{R\in\mathcal P_{n}}R^{\otimes4}$ over the Pauli strings, cf. Eq.~\eqref{eq:PWdef}, and collecting the four replicas separately gives
\begin{equation*}
   \Tr[\PW A\PW B]
   =\frac1{d^{4}}\sum_{R,S\in\mathcal P_{n}}\prod_{j=1}^{4}\Tr\bigl(RP_{j}SQ_{j}\bigr),
\end{equation*}
where from now on $P_{i}=Q_{i}=\mathbb 1$.
The factor of replica $i$ is $\Tr(RS)=d\,\delta_{R,S}$, because distinct Pauli operators are orthogonal, and it collapses the double sum to a single one,
\begin{equation*}
   \Tr[\PW A\PW B]
   =\frac1{d^{3}}\sum_{R\in\mathcal P_{n}}\prod_{j\ne i}t_{j}(R),
\end{equation*}
where $t_{j}(R)\equiv\Tr\bigl(P_{j}RQ_{j}R\bigr)$.
Each $t_{j}(R)$ is the trace of the product of two orthogonal projectors, $P_{j}$ and $RQ_{j}R$, of ranks $r_{j}$ and $s_{j}$, so that $0\le t_{j}(R)\le\min(r_{j},s_{j})$.
We use this bound on two of the three remaining replicas and sum the third one exactly.
For replica $a$, the Pauli average $\sum_{R}RQ_{a}R=d\,\Tr(Q_{a})\,\mathbb 1$ gives $\sum_{R}t_{a}(R)=d\,r_{a}s_{a}$, and Eq.~\eqref{eq:dropone} follows.
\end{proof}

  The cross bounds rest on the fact that $\PW$ carries no information about any single replica: each $\ket{\Phi_{u,v}}$ is maximally entangled between any one replica and the remaining
  three.
  We write $Q^{(i)}\equiv\mathbb 1^{\otimes(i-1)}\otimes Q\otimes\mathbb 1^{\otimes(4-i)}$ for an operator $Q$ placed on replica $i$.

  \begin{lemma}[Single-replica properties of $\PW$]
  \label{lem:onebody}
  For every operator $Q$ on $\mathbb C^{d}$ and every replica $i$,
  \begin{equation}
   \text{(i)}\ \ \PW\,Q^{(i)}\,\PW=\frac{\Tr Q}{d}\,\PW,
   \qquad
   \text{(ii)}\ \ \Tr_{i}\,\PW=\frac{\mathbb 1^{\otimes3}}{d}.
   \label{eq:onebody}
  \end{equation}
  \end{lemma}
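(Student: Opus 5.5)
I will prove both identities by working directly with the explicit $\Phi$-basis of Lemma~\ref{lem:Phi} and its matrix elements, Eq.~\eqref{eq:Phimatel}. By the permutation symmetry of $\PW$, which commutes with the action of $S_{4}$ on the replicas, it suffices to treat $i=1$. I will still check the other replicas, since the relabeling $x\mapsto x+q_{i}$ moves any replica into the first slot.

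For (i) I will evaluate Eq.~\eqref{eq:Phimatel} with $M_{1}=Q$ and $M_{2}=M_{3}=M_{4}=\mathbb 1$. The identities on replicas $2$ and $3$ force $x+u=x'+u'$ and $x+v=x'+v'$. Replica $4$ then imposes nothing new. Replica $1$ forces nothing, but $Q_{x,x'}$ is weighted by the constraints. Writing $a=x+x'$, the constraints give $u'=u+a$ and $v'=v+a$. The element $\bra{\Phi_{u,v}}Q^{(1)}\ket{\Phi_{u',v'}}$ is therefore $d^{-1}\sum_{x}Q_{x,x+a}$ with $a=u+u'=v+v'$, and it vanishes unless $u+u'=v+v'$. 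So $\PW Q^{(1)}\PW$ is not simply diagonal, and I must reconcile this with the claim.

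For a general $Q$, I will instead use the Pauli form of the projector, $\PW=d^{-2}\sum_{R}R^{\otimes4}$. Expand $Q=d^{-1}\sum_{S}\Tr(SQ)\,S$ in Paulis. The only issue is then whether $\PW S^{(1)}\PW=0$ for every $S\ne\mathbb 1$. Since $\PW$ is invariant under conjugation by every $R^{\otimes4}$, we get $\PW S^{(1)}\PW=\PW R^{\otimes4}S^{(1)}R^{\otimes4}\PW=\pm\PW S^{(1)}\PW$. The sign is $-1$ for any $R$ anticommuting with $S$, and such an $R$ exists whenever $S\ne\mathbb 1$, so the compression vanishes. The identity term contributes $d^{-1}\Tr(Q)\,\PW$, which gives (i). This symmetry argument is cleaner than the matrix-element route and works for every replica at once.

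For (ii) I will compute $\Tr_{1}\PW=d^{-2}\sum_{R}\Tr(R)\,R^{\otimes3}$. Only $R=\mathbb 1$ survives, giving $d^{-2}\cdot d\,\mathbb 1^{\otimes3}=\mathbb 1^{\otimes3}/d$, and the same holds for any $i$. The main obstacle is only bookkeeping of the Pauli phases in (i), namely that $R^{\otimes4}$ conjugating $S^{(1)}$ produces exactly a sign. This holds because $RSR^{\dagger}=\pm S$ for Hermitian Paulis, and $R^{\otimes4}$ acts trivially on the other replicas after conjugation. I do not expect a real difficulty here.
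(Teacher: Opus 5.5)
Your final argument is correct, but the matrix-element computation you abandon contains an error that needs fixing, not ``reconciling.'' The identity on replica~4 is not redundant: it imposes $x+u+v=x'+u'+v'$. Write $a=x+x'$. The other two constraints give $u+u'=v+v'=a$, so the third reads $a=(u+u')+(v+v')=a+a=0$ in $\Ftwo^{n}$. Hence $x=x'$, $u=u'$, $v=v'$. The compression $\PW Q^{(1)}\PW$ is therefore diagonal in the $\Phi$-basis with constant entry $d^{-1}\sum_{x}Q_{x,x}=\Tr Q/d$, and this short computation is exactly the paper's proof of (i). Your assertion that the compression is not diagonal is false, and it contradicts your own Pauli argument. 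For $Q=X^{a}$ with $a\neq0$ it would produce off-diagonal entries $d^{-1}\sum_{x}(X^{a})_{x,x+a}=1$, whereas your twirl gives $\PW(X^{a})^{(1)}\PW=0$.

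With that corrected, your proof is a genuinely different route of the same length. The paper works in the explicit $\Phi$-basis. It proves (i) from the Kronecker deltas of Eq.~\eqref{eq:Phimatel}. It proves (ii) by tracing out one replica of $|\Phi_{u,v}\rangle\langle\Phi_{u,v}|$ and using that $(x,u,v)\mapsto(x{+}u,x{+}v,x{+}u{+}v)$ is a bijection of $\Ftwo^{3n}$. You work only with the Pauli-average form $\PW=d^{-2}\sum_{R}R^{\otimes4}$. Your (i) is a twirling argument in which every non-identity Pauli component of $Q$ is killed by an anticommuting $R$, and your (ii) is $\Tr R=d\,\delta_{R,\mathbb 1}$.

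Your route is basis-free and symmetric in the replicas from the start, and it isolates the only input, the nondegeneracy of the Pauli commutation form. The paper's route stays within the $\Phi$-basis bookkeeping that the rest of Appendix~\ref{app:rank} relies on (e.g.\ Eq.~\eqref{eq:adiag}), and it exhibits the diagonal structure directly.

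One precision in your step: the identity $\PW S^{(1)}\PW=\PW R^{\otimes4}S^{(1)}R^{\otimes4}\PW$ needs the absorption property $R^{\otimes4}\PW=\PW R^{\otimes4}=\PW$. This holds because $\PW$ projects onto the common fixed space of the $P^{\otimes4}$ (Lemma~\ref{lem:Phi}). It is a stronger statement than invariance of $\PW$ under conjugation, which is what you cite.
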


  \begin{proof}
  We take $i=1$, the other replicas being analogous.
  For (i), Eq.~\eqref{eq:Phimatel} with $M_{1}=Q$ and $M_{2}=M_{3}=M_{4}=\mathbb 1$ contains three Kronecker deltas, $x+u=x'+u'$, $x+v=x'+v'$, and $x+u+v=x'+u'+v'$.
  The first two give $x+x'=u+u'=v+v'$, and inserting this into the third gives $x+x'=0$, hence $x=x'$, $u=u'$, and $v=v'$.
  The matrix of $\PW Q^{(1)}\PW$ in the $\Phi$-basis is therefore diagonal, with the constant entry $d^{-1}\sum_{x}Q_{x,x}=\Tr Q/d$.
  For (ii), tracing out the first replica of $\ketbra{\Phi_{u,v}}{\Phi_{u,v}}$ leaves $d^{-1}\sum_{x}\ketbra{x{+}u,x{+}v,x{+}u{+}v}{x{+}u,x{+}v,x{+}u{+}v}$.
  Summing over $(u,v)$ and using that $(x,u,v)\mapsto(x{+}u,x{+}v,x{+}u{+}v)$ is a bijection of $\Ftwo^{3n}$ gives $d^{-1}\sum_{a,b,c}\ketbra{a,b,c}{a,b,c}$.
  \end{proof}

  \begin{lemma}[Cross bounds]
  \label{lem:crosscut}
  For every $i\in\{1,2,3,4\}$,
  \begin{equation}
   \Tr(A_{W}B_{W})\le\frac{s_{i}\prod_{j\ne i}r_{j}}{d^{2}},
   \qquad
   \Tr(A_{W}B_{W})\le\frac{r_{i}\prod_{j\ne i}s_{j}}{d^{2}} .
   \label{eq:crosscut}
  \end{equation}
  \end{lemma}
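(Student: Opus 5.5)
We prove the first inequality; the second follows by exchanging the roles of $A$ and $B$, since $\Tr(A_{W}B_{W})$ is symmetric in its two arguments. We keep the three projectors $P_{j}$, $j\ne i$, and the single projector $Q_{i}$, and replace the rest by the identity. This means $P_{i}\to\mathbb 1$ and $Q_{j}\to\mathbb 1$ for $j\ne i$. By the monotonicity noted at the start of this subsection, this can only increase the trace. After the replacement, $B=Q_{i}^{(i)}$ is a single-replica operator, and $A=\bigotimes_{j\ne i}P_{j}$ sits on the other three replicas, with the identity on replica $i$.

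The key steps use Lemma~\ref{lem:onebody}. First, by part (i), the compression of $B$ collapses:
\begin{equation*}
   \PW\,Q_{i}^{(i)}\,\PW=\frac{\Tr Q_{i}}{d}\,\PW=\frac{s_{i}}{d}\,\PW .
\end{equation*}
Hence
\begin{equation*}
   \Tr[\PW A\PW B]\le\frac{s_{i}}{d}\,\Tr[\PW A] .
\end{equation*}
Second, $A$ acts trivially on replica $i$, so we can take the partial trace over that replica first. Part (ii) then gives
\begin{equation*}
   \Tr[\PW A]=\Tr\Bigl[(\Tr_{i}\PW)\bigotimes_{j\ne i}P_{j}\Bigr]=\frac1d\prod_{j\ne i}\Tr P_{j}=\frac1d\prod_{j\ne i}r_{j}.
\end{equation*}
Multiplying the two factors yields $s_{i}\prod_{j\ne i}r_{j}/d^{2}$, which is Eq.~\eqref{eq:crosscut}.

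The only point that needs care is the monotonicity step. The trace $\Tr[\PW A\PW B]$ is monotone in each argument separately. This holds because it equals $\Tr[A\,(\PW B\PW)]$ with $\PW B\PW\succeq0$, and similarly with the roles of $A$ and $B$ exchanged. Since both $A$ and $B$ are tensor products of positive semidefinite factors, replacing one factor by a larger one preserves positivity and the ordering. We therefore perform the replacements one factor at a time, using $P\le\mathbb 1$ at each step. The main obstacle is simply recognizing that after the substitution Lemma~\ref{lem:onebody} applies in both of its forms; once that is seen, no further computation is needed.
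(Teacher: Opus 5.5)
Your proof is correct and follows the paper's argument essentially verbatim: replace $P_i$ and the $Q_j$ with $j\ne i$ by $\mathbb 1$ using monotonicity, collapse $\PW Q_i^{(i)}\PW=\frac{s_i}{d}\PW$ by Lemma~\ref{lem:onebody}(i), evaluate $\Tr[A\PW]$ via the partial trace of Lemma~\ref{lem:onebody}(ii), and obtain the second bound by exchanging $A$ and $B$. Your justification of the monotonicity step through the Loewner ordering of tensor products of positive semidefinite factors is a welcome bit of extra care, but it is the same route.
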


  \begin{proof}
  For the first bound we keep $Q_{i}$ as the only projector of $B$ and the three $P_{j}$ with $j\ne i$ as the only projectors of $A$, replacing the other four by $\mathbb 1$, which can
  only increase the trace.
  After the replacement $B=Q_{i}^{(i)}$, so that $\PW B\PW=\frac{s_{i}}{d}\PW$ by property (i), and the trace becomes $\frac{s_{i}}{d}\Tr[A\PW]$.
  Since $A$ is now the identity on replica $i$, property (ii) gives $\Tr[A\PW]=\Tr[(\bigotimes_{j\ne i}P_{j})\Tr_{i}\PW]=d^{-1}\prod_{j\ne i}r_{j}$, which is the first bound.
  The second bound follows by exchanging the roles of $A$ and $B$.
  \end{proof}

 \subsection{The minimum of the twenty bounds at dyadic ranks}
  \label{app:mincut}

The twenty bounds of Lemmas~\ref{lem:dropone} and~\ref{lem:crosscut} hold at every rank.
We now restrict to dyadic ranks $r_{j}=2^{\alpha_{j}}$ and $s_{j}=2^{\beta_{j}}$ and show that the tightest of the twenty bounds is exactly the computational-basis value, which will enable us to complete the proof of Theorem~\ref{thm:rank}.
The argument has three steps.
First, we show that the computational-basis value counts the pairs of quadruples compatible with the two rank vectors, and that at dyadic ranks these pairs form a linear subspace $\mathcal S_{\alpha\beta}$ of $\Ftwo^{4n}$.
Second, we express $\dim\mathcal S_{\alpha\beta}$ as the minimum weight of a basis chosen among eight vectors of $\Ftwo^{4}$ carrying the weights $\alpha_{j}$ and $\beta_{j}$.
Third, we enumerate the bases and find that their minimum weight is the minimum of the twenty exponents.

  \textit{The computational-basis value as a count.---}
  With $(q_{1},q_{2},q_{3},q_{4})=(0,u,v,u{+}v)$ as in Eq.~\eqref{eq:adiag}, let
  \begin{equation}
   \mathcal S_{\alpha\beta}\equiv\bigl\{(x,u,v,y):\,
   x+q_{j}\in L_{2^{\alpha_{j}}},\ y+q_{j}\in L_{2^{\beta_{j}}}\ \text{for all}\ j\bigr\}
   \label{eq:Sdef}
  \end{equation}
  be the set of pairs of quadruples $(x,x{+}u,x{+}v,x{+}u{+}v)$ and $(y,y{+}u,y{+}v,y{+}u{+}v)$ with the same differences $u$ and $v$, the first compatible with the ranks $2^{\alpha_{j}}$
  and the second with the ranks $2^{\beta_{j}}$.
  At fixed $(u,v)$ the admissible $x$ form the set $\mathcal L_{\alpha}(u,v)$ and the admissible $y$ the set $\mathcal L_{\beta}(u,v)$, so that $|\mathcal S_{\alpha\beta}|=\sum_{u,v}a_{\alpha}(u,v)\,a_{\beta}(u,v)$.
  Since both reference operators are diagonal in the $\Phi$-basis, Eq.~\eqref{eq:adiag} gives
  \begin{equation}
   \Tr\bigl(A_{W}^{\mathrm{CB}}B_{W}^{\mathrm{CB}}\bigr)
   =\sum_{u,v}\frac{a_{\alpha}(u,v)}{d}\,\frac{a_{\beta}(u,v)}{d}
   =\frac{|\mathcal S_{\alpha\beta}|}{d^{2}} .
   \label{eq:Scount}
  \end{equation}
  This identity holds at every rank.
  At dyadic ranks each $L_{2^{\alpha}}$ is a linear subspace of $\Ftwo^{n}$, so the eight conditions in Eq.~\eqref{eq:Sdef} are linear in $(x,u,v,y)$, and $\mathcal S_{\alpha\beta}$ is a linear
  subspace of $\Ftwo^{4n}$ with $|\mathcal S_{\alpha\beta}|=2^{\dim\mathcal S_{\alpha\beta}}$.
  At the same ranks the twenty bounds of Eqs.~\eqref{eq:dropone} and~\eqref{eq:crosscut} read $2^{c}/d^{2}$ with the twenty exponents
  \begin{align}
   c_{i,a}&=\alpha_{a}+\beta_{a}+\sum_{j\ne i,a}\min(\alpha_{j},\beta_{j}),\notag\\
   c_{i}&=\beta_{i}+\sum_{j\ne i}\alpha_{j},
   \qquad
   c_{i}'=\alpha_{i}+\sum_{j\ne i}\beta_{j},
   \label{eq:exponents}
  \end{align}
  where $i\ne a$.
  Theorem~\ref{thm:rank} is therefore a consequence of the following statement.

  \begin{theorem}[Minimum of the twenty bounds]
  \label{thm:mincut}
  For all $\alpha,\beta\in\{0,\dots,n\}^{4}$, $\dim\mathcal S_{\alpha\beta}$ equals the minimum of the twenty exponents of Eq.~\eqref{eq:exponents}.
  \end{theorem}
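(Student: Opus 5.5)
The plan is to compute $\dim\mathcal S_{\alpha\beta}$ exactly by splitting it into independent bit positions. Then I will recognize the result as a minimum-weight basis problem for a small matroid on eight vectors of $\Ftwo^{4}$, and finally enumerate the bases.

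\textbf{Step 1: bitwise factorization.} The condition $z\in L_{2^{\gamma}}$ means that every bit $z_{t}$ with $t>\gamma$ vanishes, and addition in $\Ftwo^{n}$ acts bit by bit. So the eight constraints of Eq.~\eqref{eq:Sdef} decouple over positions $t=1,\dots,n$. At position $t$ the variables $w=(x_{t},u_{t},v_{t},y_{t})\in\Ftwo^{4}$ are subject to the following linear forms:
\begin{itemize}
\item $\ell_{j}(w)=x_{t}+(q_{j})_{t}$ for every $j$ with $\alpha_{j}<t$;
\item $m_{j}(w)=y_{t}+(q_{j})_{t}$ for every $j$ with $\beta_{j}<t$.
\end{itemize}
Here $(q_{1},\dots,q_{4})=(0,u,v,u+v)$. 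Hence $\mathcal S_{\alpha\beta}=\prod_{t}\mathcal S_{t}$ and
\begin{equation*}
\dim\mathcal S_{\alpha\beta}=\sum_{t=1}^{n}\bigl(4-\operatorname{rank}F_{<t}\bigr),
\end{equation*}
where $F_{<t}$ is the set of forms among the eight $\ell_{j},m_{j}$ whose weight ($\alpha_{j}$ or $\beta_{j}$) is less than $t$.

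\textbf{Step 2: conversion to a minimum-weight basis.} The eight forms span $\Ftwo^{4*}$, since $\ell_{1},\ell_{2},\ell_{3}$ recover $x,u,v$ and $m_{1}$ recovers $y$. So the greedy algorithm applies. Sort the forms by weight and pick a basis $B$ greedily. Then $\operatorname{rank}F_{<t}$ equals the number of elements of $B$ with weight less than $t$, and a layer-cake rearrangement gives
\begin{equation*}
\sum_{t}\bigl(4-\operatorname{rank}F_{<t}\bigr)=\sum_{b\in B}w_{b}.
\end{equation*}
This uses that weights lie in $\{0,\dots,n\}$ and that a form of weight $n$ is never active. By the matroid greedy theorem this sum is the minimum of $\sum_{b\in B}w_{b}$ over all bases $B$ of the eight vectors. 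This is the ``minimum weight of a basis'' announced in the text.

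\textbf{Step 3: enumeration of bases.} I would first list the circuits:
\begin{itemize}
\item $\ell_{1}+\ell_{2}+\ell_{3}+\ell_{4}=0$ and $m_{1}+m_{2}+m_{3}+m_{4}=0$;
\item $\ell_{i}+m_{i}+\ell_{j}+m_{j}=0$ for all $i\ne j$, because every $\ell_{j}+m_{j}$ equals $x+y$;
\item the odd mixtures these generate, for example $\ell_{i}+\ell_{j}+m_{k}+m_{l}=0$ for $\{i,j,k,l\}=\{1,2,3,4\}$.
\end{itemize}
Then I would classify the 4-subsets avoiding all circuits by how many $\ell$'s they contain.
\begin{itemize}
\item Three $\ell$'s: these must be $\{\ell_{j}\}_{j\ne i}$ plus some $m_{k}$. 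The weight-optimal admissible choices give the exponents $c_{i}$, or sets dominated by a $c$-type exponent.
\item Three $m$'s: symmetrically, these give $c_{i}'$.
\item Two of each: the circuit $\ell_{i}+\ell_{j}+m_{k}+m_{l}=0$ rules out disjoint index pairs, and $\ell_{i}+m_{i}+\ell_{j}+m_{j}=0$ rules out two full pairs. So such a basis has the form $\{\ell_{a},m_{a}\}$ together with one of $\ell_{j},m_{j}$ for each of two indices $j\ne i,a$. Optimizing the choice on each of those two indices gives exactly $c_{i,a}$.
\end{itemize}
I would verify independence of each family by a direct rank check in coordinates $(x,u,v,y)$. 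Then the minimum over all bases is the minimum of the twenty exponents of Eq.~\eqref{eq:exponents}.

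The main obstacle is this last step: making the circuit list complete and showing that no other basis shape, such as four $\ell$'s or one $\ell$ with three $m$'s in a non-listed pattern, yields a smaller weight. I would handle it by reducing modulo the symmetry $\ell\leftrightarrow m$ and the $S_{4}$-type symmetry of the plane labels, and checking the remaining few cases by hand.
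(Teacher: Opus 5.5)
Your proposal is correct and follows the paper's proof essentially step for step: the same bitwise decoupling $\dim\mathcal S_{\alpha\beta}=\sum_t(4-\mathrm{rk}\,F_{<t})$, the same greedy/layer-cake identification with a minimum-weight basis (Lemma~\ref{lem:matroid}), and the same classification of the $16+16+24=56$ bases into cross and drop-one types (Lemma~\ref{lem:bases}), where your circuits play the role of the paper's observations that any three vectors of one kind are independent and that a two-plus-two set is dependent exactly when its four vectors sum to zero. The obstacle you flag is already closed by your split by the number of $\ell$'s; just note that attaining $c_{i,a}$ may require the sets $\{\ell_a,m_a,\ell_j,\ell_k\}$ or $\{\ell_a,m_a,m_j,m_k\}$ from the three-of-one-kind cases, so your rank check must confirm that all $16+16$ of those sets are bases, not only the mixed two-plus-two ones.
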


  \textit{The dimension of $\mathcal S_{\alpha\beta}$ as a minimum-weight basis.---}
  The conditions in Eq.~\eqref{eq:Sdef} involve the eight linear forms
  \begin{equation*}
  \begin{aligned}
   A_{1}&=x, & A_{2}&=x{+}u, & A_{3}&=x{+}v, & A_{4}&=x{+}u{+}v,\\
   B_{1}&=y, & B_{2}&=y{+}u, & B_{3}&=y{+}v, & B_{4}&=y{+}u{+}v,
  \end{aligned}
  \end{equation*}
  and require $A_{j}\in L_{2^{\alpha_{j}}}$ and $B_{j}\in L_{2^{\beta_{j}}}$.
  We assign to each form the weight $w(A_{j})=\alpha_{j}$ and $w(B_{j})=\beta_{j}$, so that the condition on a form $e$ reads $e\in L_{2^{w(e)}}$, that is, bit $t$ of $e$ vanishes for
  every $t>w(e)$.
  Each form acts on the bit strings $x,u,v,y$ bit by bit, so we represent it by its coefficient vector in $\Ftwo^{4}$ with respect to $(x,u,v,y)$,
  \begin{equation*}
  \begin{aligned}
   a_{1}&=(1,0,0,0), & a_{2}&=(1,1,0,0), & a_{3}&=(1,0,1,0),\\
   a_{4}&=(1,1,1,0), & b_{1}&=(0,0,0,1), & b_{2}&=(0,1,0,1),\\
   b_{3}&=(0,0,1,1), & b_{4}&=(0,1,1,1),
  \end{aligned}
  \end{equation*}
  so that bit $t$ of $A_{2}=x{+}u$, for instance, is $a_{2}\cdot(x_{t},u_{t},v_{t},y_{t})=x_{t}+u_{t}$.
  These eight vectors span $\Ftwo^{4}$, and we call any four of them that are linearly independent a \emph{basis}.

  \begin{lemma}[Dimension as a minimum-weight basis]
  \label{lem:matroid}
  $\displaystyle \dim\mathcal S_{\alpha\beta}=\min_{\mathcal B\ \mathrm{basis}}\ \sum_{e\in\mathcal B}w(e)$.
  \end{lemma}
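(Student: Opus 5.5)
The plan is to compute $\dim\mathcal S_{\alpha\beta}$ bit by bit. For each bit position $t\in\{1,\dots,n\}$, collect the $t$th bits of $(x,u,v,y)$ into $z_t=(x_t,u_t,v_t,y_t)\in\Ftwo^4$. The condition $e\in L_{2^{w(e)}}$ for a form $e$ with coefficient vector $c_e$ says that $c_e\cdot z_t=0$ for every $t>w(e)$. Distinct bit positions do not interact, so $\mathcal S_{\alpha\beta}$ factorizes as a direct product over $t$ of the subspaces
\begin{equation*}
K_t\equiv\{z\in\Ftwo^4:\ c_e\cdot z=0\ \text{for all}\ e\ \text{with}\ w(e)<t\}.
\end{equation*}
This gives $\dim\mathcal S_{\alpha\beta}=\sum_{t=1}^{n}\bigl(4-\mathrm{rank}\,R_{<t}\bigr)$, where $R_{<t}$ is the span of the vectors $c_e$ with $w(e)<t$.

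The next step rewrites this sum as a rank-function integral. Write $\rho(k)\equiv\mathrm{rank}\,\mathrm{span}\{c_e:\ w(e)\le k\}$. Then
\begin{equation*}
\dim\mathcal S_{\alpha\beta}=\sum_{k=0}^{n-1}\bigl(4-\rho(k)\bigr).
\end{equation*}
Since all weights are at most $n$, the eight vectors span $\Ftwo^4$ and $\rho(n)=4$. I would then invoke the standard greedy (Kruskal/Rado) property of the linear matroid on $\{a_j,b_j\}$. Order the vectors by increasing weight and keep a vector whenever it is independent of those already kept. The result is a minimum-weight basis $\mathcal B^*$, and for every threshold $k$ the number of its elements with weight $\le k$ equals $\rho(k)$.

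For the final computation, write $\sum_{e\in\mathcal B^*}w(e)=\sum_{e\in\mathcal B^*}\sum_{k=0}^{n-1}[w(e)>k]$. The inner count gives $\sum_{k}\bigl(4-\rho(k)\bigr)$, which is the dimension. The minimum over all bases equals the greedy value, because any basis $\mathcal B$ has at most $\rho(k)$ elements of weight $\le k$, so $\sum_{e\in\mathcal B}w(e)\ge\sum_k(4-\rho(k))$. This proves the lemma.

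The only delicate point is the factorization step. I need that the constraint on bit $t$ involves only weights $<t$ under the indexing convention $x=\sum_t x_t2^{t-1}$, where $L_{2^{\alpha}}$ means bits $t>\alpha$ vanish. I would check this boundary convention carefully, including weight $0$ (the constraint is then active on every bit) and weight $n$ (no constraint at all). Everything after that is the textbook matroid greedy argument.
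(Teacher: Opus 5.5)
Your proposal is correct and follows essentially the same route as the paper: you decouple the bit positions to get $\dim\mathcal S_{\alpha\beta}=\sum_{t}\bigl(4-\mathrm{rk}\{e:\,w(e)<t\}\bigr)$, bound the weight of every basis below by counting at most $\mathrm{rk}$ independent elements in each weight layer, and attain that bound with the greedy basis taken in order of nondecreasing weight. Your reindexing by $k=t-1$ is only cosmetic, and the boundary cases you flag are handled correctly: weight $0$ constrains every bit and weight $n$ constrains none.
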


  \begin{proof}
  We first decouple the bit positions.
  The condition on the form $e$ constrains bit $t$ of $e$ only for $t>w(e)$, and bit $t$ of $e$ depends only on the four bits $(x_{t},u_{t},v_{t},y_{t})$.
  Hence $\mathcal S_{\alpha\beta}$ is the product over $t=1,\dots,n$ of the solution sets of the linear systems $\{e\cdot(x_{t},u_{t},v_{t},y_{t})=0:\,w(e)<t\}$ on $\Ftwo^{4}$, whose dimensions are
  $4-\mathrm{rk}\{e:\,w(e)<t\}$, so that
  \begin{equation}
   \dim\mathcal S_{\alpha\beta}=\sum_{t=1}^{n}\bigl(4-\mathrm{rk}\{e:\,w(e)<t\}\bigr).
   \label{eq:dimsum}
  \end{equation}
  We next compare this with the weight of a basis.
  Since $w(e)$ counts the levels $t\in\{1,\dots,n\}$ with $w(e)\ge t$, the weight of any basis $\mathcal B$ is
  \begin{equation}
  \begin{split}
   \sum_{e\in\mathcal B}w(e)&=\sum_{t=1}^{n}\bigl(4-|\mathcal B\cap\{e:\,w(e)<t\}|\bigr)\\
   &\ge\sum_{t=1}^{n}\bigl(4-\mathrm{rk}\{e:\,w(e)<t\}\bigr),
  \end{split}
  \label{eq:basisweight}
  \end{equation}
  because $\mathcal B\cap\{e:\,w(e)<t\}$ is a linearly independent subset of $\{e:\,w(e)<t\}$.
  The right-hand side equals $\dim\mathcal S_{\alpha\beta}$ by Eq.~\eqref{eq:dimsum}, so every basis has weight at least $\dim\mathcal S_{\alpha\beta}$.
  To find a basis of weight exactly $\dim\mathcal S_{\alpha\beta}$, we go through the eight vectors in order of nondecreasing weight and keep a vector whenever it is linearly independent of the vectors
  already kept.
  Once all vectors of weight below $t$ have been processed, the kept vectors of weight below $t$ form a maximal linearly independent subset of $\{e:\,w(e)<t\}$ and therefore have
  cardinality $\mathrm{rk}\{e:\,w(e)<t\}$.
  The kept set is thus a basis for which Eq.~\eqref{eq:basisweight} is an equality at every level $t$, and its weight equals $\dim\mathcal S_{\alpha\beta}$.
  \end{proof}

  \textit{The fifty-six bases.---}
  It remains to enumerate the bases and their weights.
  We write $p_{1}=(0,0)$, $p_{2}=(1,0)$, $p_{3}=(0,1)$, $p_{4}=(1,1)$ for the four points of $\Ftwo^{2}$, so that $a_{j}=(1,p_{j},0)$ and $b_{j}=(0,p_{j},1)$.

  \begin{lemma}[The fifty-six bases]
  \label{lem:bases}
  The bases among the eight vectors are exactly the eight \emph{cross bases} $\{B_{i}\}\cup\{A_{j}:\,j\ne i\}$ and $\{A_{i}\}\cup\{B_{j}:\,j\ne i\}$, and the forty-eight \emph{drop-one
  bases} $\{A_{a},B_{a},X_{j},X_{k}\}$ with $X_{j}\in\{A_{j},B_{j}\}$ and $X_{k}\in\{A_{k},B_{k}\}$, where $(i,a)$ runs over the twelve pairs of distinct indices and
  $\{j,k\}=\{1,2,3,4\}\setminus\{i,a\}$.
  \end{lemma}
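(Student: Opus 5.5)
We have to classify all 4-subsets of the eight vectors $a_j=(1,p_j,0)$, $b_j=(0,p_j,1)$ in $\Ftwo^{4}$ that are linearly independent, and check that there are exactly $56=\binom84-14$ of them, matching the listed families. The plan is to characterize the \emph{dependent} four-subsets instead, since a set of four vectors in $\Ftwo^4$ is dependent iff some nonempty subset sums to zero. No two of the eight vectors coincide, so dependence comes from a three-term or four-term zero sum. The first and last coordinates separate the two families: a sum of $k$ vectors from $\{a_j\}$ and $l$ from $\{b_j\}$ has first coordinate $k \bmod 2$ and last coordinate $l \bmod 2$. A zero sum therefore needs $k,l$ even. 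With at most four vectors, the admissible patterns are $(k,l)\in\{(2,2),(4,0),(0,4)\}$; the pattern $(2,0)$ or $(0,2)$ would force $p_j=p_{j'}$ and hence coincident vectors.

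\emph{Case $(4,0)$ and $(0,4)$.} Here $\sum_j p_j=0$ in $\Ftwo^2$, so $\{A_1,\dots,A_4\}$ and $\{B_1,\dots,B_4\}$ are dependent. That gives 2 dependent sets.

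\emph{Case $(2,2)$.} The sum $a_j+a_{j'}+b_k+b_{k'}$ has middle part $p_j+p_{j'}+p_k+p_{k'}$. For distinct $j\ne j'$ and $k\ne k'$ this vanishes iff $\{k,k'\}=\{j,j'\}$ or $\{k,k'\}$ is the complementary pair. The reason is that the four points of $\Ftwo^2$ form a plane, so $p_j+p_{j'}=p_k+p_{k'}$ exactly for equal or complementary pairs. There are 6 choices of pair $\{j,j'\}$ and 2 choices of matching pair, giving 12 dependent sets.

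Since three-element zero sums are impossible, these $2+12=14$ four-subsets are all the dependent ones. Hence there are $70-14=56$ bases.

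It remains to identify the 56 survivors with the stated families.

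\emph{Type $(3,1)$ or $(1,3)$.} A four-subset with three $A$'s and one $B$ is never dependent, by the parity count above. The missing $A$-index $i$ together with the choice of $B$ gives 16 such sets, and 16 more of type $(1,3)$. Among these, the cross bases are those where the lone vector carries the omitted index. The remaining 24 have the lone vector at an index $a$ already present, i.e. they are of the form $\{A_a,B_a,X_j,X_k\}$ with both $X$'s from the same family.

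\emph{Type $(2,2)$.} There are 36 such sets, and 12 of them are dependent, leaving 24 bases. These are $\{A_j,A_{j'},B_k,B_{k'}\}$ where the pairs $\{j,j'\}$ and $\{k,k'\}$ share exactly one index $a$. Such a set is $\{A_a,B_a,A_{j},B_{k}\}$ with $j,k$ distinct and different from $a$, again drop-one form.

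\emph{Type $(4,0)$ or $(0,4)$.} These are dependent, as shown.

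Finally I will count the drop-one description: there are 12 pairs $(i,a)$ and 4 choices of $(X_j,X_k)$, giving 48 sets. Each set is listed once, since $a$ is the unique index appearing twice and $i$ is the unique index absent. This checks $8+48=56$ and finishes the identification.

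The only delicate step is the $(2,2)$ equivalence $p_j+p_{j'}=p_k+p_{k'}$. I will handle it by noting that each nonzero element of $\Ftwo^2$ is the sum of exactly two disjoint pairs of points.
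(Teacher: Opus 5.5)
Your proof is correct and follows essentially the paper's route: split the four-subsets by how many $A$- versus $B$-vectors they contain, show that the $(3,1)$ and $(1,3)$ sets are always bases and the $(4,0)$ and $(0,4)$ sets are dependent, and settle the $(2,2)$ case using the fact that each nonzero element of $\Ftwo^{2}$ is the sum of exactly two complementary pairs of points. The differences are cosmetic. You get independence of the odd-type sets, and rule out proper zero sums, from the parity of the first and last coordinates instead of the nonsingular $3\times3$ block. You also count the $14$ dependent sets and subtract from $70$, whereas the paper counts the $56$ bases directly.
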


  \begin{proof}
  Any three $A$-vectors $a_{p},a_{q},a_{r}$ are linearly independent, since the $3\times3$ block formed by their first three coordinates has the row $(1,1,1)$ above the columns
  $p_{p},p_{q},p_{r}$, and this block is nonsingular because three distinct points of $\Ftwo^{2}$ never lie on a line.
  Adjoining a $B$-vector adds a nonzero last coordinate, so every set of three $A$-vectors and one $B$-vector is a basis, and by symmetry so is every set of one $A$-vector and three
  $B$-vectors.
  These $16+16$ bases are cross bases when the index of the single vector lies outside the indices of the three, and drop-one bases with $X_{j}$ and $X_{k}$ on the same side otherwise.
  All four $A$-vectors are dependent, since $\sum_{j}a_{j}=0$, and likewise the four $B$-vectors.
  Finally, consider a set $\{A_{p},A_{q},B_{r},B_{s}\}$ with $p\ne q$ and $r\ne s$.
  Any three of its vectors are independent, being contained in one of the bases just found, so the set is dependent if and only if all four vectors sum to zero.
  Their sum is $(0,\,p_{p}+p_{q}+p_{r}+p_{s},\,0)$, which vanishes if and only if $p_{p}+p_{q}=p_{r}+p_{s}$.
  Each nonzero element of $\Ftwo^{2}$ is the sum of exactly two pairs of distinct points, and these two pairs are complementary, so the set is a basis if and only if $\{p,q\}$ and
  $\{r,s\}$ share exactly one index, say $a$, in which case it is the drop-one basis $\{A_{a},B_{a},A_{j},B_{k}\}$ with $\{j,k\}$ the two remaining indices.
  There are $24$ such sets, which completes the count $16+16+24=56$.
  \end{proof}

  \begin{proof}[Proof of Theorem~\ref{thm:mincut}]
  By Lemma~\ref{lem:matroid}, $\dim\mathcal S_{\alpha\beta}$ is the minimum weight of a basis, and by Lemma~\ref{lem:bases} every basis is a cross basis or a drop-one basis.
  The cross bases $\{B_{i}\}\cup\{A_{j}:\,j\ne i\}$ and $\{A_{i}\}\cup\{B_{j}:\,j\ne i\}$ have the weights $c_{i}$ and $c_{i}'$ of Eq.~\eqref{eq:exponents}.
  At fixed $(i,a)$ the four drop-one bases have the weights $\alpha_{a}+\beta_{a}+\theta_{j}+\theta_{k}$ with $\theta_{j}\in\{\alpha_{j},\beta_{j}\}$ and
  $\theta_{k}\in\{\alpha_{k},\beta_{k}\}$, the smallest of which is $c_{i,a}$.
  The minimum over all bases is therefore the minimum of the twenty exponents.
  \end{proof}

  \begin{proof}[Proof of Theorem~\ref{thm:rank}]
  By Lemmas~\ref{lem:dropone} and~\ref{lem:crosscut}, $\Tr(A_{W}B_{W})$ is at most the smallest of the twenty bounds, which at dyadic ranks is $2^{c}/d^{2}$ with $c$ the minimum of the
  twenty exponents.
  By Theorem~\ref{thm:mincut} this equals $2^{\dim\mathcal S_{\alpha\beta}}/d^{2}=|\mathcal S_{\alpha\beta}|/d^{2}$, which by Eq.~\eqref{eq:Scount} is $\Tr(A_{W}^{\mathrm{CB}}B_{W}^{\mathrm{CB}})$.
  \end{proof}
  
The proof uses dyadicity in exactly one place, namely the fact that $L_{2^{\alpha}}$ is the linear subspace of $\Ftwo^{n}$ cut out by the vanishing of the bits above $\alpha$.
This makes $\mathcal S_{\alpha\beta}$ a linear subspace, so that $|\mathcal S_{\alpha\beta}|$ is a power of two, and it decouples the bit positions in Lemma~\ref{lem:matroid}.
At non-dyadic ranks Eq.~\eqref{eq:Scount} still holds, but $|\mathcal S_{\alpha\beta}|$ need not be a power of two and the tightest of the twenty bounds is no longer attained.
At $d=4$, for the ranks $\boldsymbol r=(1,1,2,2)$ and $\boldsymbol s=(3,3,3,3)$, one finds $|\mathcal S_{\alpha\beta}|=5$, so that the computational-basis value is $5/16$, while the smallest of the twenty bounds is $6/16$.

\section{Schur--Weyl decomposition of the stabilizer purity}
\label{app:SW}

This appendix develops the Schur--Weyl analysis summarized in Sec.~\ref{sec:rank6}.
We first prove the three-sector decomposition of Eq.~\eqref{eq:split} and record the sector ranks and the sector values at a diagonal matrix (Appendix~\ref{app:sectors}).
We then bound the symmetric sector (Appendix~\ref{app:symSOS}) and the mixed sector (Appendix~\ref{app:mixed}) by antisymmetric data, and assemble the reduced form of Proposition~\ref{prop:master} (Appendix~\ref{app:reduced}).

\subsection{The three sectors}
\label{app:sectors}

The compressed representation~\eqref{eq:Gcompressed} makes an exact representation-theoretic decomposition available. The symmetric group $S_{4}$ acts on $(\mathbb{C}^{d})^{\otimes4}$ by permuting the four tensor factors: for $\pi\in S_{4}$, let $r(\pi)$ denote the unitary carrying $\ket{\psi_{1}}\otimes\cdots\otimes\ket{\psi_{4}}$ to $\ket{\psi_{\pi^{-1}(1)}}\otimes\cdots\otimes\ket{\psi_{\pi^{-1}(4)}}$. Under a finite-group action the space splits into orthogonal pieces, one for each irreducible representation; for $S_{4}$ these are labeled by the partitions $\lambda$ of $4$. The orthogonal projector onto the piece transforming as $\lambda$ is the character average
\begin{equation}
 \Pi_{\lambda}^{S_{4}}
 =\frac{\dim[\lambda]}{4!}\sum_{\pi\in S_{4}}\chi_{\lambda}(\pi)\,r(\pi),
 \label{eq:isotypic}
\end{equation}
with $\chi_{\lambda}$ the character of the irreducible representation $[\lambda]$. Being a combination of the $r(\pi)$, it commutes with every operator that commutes with all permutations---in particular with $\PW$ and with $N^{\otimes4}$.

Each sector is most usefully described by single-copy data. For a partition $\lambda$ let $\mathbb S_{\lambda}$ denote the associated Schur functor: $\mathbb S_{(4)}=\mathrm{Sym}^{4}$ and $\mathbb S_{(1^{4})}=\Lambda^{4}$, while $\mathbb S_{(2,2)}$ is the mixed functor associated with the partition $(2,2)$~\cite{FultonHarris}. For a matrix $N$, $\mathbb S_{\lambda}(N)$ is the operator that $N^{\otimes4}$ induces on a single copy of the module $\mathbb S_{\lambda}(\mathbb{C}^{d})\subset(\mathbb{C}^{d})^{\otimes4}$; for $\lambda=(4)$ and $(1^{4})$ this is the restriction of $N^{\otimes4}$ to the totally symmetric and totally antisymmetric subspaces, respectively. The Pauli average of the functor,
\begin{equation}
 \Pi_{\lambda}^{W}
 :=\frac1{d^{2}}\sum_{P\in\mathcal P_{n}}
 \mathbb S_{\lambda}(P),
 \label{eq:multprojector}
\end{equation}
is an orthogonal projector, the single-copy image of $\PW$ on the $\lambda$ sector, and the sector functionals are the compressions
\begin{equation}
 Q_{\lambda}(N)\equiv
 \bigl\|\Pi_{\lambda}^{W}\,\mathbb S_{\lambda}(N)\,\Pi_{\lambda}^{W}\bigr\|_{\mathrm{HS}}^{2}.
 \label{eq:singlecopysector}
\end{equation}

With $g_{2}:=(d-1)(d-2)/6$ the number of two-dimensional subspaces of $\Ftwo^{n}$, the ranks of the projectors $\Pi_{\lambda}^{W}$ are (see the end of this subsection)
\begin{align}
 \mathrm{rank}\,\Pi_{(4)}^{W}&=d+g_{2}=\frac{(d+1)(d+2)}6,\notag\\
 \mathrm{rank}\,\Pi_{(2,2)}^{W}&=(d-1)+2g_{2}=\frac{d^{2}-1}{3},\notag\\
 \mathrm{rank}\,\Pi_{(1^{4})}^{W}&=g_{2}.
 \label{eq:sectordims}
\end{align}
In the decomposition below the three sectors appear as mutually orthogonal subspaces of $W$, the $\lambda$ sector consisting of $\dim[\lambda]$ copies of a block of dimension $\mathrm{rank}\,\Pi_{\lambda}^{W}$; consistently, the dimensions saturate the stabilizer subspace:
\begin{equation}
 \mathrm{rank}\,\Pi_{(4)}^{W}+2\,\mathrm{rank}\,\Pi_{(2,2)}^{W}
 +\mathrm{rank}\,\Pi_{(1^{4})}^{W}=d^{2}=\dim W.
 \label{eq:dimcheck}
\end{equation}

\begin{proposition}[Exact three-sector decomposition]
\label{prop:SWsplit}
For every $d=2^{n}$, only three of the five irreducible representations of $S_{4}$ occur in $W=\mathrm{ran}\,\PW$, namely
\begin{equation}
 \lambda\in\{(4),(2,2),(1^{4})\}.
 \label{eq:threeisotypes}
\end{equation}
Consequently, Eq.~\eqref{eq:split} holds for every $V\in U(d)$, with $N=VMV^{\dagger}$.
\end{proposition}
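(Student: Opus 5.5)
The plan is to show first that the Klein four-group $K=\{e,(12)(34),(13)(24),(14)(23)\}\subset S_{4}$ acts trivially on $W$. Then only irreducible representations of $S_{4}$ with $K$ in their kernel can occur in $W$, and the block decomposition of $\PW N^{\otimes4}\PW$ gives Eq.~\eqref{eq:split}.

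For the first step, take the generator $(12)(34)$. It sends the basis term $\ket{x,x{+}u,x{+}v,x{+}u{+}v}$ of Eq.~\eqref{eq:Phibasis} to $\ket{x{+}u,x,x{+}u{+}v,x{+}v}$, which is the term of the same sum with summation variable $x'=x{+}u$. The map $x\mapsto x{+}u$ is a bijection of $\Ftwo^{n}$, so $r((12)(34))\ket{\Phi_{u,v}}=\ket{\Phi_{u,v}}$. The same argument applies to $(13)(24)$ with $x'=x{+}v$ and to $(14)(23)$ with $x'=x{+}u{+}v$. By Lemma~\ref{lem:Phi} these vectors span $W$, so $K$ fixes $W$ pointwise.

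Next I classify the irreducible representations. $K$ is normal in $S_{4}$, and its fixed subspace in any representation is $S_{4}$-invariant. Hence an irreducible constituent of $W$ must have $K$ in its kernel, so it factors through $S_{4}/K\cong S_{3}$. I then read off the characters on the class of double transpositions: $\chi(e)=\chi((12)(34))$ holds exactly for $[4]$ (value $1$), $[1^{4}]$ (value $1$) and $[2,2]$ (value $2$). For $[3,1]$ and $[2,1,1]$ the value is $-1\neq3$. This proves Eq.~\eqref{eq:threeisotypes}.

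For the decomposition, $\PW$ and $N^{\otimes4}$ commute with every $r(\pi)$. Therefore $X\equiv\PW N^{\otimes4}\PW$ commutes with the isotypic projectors of Eq.~\eqref{eq:isotypic}. It also commutes with $\PW$, and $\PW$ is a sum of the three isotypic pieces $\PW\Pi_{\lambda}^{S_{4}}$. So $\|X\|_{\mathrm{HS}}^{2}=\sum_{\lambda}\|X\Pi_{\lambda}^{S_{4}}\|_{\mathrm{HS}}^{2}$.

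On the $\lambda$-isotypic component $\mathbb S_{\lambda}(\mathbb C^{d})\otimes[\lambda]$, Schur--Weyl duality puts $N^{\otimes4}$ and $\PW=d^{-2}\sum_{P}P^{\otimes4}$ in the form $\mathbb S_{\lambda}(\cdot)\otimes\mathbb 1_{[\lambda]}$. In particular $\PW$ acts as $\Pi^{W}_{\lambda}\otimes\mathbb 1$ by Eq.~\eqref{eq:multprojector}. The restricted block is then $\Pi_{\lambda}^{W}\mathbb S_{\lambda}(N)\Pi_{\lambda}^{W}\otimes\mathbb 1_{[\lambda]}$, whose squared HS norm is $\dim[\lambda]\,Q_{\lambda}(N)$. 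Multiplying by $d^{2}$ as in Eq.~\eqref{eq:Gcompressed}, and using $\dim[4]=\dim[1^{4}]=1$ and $\dim[2,2]=2$, gives Eq.~\eqref{eq:split}.

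I expect the main obstacle to be justifying that $\PW$ factorizes as $\Pi^{W}_{\lambda}\otimes\mathbb 1$ on each isotypic block. This needs $P^{\otimes4}$ to lie in the image of $GL(d)$, which follows because $P$ is unitary and each $\mathbb S_{\lambda}$ is a functor. I also need the mild point that $\Pi^{W}_{\lambda}$ is a projector, which is inherited from $\PW$ being one.

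The rank formulas Eq.~\eqref{eq:sectordims} are not needed for Eq.~\eqref{eq:split}. If wanted, they could be obtained by taking traces of $\PW\Pi_{\lambda}^{S_{4}}$ and counting pairs $(u,v)$ according to the dimension of their span. That counting is the only place where $g_{2}$ would enter.
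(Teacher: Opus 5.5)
Your proposal is correct and follows the paper's proof essentially step for step. The Klein subgroup fixes each $\ket{\Phi_{u,v}}$ by reparametrizing $x$, so only representations inflated from $S_4/V_4\cong S_3$ can occur; you pick these out by characters on the double-transposition class rather than by naming the $S_3$ irreducibles, which is only a cosmetic difference. The Schur--Weyl factorizations $N^{\otimes4}=\mathbb S_\lambda(N)\otimes\mathbb 1$ and $\PW=\Pi^W_\lambda\otimes\mathbb 1$ on each isotypic block then give $\dim[\lambda]\,Q_\lambda(N)$ per sector, and hence Eq.~\eqref{eq:split}.
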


\begin{proof}
Both $\PW$ and $N^{\otimes4}$ commute with $r(S_{4})$, so the compressed operator in Eq.~\eqref{eq:Gcompressed} is block diagonal over the $S_{4}$ sectors of $W$: with $K_{\lambda}:=\PW\Pi_{\lambda}^{S_{4}}=\Pi_{\lambda}^{S_{4}}\PW$, orthogonality of the blocks gives $\Gm(V)=d^{2}\sum_{\lambda}\|K_{\lambda}N^{\otimes4}K_{\lambda}\|_{\mathrm{HS}}^{2}$. In the stabilizer basis of Eq.~\eqref{eq:Phibasis}, each double transposition in the Klein subgroup
$V_{4}=\{e,(12)(34),(13)(24),(14)(23)\}$ fixes every $\ket{\Phi_{u,v}}$, since it only reparametrizes the summation variable $x$. Hence every permutation acts on $W$ only through its image in the quotient
$S_{4}/V_{4}\simeq S_{3}$, so the only $S_{4}$ representations that can occur in $W$ are those in which $V_{4}$ acts trivially: the trivial, the two-dimensional, and the sign representation of $S_{3}$, which, read as representations of $S_{4}$, are precisely $[4]$, $[2,2]$, and $[1^{4}]$. This proves Eq.~\eqref{eq:threeisotypes}. Finally, on the $\lambda$ sector Schur--Weyl duality factorizes $N^{\otimes4}=\mathbb S_{\lambda}(N)\otimes I_{[\lambda]}$ and the Pauli average factorizes $K_{\lambda}=\Pi_{\lambda}^{W}\otimes I_{[\lambda]}$, whence $\|K_{\lambda}N^{\otimes4}K_{\lambda}\|_{\mathrm{HS}}^{2}=\dim[\lambda]\,Q_{\lambda}(N)$; since $\dim[(4)]=\dim[(1^{4})]=1$ and $\dim[(2,2)]=2$, this gives Eq.~\eqref{eq:split}.

\end{proof}

Two further facts follow directly from the action of the Pauli group on the labels $(u,v)$ established in the proof of Lemma~\ref{lem:Phi} in Appendix~\ref{app:rank}.
The first is the set of sector ranks in Eq.~\eqref{eq:sectordims}.
The $S_{3}$ action on the labels $(u,v)$, described in the proof of Proposition~\ref{prop:SWsplit}, has one singleton orbit $(0,0)$, one three-point orbit $\{(a,0),(0,a),(a,a)\}$ for each of the $d-1$ directions $a\ne0$, and one six-point regular orbit for each of the $g_{2}$ two-dimensional subspaces $L=\langle a,b\rangle$.
The corresponding permutation representations of $S_{3}$ decompose as $\mathbf1$, $\mathbf1\oplus\mathrm{std}$, and $\mathbf1\oplus\mathrm{sgn}\oplus2\,\mathrm{std}$, respectively.  Collecting the trivial, standard, and sign pieces, which build the $(4)$, $(2,2)$, and $(1^{4})$ sectors, gives $\mathrm{rank}\,\Pi_{(4)}^{W}=1+(d-1)+g_{2}$, $\mathrm{rank}\,\Pi_{(2,2)}^{W}=(d-1)+2g_{2}$, and $\mathrm{rank}\,\Pi_{(1^{4})}^{W}=g_{2}$.

The second consequence is the value of each sector of Proposition~\ref{prop:SWsplit} for a diagonal matrix $M=\mathrm{diag}(\mu_{x})$, that is, for $V=\mathbb 1$ in Eq.~\eqref{eq:Gcompressed}.
Define
  \begin{equation}
   w_{0}:=\frac1d\sum_{x}\mu_{x}^{4},\qquad
   w_{a}:=\frac1d\sum_{x}\mu_{x}^{2}\mu_{x+a}^{2}\quad(a\ne0),
   \label{eq:waweights}
  \end{equation}
and, for $L=\langle a,b\rangle$ with $\dim L=2$,
  \begin{equation}
   w_{L}:=\frac1d\sum_{x}
   \mu_{x}\mu_{x+a}\mu_{x+b}\mu_{x+a+b},
   \label{eq:wLweight}
  \end{equation}
which is independent of the ordered basis $(a,b)$ of $L$.
Then
  \begin{align}
   \Qsym(M)&=w_{0}^{2}+\sum_{a\ne0}w_{a}^{2}
                +\sum_{\dim L=2}w_{L}^{2},\notag\\
   \Qtt(M)&=\sum_{a\ne0}w_{a}^{2}
                +2\sum_{\dim L=2}w_{L}^{2},\notag\\
   \Qa(M)&=\sum_{\dim L=2}w_{L}^{2},
   \label{eq:diagsectors}
  \end{align}
  and hence
  $\Gm(\mathbb 1)=d^{2}\bigl[w_{0}^{2}+3\sum_{a\ne0}w_{a}^{2}
  +6\sum_{\dim L=2}w_{L}^{2}\bigr]$, in direct agreement with the original Pauli fourth moment.
In terms of the plane sums $s_{L}$ of Appendix~\ref{app:shell}, $w_{L}=4s_{L}/d$.

\subsection{The symmetric sector}
\label{app:symSOS}
\label{sec:symmetric}

The goal of this subsection is to bound the symmetric functional $\Qsym$ in terms of the antisymmetric functional $\Qa$ and the antisymmetric stabilizer overlap $\cF$, at every $d=2^{n}$ and without finite enumeration; along the unitary orbit this removes the symmetric sector as an independent obstruction. The bound proceeds in two steps, quantified by two scalar functionals. The first is the Pauli fourth moment
\begin{equation}
 \Fpm(C):=\sum_{P\in\mathcal P_n}[\Tr(PC)]^{4};
 \label{eq:Fpmdef}
\end{equation}
a Pauli--Fourier collapse (Theorem~\ref{thm:symcollapse} below) shows that $\Qsym(N)-\Qa(N)\le d^{-3}\Fpm(N^{2})$, with equality at every computational diagonal. The second step dominates the fourth moment itself by $\cF$ and spectral invariants, through the defect
\begin{align}
 \Delta_d(C):={}&3d\,[\Tr(C^{2})]^{2}-2d\,\Tr(C^{4})\notag\\
 &+6d^{2}\cF(C)-\Fpm(C),
 \label{eq:Deltadef}
\end{align}
which Theorem~\ref{thm:symSOS} exhibits as a sum of squares. The defect vanishes at the reference point: for every computational diagonal $M$, Walsh orthogonality and the classification of zero-sum quadruples as planes give
\begin{equation}
 \Fpm(M)=3d[\Tr(M^{2})]^{2}-2d\Tr(M^{4})+6d^{2}\cF(M),
 \label{eq:diagFpm}
\end{equation}
so that $\Delta_d(M)=0$.

Index the Hermitian Paulis by the symplectic vector space
$\mathsf V=\Ftwo^{2n}$, writing $P_aP_b=(-1)^{[a,b]}P_bP_a$ and
$c_a=\Tr(P_aC)$.  Let
\begin{equation}
 \Afr(C):=\sum_{\{a,b\}:\,[a,b]=1}c_a^2c_b^2
 \label{eq:Afrdef}
\end{equation}
be the sum over unordered anticommuting pairs.  For $s\ne0$ let
$E_s=\{\{a,a+s\}:[a,s]=1\}$ and define the edge vector
$z_s(\{a,a+s\})=c_ac_{a+s}$.

\begin{theorem}[Pauli sum of squares]
\label{thm:symSOS}
For every $d=2^n$, $n\ge2$, and every Hermitian $C$, there are explicit
orthogonal projections $\mathsf Q_s$ on $\mathbb R^{E_s}$, of rank
$d(d-2)/8$, such that
\begin{equation}
 \Delta_d(C)=\frac4d\,\Afr(C)
 +\frac{16}{d}\sum_{s\ne0}\|\mathsf Q_s z_s(C)\|^2\ge0 .
 \label{eq:DeltaSOS}
\end{equation}
Equality holds if and only if the nonzero Pauli support of $C$ is pairwise
commuting, equivalently if $C$ is diagonal in a stabilizer basis.
\end{theorem}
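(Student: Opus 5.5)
The plan is to verify the identity \eqref{eq:DeltaSOS} by writing both sides as quartic forms in the Pauli coefficients $c_a=\Tr(P_aC)$ and comparing them coefficient by coefficient. Nonnegativity and the equality case then follow from the form of the right-hand side.

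\emph{Step 1: expand the spectral invariants.} Using $C=d^{-1}\sum_a c_aP_a$, we have $\Tr(C^2)=d^{-1}\sum_a c_a^2$. The trace $\Tr(C^4)$ becomes $d^{-3}$ times a sum over triples $(a,b,e)$ of $c_ac_bc_ec_{a+b+e}$, weighted by the phase of $\Tr(P_aP_bP_eP_{a+b+e})/d$. That phase is a sign determined by the symplectic products $[a,b]$, $[a,e]$, $[b,e]$.

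\emph{Step 2: expand the antisymmetric overlap.} Lemma~\ref{lem:Phi} and Eq.~\eqref{eq:flatprojector} give
\[
\cF(C)=d^{-2}\sum_P\Tr[\Lambda^4(P)\Lambda^4(C)]=d^{-2}\sum_P\Tr\Lambda^4(PC).
\]
I will write $\Tr\Lambda^4(X)=e_4$ through the Newton identity $24e_4=p_1^4-6p_1^2p_2+3p_2^2+8p_1p_3-6p_4$, with $p_k=\Tr X^k$ and $X=PC$. The $p_1^4$ term reproduces $\tfrac14\Fpm(C)$. In every remaining term I substitute the Pauli expansion of $C$. The sum over $P$ then acts as a character sum over $\mathsf V$, which forces the labels in each quartic monomial to close up, i.e.\ their sum vanishes. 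After this step every piece of $\Delta_d(C)$ is a combination of three kinds of monomials:
\begin{itemize}
\item diagonal monomials $c_a^4$;
\item pair monomials $c_a^2c_b^2$, which split into commuting pairs ($[a,b]=0$) and anticommuting pairs ($[a,b]=1$);
\item genuine four-label monomials $c_ac_{a+s}c_bc_{b+s}$, each carrying a sign $(-1)^{[a,b]}$-type factor.
\end{itemize}
The specific combination $3d[\Tr C^2]^2-2d\Tr C^4+6d^2\cF-\Fpm$ should be tuned so that the diagonal terms and the commuting-pair terms cancel exactly. This is consistent with \eqref{eq:diagFpm}, where $\Delta_d(M)=0$ for diagonal $M$, whose Pauli support is $Z$-type and hence pairwise commuting. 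What remains should be $\frac4d\Afr(C)$ plus a form in the four-label monomials, together with further anticommuting-pair contributions.

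\emph{Step 3: organize the remainder by difference vectors.} I group the residual terms by $s\ne0$. A four-label monomial $c_ac_{a+s}c_bc_{b+s}$ with $[a,s]=[b,s]=1$ is $z_s(e)z_s(e')$ for the two edges $e=\{a,a+s\}$ and $e'=\{b,b+s\}$ of $E_s$. Diagonal terms $e=e'$ of such a product give $c_a^2c_{a+s}^2$, which is an anticommuting pair. So the residual is $\frac{16}{d}\sum_s z_s^{\mathsf T}K_sz_s$, where $K_s$ is a signed matrix on $\mathbb R^{E_s}$ with entries built from $(-1)^{[a,b]}$ and constant diagonal.

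The main obstacle is to show that, after absorbing the correct share of $\Afr$ into the diagonal, $K_s$ equals an orthogonal projection $\mathsf Q_s$. For this I plan to:
\begin{itemize}
\item identify the edges of $E_s$ with the quotient $\{a:[a,s]=1\}/\langle s\rangle$, which is an affine copy of the symplectic space $s^\perp/\langle s\rangle\cong\Ftwo^{2n-2}$ and has $d^2/4$ elements;
\item recognize the sign kernel as a symplectic Fourier (Walsh) kernel on this space, restricted to an appropriate subset;
\item verify $K_s^2=K_s$ by character orthogonality, in the same way $\PW$ was shown to be a projector in Lemma~\ref{lem:Phi};
\item read off the rank from the trace, namely the diagonal entry times $d^2/4$, checking that it equals $d(d-2)/8$.
\end{itemize}
If the algebra does not close directly, I would first test the case $n=2$ numerically to pin down the exact split between $\Afr$ and the $K_s$ diagonal.

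\emph{Step 4: equality case.} Both terms of \eqref{eq:DeltaSOS} are nonnegative, so $\Delta_d(C)\ge0$. Equality forces $\Afr(C)=0$, meaning no two labels in the support of $C$ anticommute. Conversely, if the support of $C$ is pairwise commuting, then every edge of every $E_s$ has a zero endpoint, so each $z_s=0$ and $\Afr=0$, and $\Delta_d(C)=0$. A Hermitian operator has pairwise commuting Pauli support exactly when it lies in the span of an abelian Pauli group. Extending that group to a maximal stabilizer group shows this is equivalent to $C$ being diagonal in a stabilizer basis.
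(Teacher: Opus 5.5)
Your outline follows the paper's route: quartic expansion of $\Delta_d$ in the $c_a$, regrouping into the edge blocks $E_s$, reduction of the signed block matrix to a symplectic Walsh kernel, character orthogonality plus a trace for the projector and its rank, and the same equality argument. However, the steps that carry the content are only asserted, and Step~3 rests on a case you have not examined.

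\emph{Missing vanishing.} Every zero-sum set of four distinct labels is a plane $F=a+\mathrm{span}\{r,t\}$. The monomial $\prod_{u\in F}c_u$ equals $z_s(e)z_s(e')$ with $e,e'\in E_s$ only when the anticommutation graph of $F$ is a $4$-cycle ($[r,t]=0$, $F$ not isotropic); then exactly two of the three parallel classes are anticommuting. The other two cases fail:
\begin{itemize}
\item if $F$ is isotropic, no parallel class is anticommuting;
\item if $[r,t]=1$ (the graph is a triangle plus an isolated vertex), each parallel class contains one commuting and one anticommuting edge.
\end{itemize}
Such monomials are not in the span of the products $z_s(e)z_s(e')$. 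The sum of squares can therefore exist only if their coefficients in $\Delta_d$ vanish, and you have to prove this.

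For $c_a^4$, commuting pairs and isotropic planes, your remark about Eq.~\eqref{eq:diagFpm} becomes a proof once you add two facts. First, $\Delta_d$ is Clifford invariant, since $\Lambda^4(U)$ commutes with $\PiF$ for Clifford $U$. Second, a symplectic map carries any isotropic label set into the $Z$-type Lagrangian, on which $\Delta_d$ vanishes identically. The $[r,t]=1$ planes need a separate argument. The paper's is that reversing $P_{u_1}P_{u_2}P_{u_3}P_{u_4}$ costs $(-1)^{3}$, so every ordered-product phase is purely imaginary, whereas $\Delta_d$ is a real polynomial in the real variables $c_u$.

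\emph{Missing constants.} The projection property leaves no slack here. Write $K_s=\lambda A_s+\kappa\mathbb 1$, with $A_s$ the signed adjacency matrix (entries $\varepsilon_F=\pm1$, zero diagonal). Its eigenvalues are $1\pm d/2$, so $K_s$ is an orthogonal projection only for $\lambda=\pm1/d$. The stated rank $d(d-2)/8$ then selects $\lambda=1/d$ and $\kappa=\tfrac12-\tfrac1d$. Each $4$-cycle plane is counted four times in $\sum_s z_s^{\mathsf T}K_sz_s$, so it must enter $\Delta_d$ with coefficient exactly $64\varepsilon_F/d^{2}$. The leftover $\tfrac4d\Afr$ then requires the anticommuting-pair coefficient to be exactly $4(3d-4)/d^{2}$.

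These $d$-dependent numbers are the content of the theorem. Neither a ``tuning'' heuristic nor a numerical check at $n=2$ establishes them for all $n$. The paper derives them from the Newton expansion of $e_4(P_pC)$ averaged over $p$, together with a case analysis of the three possible anticommutation graphs of a plane.

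\emph{Walsh-kernel step.} This step does work, but it needs an explicit sign factorization. Take $[a_0,s]=1$, let $u,v$ lie in a complement $U_s$ of $\langle s\rangle$ in $s^{\perp}$, and define $\rho_s$ by $P_{a_0+u}P_{a_0+u+s}=i\rho_s(u)P_s$. Then $\varepsilon_F=-\rho_s(u)\rho_s(v)(-1)^{[a_0+u,\,a_0+v]}$. Switching by $\rho_s(u)(-1)^{[a_0,u]}$ turns $A_s$ into $\mathbb 1-W$, with $W_{uv}=(-1)^{[u,v]}$ on all of $U_s$ (not on a subset). Finally $W^{2}=\tfrac{d^{2}}{4}\mathbb 1$ gives $A_s^{2}=2A_s+(\tfrac{d^{2}}{4}-1)\mathbb 1$.
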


The proof rests on an exact expansion of $\Delta_{d}$ over anticommuting pairs and noncommuting isotropic affine planes, regrouped into signed edge blocks whose universal spectrum follows from character orthogonality; it is given below.

\begin{theorem}[Symmetric sector collapse]
\label{thm:symcollapse}
For \emph{every} $d=2^{n}$ and every Hermitian $N$,
\begin{align}
 \Qsym(N)-\Qa(N)&=\frac1{d^4}\sum_{P,Q\in\mathcal P_n}[\Tr(PNQN)]^2\notag\\
 &\quad\times\Tr[(PNQN)^2]\ \le\ \frac1{d^3}\Fpm(N^2),
 \label{eq:symbound}
\end{align}
with equality for every computational diagonal $N$.
\end{theorem}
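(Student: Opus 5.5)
The plan is to write the difference $\Qsym-\Qa$ as a single trace against a replica transposition, expand both stabilizer projectors in Paulis, and then bound the resulting Pauli sum by two applications of Pauli completeness (the identity $\sum_R\Tr(RA)\Tr(RB)=d\,\Tr(AB)$) and one character-orthogonality sum.

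\emph{Step 1: reduction to a transposition.} I would use the one-dimensional sectors in Eq.~\eqref{eq:isotypic}. Since $\dim[(4)]=\dim[(1^4)]=1$, the proof of Proposition~\ref{prop:SWsplit} gives
$$Q_{(4)}-Q_{(1^4)}=\Tr\bigl[\PW(\Pi^{S_4}_{(4)}-\Pi^{S_4}_{(1^4)})N^{\otimes4}\PW N^{\otimes4}\bigr].$$
The difference of the character averages is $\frac1{12}\sum_{\pi\ \mathrm{odd}}r(\pi)$. On $W$, the Klein group $V_4$ acts trivially, so the twelve odd permutations act through the three transpositions of $S_4/V_4\simeq S_3$, four of them on each. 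The difference therefore equals $\frac13\sum_{\tau}r(\tau)$ on $W$. All three terms give the same trace, because conjugating by a permutation fixes $\PW$ and $N^{\otimes4}$. Hence $\Qsym-\Qa=\Tr[\PW r((12))N^{\otimes4}\PW N^{\otimes4}]$.

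\emph{Step 2: the identity.} I would insert $\PW=d^{-2}\sum_R R^{\otimes4}$ twice and collect replicas. The transposition of replicas $1,2$ turns their two factors into $\Tr[(RNSN)^2]$, while replicas $3,4$ each give $\Tr(RNSN)$. This yields the identity in Eq.~\eqref{eq:symbound} with the prefactor $d^{-4}$. For the reality of the terms, $X\equiv RNSN$ has real trace because $\Tr X^\dagger=\Tr(NSNR)=\Tr X$, so $[\Tr X]^2\ge0$. The left-hand side is real, so I may bound $\mathrm{Re}\,\Tr(X^2)\le\Tr(XX^\dagger)=\Tr(N^2SN^2S)$.

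\emph{Step 3: the inequality.} Writing $C=N^2$, the bound from Step 2 gives
$$\Qsym-\Qa\le d^{-4}\sum_S\Tr(CSCS)\sum_R[\Tr(RNSN)]^2.$$
Pauli completeness gives $\sum_R[\Tr(RNSN)]^2=d\,\Tr(NSNNSN)=d\,\Tr(CSCS)$, so the bound becomes $d^{-3}\sum_S[\Tr(CSCS)]^2$. Expanding $C=d^{-1}\sum_P c_PP$ gives $\Tr(CSCS)=d^{-1}\sum_P\epsilon_S(P)c_P^2$, where $\epsilon_S(P)=\pm1$ records whether $S$ commutes or anticommutes with $P$. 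The signs $\epsilon_S(\cdot)$ are characters of $\Ftwo^{2n}$, so $\sum_S\epsilon_S(P)\epsilon_S(P')=d^2\delta_{PP'}$. This gives $\sum_S[\Tr(CSCS)]^2=\sum_Pc_P^4=\Fpm(C)$, which is the claimed bound $d^{-3}\Fpm(N^2)$.

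\emph{Step 4: equality at a computational diagonal.} Equality needs $\Tr(X^2)=\Tr(XX^\dagger)$ in every term with $\Tr X\ne0$. For diagonal $N$ such terms require $R$ and $S$ to have the same $X$-part. Then $X$ is, up to a phase, a diagonal matrix times a $Z$-string, so $X$ is normal with real trace. I expect this to be the main obstacle: one must check that the phase makes $X$ Hermitian, or at least that $X^2=XX^\dagger$, in every nonvanishing term. I would do this by writing $R=i^{\cdot}X^aZ^b$ and $S=i^{\cdot}X^aZ^{b'}$ and tracking the phases explicitly. As a cross-check, the closed forms of Eq.~\eqref{eq:diagsectors} give $\Qsym(M)-\Qa(M)=w_0^2+\sum_a w_a^2$, and this should agree with $d^{-3}\Fpm(M^2)$ via Walsh orthogonality.
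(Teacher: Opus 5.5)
Your proposal is correct. From Step~2 on it matches the paper's argument up to phrasing. Your termwise Cauchy--Schwarz bound $\Tr(RARA)\le\Tr(A^{2})$, with $A=NSN$, is exactly the paper's estimate $\chi(R,P)\le1$ in $\Tr(RARA)=d^{-1}\sum_{P}\chi(R,P)\,[\Tr(PA)]^{2}$. It is followed by the same Pauli Parseval step and the same character-orthogonality identity $\sum_{S}[\Tr(CSCS)]^{2}=\Fpm(C)$.

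The genuine difference is Step~1. The paper expands both sectors through characters and gets $h_{4}-e_{4}=\frac12(t_{1}^{2}t_{2}+t_{4})$. It then removes the four-cycle term by a separate four-word Pauli expansion, $\sum_{P}\Tr[(AP)^{4}]=\sum_{P}[\Tr(AP)]^{2}\Tr(APAP)$. You instead reuse the fact from the proof of Proposition~\ref{prop:SWsplit} that the Klein group fixes every $\ket{\Phi_{u,v}}$. Hence $\PW r(\tau v)=\PW r(\tau)$ for $v\in V_{4}$, and each four-cycle is identified with the transposition in its coset before any Pauli expansion. Evaluated on $\Tr[\PW r(\pi)A^{\otimes4}]$, your coset argument is a structural, computation-free proof of that same identity, valid for each fixed $S$. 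The paper's route, in exchange, gives the signed form $d^{-1}\sum_{P,R}\chi(P,R)a_{P}^{2}a_{R}^{2}$, from which both the bound and the equality criterion are read off directly.

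The point you flag as the main obstacle in Step~4 closes with what you already have. For $R\propto X^{a}Z^{b}$ and $S\propto X^{a}Z^{b'}$ one has $RMSM=i^{k}\,Z^{b}(X^{a}MX^{a})Z^{b'}M$ for some integer $k$. This is a power of $i$ times a real diagonal matrix. Since $\Tr X$ is real by Step~2, a nonzero trace forces $i^{k}=\pm1$, so $X$ is Hermitian and $\Tr(X^{2})=\Tr(XX^{\dagger})$. This is equivalent to the paper's criterion that $MQM$ has pairwise commuting Pauli support.
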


The identity and the bound of Eq.~\eqref{eq:symbound} follow from a Pauli--Fourier trade, proved at the end of this subsection.

Theorems~\ref{thm:symSOS} and~\ref{thm:symcollapse} combine especially cleanly on a unitary orbit. Subtracting the diagonal equality case of Eq.~\eqref{eq:symbound} at $M$ from the bound at $N=VMV^{\dagger}$ shows that any orbit increase of $\Qsym$ is paid for by $\Qa$ and the fourth moment; and since $\Delta_{d}\ge0$ with equality at diagonals, the alignment $\cF(N^{2})\le\cF(M^{2})$ already forces $\Fpm(N^{2})\le\Fpm(M^{2})$. The symmetric sector is therefore reduced, at every $d=2^{n}$, to the antisymmetric sector and the linear alignment $\cF(N^{2})\le\cF(M^{2})$. The mixed sector obeys an analogous reduction, but through a different representation-theoretic mechanism.

\subsubsection{Structure of the proof of Theorem~\ref{thm:symSOS}}

The structural steps are recorded first; the parts below supply the coefficient-level derivations.

Pauli conjugation invariance
forces every quartic monomial in $\Delta_d$ to have total label zero.  The
$c_a^4$ terms cancel, commuting pair terms cancel, and the remaining terms are
anticommuting pairs together with four distinct labels forming a noncommuting
isotropic affine plane.  If $\mathscr B$ denotes the latter planes and
$\varepsilon_F\in\{\pm1\}$ is the cyclic Pauli-product sign, the exact expansion is
\begin{equation}
 \Delta_d(C)=\frac{4(3d-4)}{d^2}\Afr(C)
 +\frac{64}{d^2}\sum_{F\in\mathscr B}\varepsilon_F\prod_{a\in F}c_a .
 \label{eq:DeltaPlane}
\end{equation}
Grouping the plane terms by their two anticommuting parallel classes gives
\begin{equation}
 \Delta_d(C)=\frac{16}{d^2}\sum_{s\ne0}
 z_s^{\mathsf T}\!\left(\frac{3d-4}{4}I+A_s\right)z_s,
 \label{eq:Deltablock}
\end{equation}
where $A_s$ is the signed adjacency matrix of the $s$-edge block.  A diagonal
switching identifies $A_s$ with $I-H_q$, where $H_q$ is a quadratic-form
character matrix on a nondegenerate symplectic space of size $d^2/4$.  Character
orthogonality yields
\begin{equation}
 A_s^2=2A_s+\left(\frac{d^2}{4}-1\right)I.
 \label{eq:Asquad}
\end{equation}
Hence
\begin{equation}
 \mathsf Q_s:=\frac1d\left[A_s+\left(\frac d2-1\right)I\right]
 \label{eq:Qsdef}
\end{equation}
is an orthogonal projection and
$(3d-4)I/4+A_s=dI/4+d\mathsf Q_s$.  Inserting this into
Eq.~\eqref{eq:Deltablock} and using
$\sum_{s\ne0}\|z_s\|^2=\Afr(C)$ proves Eq.~\eqref{eq:DeltaSOS} and its equality
statement.
Let $\mathsf V=\Ftwo^{2n}$ label the Hermitian Paulis and write
$C=d^{-1}\sum_a c_aP_a$.

\subsubsection{The antisymmetric stabilizer projector and the diagonal identity}

The fourth exterior power removes the Pauli cocycle: if
$P_aP_b=\omega(a,b)P_{a+b}$ with $\omega(a,b)\in\{\pm1,\pm i\}$, then
$\Lambda^4(P_a)\Lambda^4(P_b)=\omega(a,b)^4\Lambda^4(P_{a+b})$.  Thus
$a\mapsto\Lambda^4(P_a)$ is an ordinary unitary representation and
\begin{equation}
 \PiF=\frac1{d^2}\sum_{a\in\mathsf V}\Lambda^4(P_a)
\end{equation}
is its invariant-space projector.  In the computational wedge basis $e_F$,
$|F|=4$, its diagonal is $4/d$ when $F$ is a plane and zero otherwise.
Indeed, a nonzero translation stabilizes a four-set precisely when it is a plane; its translation stabilizer is then the four-element direction plane, and
the $Z$ character is trivial because $\sum_{x\in F}x=0$.  Consequently
\begin{equation}
 \mathrm{rank}\,\PiF=\frac{(d-1)(d-2)}6,
 \qquad
 \cF(C)=\frac1{d^2}\sum_{p\in\mathsf V}e_4(P_pC),
 \label{eq:appcFe4}
\end{equation}
where
\begin{equation}
 24e_4(X)=t_1^4-6t_1^2t_2+3t_2^2+8t_1t_3-6t_4,
 \qquad t_j=\Tr(X^j).
 \label{eq:Newton4}
\end{equation}
For $M=\mathrm{diag}(\mu_x)$, Walsh orthogonality gives
\begin{align}
 \Fpm(M)
 &=d\!\sum_{x_1+\cdots+x_4=0}\mu_{x_1}\mu_{x_2}\mu_{x_3}\mu_{x_4}\notag\\
 &=d\left[p_4+3(p_2^2-p_4)+24\sum_{F\ \mathrm{plane}}\prod_{x\in F}\mu_x\right],
\end{align}
while Eq.~\eqref{eq:appcFe4} gives
$\cF(M)=(4/d)\sum_F\prod_{x\in F}\mu_x$.  This proves
Eq.~\eqref{eq:diagFpm}.

\subsubsection{Coefficient expansion of $\Delta_d$}

Conjugation by $P_p$ maps $c_a\mapsto(-1)^{[p,a]}c_a$ and leaves $\Delta_d$
invariant.  Nondegeneracy of the symplectic form therefore forces each surviving
quartic monomial to have total label zero.  There are only three multiplicity
types: $c_a^4$, $c_a^2c_b^2$, and four distinct labels summing to zero.

For $c_a^4$ one has
\begin{align}
 [c_a^4](\Tr C^2)^2&=\frac1{d^2},&
 [c_a^4]\Tr C^4&=\frac1{d^3},\notag\\
 [c_a^4]\cF&=\frac{(d-1)(d-2)}{6d^4}.&&
\end{align}
so its coefficient in $\Delta_d$ is
$3/d-2/d^2+(d-1)(d-2)/d^2-1=0$.

For $a\ne b$, put $\sigma=(-1)^{[a,b]}$.  The six trace words give
\begin{equation}
 [c_a^2c_b^2]\Tr C^4=\frac{4+2\sigma}{d^3},
 \qquad [c_a^2c_b^2](\Tr C^2)^2=\frac2{d^2}.
\end{equation}
Using Eqs.~\eqref{eq:appcFe4}--\eqref{eq:Newton4}, character orthogonality gives
\begin{equation}
 [c_a^2c_b^2]\cF=\frac{\sigma(2-d)}{d^4}.
\end{equation}
Hence
\begin{equation}
 [c_a^2c_b^2]\Delta_d=(1-\sigma)\frac{6d-8}{d^2},
 \label{eq:paircoeffapp}
\end{equation}
which vanishes for commuting pairs and equals $4(3d-4)/d^2$ for
anticommuting pairs.

Now let $F=\{u_1,u_2,u_3,u_4\}$ contain four distinct labels with
$u_1+u_2+u_3+u_4=0$.  Write
$F=a+\mathrm{span}\{r,t\}$ and set
$\eta=[r,t]$, $\alpha=[a,r]$, $\beta=[a,t]$.  Its anticommutation graph is
exactly one of
\begin{align}
 \varnothing &: \eta=\alpha=\beta=0,\notag\\
 C_4 &: \eta=0,\quad(\alpha,\beta)\ne(0,0),\notag\\
 K_3\sqcup K_1 &: \eta=1.
 \label{eq:graphtri}
\end{align}
Thus the noncommuting isotropic planes are precisely the $C_4$ case.  For such a
plane define $\varepsilon_F$ by the Pauli product in any cyclic ordering,
$P_{u_1}P_{u_2}P_{u_3}P_{u_4}=\varepsilon_F I$; cyclic shifts and reversal
preserve this real sign.

For an arbitrary zero-sum four-set let $\lambda_\pi$ be the scalar phase of the
ordered product and define
\begin{align}
 S_0(F)&=\sum_{\pi\in S_4}\lambda_\pi,\\
 S_3(F)&=\sum_{\pi\in S_4}
 (-1)^{[u_{\pi(4)},u_{\pi(1)}+u_{\pi(3)}]}\lambda_\pi .
\end{align}
Expanding the $t_1t_3$ and $t_4$ terms in Newton's identity and averaging over the
Pauli root gives
\begin{equation}
 \left[\prod_{u\in F}c_u\right]\Delta_d
 =\frac2{d^2}[S_3(F)-S_0(F)].
 \label{eq:S30coeff}
\end{equation}
For the empty and four-cycle graphs, complementary pairs have equal commutation
type.  The signed phase
\begin{equation}
 \mu_\pi=(-1)^{[v_{4},v_{1}+v_{3}]}\lambda_\pi,
 \qquad v_{j}:=u_{\pi(j)},
\end{equation}
is therefore invariant under the adjacent transpositions generating $S_4$.
It equals $\tau$ on the empty graph and $\varepsilon_F$ on a cyclic ordering of
$C_4$.  Since each unordered pair occupies positions $\{2,4\}$ in four words,
\begin{equation}
 (S_0,S_3)=
 \begin{cases}
  (24\tau,24\tau),&\varnothing,\\
  (-8\varepsilon_F,24\varepsilon_F),&C_4.
 \end{cases}
\end{equation}
For $K_3\sqcup K_1$, reversal changes the sign associated with the odd number
of anticommuting pairs, so every $\lambda_\pi$ is purely imaginary.  The
coefficient in Eq.~\eqref{eq:S30coeff} is a coefficient of the real polynomial
$\Delta_d$ in the independent real variables $c_u$ and hence is real; therefore
$S_3-S_0=0$ (and $S_0=0$ separately by swapping two triangle vertices).  Thus
Eq.~\eqref{eq:S30coeff} is $64\varepsilon_F/d^2$ precisely on the $C_4$ planes
and zero otherwise.  Together with Eq.~\eqref{eq:paircoeffapp} this proves
Eq.~\eqref{eq:DeltaPlane}.

\subsubsection{Signed blocks and the universal spectrum}

For $s\ne0$ define $E_s$ and $z_s$ as defined above.  There are $d^2/2$
solutions of $[a,s]=1$, paired by $a\leftrightarrow a+s$, so $|E_s|=d^2/4$ and
\begin{equation}
 \sum_{s\ne0}\|z_s\|^2=\Afr(C).
 \label{eq:appzA}
\end{equation}
For distinct edges $e=\{a,a+s\}$ and $f=\{b,b+s\}$, their endpoints form a
noncommuting isotropic plane.  Set $(A_s)_{e,f}=\varepsilon_{F(e,f)}$ and
$(A_s)_{e,e}=0$.  Every $C_4$ plane has exactly two anticommuting parallel
classes, and each is counted in both matrix orders; hence each plane monomial
occurs four times in $\sum_s z_s^{\mathsf T}A_sz_s$.  This proves
Eq.~\eqref{eq:Deltablock}.

It remains to diagonalize $A_s$.  Choose $a_0$ with $[a_0,s]=1$ and a complement
$U_s$ of $\langle s\rangle$ in $s^\perp$.  The restriction
$\omega=[\cdot,\cdot]|_{U_s}$ is nondegenerate and
$U_s\cong\Ftwo^{2n-2}$.  Every edge has the unique representative
$e_u=\{a_0+u,a_0+u+s\}$, $u\in U_s$.  Define
$\rho_s(u)\in\{\pm1\}$ by
\begin{equation}
 P_{a_0+u}P_{a_0+u+s}=i\rho_s(u)P_s .
\end{equation}
A comparison with a cyclic ordering of the corresponding four-cycle gives
\begin{align}
 (A_s)_{u,v}&=-\rho_s(u)\rho_s(v)
 (-1)^{\ell(u)+\ell(v)+\omega(u,v)},\notag\\
 \ell(u)&=[a_0,u].
 \label{eq:Asentriesapp}
\end{align}
Choose a quadratic refinement $q$ of $\omega$ and switch by
$D_{u,u}=\rho_s(u)(-1)^{q(u)+\ell(u)}$.  Then
\begin{equation}
 DA_sD=I-H_q,
 \qquad (H_q)_{u,v}=(-1)^{q(u+v)}.
\end{equation}
Quadratic polarization and character orthogonality yield
\begin{align}
 (H_q^2)_{u,w}
 &=(-1)^{q(u+w)+\omega(u,w)}
 \sum_v(-1)^{\omega(u+w,v)}\notag\\
 &=\frac{d^2}{4}\,\delta_{u,w}.
\end{align}
Therefore Eq.~\eqref{eq:Asquad} holds.  Its two eigenvalues are
$1\pm d/2$; since $\Tr A_s=0$, their multiplicities are respectively
$d(d-2)/8$ and $d(d+2)/8$.  Equation~\eqref{eq:Qsdef} is consequently the
spectral projector onto the $1+d/2$ eigenspace.  Substitution into
Eq.~\eqref{eq:Deltablock}, together with Eq.~\eqref{eq:appzA}, proves the SOS
Eq.~\eqref{eq:DeltaSOS}.  Since its first term is strictly coercive in every
anticommuting product $c_ac_b$, equality is equivalent to pairwise commuting
Pauli support, which lies in a maximal isotropic subspace and is therefore
stabilizer-diagonal.

\subsubsection{Pauli--Fourier proof of Theorem~\ref{thm:symcollapse}}
\label{app:symcollapse}

We prove Theorem~\ref{thm:symcollapse}.  For $P,Q\in\mathcal P_n$
write $X_{P,Q}=PNQN$ and $t_k=\Tr(X_{P,Q}^k)$.  Expanding both copies of $\PW$
and the central $S_4$ projector gives, for every $\lambda\vdash4$,
\begin{equation}
 Q_\lambda(N)=\frac1{d^4}\sum_{P,Q}
 \frac1{24}\sum_{\pi\in S_4}\chi^\lambda(\pi)
 \prod_{c\in\mathrm{cyc}(\pi)}t_{|c|}.
 \label{eq:appcycles}
\end{equation}
Thus the symmetric and antisymmetric sectors contain $h_4$ and $e_4$,
respectively.  Since
\begin{equation}
 h_4-e_4=\frac12(t_1^2t_2+t_4),
 \label{eq:h4e4}
\end{equation}
it suffices to evaluate the two Pauli sums.

Let $\chi(P,R)=\pm1$ according as $P$ and $R$ commute or anticommute.  The
characters obey
\begin{equation}
 \sum_P\chi(P,R)\chi(P,S)=d^2\delta_{R,S}.
 \label{eq:charorth}
\end{equation}
For Hermitian $A$, set $a_R=\Tr(RA)$.  From
$A=d^{-1}\sum_Ra_RR$ one obtains
\begin{equation}
 \Tr(APAP)=\frac1d\sum_R\chi(P,R)a_R^2.
 \label{eq:APAP}
\end{equation}
Squaring and summing Eq.~\eqref{eq:APAP} gives
\begin{equation}
 \sum_P[\Tr(APAP)]^2=\sum_P[\Tr(AP)]^4.
 \label{eq:Fourier1app}
\end{equation}
A direct four-word expansion, followed by Eq.~\eqref{eq:charorth}, likewise gives
\begin{align}
 \sum_P\Tr[(AP)^4]
 &=\sum_P[\Tr(AP)]^2\Tr(APAP)\notag\\
 &=\frac1d\sum_{P,R}\chi(P,R)a_P^2a_R^2.
 \label{eq:Fourier2app}
\end{align}
Applying Eq.~\eqref{eq:Fourier2app} with $A=NQN$ to Eq.~\eqref{eq:h4e4} proves
\begin{align}
 \Qsym(N)-\Qa(N)
 &=\frac1{d^4}\sum_{P,Q}[\Tr(PNQN)]^2\notag\\
 &\qquad\times\Tr[(PNQN)^2].
\end{align}
Since $a_P^2a_R^2\ge0$ and $\chi(P,R)\le1$,
\begin{equation}
 \frac1d\sum_{P,R}\chi(P,R)a_P^2a_R^2
 \le\frac1d\left(\sum_Pa_P^2\right)^2
 =d[\Tr(A^2)]^2,
\end{equation}
where the last equality is Pauli Parseval.  For $A=NQN$,
$\Tr(A^2)=\Tr(N^2QN^2Q)$.  Summing over $Q$ and using
Eq.~\eqref{eq:Fourier1app} for $N^2$ gives
\begin{equation}
 \Qsym(N)-\Qa(N)\le d^{-3}\Fpm(N^2),
\end{equation}
as claimed.

Finally suppose $N=M=\mathrm{diag}(\mu_x)$.  For
$Q\propto X^aZ^b$, the Pauli support of $A=MQM$ lies among
$X^aZ^{b'}$, with coefficient proportional to the Walsh transform of
$g(x)=\mu_x\mu_{x+a}$.  Since $g(x+a)=g(x)$, a surviving coefficient satisfies
$a\cdot b'=a\cdot b$.  Any two surviving labels therefore obey
$a\cdot(b'+b'')=0$ and commute.  Hence every $\chi(P,R)$ on the support in
Eq.~\eqref{eq:Fourier2app} equals $+1$, proving equality for every computational
diagonal $M$.

\subsection{The mixed sector}
\label{app:mixed}
\label{sec:mixed}

The $(2,2)$ sector admits an equally explicit certificate, obtained below by realizing the sector inside the exterior square $\Lambda^{2}\mathbb C^{d}$: the antisymmetric Paulis form an orthonormal frame of $\Lambda^{2}\mathbb C^{d}$, and two elementary sums of squares over that frame---an edge/triangle inequality and a linear bridge---combine into the following bound.

\begin{theorem}[Dimension-free mixed-sector certificate]
\label{thm:mixedSOS}
For every $d=2^n$, $n\ge2$, and every Hermitian $N$, define
\begin{align}
 D_{22}(N):={}&\frac8{d^2}\Tr[(\Lambda^2N)^4]
 +\frac{12}{d}\cF(N^2)\notag\\
 &+4\Qa(N)-2\,\Qtt(N).
 \label{eq:D22def}
\end{align}
Then $D_{22}(N)$ is an explicit sum of squares over the exterior-square Pauli frame constructed below [Eq.~\eqref{eq:D22cert}]; in particular,
\begin{equation}
 \Qtt(N)\le\frac4{d^2}\Tr[(\Lambda^2N)^4]
 +\frac{6}{d}\cF(N^2)+2\Qa(N).
 \label{eq:22bound}
\end{equation}
Equality holds for every real computational diagonal $N$.
\end{theorem}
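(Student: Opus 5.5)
We expand $\Qtt(N)$ through Eq.~\eqref{eq:appcycles} with the character $\chi^{(2,2)}$, which takes the values $2,0,2,-1,0$ on the cycle types $e,(ab),(ab)(cd),(abc),(abcd)$. This gives
\begin{equation*}
 \Qtt(N)=\frac1{d^4}\sum_{P,Q}\frac1{12}\bigl(t_1^4+3t_2^2-4t_1t_3\bigr),
 \qquad t_k=\Tr\bigl[(PNQN)^k\bigr].
\end{equation*}
The Pauli sums will be rewritten in the exterior square. We note that $\Lambda^2\mathbb C^d\subset(\mathbb C^d)^{\otimes2}$ is spanned, as a frame, by the antisymmetric Paulis $P\otimes Q$ restricted by $\Pi_{(1^2)}$, i.e.\ the operators $\Pi_{(1^2)}(P\otimes\mathbb 1)\Pi_{(1^2)}$ with $P^{\mathsf T}=-P$. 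The $(2,2)$ module is the image of $\Lambda^2\otimes\Lambda^2$ under the symmetrizer of the two exterior pairs. We then express $h$-type contractions of $X_{P,Q}$ as traces of $\Lambda^2(N)$ against this frame, so that the $t_2^2$ piece produces $\Tr[(\Lambda^2N)^4]$ after Pauli Parseval on $\Lambda^2$. Similarly, the $t_1^4$ and $t_1t_3$ pieces are split, by the identity $24e_4=t_1^4-6t_1^2t_2+3t_2^2+8t_1t_3-6t_4$ and Eq.~\eqref{eq:h4e4}, into $\Qa$, $\Qsym-\Qa$, and a remainder.

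The certificate will be built from two elementary sums of squares. The first is an edge/triangle inequality. For the frame vectors $f_\alpha$ of $\Lambda^2$ indexed by antisymmetric Paulis, we write the remainder as $\sum_{\alpha,\beta}(g_{\alpha\beta}-g_{\beta\alpha})^2$-type terms plus triangle terms. Here $g_{\alpha\beta}=\langle f_\alpha,\Lambda^2(N)f_\beta\rangle$, and the Pauli group acting on pairs $(\alpha,\beta)$ plays the role of the symplectic labels in Theorem~\ref{thm:symSOS}. The second is a linear bridge, which relates the $\Lambda^2$-frame trace of the wedge of two two-forms to $\cF(N^2)$. This uses the representation $\cF(C)=d^{-2}\sum_p e_4(P_pC)$ of Eq.~\eqref{eq:appcFe4}, together with $e_4=\frac12\bigl(e_2^2-\ldots\bigr)$ realized as the norm of $\Lambda^2 C\wedge\Lambda^2 C$ projected out of $\Lambda^4$. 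The coefficient $\frac{12}{d}$ should come from the count $q=4/d$ of Eq.~\eqref{eq:codeform} times the three perfect pairings of four replicas. Adding the two squares with weights fixed so that the quartic monomials in the Pauli coefficients $c_a$ match, we would obtain $D_{22}=\text{SOS}\ge0$, which is Eq.~\eqref{eq:22bound}.

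For the coefficient matching we would again use Pauli-conjugation invariance to restrict to monomials $c_a^4$, $c_a^2c_b^2$ and zero-sum quadruples, exactly as in the expansion of $\Delta_d$. We would check each of the three multiplicity types separately, using character orthogonality Eq.~\eqref{eq:charorth}.

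Equality at real diagonals follows as in the symmetric case. For $N=M$ real diagonal, every square in the certificate involves differences of products of Walsh coefficients of $g(x)=\mu_x\mu_{x+a}$ supported on commuting labels, so each vanishes. Equivalently, we can verify directly with Eq.~\eqref{eq:diagsectors} that $2\Qtt(M)=2\sum_a w_a^2+4\sum_L w_L^2$ equals the right-hand side computed from $w_0,w_a,w_L$.

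The main obstacle is the identification of the correct squares. In particular, the triangle terms are the hardest part: showing that the plane-type monomials with mixed commutation graphs assemble into a positive semidefinite block, analogous to $A_s$ with a quadratic relation like Eq.~\eqref{eq:Asquad}. We would first try to guess the block by computing $d=4$ exactly and then lifting via the symplectic switching argument.
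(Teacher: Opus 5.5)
Your character expansion of $\Qtt$ is correct. Your fallback check of the equality case through Eq.~\eqref{eq:diagsectors} also works: both $2\Qtt(M)-4\Qa(M)=2\sum_{a\ne0}w_a^2$ and $\frac{8}{d^2}\Tr[(\Lambda^2M)^4]+\frac{12}{d}\cF(M^2)$ reduce to $\frac{4}{d^2}[(\sum_x\mu_x^4)^2-\sum_x\mu_x^8]+\frac{48}{d^2}\sum_F\prod_{x\in F}\mu_x^2$. The inequality itself, however, is not proved. You never produce the squares, and several of the identifications you plan to build them from are wrong:
\begin{itemize}
\item The image of the symmetrizer on $\Lambda^2\otimes\Lambda^2$ is $\mathrm{Sym}^2(\Lambda^2\mathbb C^d)=\mathbb S_{(2,2)}\mathbb C^d\oplus\Lambda^4\mathbb C^d$, not the $(2,2)$ module. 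Splitting off the $\Lambda^4$ summand is exactly what the proof needs.
\item The $t_2^2$ piece is not spectral. Eq.~\eqref{eq:Fourier1app} with $A=NQN$ gives $\sum_P t_2^2=\sum_P t_1^4$ for each $Q$. Hence $d^{-4}\sum_{P,Q}t_2^2=d^{-2}\Gm(V)$, which varies along the orbit.
\item Coefficient matching in quartic monomials of the Pauli coefficients $c_a$, ``as for $\Delta_d$'', cannot work. $D_{22}(N)$ has degree eight in $N$ and is not a function of $N^2$, since $\Qa(N)$ and $\Qtt(N)$ distinguish sign patterns of $N$ with the same $N^2$. The quartic structure lives in the frame entries of $\Lambda^2N$, not in the $c_a$.
\item Your squares $(g_{\alpha\beta}-g_{\beta\alpha})^2$ only see imaginary parts. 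The correct ones must select the real or the imaginary part of each off-diagonal entry according to whether the two Pauli labels anticommute or commute.
\item Reading $12/d$ as $3q$ is numerology rather than a derivation.
\end{itemize}

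The missing idea is a frame dictionary that turns every term of $D_{22}$ into a trace of a Hadamard square. Take the vectors $E_a=d^{-1/2}P_a$, $a\in\mathcal A$ (antisymmetric Paulis, $P_a^{\mathsf T}=-P_a$), as an orthonormal basis of $\Lambda^2\mathbb C^d$. These are vectors, not the compressed operators $\Pi_{(1^2)}(P\otimes\mathbb 1)\Pi_{(1^2)}$. Set $K_{ab}=\langle E_a,\Lambda^2(N)E_b\rangle$.
\begin{itemize}
\item Pauli invariance forces the fixed part of $\mathrm{Sym}^2(\Lambda^2)$ to be diagonal, $\mathrm{span}\{E_a\otimes E_a\}$.
\item The wedge map $\mathscr W$ splits this diagonal part into complementary projectors $\mathsf P_{22},\mathsf P_{\mathrm F}$ on $\mathbb C^{\mathcal A}$.
\item Schur's lemma for $\mathscr W\mathscr W^\dagger$ on the irreducible $\Lambda^4$, together with the wedge-square Gram matrix $\frac{d-2}{2d}I+\frac1d\Sigma_d$, gives $\Sigma_d=\frac{3d}{2}\mathsf P_{\mathrm F}-\frac{d-2}{2}I$ for the signed commutation matrix.
\end{itemize}
With $B=K\circ K$ and $C=K^2\circ K^2$, one obtains $\Qtt=\Tr(\mathsf P_{22}B^2)$, $\Qa=\Tr(\mathsf P_{\mathrm F}B^2)$ and $\cF(N^2)=\Tr(\mathsf P_{\mathrm F}C)$. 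Then
$D_{22}=\frac4d\bigl[\Tr C-\Tr B^2+\Tr(\Sigma_dB^2)\bigr]+\frac{8}{d^2}\bigl[\Tr K^4-\Tr C+\Tr(\Sigma_dC)\bigr]$.
Each bracket is an entrywise sum of squares for an arbitrary Hermitian matrix and sign pattern, via $|z|^2+\varepsilon\Re(z^2)=2\rho_\varepsilon(z)^2$. The first bracket gives edge and triangle squares of $K$; the second gives edge squares of $K^2$. In this form:
\begin{itemize}
\item the spectral term $\Tr K^4=\Tr[(\Lambda^2N)^4]$ enters through the second bracket;
\item $12/d=4/d+8/d$ collects $\Tr(\mathsf P_{\mathrm F}C)$ from both brackets, while the $\Tr(\mathsf P_{22}C)$ terms cancel;
\item no analogue of the $A_s$ block diagonalization, and no small-$d$ guessing, is needed.
\end{itemize}
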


Because the first term in Eq.~\eqref{eq:22bound} is a unitary invariant---$\Tr[(\Lambda^{2}N)^{4}]=\tfrac12\{[\Tr(N^{4})]^{2}-\Tr(N^{8})\}$---subtracting the diagonal equality case at $M$ from the bound at $N=VMV^{\dagger}$ shows that any orbit increase of $\Qtt$ is controlled by $\cF$ and $\Qa$ alone: if $\cF(N^{2})\le\cF(M^{2})$ and $\Qa(N)\le\Qa(M)$, then $\Qtt(N)\le\Qtt(M)$. The mixed sector thus carries no obstruction beyond the two antisymmetric/Fano quantities.

The remainder of this subsection proves Theorem~\ref{thm:mixedSOS} and, in particular, identifies the
exterior-square construction analytically with the $(2,2)$ Schur--Weyl sector of
Proposition~\ref{prop:SWsplit}.  Throughout, Pauli labels lie in
$\mathsf V=\Ftwo^{2n}$ with quadratic form $q(x,z)=x\cdot z$ and symplectic form
$[(x,z),(x',z')]=x\cdot z'+z\cdot x'$.

\subsubsection{Exterior-square Pauli frame and the two multiplicity spaces}
\label{app:mixedframe}

The Hermitian Pauli satisfies $P_a^{\mathsf T}=(-1)^{q(a)}P_a$.  Hence
$\mathcal A=\{a:q(a)=1\}$ labels precisely the antisymmetric Paulis.  Since
$\sum_a(-1)^{q(a)}=d$,
\begin{equation}
 |\mathcal A|=\frac{d^2-d}{2}=\dim\Lambda^2\mathbb C^d,
\end{equation}
and $E_a=d^{-1/2}P_a$, $a\in\mathcal A$, is an orthonormal basis of
$\Lambda^2\mathbb C^d$ under the antisymmetric-matrix realization
$\Lambda^2(A)\alpha=A\alpha A^{\mathsf T}$.

For Hermitian $N$
define the matrix of $\Lambda^2N$ in this frame and its two Hadamard squares,
\begin{align}
 K_{ab}(N)&:=\langle E_a,\Lambda^2(N)E_b\rangle,
 \notag\\
 B&:=K\circ K,\qquad C:=K^2\circ K^2 .
 \label{eq:KBCmain}
\end{align}
For $a\ne b$ put
\begin{equation}
 \varepsilon_{ab}:=-(-1)^{[a,b]},\qquad
 (\Sigma_d)_{aa}=0,\quad (\Sigma_d)_{ab}=\varepsilon_{ab},
 \label{eq:SigmaMain}
\end{equation}
and define
\begin{equation}
 \rho_\varepsilon(z):=
 \begin{cases}\Re z,&\varepsilon=+1,\\ \Im z,&\varepsilon=-1.\end{cases}
 \label{eq:rhoMain}
\end{equation}
Finally let
\begin{align}
 \mathcal S_d(K)&:=\Tr C-\Tr B^2+\Tr(\Sigma_dB^2),
 \label{eq:Smain}\\
 \mathcal R_d(K)&:=\sum_{a<b}
 \rho_{\varepsilon_{ab}}\!\bigl((K^2)_{ab}\bigr)^2 .
 \label{eq:Rmain}
\end{align}

The certificate of Theorem~\ref{thm:mixedSOS} to be established is
\begin{equation}
 D_{22}(N)=\frac4d\,\mathcal S_d(K(N))
 +\frac{32}{d^2}\,\mathcal R_d(K(N))\ge0 .
 \label{eq:D22cert}
\end{equation}

For a Pauli $P_p$,
\begin{equation}
 \Lambda^2(P_p)E_a=(-1)^{q(p)+[a,p]}E_a.
 \label{eq:L2PauliApp}
\end{equation}
Thus the basis vector $E_a\odot E_b$ of
$\operatorname{Sym}^2(\Lambda^2\mathbb C^d)$ carries character
$(-1)^{[a+b,p]}$.  By nondegeneracy of the symplectic form it is fixed for every
$p$ if and only if $a=b$.  Therefore the Pauli-fixed subspace is
\begin{equation}
 \mathcal M=\operatorname{span}\{E_a\otimes E_a:a\in\mathcal A\}.
 \label{eq:MfixedApp}
\end{equation}
Let $\mathscr D:\mathbb C^{\mathcal A}\to\mathcal M$ be the isometry
$\mathscr D e_a=E_a\otimes E_a$.

Throughout this subsection, $\alpha\wedge\beta$ denotes one half of the ordinary wedge product of two-forms, which keeps the constants below simple.
Consider the wedge map
\begin{equation}
 \mathscr W:\operatorname{Sym}^2(\Lambda^2\mathbb C^d)\longrightarrow
 \Lambda^4\mathbb C^d,
 \qquad \mathscr W(\alpha\otimes\beta)=\alpha\wedge\beta .
 \label{eq:wedgeMapApp}
\end{equation}
It is surjective and satisfies
$\mathscr W\operatorname{Sym}^2(\Lambda^2A)=\Lambda^4(A)\mathscr W$.  Consequently
\begin{equation}
 \ker\mathscr W\simeq\mathbb S_{(2,2)}\mathbb C^d,
 \qquad
 (\ker\mathscr W)^\perp\simeq\Lambda^4\mathbb C^d,
 \label{eq:plethApp}
\end{equation}
which is the elementary realization of
$\operatorname{Sym}^2(\Lambda^2)=\mathbb S_{(2,2)}\oplus\Lambda^4$.  Both summands
are Pauli invariant, hence their orthogonal projectors preserve the fixed subspace
$\mathcal M$.  Under $\mathscr D$, let $\mathsf P_{22}$ and $\mathsf P_{\rm F}$ be
the resulting complementary projections on $\mathbb C^{\mathcal A}$.

This identifies the frame sectors with the ranges of the sector projectors
$\Pi_{\lambda}^{W}$ of Proposition~\ref{prop:SWsplit}: the Pauli-fixed part of $\ker\mathscr W$ is
$\mathrm{ran}\,\Pi_{(2,2)}^{W}$, while the Pauli-fixed part of the complementary summand maps
isomorphically to $\mathrm{ran}\,\PiF\subset\Lambda^4\mathbb C^d$.  Therefore,
for $K=K(N)$ and $B=K\circ K$,
\begin{align}
 \Qtt(N)&=\|\mathsf P_{22}B\mathsf P_{22}\|_{\mathrm{HS}}^2,
 \label{eq:Q22FrameApp}\\
 \Qa(N)&=\|\PiF\Lambda^4(N)\PiF\|_{\mathrm{HS}}^2 .
\end{align}
This is the same $Q_{(2,2)}$ as in Eq.~\eqref{eq:singlecopysector};
the multiplicity factor $\dim[2,2]=2$ enters the sector split explicitly in
Eq.~\eqref{eq:split}.

For later use, because $K(A)$ is the matrix of $\Lambda^2A$ in an orthonormal
basis,
\begin{align}
 K(A^2)&=K(A)^2,\notag\\
 I_4(N)&:=\Tr K(N)^4=\Tr[(\Lambda^2N)^4]\notag\\
 &=\frac12\{[\Tr(N^4)]^2-\Tr(N^8)\}.
 \label{eq:I4App}
\end{align}

\subsubsection{Wedge-square Gram matrix and the exact intertwiner}
\label{app:mixedintertwiner}

Define $G:\mathbb C^{\mathcal A}\to\Lambda^4\mathbb C^d$ by
$Ge_a=E_a\wedge E_a$.  The basic identity for antisymmetric matrices is
\begin{equation}
 \langle A\wedge A,B\wedge B\rangle
 =\frac12(\Tr A^\dagger B)^2-\Tr[(A^\dagger B)^2].
 \label{eq:wedgeGramApp}
\end{equation}
It follows by expanding both wedge squares in coordinates and grouping the $24$
permutations into the three pairings of four indices.  Applying
Eq.~\eqref{eq:wedgeGramApp} to $E_a,E_b$ gives
\begin{align}
 \Gamma:=G^\dagger G&=\frac{d-2}{2d}I+\frac1d\Sigma_d,
 \label{eq:GammaApp}\\
 (\Sigma_d)_{ab}&=-(-1)^{[a,b]}\qquad(a\ne b).\notag
\end{align}

The operator $\mathscr W\mathscr W^\dagger$ is a positive $U(d)$-equivariant
endomorphism of the irreducible representation $\Lambda^4\mathbb C^d$, hence is
scalar by Schur's lemma.  Since
\begin{equation}
 \rank\PiF=\frac{(d-1)(d-2)}6,
 \qquad
 \Tr\Gamma=\frac{(d-1)(d-2)}4,
\end{equation}
the scalar is $3/2$.  Thus
\begin{align}
 G^\dagger G&=\frac32\mathsf P_{\rm F},\qquad
 GG^\dagger=\frac32\PiF,\notag\\
 \Sigma_d&=\frac{3d}{2}\mathsf P_{\rm F}-\frac{d-2}{2}I .
 \label{eq:FanoFrameApp}
\end{align}
In particular
$\rank\mathsf P_{\rm F}=(d-1)(d-2)/6$ and
$\rank\mathsf P_{22}=(d^2-1)/3$, as required by
Eq.~\eqref{eq:sectordims}.

For Hermitian $A$ write $\widehat B(A)=K(A)\circ K(A)$.  If
$F_b=\Lambda^2(A)E_b$, Eq.~\eqref{eq:wedgeGramApp} gives
\begin{equation}
 (G^\dagger\Lambda^4(A)G)_{ab}
 =\frac12K(A)_{ab}^2-\Tr[(E_aF_b)^2].
\end{equation}
Expanding $F_b$ in the Pauli frame and using
\begin{equation}
 \Tr(E_aE_cE_aE_{c'})=\frac{(-1)^{[a,c]}}{d}\,\delta_{cc'}
\end{equation}
yields
\begin{equation}
 \Tr[(E_aF_b)^2]
 =\frac1d[(I-\Sigma_d)\widehat B(A)]_{ab}.
\end{equation}
Therefore
\begin{equation}
 G^\dagger\Lambda^4(A)G
 =\Gamma\widehat B(A)
 =\frac32\mathsf P_{\rm F}\widehat B(A).
 \label{eq:intertwinerApp}
\end{equation}
The left-hand side is Hermitian, so
$[\mathsf P_{\rm F},\widehat B(A)]=0$; moreover
\begin{equation}
 G\widehat B(A)G^\dagger
 =\frac32\PiF\Lambda^4(A)\PiF .
 \label{eq:intertwiner2App}
\end{equation}
Taking traces and Hilbert--Schmidt norms in these identities gives the exact sector
dictionary.  For
\begin{align}
 x&=\Tr(\mathsf P_{22}B^2),&
 y&=\Tr(\mathsf P_{\rm F}B^2),\notag\\
 t&=\Tr(\mathsf P_{22}C),&
 u&=\Tr(\mathsf P_{\rm F}C).
 \label{eq:xyzuApp}
\end{align}
one has
\begin{equation}
 x=\Qtt(N),\qquad y=\Qa(N),\qquad u=\cF(N^2).
 \label{eq:dictionaryApp}
\end{equation}
For example, the first identity follows from
$[\mathsf P_{22},B]=0$ and Eq.~\eqref{eq:Q22FrameApp}; the second follows from
Eq.~\eqref{eq:intertwiner2App}; and the third follows by tracing
Eq.~\eqref{eq:intertwinerApp} at $A=N^2$.

\subsubsection{Triality curvature: an edge/triangle sum of squares}
\label{app:mixedtriality}

For a Hermitian matrix $K$, write $k_a=K_{aa}\in\mathbb R$ and $z_{ab}=K_{ab}$.
The elementary identity
\begin{equation}
 |z|^2+\varepsilon\Re(z^2)=2\rho_\varepsilon(z)^2
 \label{eq:rhoApp}
\end{equation}
will be used repeatedly.  Since
$C_{aa}=((K^2)_{aa})^2$, direct expansion gives
\begin{align}
 \Tr C&=\sum_{a,b,c}|K_{ab}|^2|K_{ac}|^2,\notag\\
 \Tr B^2&=\sum_{a,b}|K_{ab}|^4,\notag\\
 \Tr(\Sigma_dB^2)&=\sum_{a\ne b}\sum_c
 \varepsilon_{ab}K_{bc}^2K_{ca}^2 .
 \label{eq:threeTracesApp}
\end{align}
Grouping the summands by supports of size one, two, and three yields cancellation of
the one-index terms and the exact SOS
\begin{align}
 \mathcal S_d(K)={}&4\sum_{a<b}(k_a^2+k_b^2)
 \rho_{\varepsilon_{ab}}(z_{ab})^2\notag\\
 &+4\sum_{a<b<c}\Big[
 \rho_{\varepsilon_{ab}}(z_{ac}\overline{z_{bc}})^2
 +\rho_{\varepsilon_{ac}}(z_{ab}z_{bc})^2\notag\\
 &\hspace{7.2em}
 +\rho_{\varepsilon_{bc}}(z_{ab}\overline{z_{ac}})^2\Big]\ge0.
 \label{eq:trialitySOSApp}
\end{align}
Indeed, the two-index contribution for $\{a,b\}$ is
$2(k_a^2+k_b^2)[|z_{ab}|^2+\varepsilon_{ab}\Re z_{ab}^2]$, while the
three-index contribution is the sum of the three analogous expressions obtained by
pairing each modulus product with the Seidel sign of the opposite edge.

Using Eq.~\eqref{eq:FanoFrameApp} and
$\mathsf P_{22}+\mathsf P_{\rm F}=I$, Eq.~\eqref{eq:Smain} becomes
\begin{equation}
 \mathcal S_d(K)=t+u-\frac d2x+dy.
 \label{eq:SsectorApp}
\end{equation}

\subsubsection{The linear bridge and completion of the certificate}
\label{app:mixedbridge}

For any Hermitian matrix $\widehat K$ and
$\widehat B=\widehat K\circ\widehat K$,
\begin{align}
 &\Tr\widehat K^2-\Tr\widehat B+
 \Tr(\Sigma_d\widehat B)\notag\\
 &\qquad=4\sum_{a<b}
 \rho_{\varepsilon_{ab}}(\widehat K_{ab})^2\ge0.
 \label{eq:linearBridgeApp}
\end{align}
To see this, note that the first two traces give
$2\sum_{a<b}|\widehat K_{ab}|^2$ and the Seidel term gives
$2\sum_{a<b}\varepsilon_{ab}\Re(\widehat K_{ab}^2)$, then use
Eq.~\eqref{eq:rhoApp} termwise.

Apply Eq.~\eqref{eq:linearBridgeApp} to
$\widehat K=K(N^2)=K(N)^2$ and $\widehat B=C$.  Using
Eqs.~\eqref{eq:FanoFrameApp}, \eqref{eq:xyzuApp}, and \eqref{eq:I4App} gives
\begin{equation}
 I_4(N)-\frac d2t+du
 =4\mathcal R_d(K)\ge0.
 \label{eq:bridgeSectorApp}
\end{equation}
Combining Eqs.~\eqref{eq:SsectorApp} and~\eqref{eq:bridgeSectorApp},
\begin{align}
 \frac4d\mathcal S_d+\frac{32}{d^2}\mathcal R_d
 &=\frac4d\left(t+u-\frac d2x+dy\right)
 +\frac8{d^2}\left(I_4-\frac d2t+du\right)\notag\\
 &=\frac8{d^2}I_4+\frac{12}{d}u+4y-2x,
\end{align}
which is Eq.~\eqref{eq:D22cert} by the dictionary
Eq.~\eqref{eq:dictionaryApp}.

\subsubsection{Sharpness at computational diagonals}
\label{app:mixedsharp}

Let $M=\operatorname{diag}(\mu_r)_{r\in\Ftwo^n}$ be real diagonal.  If
$a=(x_a,z_a)$ and $b=(x_b,z_b)$ lie in $\mathcal A$, a direct matrix-element
calculation gives
\begin{equation}
 K(M)_{ab}=0\quad\text{unless }x_a=x_b .
 \label{eq:KblockApp}
\end{equation}
For fixed nonzero $x$, the corresponding block consists of the $d/2$ labels
$(x,z)$ with $x\cdot z=1$, all its entries are real, and for distinct labels in
the same block
\begin{equation}
 [a,b]=x\cdot z_b+z_a\cdot x=0,
 \qquad \varepsilon_{ab}=-1.
 \label{eq:epsBlockApp}
\end{equation}
Thus every edge square in Eq.~\eqref{eq:trialitySOSApp} vanishes.  If three labels
lie in one block, all three matrix entries are real and all three signs are $-1$,
so every triangle square also vanishes; if they do not lie in one block, at least
two of the three off-diagonal entries vanish by Eq.~\eqref{eq:KblockApp}.
Therefore $\mathcal S_d(K(M))=0$.  The same block structure holds for $K(M)^2$,
so Eq.~\eqref{eq:Rmain} also gives $\mathcal R_d(K(M))=0$.  Hence
$D_{22}(M)=0$, proving the equality statement in Theorem~\ref{thm:mixedSOS}.

\subsection{The reduced form of the stabilizer purity}
\label{app:reduced}
\label{ssec:master}

We now assemble the reduced form of Proposition~\ref{prop:master}.

Together, Theorems~\ref{thm:symSOS}, \ref{thm:symcollapse}, and~\ref{thm:mixedSOS} control both non-antisymmetric sectors by antisymmetric data.  Define the remaining symmetric correction
\begin{equation}
 D_{\mathrm{sym}}(N):=\frac1{d^3}\Fpm(N^2)-\Qsym(N)+\Qa(N),
 \label{eq:Dsymdef}
\end{equation}
which is nonnegative by Eq.~\eqref{eq:symbound}; both $D_{\mathrm{sym}}$ and
$D_{22}$ vanish at every computational diagonal.

The reduced form of the diagonal functional now assembles by pure substitution. Inserting Eqs.~\eqref{eq:Dsymdef} and~\eqref{eq:D22def} into the sector decomposition~\eqref{eq:split} eliminates $\Qsym$ and $\Qtt$ in favor of $\Qa$, $\cF$, the fourth moment, and the two defects; the wedge trace is spectral, $\Tr[(\Lambda^{2}N)^{4}]=\tfrac12\{[\Tr(N^{4})]^{2}-\Tr(N^{8})\}$; and Eq.~\eqref{eq:Deltadef} trades the remaining $\Fpm(N^{2})$ for $\cF(N^{2})$ and the third defect, $\Delta_{d}(N^{2})$. The spectral terms combine to $7[\Tr(N^{4})]^{2}-6\Tr(N^{8})$, the $V$-dependent terms to $6d^{2}\Qa(N)+18d\,\cF(N^{2})$, and the three defects collect with negative signs.

This is Eq.~\eqref{eq:master} of Proposition~\ref{prop:master}, with the spectral invariant $\mathcal I(\boldsymbol\mu)$ of Eq.~\eqref{eq:specinv} and
\begin{equation}
 \mathcal D(N)\equiv d^{2}D_{\mathrm{sym}}(N)+d^{2}D_{22}(N)
 +\frac1d\,\Delta_{d}(N^{2})\ \ge\ 0 ,
 \label{eq:defectdef}
\end{equation}
which vanishes at every computational diagonal because each of its three terms does.

All $V$-dependence of $\Gm$ now resides in the two nonnegative antisymmetric-sector quantities $\Qa(N)$ and $\cF(N^{2})$ and in the nonnegative $\mathcal D$.

\section{The antisymmetric stabilizer subspace and the proofs of the two sector inequalities}
\label{app:flatcode}

This appendix proves Theorems~\ref{thm:rsevencap} and~\ref{thm:cfsix}.
We first establish the structural facts about the antisymmetric stabilizer projector summarized in Sec.~\ref{ssec:mechanism} (Appendix~\ref{ssec:code}) and state the two operator inequalities (Appendix~\ref{ssec:caps}), with the exterior-algebra proofs of both collected in Appendices~\ref{app:flatmaps}--\ref{app:blossomproof}.
We then derive the two sector inequalities, Theorem~\ref{thm:cfsix} from the matching polytope of $K_{6}$ (Appendix~\ref{ssec:matching}) and Theorem~\ref{thm:rsevencap} from the self-similarity of the antisymmetric stabilizer subspace (Appendices~\ref{ssec:ocseven} and~\ref{app:selfsim}).

\subsection{The antisymmetric stabilizer subspace}
\label{ssec:code}

We use the exterior algebra of $\mathbb C^{d}$: for a $k$-subset $S=\{s_{1}<\cdots<s_{k}\}\subset\Ftwo^{n}$, ordered as binary integers, $e_{S}:=e_{s_{1}}\wedge\cdots\wedge e_{s_{k}}$, and the $e_{S}$ form an orthonormal basis of $\Lambda^{k}\mathbb C^{d}$, with inner product $\langle\cdot,\cdot\rangle$ antilinear in the first slot. The interior product $\iota_{v}=\sum_{z}v_{z}\,\iota_{e_{z}}$ is \emph{bilinear} in $v$, with $\iota_{u\wedge v}=\iota_{v}\iota_{u}$, and the Hermitian annihilation and creation operators are $a_{v}:=\iota_{\bar v}$ and $a_{v}^{\dagger}=v\wedge\cdot$, with $\{a_{u},a_{v}^{\dagger}\}=\langle u,v\rangle\mathbb 1$ and $n_{x}:=a_{x}^{\dagger}a_{x}$; Appendix~\ref{app:flatconv} collects the remaining conventions. On the combinatorial side, a four-subset $F\subset\Ftwo^{n}$ with $\bigoplus_{z\in F}z=0$ is exactly a plane, and it determines its direction $\mathrm{dir}\,F=\{z\oplus z':z,z'\in F\}$; the analogous statements hold for affine subspaces of dimension three, eight-point sets $\Delta=x+L$ with $\dim L=3$, which we call \emph{three-spaces}. The projector $\PiF$ of Eq.~\eqref{eq:flatprojector} is extended by zero off $\Lambda^{4}\mathbb C^{d}$.

The proofs of Theorems~\ref{thm:rsevencap} and~\ref{thm:cfsix} rest on five structural facts about $\PiF$, each false for a generic subspace of $\Lambda^{4}\mathbb C^{d}$, and each with a short geometric reason. We present them as a dictionary, proving on the spot only the two that take a few lines; the remaining proofs are given in the subsections below.

\emph{(i) The projector has a sign-free basis.} The first fact fixes the signs of the basis vectors: the Pauli average selects, for each direction, the \emph{all-plus} combination of the wedges of parallel planes.

\begin{lemma}[Translation parity]
\label{lem:transpar}
For every plane $F$ and every $u\in\Ftwo^{n}$,
$\Lambda^{4}(X^{u})\,e_{F}=+\,e_{u\oplus F}$.
\end{lemma}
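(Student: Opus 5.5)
The plan is to compute directly how $X^{u}$ acts on the ordered wedge $e_{F}$ and to track the sign produced when the translated labels are re-sorted into increasing binary order. Write $F=\{f_{1}<f_{2}<f_{3}<f_{4}\}$. Since $X^{u}\ket{x}=\ket{x\oplus u}$ with no phase, Eq.~\eqref{eq:wedgeaction} gives
\begin{equation*}
\Lambda^{4}(X^{u})\,e_{F}=\ket{f_{1}\oplus u}\wedge\ket{f_{2}\oplus u}\wedge\ket{f_{3}\oplus u}\wedge\ket{f_{4}\oplus u}=\mathrm{sgn}(\sigma_{u})\,e_{u\oplus F}.
\end{equation*}
Here $\sigma_{u}$ is the permutation of $\{1,2,3,4\}$ that sorts the translated labels. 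The whole task is therefore to show that $\sigma_{u}$ is even for every plane $F$ and every $u$.

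To control $\sigma_{u}$ I will factor $X^{u}=\prod_{t:\,u_{t}=1}X^{e_{t}}$ into single-bit flips. Because $\Lambda^{4}$ is multiplicative, and a translate of a plane is again a plane, it suffices to treat $u=e_{t}$, a flip of the single bit $t$. The order of two labels $x<y$ is decided by their highest differing bit $h(x,y)$. Flipping bit $t$ reverses the pair exactly when $h(x,y)=t$, that is, when $x\oplus y$ has its leading bit at position $t$. Hence the parity of $\sigma_{e_{t}}$ equals the number of pairs $\{x,y\}\subset F$ whose difference has leading bit $t$.

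The pair differences of the plane $F=x+\{0,a,b,a\oplus b\}$ are $a$, $b$ and $a\oplus b$, and each occurs for exactly two pairs. The count is therefore $2\cdot\#\{c\in\{a,b,a\oplus b\}:\ \mathrm{lead}(c)=t\}$, which is even. So $\sigma_{e_{t}}$ is even, and composing over the bits of $u$ gives $\Lambda^{4}(X^{u})e_{F}=+e_{u\oplus F}$.

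The only delicate point is the inversion-count step. I need that a permutation's parity equals the number of reversed pairs, and that flipping bit $t$ preserves the relative order of every pair whose highest differing bit is not $t$. Both follow at once from reading labels as binary integers. The bit-by-bit reduction avoids any case analysis of how a general $u$ interleaves the four translated labels.
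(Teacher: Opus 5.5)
Your proof is correct and is essentially the paper's argument: the sorting sign is $(-1)^{\#\mathrm{inv}}$, whether a pair inverts depends only on its difference $f\oplus f'$, and each nonzero difference of a plane occurs in exactly two of the six pairs. The only difference is your reduction to single-bit flips, which is harmless but unnecessary, since the same criterion applies to a general $u$ directly: a pair inverts if and only if $u$ has a one at the leading bit of $f\oplus f'$.
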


\begin{proof}
Translation by $u$ maps the sorted tuple of $F$ to a tuple of $u\oplus F$ whose sorting sign is $(-1)^{\#\mathrm{inv}}$, where an unordered pair $\{f,f'\}\subset F$ is inverted precisely when translation reverses its order. Two labels first differ at the most significant bit of $f\oplus f'$, and translation flips their order if and only if $u$ has a one at that bit: whether a pair inverts depends only on its difference $f\oplus f'$. In a plane every nonzero difference $w\in L=\mathrm{dir}\,F$ occurs in exactly two of the six pairs, so the inversions come in pairs and their number is even.
\end{proof}

\begin{corollary}[Basis of the antisymmetric stabilizer subspace]
\label{cor:codeform}
For each two-dimensional $L\le\Ftwo^{n}$ set
\begin{equation}
 \psi_{L}:=\sqrt q\sum_{F\parallel L}e_{F},
\end{equation}
a unit vector ($d/4=q^{-1}$ parallel planes, disjoint supports). Then $\PiF=\sum_{L}\ketbra{\psi_{L}}{\psi_{L}}$: the $\psi_{L}$ are an orthonormal basis of $W_{\mathrm F}:=\mathrm{ran}\,\PiF$, the antisymmetric stabilizer subspace, of dimension $g_{2}$.
\end{corollary}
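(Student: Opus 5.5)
The plan is to compute $\PiF$ directly in the wedge basis $e_{S}$, splitting the Pauli average into its $X$ and $Z$ parts. Write every Pauli as $P=X^{a}Z^{b}$ up to a phase. The fourth exterior power kills the phase, since $\Lambda^{4}(\omega P)=\omega^{4}\Lambda^{4}(P)$. Hence
\begin{equation*}
\PiF=\frac1{d^{2}}\sum_{a,b}\Lambda^{4}(X^{a})\Lambda^{4}(Z^{b}).
\end{equation*}
First, $\Lambda^{4}(Z^{b})e_{S}=(-1)^{b\cdot\sigma(S)}e_{S}$ with $\sigma(S)=\bigoplus_{z\in S}z$. Summing over $b$ gives $d$ times the projector onto $\mathrm{span}\{e_{S}:\sigma(S)=0\}$, and these $S$ are exactly the planes $F$. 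This leaves
\begin{equation*}
\PiF=\frac1d\sum_{a}\Lambda^{4}(X^{a})\,\Pi_{\mathrm{pl}},
\end{equation*}
where $\Pi_{\mathrm{pl}}$ projects onto the span of the plane wedges.

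Next, Lemma~\ref{lem:transpar} gives $\Lambda^{4}(X^{a})e_{F}=+e_{a\oplus F}$, with no sign. So on a plane wedge the $X$-average acts as $\frac1d\sum_{a}e_{a\oplus F}$. The translates $a\oplus F$ run over the $d/4$ planes parallel to $F$, and each one is hit exactly $|F|=4$ times, because the stabilizer of $F$ under translation is its direction $L$. Therefore
\begin{equation*}
\PiF e_{F}=\frac4d\sum_{F'\parallel F}e_{F'}=\sqrt q\,\psi_{L},
\end{equation*}
while $\PiF e_{S}=0$ for non-planes. It follows that $\PiF=\sum_{L}\ketbra{\psi_{L}}{\psi_{L}}$.

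It remains to check the normalization and count the basis vectors. The $d/4$ parallel planes of a fixed direction partition $\Ftwo^{n}$, so their wedges are distinct orthonormal vectors. This makes $\|\psi_{L}\|^{2}=q\cdot d/4=1$. Vectors for different directions have disjoint wedge supports, because a plane determines its direction; so the $\psi_{L}$ are orthonormal. They therefore span $\mathrm{ran}\,\PiF$, and their number is the number of two-dimensional subspaces, $g_{2}$.

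The only nontrivial input is the sign-freeness, and that is exactly Lemma~\ref{lem:transpar}. The one thing that needs care is the stabilizer count: a nonzero translation $a$ maps $F$ to itself precisely when $a\in L$. This is checked by the same argument already used in Appendix~\ref{app:symSOS}.
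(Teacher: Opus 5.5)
Your proposal is correct and follows essentially the same route as the paper's proof: the phases drop out under $\Lambda^{4}$, the $Z$-average projects onto the zero-XOR four-sets (the planes), and Lemma~\ref{lem:transpar} makes the $X$-average the sign-free sum $\PiF e_{F}=q\sum_{F'\parallel F}e_{F'}=\sqrt q\,\psi_{\mathrm{dir}F}$. Orthonormality then follows because a plane determines its direction. You also spell out the translation-stabilizer count (each parallel plane is hit $|L|=4$ times) and the normalization, which the paper leaves implicit.
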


\begin{proof}
In Eq.~\eqref{eq:flatprojector}, $\Lambda^{4}$ of the Hermitian Pauli $i^{u\cdot z}X^{u}Z^{z}$ carries the phase $i^{4u\cdot z}=1$, acts on $e_{S}$ as $(-1)^{z\cdot\sigma(S)}\Lambda^{4}(X^{u})e_{S}$ with $\sigma(S)=\bigoplus_{s\in S}s$, and the $z$-average annihilates every $e_{S}$ with $\sigma(S)\ne0$, i.e., every wedge coordinate that is not a plane. On a plane coordinate the $z$-average is trivial and Lemma~\ref{lem:transpar} gives $\PiF e_{F}=\frac1d\sum_{u}e_{u\oplus F}=q\sum_{F'\parallel F}e_{F'}=\sqrt q\,\psi_{\mathrm{dir}F}$, which is the stated projector. Orthonormality holds because a plane determines its direction, and the dimension count is Eq.~\eqref{eq:sectordims}.
\end{proof}

\emph{(ii) The one-hole layer is unitary.} The second fact is a Steiner-type completion property: every coordinate triple completes uniquely to a plane, so contracting one mode out of $W_{\mathrm F}$ sweeps $\Lambda^{3}\mathbb C^{d}$ exactly once, with the exact count $d\,g_{2}=\binom d3$.

\begin{theorem}[Unitary annihilation tensor]
\label{thm:annU}
The annihilation tensor
\begin{equation}
 D:\mathbb C^{d}\otimes W_{\mathrm F}\to\Lambda^{3}\mathbb C^{d},
 \qquad D(v\otimes\psi)=\iota_{v}\psi ,
\end{equation}
is $\sqrt{q}$ times a signed permutation in the bases $\{e_{z}\otimes\psi_{L}\}$ and $\{e_{T}\}_{|T|=3}$; consequently
\begin{equation}
 D^{\dagger}D=DD^{\dagger}=q\,\mathbb 1 .
 \label{eq:annU}
\end{equation}
\end{theorem}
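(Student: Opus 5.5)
We compute $D$ directly on the basis $\{e_{z}\otimes\psi_{L}\}$ of $\mathbb C^{d}\otimes W_{\mathrm F}$, using the plus-sign form of $\psi_{L}$ from Corollary~\ref{cor:codeform}. By definition,
\begin{equation*}
 D(e_{z}\otimes\psi_{L})=\sqrt q\sum_{F\parallel L}\iota_{e_{z}}e_{F}.
\end{equation*}
The parallel planes of direction $L$ partition $\Ftwo^{n}$, so exactly one of them contains $z$, namely $F=z\oplus L$. Only that term survives, and it equals $\pm\sqrt q\,e_{F\setminus\{z\}}$, with the sign coming from the position of $z$ in the sorted tuple of $F$. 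Each basis vector is therefore sent to $\sqrt q$ times a signed basis vector $e_{T}$, where $T=(z\oplus L)\setminus\{z\}$ is a three-subset.

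Next we show that $(z,L)\mapsto T$ is a bijection from $\Ftwo^{n}\times\{L:\dim L=2\}$ onto the three-subsets of $\Ftwo^{n}$. This is the Steiner completion property. Given $T=\{t_{1},t_{2},t_{3}\}$ with distinct elements, the only candidate is $z=t_{1}\oplus t_{2}\oplus t_{3}$ and $L=\mathrm{dir}\{t_{1},t_{2},t_{3},z\}$.
\begin{itemize}
\item[] $z\notin T$, since for example $z=t_{1}$ would force $t_{2}=t_{3}$.
\item[] The four points $T\cup\{z\}$ sum to zero, so they form a plane.
\item[] Conversely, any plane containing $T$ has fourth point equal to the sum of $T$, which gives uniqueness.
\end{itemize}
As a consistency check, the cardinalities agree: $d\,g_{2}=d(d-1)(d-2)/6=\binom d3$.

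Since the map on labels is bijective and each image carries a sign $\pm1$, $D/\sqrt q$ sends an orthonormal basis to $\pm$ an orthonormal basis of $\Lambda^{3}\mathbb C^{d}$. Hence $D/\sqrt q$ is a signed permutation matrix, so it is unitary, and $D^{\dagger}D=DD^{\dagger}=q\mathbb 1$, which is Eq.~\eqref{eq:annU}.

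The only point needing care is the convention for $\iota_{v}$ as a bilinear map (Appendix~\ref{app:flatconv}). We must check that $\iota_{e_{z}}e_{F}$ is indeed a single signed basis wedge with unit modulus and no extra phase. This holds because $e_{z}$ is real, so bilinear versus sesquilinear makes no difference here.
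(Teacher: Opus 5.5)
Your proposal is correct and follows the paper's proof essentially step for step. Only the coset $z\oplus L$ contributes, giving a single signed wedge $\pm\sqrt q\,e_{T}$. The Steiner completion $z=t_{1}\oplus t_{2}\oplus t_{3}$, with $L=\mathrm{dir}(T\cup\{z\})$, then gives the bijection onto three-subsets, and the count $d\,g_{2}=\binom d3$ confirms it.
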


\begin{proof}
Fix $z$ and $L$, and let $C_{z}(L)$ be the unique plane of direction $L$ containing $z$. In $\iota_{e_{z}}\psi_{L}$ only the plane $C_{z}(L)$ contributes, giving $\pm\sqrt{q}\,e_{T}$ with $T=C_{z}(L)\setminus\{z\}$: every column is $\pm$ one basis vector. The assignment $(z,L)\mapsto T$ is a bijection onto the coordinate three-subsets: given $T=\{t_{1},t_{2},t_{3}\}$, the only candidate is $z=t_{1}\oplus t_{2}\oplus t_{3}$, which does not lie in $T$, the four-set $T\cup\{z\}$ has vanishing XOR and hence is a plane, and $L=\mathrm{dir}(T\cup\{z\})$ is forced. The counts match, $d\,g_{2}=\binom d3$, so the matrix of $q^{-1/2}D$ is square with distinct signed basis columns: a signed permutation, whence both identities in \eqref{eq:annU}.
\end{proof}

\begin{corollary}[Uniform occupation]
\label{cor:occup}
For $\psi,\varphi\in W_{\mathrm F}$ and $u,v\in\mathbb C^{d}$:
$\langle a_{u}\psi,a_{v}\varphi\rangle=q\,\langle v,u\rangle\langle\psi,\varphi\rangle$; equivalently $\PiF a_{u}^{\dagger}a_{v}\PiF=q\langle v,u\rangle\PiF$.
\end{corollary}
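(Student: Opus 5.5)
The plan is to deduce Corollary~\ref{cor:occup} directly from the identity $D^{\dagger}D=q\,\mathbb 1$ of Theorem~\ref{thm:annU}, keeping careful track of the linearity conventions. The difficulty is that $D$ is defined through the \emph{bilinear} interior product $\iota_{v}$, while $a_{u}=\iota_{\bar u}$ is antilinear in $u$.

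The first step is to rewrite the left-hand side in terms of $D$. Since $a_{u}\psi=\iota_{\bar u}\psi=D(\bar u\otimes\psi)$ and $a_{v}\varphi=D(\bar v\otimes\varphi)$, Eq.~\eqref{eq:annU} gives
\begin{equation*}
 \langle a_{u}\psi,a_{v}\varphi\rangle
 =\langle \bar u\otimes\psi,\,D^{\dagger}D\,(\bar v\otimes\varphi)\rangle
 =q\,\langle\bar u,\bar v\rangle\langle\psi,\varphi\rangle .
\end{equation*}
In the standard basis, $\langle\bar u,\bar v\rangle=\sum_{z}u_{z}\bar v_{z}=\langle v,u\rangle$, which is exactly the claimed scalar prefactor. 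As a sanity check, I will verify the formula on basis vectors $u=e_{z}$, $v=e_{z'}$, $\psi=\psi_{L}$, $\varphi=\psi_{L'}$. By the signed-permutation structure, $\iota_{e_{z}}\psi_{L}$ and $\iota_{e_{z'}}\psi_{L'}$ are $\pm\sqrt q$ times basis vectors, and these basis vectors coincide only when $(z,L)=(z',L')$. The inner product is therefore $q\,\delta_{zz'}\delta_{LL'}$, and sesquilinear extension reproduces the formula.

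The second step is the operator form. Since $a_{u}^{\dagger}$ is the adjoint of $a_{u}$, we have $\langle\psi,a_{u}^{\dagger}a_{v}\varphi\rangle=\langle a_{u}\psi,a_{v}\varphi\rangle$ for all $\psi,\varphi\in W_{\mathrm F}$. The first identity then says that the compression of $a_{u}^{\dagger}a_{v}$ to $W_{\mathrm F}$ has the same matrix elements as $q\langle v,u\rangle\mathbb 1_{W_{\mathrm F}}$. Because $\PiF$ is the orthogonal projector onto $W_{\mathrm F}$, this gives $\PiF a_{u}^{\dagger}a_{v}\PiF=q\langle v,u\rangle\PiF$. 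Conversely, sandwiching the operator identity between $\psi$ and $\varphi$ recovers the first form, so the two statements are equivalent.

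The main obstacle is purely bookkeeping: getting the complex conjugations and the order of $u,v$ right, given the bilinear $\iota$ versus antilinear $a$. I will settle this with the explicit coordinate check $\langle\bar u,\bar v\rangle=\langle v,u\rangle$ under the convention that inner products are antilinear in the first slot.
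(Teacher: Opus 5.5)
Your proposal is correct and is essentially the paper's own argument: write $a_{u}\psi=D(\bar u\otimes\psi)$ and apply $D^{\dagger}D=q\,\mathbb 1$ from Theorem~\ref{thm:annU}. Your conjugation check $\langle\bar u,\bar v\rangle=\langle v,u\rangle$ and the passage to the compressed operator form fill in the bookkeeping that the paper leaves implicit.
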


\begin{proof}
$a_{u}\psi=D(\bar u\otimes\psi)$ and $D^{\dagger}D=q\,\mathbb 1$.
\end{proof}

\emph{(iii) The two-hole layer is exchangeable.} For $\alpha\in\Lambda^{2}\mathbb C^{d}$ define the double contraction restricted to $W_{\mathrm F}$, $b_{\alpha}:=\iota_{\alpha}|_{W_{\mathrm F}}:W_{\mathrm F}\to\Lambda^{2}\mathbb C^{d}$, linear in $\alpha$.

\begin{theorem}[Pair exchange]
\label{thm:pairex}
For all $\alpha,\beta\in\Lambda^{2}\mathbb C^{d}$,
\begin{equation}
 b_{\alpha}^{\dagger}b_{\beta}=b_{\bar\beta}^{\dagger}b_{\bar\alpha}
 \qquad\text{as operators on }W_{\mathrm F} .
 \label{eq:pairex}
\end{equation}
\end{theorem}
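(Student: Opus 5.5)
The plan is to reduce \eqref{eq:pairex} to matrix elements in the orthonormal basis $\{\psi_{L}\}$ of Corollary~\ref{cor:codeform}, then to basis two-forms, and finally to a combinatorial identity between planes whose signs are controlled by Lemma~\ref{lem:transpar}.

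\emph{Reduction to basis two-forms.} Both sides of \eqref{eq:pairex} are antilinear in $\alpha$ and linear in $\beta$. On the left this holds because $\iota$ is bilinear and $\langle\cdot,\cdot\rangle$ is antilinear in its first slot. On the right, $\bar\alpha$ sits in the linear slot and $\bar\beta$ in the antilinear one, which gives the same pattern. It therefore suffices to take $\alpha=e_{a}\wedge e_{b}$ and $\beta=e_{c}\wedge e_{d}$ with labels $a,b,c,d\in\Ftwo^{n}$. These forms are real, so conjugation acts trivially on them. Every $\psi_{L}$ also has real coefficients, so all matrix elements are real. The claim then reduces to the symmetry
\begin{equation*}
 \langle\iota_{e_{a}\wedge e_{b}}\psi_{L},\iota_{e_{c}\wedge e_{d}}\psi_{M}\rangle
 =\langle\iota_{e_{c}\wedge e_{d}}\psi_{L},\iota_{e_{a}\wedge e_{b}}\psi_{M}\rangle
\end{equation*}
for all directions $L,M$.

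\emph{Supports.} As in the proof of Theorem~\ref{thm:annU}, $\iota_{e_{a}\wedge e_{b}}\psi_{L}$ is nonzero only if $a\oplus b\in L$. In that case only the unique plane $G\parallel L$ through $a,b$ contributes, and the result is $\pm\sqrt q\,e_{G\setminus\{a,b\}}$. Suppose the left-hand side is nonzero. Then there are planes $G=\{a,b,x,y\}\parallel L$ and $G'=\{c,d,x,y\}\parallel M$ with the same remainder $\{x,y\}$, so $a\oplus b=x\oplus y=c\oplus d$.

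Put $u=a\oplus c$, so that $d=u\oplus b$. Let $H=u\oplus G=\{c,d,u\oplus x,u\oplus y\}$ and $H'=u\oplus G'=\{a,b,u\oplus x,u\oplus y\}$. Then $H\parallel L$ and $H'\parallel M$, and both have the common remainder $\{u\oplus x,u\oplus y\}$. Hence the right-hand side is supported exactly when the left-hand side is, and both moduli equal $q$. The argument is symmetric in the two pairs, which covers the converse.

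\emph{Signs, the main obstacle.} Write $\sigma(ab;G)$ for the sign in $\iota_{e_{a}\wedge e_{b}}e_{G}=\sigma(ab;G)\,e_{G\setminus\{a,b\}}$. The left-hand sign is then $\sigma(ab;G)\,\sigma(cd;G')$, and the right-hand sign is $\sigma(cd;H)\,\sigma(ab;H')$.

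To relate them, use that $\Lambda(X^{u})$ is an algebra automorphism compatible with contraction:
\begin{equation*}
 \iota_{X^{u}\alpha}\,\Lambda(X^{u})\omega=\Lambda(X^{u})\,\iota_{\alpha}\omega .
\end{equation*}
By Lemma~\ref{lem:transpar}, $\Lambda^{4}(X^{u})e_{G}=+e_{H}$. Moreover $X^{u}(e_{a}\wedge e_{b})=e_{c}\wedge e_{d}$, where the order of $c,d$ matches the order of $a,b$ as in the definition of $\sigma$. Finally $\Lambda^{2}(X^{u})e_{\{x,y\}}=\tau\,e_{\{u\oplus x,u\oplus y\}}$, with $\tau=\pm1$ the sorting sign of the translated pair. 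Together these give $\sigma(cd;H)=\tau\,\sigma(ab;G)$.

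The same reasoning applied to $G'\mapsto H'$, with $c,d\mapsto a,b$, gives $\sigma(ab;H')=\tau\,\sigma(cd;G')$. The factor $\tau$ is the same in both cases, because both translate the identical pair $\{x,y\}$ by the same $u$. The right-hand sign is therefore $\tau^{2}$ times the left-hand sign, which equals the left-hand sign. This proves \eqref{eq:pairex}.

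I expect the delicate point to be fixing the orientation conventions for $\sigma$ consistently. The order of $e_{a}\wedge e_{b}$ against the sorted order, and the identity $\iota_{u\wedge v}=\iota_{v}\iota_{u}$, must be applied the same way on both sides. Since both sides use the same ordered form $e_{a}\wedge e_{b}$, respectively $e_{c}\wedge e_{d}$, any such convention sign appears squared and cancels.
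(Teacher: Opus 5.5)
Your proof is correct, but its sign step takes a different route from the paper's. Both arguments reduce to coordinate edges and arrive at the same configuration. When the matrix element is nonzero, the edges $P=\{a,b\}$ and $Q=\{c,d\}$ and the common remainder $\mathcal E=\{x,y\}$ are parallel lines of one affine three-space $\Delta$, and the fourth parallel edge $\mathcal K$ of $\Delta$ is your $u\oplus\mathcal E$.

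The paper writes each contraction sign as a shuffle sign $\eta(\cdot,\cdot)$. It then derives $\eta(P,\mathcal E)\eta(Q,\mathcal E)=\eta(P,\mathcal K)\eta(Q,\mathcal K)$ from the even-shuffle lemma (Lemma~\ref{lem:evenshuffle}) in the form $e_{P\sqcup\mathcal E}\wedge e_{Q\sqcup\mathcal K}=e_\Delta=e_{Q\sqcup\mathcal E}\wedge e_{P\sqcup\mathcal K}$. That lemma rests on the fact that binary order restricted to a three-space is affine.

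You instead use that translation by $u=a\oplus c$ is an involution. It carries the ordered edge $(a,b)$ to $(c,d)$ and $\mathcal E$ to $\mathcal K$, so it maps one matrix element onto the other. Translation parity (Lemma~\ref{lem:transpar}) together with the equivariance $\iota_{\Lambda^2(X^u)\alpha}\Lambda(X^u)=\Lambda(X^u)\iota_\alpha$ then matches the signs, with the remainder's sorting sign $\tau$ appearing squared.

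Your route needs no new combinatorial lemma, only one the paper already proves for Corollary~\ref{cor:codeform}. It also shows that pair exchange is a Pauli symmetry. Since $\Lambda^4(X^u)$ fixes $W_{\mathrm F}$ pointwise, one has $b_{\Lambda^2(X^u)\gamma}=\Lambda^2(X^u)\,b_\gamma$ on $W_{\mathrm F}$. Hence $b_\beta^\dagger b_\alpha=b_\alpha^\dagger b_\beta$ follows at once whenever $\Lambda^2(X^u)$ swaps $\alpha$ and $\beta$, and all other coordinate pairs give zero on both sides by your support analysis.

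The paper's route keeps the argument local to the three-space. It also shares its sign input with the reference eight-form identity (Proposition~\ref{prop:eightform}), where the even-shuffle lemma is needed anyway.
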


For a generic subspace of four-forms no such exchange holds. The proof (Appendix~\ref{app:pairexproof}) is local to a single three-space: a nonvanishing matrix element forces the two contracted edges and their common complementary edge to be, together with a fourth edge, the four parallel edges of one three-space, and the residual wedge signs cancel by an \emph{even-shuffle} lemma---coordinate order restricted to a three-space is itself affine, so splits into parallel planes shuffle evenly (Lemma~\ref{lem:evenshuffle}). The conjugations in Eq.~\eqref{eq:pairex} are load-bearing: both sides are antilinear in $\alpha$ and linear in $\beta$, the unconjugated exchange has mismatched sesquilinearity and fails at complex arguments, and every frame-summed identity below is arranged so that only the conjugated form enters.

\emph{(iv) The two-contraction channels are a compressed swap and an eight-form.} With the creation map
\begin{equation}
 C:\mathbb C^{d}\otimes W_{\mathrm F}\to\Lambda^{5}\mathbb C^{d},\qquad
 C(v\otimes\psi)=v\wedge\psi ,
 \label{eq:creationdef}
\end{equation}
two composite channels of $W_{\mathrm F}$ are exactly solvable. First, the Hermitian exchange Gram $\mathcal X_{x}$ of the two-hole maps $c_{x,\alpha}=\iota_{\bar\alpha}\iota_{\bar x}|_{W_{\mathrm F}}$ [Eq.~\eqref{eq:Xxdef}] is $q$ times a self-adjoint \emph{compression of the literal swap} of two factors $W_{\mathrm F}$, $\mathcal X_{x}=q\,\mathsf S_{x}$ (Proposition~\ref{prop:swapid}); its spectrum is therefore controlled for free, $-q\,\mathbb 1\preceq\mathcal X_{x}\preceq q\,\mathbb 1$. Second, the fully contracted two-hole channel $\mathscr R:=DC^{\dagger}=\sum_{z}a_{z}\PiF a_{z}$ is the flattening of a canonical \emph{reference eight-form}:
\begin{equation}
 \mathscr R=q\,K_{\Theta},\qquad
 \Theta_{n}:=\!\!\sum_{\Delta\ \text{three-space}}\!\! e_{\Delta}\in\Lambda^{8}\mathbb C^{d},
 \label{eq:eightform}
\end{equation}
where $K_{\Theta}:\Lambda^{5}\mathbb C^{d}\to\Lambda^{3}\mathbb C^{d}$ has matrix elements $\langle e_{T},K_{\Theta}e_{S}\rangle:=\langle e_{S}\wedge e_{T},\Theta_{n}\rangle$ on real coordinate wedges (Proposition~\ref{prop:eightform}). Because $\Theta_{n}$ is a form, exchanging one particle against one hole across $\mathscr R$ is antisymmetric, and this yields a particle--hole \emph{transport identity} (Theorem~\ref{thm:transport}); transport is what decouples the occupancy amplitudes inside the five-plane bound below.

\emph{(v) The construction is self-similar across affine subspaces.} The same construction inside any $k$-dimensional affine subspace $H\subseteq\Ftwo^{n}$---one all-plus unit vector per direction $L\le\mathrm{dir}H$, supported on the planes of direction $L$ inside $H$---defines a local projector $\Pi_{H}$ with parameter $q_{H}=4/2^{k}$ [Eq.~\eqref{eq:localflat}], and facts (i)--(ii) hold verbatim inside $H$. Moreover, the subspace built inside $H$ resolves \emph{exactly} into those built inside its affine hyperplanes $H''$: for $k\ge4$,
\begin{equation}
 q_{H}\,\mathbb 1-\Pi_{H}
 =\frac{4}{2^{k}-8}\sum_{H''}
 \Bigl(\,2q_{H}\,P^{(4)}_{H''}-\Pi_{H''}\Bigr),
 \label{eq:resfinal}
\end{equation}
with $P^{(4)}_{H''}$ the projector onto $\Lambda^{4}\mathbb C^{H''}$ (Proposition~\ref{prop:hyperres}). Both sides are defect operators of the same shape at successive scales, so positivity propagates down to an eight-point base case; this recursion is the entire proof of the rank-seven bound (Appendix~\ref{ssec:ocseven}).

\subsection{Two operator inequalities}
\label{ssec:caps}

Two of the matching facets of Appendix~\ref{ssec:matching} require operator norm bounds on maps built from $W_{\mathrm F}$; the third needs only the unitarity of fact (ii). The proofs of the two inequalities (Appendices~\ref{app:twolinkproof} and~\ref{app:pcfiveproof}) are canonical anticommutation bookkeeping on top of the dictionary: the subspace $W_{\mathrm F}$ enters only through facts (ii)--(iv).

\begin{theorem}[Two-link bound]
\label{thm:twolink}
For every $d=2^{n}$, $n\ge2$, and every unit $x\in\mathbb C^{d}$, the two-link map
\begin{equation}
 \widetilde T_{x}:\Lambda^{2}x^{\perp}\otimes W_{\mathrm F}\to\Lambda^{5}x^{\perp},\quad
 \widetilde T_{x}(\alpha\otimes\psi)=\alpha\wedge a_{x}\psi ,
 \label{eq:twolinkdef}
\end{equation}
---two free wedge legs linked by one contraction of the $W_{\mathrm F}$ factor---admits the positive-semidefinite certificate
\begin{equation}
 2q\,\mathbb 1-\widetilde T_{x}^{\dagger}\widetilde T_{x}
 =\mathcal H_{x}^{\dagger}\mathcal H_{x}+q\,(\mathbb 1-\mathsf S_{x})\ \succeq\ 0 ,
 \label{eq:twolink}
\end{equation}
where $\mathcal H_{x}$ is the holomorphic hopping map and $\mathsf S_{x}$ the compressed swap [Eqs.~\eqref{eq:hopdef} and~\eqref{eq:swapdef}]; hence $\|\widetilde T_{x}\|^{2}\le2q$ and also $\widetilde T_{x}\widetilde T_{x}^{\dagger}\preceq2q\,\mathbb 1$.
\end{theorem}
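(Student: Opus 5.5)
The plan is to compute $\widetilde T_{x}^{\dagger}\widetilde T_{x}$ directly with canonical anticommutation relations and reduce everything to the structural facts (ii)--(iv) about $W_{\mathrm F}$. Everything will be written in an orthonormal basis $\{x,f_{1},\dots,f_{d-1}\}$ with $f_{i}\in x^{\perp}$, and $a_{i}$ will denote the annihilator of $f_{i}$. For two-forms $\alpha,\beta\in\Lambda^{2}x^{\perp}$ and $\psi,\varphi\in W_{\mathrm F}$, the matrix element is
\begin{equation*}
\langle\alpha\wedge a_{x}\psi,\beta\wedge a_{x}\varphi\rangle
=\langle a_{x}\psi,\,a^{\dagger}_{\alpha}{}^{\!\dagger}a^{\dagger}_{\beta}\,a_{x}\varphi\rangle .
\end{equation*}
Here $a_{\alpha}^{\dagger}=\alpha\wedge\cdot$, and the pair operator $a_{\alpha}a^{\dagger}_{\beta}$ will be normal-ordered. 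Normal ordering produces three kinds of terms. The fully contracted term is $\langle\alpha,\beta\rangle\langle a_{x}\psi,a_{x}\varphi\rangle$, which equals $q\langle\alpha,\beta\rangle\langle\psi,\varphi\rangle$ by Corollary~\ref{cor:occup}. The one-contraction terms have the form $\sum\langle\cdot\rangle\,a^{\dagger}_{i}a_{j}$ sandwiched between $a_{x}\psi$ and $a_{x}\varphi$. The fully normal-ordered term has two creators and two annihilators.

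The first step handles the fully contracted term, which immediately gives the ``$q\,\mathbb 1$'' part of $2q\,\mathbb 1$. The second step treats the one-contraction terms. Since $f_{i}\perp x$, we have $a_{j}a_{x}\psi=-a_{x}a_{j}\psi$, so these terms become matrix elements of $a_{x}^{\dagger}a_{x}$ between one-hole states $a_{j}\psi$. Expanding $\mathbb 1=a_{x}a_{x}^{\dagger}+a_{x}^{\dagger}a_{x}$, these should be organized into the two-hole maps $c_{x,\alpha}=\iota_{\bar\alpha}\iota_{\bar x}|_{W_{\mathrm F}}$ and, through them, into the exchange Gram $\mathcal X_{x}$. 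The intended identification is that this piece equals $-\mathcal X_{x}=-q\,\mathsf S_{x}$ by Proposition~\ref{prop:swapid}. The pair exchange, Theorem~\ref{thm:pairex}, with its conjugations placed as the sesquilinearity dictates, is what allows a term of the form $b_{\alpha}^{\dagger}b_{\beta}$ arising from the contraction to be rewritten as a swap. Combined with step one, this yields $q(\mathbb 1+\mathsf S_{x})$.

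The third step treats the fully normal-ordered remainder, which is where the holomorphic hopping map $\mathcal H_{x}$ must appear. I would try to write this remainder as $-\mathcal H_{x}^{\dagger}\mathcal H_{x}$ plus a correction. The correction is a term in which a particle and a hole are exchanged across $\mathscr R=\sum_{z}a_{z}\PiF a_{z}=q\,K_{\Theta}$. The transport identity, Theorem~\ref{thm:transport}, which rests on the antisymmetry of the eight-form $\Theta_{n}$ of Eq.~\eqref{eq:eightform}, should show that this exchange changes sign, so the correction merges with the swap piece, that is, with the $2q$ versus $q\,\mathsf S_{x}$ bookkeeping. Collecting terms then gives $\widetilde T_{x}^{\dagger}\widetilde T_{x}=2q\,\mathbb 1-q(\mathbb 1-\mathsf S_{x})-\mathcal H_{x}^{\dagger}\mathcal H_{x}$, which is Eq.~\eqref{eq:twolink}. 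Positivity follows because $\mathsf S_{x}\preceq\mathbb 1$, as it is a compression of a unitary involution. The bound $\widetilde T_{x}\widetilde T_{x}^{\dagger}\preceq 2q\,\mathbb 1$ then follows from $\|\widetilde T_{x}\|=\|\widetilde T_{x}^{\dagger}\|$.

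I expect the third step to be the main obstacle. The identification of the normal-ordered four-operator term with $-\mathcal H_{x}^{\dagger}\mathcal H_{x}$ will depend on the precise definition of $\mathcal H_{x}$ in Eq.~\eqref{eq:hopdef}, and the signs must be checked so that every frame-summed piece appears only in the conjugated, correctly sesquilinear form. My first attempt would be to sum over the orthonormal frame of $\Lambda^{2}x^{\perp}$ and use completeness, $\sum_{\alpha}\ketbra{\alpha}{\alpha}=P_{\Lambda^{2}x^{\perp}}$. This turns the remainder into $\sum_{i<j}a_{j}a_{i}\,\cdot\,a_{i}^{\dagger}a_{j}^{\dagger}$-type contractions, which I would then compare term by term with $\mathcal H_{x}^{\dagger}\mathcal H_{x}$, falling back on the matrix elements supported on single three-spaces if the comparison does not close.
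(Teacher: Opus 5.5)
Your overall architecture matches the paper's: an exact three-layer CAR expansion of $\|\widetilde T_{x}\Psi\|^{2}$, with the zero-contraction layer equal to $q\,\mathbb 1$ by Corollary~\ref{cor:occup}, and positivity read off at the end from $\mathcal H_{x}^{\dagger}\mathcal H_{x}\succeq0$ and $\mathsf S_{x}\preceq\mathbb 1$. However, you have assigned the other two layers the wrong way round, so your steps~2 and~3 aim at identities that are false. Count contractions. Write $\Psi=\sum_{a<b}r_{a}\wedge r_{b}\otimes\psi_{ab}$ for a frame $\{r_{a}\}$ of $x^{\perp}$. The four single-delta terms of the normal-ordered kernel sum to $-S$, with $S=\sum_{a,b,c}\langle a_{b}a_{x}\psi_{ac},a_{c}a_{x}\psi_{ab}\rangle$. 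These are \emph{double} contractions of $\psi$, landing in $\Lambda^{2}\mathbb C^{d}$. That is the type $\iota_{r_{b}}\iota_{x}\psi$ out of which $\mathcal H_{x}$ is built, not the triple contractions $c_{x,\alpha}=\iota_{\bar\alpha}\iota_{\bar x}$ into $\mathbb C^{d}$ that define $\mathcal X_{x}$. The Gram $S$ is \emph{crossed}: the mode $r_{b}$ is removed from $\psi_{ac}$ and the mode $r_{c}$ from $\psi_{ab}$. Theorem~\ref{thm:pairex}, applied with $\alpha=\bar x\wedge\bar r_{b}$ and $\beta=\bar x\wedge\bar r_{c}$, uncrosses it: $\langle a_{b}a_{x}\psi_{ac},a_{c}a_{x}\psi_{ab}\rangle=\langle\iota_{r_{c}}\iota_{x}\psi_{ac},\iota_{r_{b}}\iota_{x}\psi_{ab}\rangle$. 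The sum over $b,c$ then factorizes into $\sum_{a}\|(\mathcal H_{x}\Psi)_{a}\|^{2}$. Pair exchange therefore produces the perfect square $-\mathcal H_{x}^{\dagger}\mathcal H_{x}$, not a swap. Your step~2 already fails at $d=4$, $x=e_{0}$. There $q=1$, $\dim W_{\mathrm F}=1$, $\widetilde T_{x}=0$, $\mathsf S_{x}=\mathbb 1$ and $\mathcal H_{x}^{\dagger}\mathcal H_{x}=2\,\mathbb 1$. So the one-contraction layer is $-2\,\mathbb 1$, which is neither $-q\,\mathsf S_{x}$ nor $+q\,\mathsf S_{x}$.

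Conversely, the fully normal-ordered term $\tfrac14\sum_{a,b,c,d}\langle a_{d}a_{c}a_{x}\psi_{ab},a_{b}a_{a}a_{x}\psi_{cd}\rangle$ is made of triple contractions, $a_{b}a_{a}a_{x}\psi_{cd}=c_{x,r_{a}\wedge r_{b}}\psi_{cd}$, crossed in exactly the pattern of Eq.~\eqref{eq:Xxdef}. By its $a\leftrightarrow b$, $c\leftrightarrow d$ symmetry it is literally $+\langle\Psi,\mathcal X_{x}\Psi\rangle$. Proposition~\ref{prop:swapid}, which needs only $DD^{\dagger}=q\,\mathbb 1$ and Lemma~\ref{lem:adjres}, turns it into $+q\,\mathsf S_{x}$. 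So your step~3 cannot close as planned: this term is not $-\mathcal H_{x}^{\dagger}\mathcal H_{x}$ plus a transport correction. Nor does any $\mathscr R=\sum_{z}a_{z}\PiF a_{z}$ appear in the expansion for Theorem~\ref{thm:transport} to act on; the paper uses transport and the eight-form only for the five-plane bound. There is also a sign slip in your step~2. A contribution $-q\,\mathsf S_{x}$ added to step~1 gives $q(\mathbb 1-\mathsf S_{x})$, not the $q(\mathbb 1+\mathsf S_{x})$ you then carry forward. The swap actually enters $\widetilde T_{x}^{\dagger}\widetilde T_{x}$ with a plus sign, from the two-contraction layer. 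With the layers reassigned you get $\widetilde T_{x}^{\dagger}\widetilde T_{x}=q\,\mathbb 1-\mathcal H_{x}^{\dagger}\mathcal H_{x}+q\,\mathsf S_{x}$, which is Proposition~\ref{prop:carid}, and your positivity argument then finishes the proof.
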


The certificate is an exact three-layer CAR expansion of $\|\widetilde T_{x}\Psi\|^{2}$ (Proposition~\ref{prop:carid}): the zero-contraction layer is $q\,\mathbb 1$ by uniform occupation; pair exchange converts the one-contraction cross-Gram into the perfect square $-\mathcal H_{x}^{\dagger}\mathcal H_{x}$; and the two-contraction layer is exactly the compressed swap of fact (iv), $\mathcal X_{x}=q\,\mathsf S_{x}\preceq q\,\mathbb 1$.

\begin{theorem}[Five-plane operator bound]
\label{thm:pcfive}
For every $d=2^{n}$, $n\ge3$, and all orthonormal $x,y\in\mathbb C^{d}$, with $R:=(\mathbb Cx\oplus\mathbb Cy)^{\perp}$, the map $B_{xy}:=a_{y}a_{x}C:\mathbb C^{d}\otimes W_{\mathrm F}\to\Lambda^{3}R$ obeys
\begin{equation}
 B_{xy}B_{xy}^{\dagger}\ \preceq\ q\,\mathbb 1 ,
 \qquad\text{equivalently}\qquad
 C^{\dagger}n_{x}n_{y}C\ \preceq\ q\,\mathbb 1 .
 \label{eq:pcfive}
\end{equation}
\end{theorem}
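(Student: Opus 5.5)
The plan is to establish the second (equivalent) form $C^{\dagger}n_{x}n_{y}C\preceq q\,\mathbb 1$ on $\mathbb C^{d}\otimes W_{\mathrm F}$. We then use $B_{xy}^{\dagger}B_{xy}=C^{\dagger}a_{x}^{\dagger}a_{y}^{\dagger}a_{y}a_{x}C=C^{\dagger}n_{x}n_{y}C$ together with $\|BB^{\dagger}\|=\|B^{\dagger}B\|$ to get the first form. The quadratic form to control is $\|a_{y}a_{x}(v\wedge\psi)\|^{2}$ for a general $\Psi=\sum_{i}v_{i}\otimes\psi_{i}$.

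First, expand $a_{y}a_{x}(v\wedge\psi)$ by the CAR. This gives $a_{y}a_{x}(v\wedge\psi)=\langle x,v\rangle\, (-a_{y}\psi)+\langle y,v\rangle\, a_{x}\psi+v'\wedge a_{y}a_{x}\psi$ up to signs, where $v'$ is the component of $v$ in $R$. After projecting onto $\Lambda^{3}R$, the first two terms keep only their $R$-parts. Decompose $v=\langle x,v\rangle x+\langle y,v\rangle y+v'$, and split $\Psi$ accordingly into $\Psi_{x}+\Psi_{y}+\Psi_{R}$. The norm squared then becomes a sum of three diagonal layers plus cross terms.

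The diagonal layers are handled as follows. The $x$ and $y$ one-contraction layers are bounded by uniform occupation (Corollary~\ref{cor:occup}): $\langle a_{y}\psi,a_{y}\varphi\rangle=q\langle\psi,\varphi\rangle$, and restricting to $R$ only lowers it. The zero-contraction layer $\|\sum v'_{i}\wedge a_{y}a_{x}\psi_{i}\|^{2}$ is of two-link type. I would bound it using Theorem~\ref{thm:twolink} applied at $x$, with the contracted vector $y$ absorbed into the wedge leg, or else via the compressed swap $\mathsf S_{x}$.

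The cross terms are the main obstacle. They pair $a_{y}\psi$ (or $a_{x}\psi$) against $v'\wedge a_{y}a_{x}\varphi$, and they must be shown not to push the total beyond $q$ rather than $2q$ or $3q$. Here I would invoke the transport identity of Theorem~\ref{thm:transport}: through $\mathscr R=DC^{\dagger}=qK_{\Theta}$, exchanging the particle $v'$ with the hole $x$ or $y$ is antisymmetric. The goal is that the cross terms combine with the diagonal layers into $q\|\Psi\|^{2}$ minus a sum of squares, exactly as in the certificate of Eq.~\eqref{eq:twolink}. Pair exchange (Theorem~\ref{thm:pairex}) symmetrizes the two-hole Gram so the cross Gram becomes a perfect square $-\mathcal H^{\dagger}\mathcal H$.

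Concretely, I would aim for an identity
\[
q\,\mathbb 1-C^{\dagger}n_{x}n_{y}C=\mathcal K_{xy}^{\dagger}\mathcal K_{xy}+\text{(positive compressed-swap term)},
\]
with $\mathcal K_{xy}$ a hopping map built from $a_{x},a_{y}$ and the eight-form $\Theta_{n}$. The remaining trace-level check is that the normalization matches at a diagonal example, namely $x=e_{0}$, $y=e_{1}$ and $\Psi$ supported on the three-space containing them. The requirement $n\ge3$ enters precisely because this check needs a full three-space, so that $\Theta_{n}\neq0$.
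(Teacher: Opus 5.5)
Your slot decomposition is correct and is where the paper also starts: on the slots $\mathbb Cx$, $\mathbb Cy$, $R$ the map $B_{xy}$ acts as $A_{y}$, $-A_{x}$, $C_{2}$. But the two steps that actually carry the proof are not supplied.

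\emph{The $R$ block.} This block needs the sharp bound $C_{2}^{\dagger}C_{2}\preceq q\,\mathbb 1$, and the two-link bound is the wrong tool. Its constant is $2q$. Moreover your block $\sum_{a}r_{a}\wedge a_{y}a_{x}\psi_{a}$ has a one-form leg and two contractions, not a two-form leg and one contraction. The paper instead writes $C_{2}=a_{y}a_{x}C_{R}$ and uses the occupation bound $C_{R}^{\dagger}n_{x}C_{R}=q\,\mathbb 1-\mathcal G_{x}^{\dagger}\mathcal G_{x}$, which follows from CAR, uniform occupation and pair exchange. Because the free leg has degree one there is no two-contraction layer, so no compressed-swap term belongs in the certificate you are aiming for.

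\emph{The $x$ and $y$ blocks.} Bounding these layers by $q$ throws away exactly what must pay for the cross terms. The exact values are $A_{x}^{\dagger}A_{x}=A_{y}^{\dagger}A_{y}=q\,\mathbb 1-\mathsf Q$ with $\mathsf Q=A_{2}^{\dagger}A_{2}$. After discarding $\mathsf Q$ nothing is left to absorb cross terms of either sign, since the bound is already sharp.

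\emph{The missing idea.} It is the decoupling $\mathsf MA_{x}=A_{x}\mathsf Q$, $\mathsf MA_{y}=A_{y}\mathsf Q$, with $\mathsf M=C_{2}C_{2}^{\dagger}$. It comes from composing two intertwinings:
\begin{itemize}
\item the pair-exchange intertwining $A_{x}^{\dagger}C_{2}=\mathsf L^{\dagger}\mathcal G_{x}$, together with $\mathsf L^{\dagger}\mathsf L=\mathsf Q$;
\item the transport intertwining $\mathcal G_{x}C_{2}^{\dagger}=\mathsf LA_{x}^{\dagger}$.
\end{itemize}
This is the only place transport enters, and your sketch does not say how it would.

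Pair exchange alone does not close the argument. Suppose you complete squares in your domain-side form with $A_{x}^{\dagger}C_{2}=\mathsf L^{\dagger}\mathcal G_{x}$ and $A_{y}^{\dagger}C_{2}=\mathsf L^{\dagger}\mathcal G_{y}$, as your sketch suggests. You get $\|\mathsf L\Psi_{x}-\mathcal G_{y}\Psi_{R}\|^{2}+\|\mathsf L\Psi_{y}+\mathcal G_{x}\Psi_{R}\|^{2}$ plus the remainder $q\,\mathbb 1-C_{2}^{\dagger}C_{2}-\mathcal G_{x}^{\dagger}\mathcal G_{x}-\mathcal G_{y}^{\dagger}\mathcal G_{y}$ on $R\otimes W_{\mathrm F}$. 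That remainder is not positive. Take $x=e_{0}$, $y=e_{1}$, $L=\{0,1,w,w\oplus1\}$ and $\Psi_{R}=e_{w}\otimes\psi_{L}$. Then $C_{2}\Psi_{R}=e_{w}\wedge A_{2}\psi_{L}=0$ because $A_{2}\psi_{L}\propto e_{\{w,w\oplus1\}}$. On the other hand $\|\mathcal G_{x}\Psi_{R}\|^{2}=\|\mathcal G_{y}\Psi_{R}\|^{2}=q$, so the remainder equals $-q$.

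\emph{The route the paper takes.} Once decoupling is available, work on the codomain. Orthogonality of the slots gives $B_{xy}B_{xy}^{\dagger}=A_{x}A_{x}^{\dagger}+A_{y}A_{y}^{\dagger}+\mathsf M$, with no cross terms at all. Then $A_{x}^{\dagger}A_{y}=0$ and decoupling make $E=\mathrm{ran}\,A_{x}\oplus\mathrm{ran}\,A_{y}$ reducing, with $B_{xy}B_{xy}^{\dagger}|_{E}=q\,\mathbb 1$. On $E^{\perp}$ only $\mathsf M\preceq q\,\mathbb 1$ survives.

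\emph{Minor points.} A normalization check at a diagonal example cannot establish an operator inequality. The hypothesis $n\ge3$ enters through the eight-form identity $\mathscr R=qK_{\Theta}$ that underlies transport, not through such a check.
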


The proof (Appendix~\ref{app:pcfiveproof}) decomposes $W_{\mathrm F}$ by $x,y$ occupancy, $\psi=\psi_{0}+x\wedge A_{x}\psi+y\wedge A_{y}\psi+x\wedge y\wedge A_{2}\psi$, and shows that transport decouples the double-hole block from the single-hole ones, $\mathsf MA_{x}=A_{x}\mathsf Q$ with $\mathsf Q=A_{2}^{\dagger}A_{2}$ and $\mathsf M$ the frame-summed double-hole Gram. The subspace $E=\mathrm{ran}\,A_{x}\oplus\mathrm{ran}\,A_{y}$ is then reducing with $B_{xy}B_{xy}^{\dagger}|_{E}=q\,\mathbb 1$ exactly---so the constant is sharp---while on $E^{\perp}$ an occupation bound from pair exchange controls the remainder.

\begin{corollary}[Five-plane bound]
\label{cor:fiveplane}
Let $Y\subset\mathbb C^{d}$ be five-dimensional with orthonormal basis $v_{1},\dots,v_{5}$ and unit volume form $\Omega_{Y}=v_{1}\wedge\cdots\wedge v_{5}$. Then
\begin{equation}
 \sum_{j=1}^{5}\bigl\|\PiF\bigl(v_{1}\wedge\cdots\widehat{v_{j}}\cdots\wedge v_{5}\bigr)\bigr\|^{2}
 =\|C^{\dagger}\Omega_{Y}\|^{2}\ \le\ q ,
 \label{eq:fiveplane}
\end{equation}
the hat denoting omission.
\end{corollary}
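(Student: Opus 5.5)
The plan is to reduce Corollary~\ref{cor:fiveplane} to Theorem~\ref{thm:pcfive} by expressing both sides through the creation map $C$ of Eq.~\eqref{eq:creationdef}. I would establish the identity first and then the bound.

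For the identity, compute the adjoint $C^{\dagger}:\Lambda^{5}\mathbb C^{d}\to\mathbb C^{d}\otimes W_{\mathrm F}$. For $v\otimes\psi$ we have $\langle v\otimes\psi,C^{\dagger}\Omega\rangle=\langle v\wedge\psi,\Omega\rangle=\langle\psi,a_{v}\Omega\rangle$. Summing over an orthonormal basis $\{e_{z}\}$ then gives
\begin{equation}
 C^{\dagger}\Omega=\sum_{z}e_{z}\otimes\PiF a_{z}\Omega ,
 \qquad
 \|C^{\dagger}\Omega\|^{2}=\sum_{z}\|\PiF a_{z}\Omega\|^{2}.
\end{equation}
This sum is basis independent, so I would pick an orthonormal basis of $\mathbb C^{d}$ extending $v_{1},\dots,v_{5}$. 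For $z$ orthogonal to $Y$, $a_{z}\Omega_{Y}=0$. For $z=v_{j}$, $a_{v_{j}}\Omega_{Y}=\pm v_{1}\wedge\cdots\widehat{v_{j}}\cdots\wedge v_{5}$. This yields the equality in Eq.~\eqref{eq:fiveplane}; the signs are irrelevant under the norm.

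For the inequality, I would write $\|C^{\dagger}\Omega_{Y}\|^{2}=\langle\Omega_{Y},CC^{\dagger}\Omega_{Y}\rangle$, so it suffices to show $\langle\Omega_{Y},CC^{\dagger}\Omega_{Y}\rangle\le q$. Choose $x=v_{1}$ and $y=v_{2}$ in Theorem~\ref{thm:pcfive}. Since $\Omega_{Y}$ contains both $x$ and $y$, we have $n_{x}n_{y}\Omega_{Y}=\Omega_{Y}$, and therefore
\begin{equation}
 \|C^{\dagger}\Omega_{Y}\|^{2}=\|C^{\dagger}n_{x}n_{y}\Omega_{Y}\|^{2}=\|B_{xy}^{\dagger}\,\omega\|^{2},
 \qquad \omega:=a_{y}a_{x}\Omega_{Y}=\pm v_{3}\wedge v_{4}\wedge v_{5}\in\Lambda^{3}R,
\end{equation}
with $\|\omega\|=1$. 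The middle equality uses that $n_{x}n_{y}=a_{x}^{\dagger}a_{y}^{\dagger}a_{y}a_{x}$ up to ordering and sign (since $a_{x},a_{y}$ anticommute for orthonormal $x,y$), together with $B_{xy}^{\dagger}=C^{\dagger}a_{x}^{\dagger}a_{y}^{\dagger}$. Theorem~\ref{thm:pcfive} gives $B_{xy}B_{xy}^{\dagger}\preceq q\,\mathbb 1$, hence $\|B_{xy}^{\dagger}\omega\|^{2}\le q\|\omega\|^{2}=q$.

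This requires $n\ge3$ as in Theorem~\ref{thm:pcfive}, which holds automatically because $d\ge5$. The main obstacle is the sign and ordering bookkeeping of $a_{y}a_{x}$ against the adjoint of $B_{xy}$. I would handle it by writing $\Omega_{Y}=x\wedge y\wedge\omega'$ with $\omega'\in\Lambda^{3}R$, so that $\Omega_{Y}=a_{x}^{\dagger}a_{y}^{\dagger}\omega'$ directly and $\|C^{\dagger}\Omega_{Y}\|=\|B_{xy}^{\dagger}\omega'\|$ without tracking intermediate signs.
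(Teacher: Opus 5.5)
Your proposal is correct and follows the paper's proof essentially step for step. The equality comes from the adjoint resolution $C^{\dagger}\Omega=\sum_{z}e_{z}\otimes\PiF a_{z}\Omega$, evaluated in an orthonormal basis extending $v_{1},\dots,v_{5}$. The bound comes from writing $\Omega_{Y}=x\wedge y\wedge\eta$ with $x=v_{1}$, $y=v_{2}$, $\eta=v_{3}\wedge v_{4}\wedge v_{5}\in\Lambda^{3}R$, so that $C^{\dagger}\Omega_{Y}=B_{xy}^{\dagger}\eta$ and Theorem~\ref{thm:pcfive} applies; the paper cites Lemma~\ref{lem:modesum} for the basis independence you assert, but your derivation of $C^{\dagger}\Omega$ holds in any orthonormal basis and so already supplies it.
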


\subsection{Maps used in the proofs}
\label{app:flatmaps}

The following subsections prove the deferred statements of Appendices~\ref{ssec:code} and~\ref{ssec:caps}: the even-shuffle lemma and pair exchange (Theorem~\ref{thm:pairex}), the compressed swap and the reference eight-form of fact (iv), the two operator inequalities (Theorems~\ref{thm:twolink} and~\ref{thm:pcfive}), and the wedge-level identification behind the blossom row of Proposition~\ref{prop:rows}. Throughout, the only inputs from $W_{\mathrm F}$ are Theorem~\ref{thm:annU} and---where explicitly invoked---Theorem~\ref{thm:pairex}; every other step is generic exterior-algebra bookkeeping. For orientation, the maps used below and their types:
\begin{center}
\begin{tabular}{lll}
 map & domain & codomain\\
 \hline\\[-7pt]
 $D$ & $\mathbb C^{d}\otimes W_{\mathrm F}$ & $\Lambda^{3}\mathbb C^{d}$\\
 $C$ & $\mathbb C^{d}\otimes W_{\mathrm F}$ & $\Lambda^{5}\mathbb C^{d}$\\
 $\widetilde T_{x}$ & $\Lambda^{2}x^{\perp}\otimes W_{\mathrm F}$ & $\Lambda^{5}x^{\perp}$\\
 $\mathcal H_{x}$ & $\Lambda^{2}x^{\perp}\otimes W_{\mathrm F}$ & $\mathbb C^{d-1}\otimes\Lambda^{2}\mathbb C^{d}$\\
 $c_{x,\alpha}$,\ $\mathsf L$ & $W_{\mathrm F}$ & $\mathbb C^{d}$,\ $\Lambda^{2}\mathbb C^{d}$\\
 $B_{xy}$ & $\mathbb C^{d}\otimes W_{\mathrm F}$ & $\Lambda^{3}R$\\
 $C_{2}$,\ $\mathcal G_{x}$ & $\bigoplus_{a}W_{\mathrm F}$ & $\Lambda^{3}R$,\ $\Lambda^{2}\mathbb C^{d}$\\
 $\mathsf Q$,\ $\mathsf M$ & $W_{\mathrm F}$,\ $\Lambda^{3}R$ & $W_{\mathrm F}$,\ $\Lambda^{3}R$\\
\end{tabular}
\end{center}
Here $R:=(\mathbb Cx\oplus\mathbb Cy)^{\perp}$ as in the five-plane block, whose frame indexes the tuples of $\bigoplus_{a}W_{\mathrm F}$.

\subsection{Conventions and mode resolutions}
\label{app:flatconv}

In addition to the conventions of Appendix~\ref{ssec:code}: $(\iota_{v})^{\dagger}=\bar v\wedge\cdot$, so that $\{a_{u},a_{v}^{\dagger}\}=\langle u,v\rangle\mathbb 1$ and $\{a_{u},a_{v}\}=0$; and for a unit $x$, $\Lambda^{k}\mathbb C^{d}=(x\wedge\Lambda^{k-1}x^{\perp})\oplus\Lambda^{k}x^{\perp}$ orthogonally, $a_{x}(x\wedge\sigma+\rho)=\sigma$ for $\sigma\in\Lambda^{k-1}x^{\perp}$, $\rho\in\Lambda^{k}x^{\perp}$, and $x\wedge\cdot$ is an isometry on $\Lambda x^{\perp}$.

\begin{lemma}[Adjoint mode resolutions]
\label{lem:adjres}
For $\tau\in\Lambda^{3}\mathbb C^{d}$ and $\xi\in\Lambda^{5}\mathbb C^{d}$,
\begin{equation}
 D^{\dagger}\tau=\sum_{z}e_{z}\otimes\PiF(e_{z}\wedge\tau),\qquad
 C^{\dagger}\xi=\sum_{z}e_{z}\otimes\PiF(a_{z}\xi),
 \label{eq:adjres}
\end{equation}
with $a_{z}:=\iota_{e_{z}}$. Consequently $\mathscr R=DC^{\dagger}=\sum_{z}a_{z}\PiF a_{z}$.
\end{lemma}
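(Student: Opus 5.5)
The statement to prove is Lemma~\ref{lem:adjres}: the two adjoint formulas for $D$ and $C$, and the resulting identity $\mathscr R=DC^{\dagger}=\sum_{z}a_{z}\PiF a_{z}$. I will verify the adjoints by pairing against a product basis, then compose.

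\emph{Adjoint of $D$.} The domain $\mathbb C^{d}\otimes W_{\mathrm F}$ carries the product inner product. Its elements $e_{z}\otimes\psi$, with $z$ running over the computational basis and $\psi\in W_{\mathrm F}$, span it, so it suffices to check
\begin{equation*}
\langle e_{z}\otimes\psi,\,D^{\dagger}\tau\rangle=\langle D(e_{z}\otimes\psi),\tau\rangle=\langle \iota_{e_{z}}\psi,\tau\rangle .
\end{equation*}
For real basis vectors the convention $(\iota_{v})^{\dagger}=\bar v\wedge\cdot$ gives $\iota_{e_{z}}^{\dagger}=e_{z}\wedge\cdot$. Hence the right-hand side equals $\langle\psi,e_{z}\wedge\tau\rangle$. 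Since $\psi\in W_{\mathrm F}=\mathrm{ran}\,\PiF$ and $\PiF$ is an orthogonal projector, this is $\langle\psi,\PiF(e_{z}\wedge\tau)\rangle$. This holds for all $z$ and $\psi$, so the $e_{z}$-component of $D^{\dagger}\tau$ is $\PiF(e_{z}\wedge\tau)$, which is the first formula.

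\emph{Adjoint of $C$.} The same pairing gives
\begin{equation*}
\langle C(e_{z}\otimes\psi),\xi\rangle=\langle e_{z}\wedge\psi,\xi\rangle=\langle\psi,\iota_{e_{z}}\xi\rangle=\langle\psi,\PiF a_{z}\xi\rangle ,
\end{equation*}
using $a_{z}=\iota_{\bar e_{z}}=\iota_{e_{z}}$ for real basis vectors. This is the second formula. The adjoints are taken with the domain of each map restricted to $W_{\mathrm F}$, so the projector $\PiF$ appears exactly once; this is the only point where the restriction matters.

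\emph{Composition.} Apply $D$ to the second formula:
\begin{equation*}
DC^{\dagger}\xi=\sum_{z}D\bigl(e_{z}\otimes\PiF a_{z}\xi\bigr)=\sum_{z}\iota_{e_{z}}\PiF a_{z}\xi=\sum_{z}a_{z}\PiF a_{z}\xi .
\end{equation*}
This is the claimed expression for $\mathscr R$.

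The only step needing care is tracking conjugations in the sesquilinear convention: the interior product is bilinear in $v$, and the inner product is antilinear in the first slot. I would check this on basis vectors first, where all conjugations are trivial because the $e_{z}$ are real, and then extend to general $\tau,\xi$ by linearity. No other step requires more than this bookkeeping.
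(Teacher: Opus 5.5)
Your proof is correct and follows the paper's own argument. The paper pairs $D^{\dagger}\tau$ and $C^{\dagger}\xi$ against a general $v\otimes\psi$ and reads off the coefficients $\overline{v_{z}}$, while you pair directly against $e_{z}\otimes\psi$; this is equivalent, and in both versions $\PiF$ is inserted using $\psi\in W_{\mathrm F}$ and $D$ is then applied to the second resolution to obtain $\mathscr R$.
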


\begin{proof}
$\langle v\otimes\psi,D^{\dagger}\tau\rangle
=\langle\iota_{v}\psi,\tau\rangle
=\langle\psi,\bar v\wedge\tau\rangle
=\sum_{z}\overline{v_{z}}\langle\psi,e_{z}\wedge\tau\rangle$ for $\psi\in W_{\mathrm F}$, and analogously $\langle v\otimes\psi,C^{\dagger}\xi\rangle=\langle v\wedge\psi,\xi\rangle=\langle\psi,a_{v}\xi\rangle=\sum_{z}\overline{v_{z}}\langle\psi,a_{z}\xi\rangle$. Applying $D$ to the second resolution gives $\mathscr R$.
\end{proof}

\begin{lemma}[Basis independence of mode sums]
\label{lem:modesum}
For any orthonormal basis $\{f_{j}\}$ of $\mathbb C^{d}$ and any operator $X$ on the exterior algebra,
\begin{equation}
\begin{gathered}
 \sum_{j}\iota_{f_{j}}\,X\,a_{f_{j}}=\sum_{z}a_{z}\,X\,a_{z},
 \qquad
 \sum_{j}a_{f_{j}}^{\dagger}\,X\,a_{f_{j}}=\sum_{z}a_{z}^{\dagger}\,X\,a_{z} ,\\
 \sum_{j}a_{f_{j}}\,X\,a_{f_{j}}^{\dagger}=\sum_{z}a_{z}\,X\,a_{z}^{\dagger} .
\end{gathered}
\end{equation}
\end{lemma}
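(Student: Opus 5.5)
The three identities all come from one fact: mode sums of the form $\sum_z (\text{operator linear in } e_z)\,X\,(\text{operator linear in } e_z)$ transform covariantly under a unitary change of orthonormal basis. I will fix an arbitrary orthonormal basis $\{f_j\}$ and write $f_j=\sum_z U_{zj}e_z$ with $U$ unitary, so that $\sum_j U_{zj}\overline{U_{z'j}}=\delta_{zz'}$. Then I record how each elementary map depends on its vector argument, using the conventions of Appendix~\ref{app:flatconv}:
\begin{itemize}
\item the interior product $\iota_v$ is linear in $v$, so $\iota_{f_j}=\sum_z U_{zj}\,\iota_{e_z}=\sum_z U_{zj}\,a_z$, because $a_z=\iota_{\bar e_z}=\iota_{e_z}$ for real basis vectors;
\item the annihilator $a_v=\iota_{\bar v}$ is antilinear, so $a_{f_j}=\sum_{z'}\overline{U_{z'j}}\,a_{z'}$;
\item the creator $a_v^\dagger=v\wedge\cdot$ is linear, so $a_{f_j}^\dagger=\sum_{z}U_{zj}\,a_z^\dagger$.
\end{itemize}

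For the first identity, substituting these expansions gives
\[
\sum_j\iota_{f_j}Xa_{f_j}=\sum_{z,z'}\Bigl(\sum_jU_{zj}\overline{U_{z'j}}\Bigr)a_zXa_{z'}=\sum_z a_zXa_z .
\]
The second identity is the same computation: the creator contributes $U_{zj}$ and the annihilator contributes $\overline{U_{z'j}}$, and unitarity again collapses the double sum to the diagonal $z=z'$.

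For the third identity, $a_{f_j}$ brings $\overline{U_{zj}}$ and $a_{f_j}^\dagger$ brings $U_{z'j}$. The sum $\sum_j\overline{U_{zj}}U_{z'j}$ is the complex conjugate of the column-orthogonality relation, so it also equals $\delta_{zz'}$ and the sum collapses to $\sum_z a_zXa_z^\dagger$.

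The main obstacle is not analytic. It is keeping track of which maps are linear and which are antilinear in their argument. Each mode sum must pair one linear factor with one antilinear factor, so that the coefficient becomes $U\,U^\dagger$ (or its conjugate) rather than $U\,U^{\mathsf T}$. For example, $\sum_j\iota_{f_j}X\iota_{f_j}$ would not be basis independent for complex bases, because both factors are linear and the coefficient would be $\sum_j U_{zj}U_{z'j}$, which is not $\delta_{zz'}$ in general. This is why the first identity is stated with $\iota$ on the left and $a$ on the right. I would check explicitly, using $\iota_{e_z}=a_z$ for the real computational basis, that each of the three identities has this mixed structure.
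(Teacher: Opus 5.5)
Your proposal is correct and is essentially the paper's proof: expanding $f_j$ in the coordinate basis shows that each of the three mode sums pairs one linear slot with one antilinear slot, so the coefficient of each coordinate term is $\sum_j (f_j)_z\overline{(f_j)_{z'}}=\delta_{zz'}$, or its complex conjugate, by unitarity. Your warning that the same-variance sum $\sum_j\iota_{f_j}X\iota_{f_j}$ is not basis independent matches the paper's own parenthetical caveat.
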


\begin{proof}
In all three sums the coefficient of the corresponding coordinate term is $\sum_{j}(f_{j})_{i}\overline{(f_{j})_{k}}=\delta_{ik}$ by unitarity. (The same-variance sums $\sum_{j}\iota_{f_{j}}X\iota_{f_{j}}$ are \emph{not} basis-independent; every frame-summed identity in this appendix is of one of the three displayed types, and this is load-bearing at complex frames.)
\end{proof}

\subsection{Even shuffle and pair exchange}
\label{app:pairexproof}

\begin{lemma}[Even shuffle]
\label{lem:evenshuffle}
Let $\Delta$ be a three-space, partitioned into the two parallel planes $G\sqcup G'$ of direction $L\le\mathrm{dir}\Delta$. Then $e_{G}\wedge e_{G'}=+\,e_{\Delta}$.
\end{lemma}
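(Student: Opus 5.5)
The claim is a sign statement about sorting, so the plan is to reduce it to counting inversions. Write $\Delta=\{z_1<\dots<z_8\}$ in binary-integer order, so that $e_\Delta=e_{z_1}\wedge\cdots\wedge e_{z_8}$. With $G=\{g_1<g_2<g_3<g_4\}$ and $G'=\{g'_1<\dots<g'_4\}$, the wedge $e_G\wedge e_{G'}$ is $(-1)^{I}e_\Delta$, where $I$ is the number of pairs $(g,g')\in G\times G'$ with $g>g'$. It therefore suffices to show that $I$ is even.

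The first step is to understand how the order of $\mathbb F_2^n$ restricts to $\Delta$, in the spirit of Lemma~\ref{lem:transpar}. Two labels are compared at the most significant bit in which they differ. Let $\Delta=x+\mathrm{dir}\,\Delta$ and choose a basis $\ell_1,\ell_2,\ell_3$ of $\mathrm{dir}\,\Delta$ in echelon form: the leading (most significant) bits $b_1>b_2>b_3$ are distinct, and each $\ell_i$ vanishes at $b_j$ for $j<i$. Then the element $x+\sum_i c_i\ell_i$ carries the bit $x_{b_i}\oplus c_i$ at position $b_i$, and no other bit at a position $\ge b_3$ varies except through contributions that are fixed once $c_1,\dots,c_{i-1}$ are fixed. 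Comparing two elements therefore reduces to comparing the vectors $(c_1\oplus x_{b_1},c_2\oplus x_{b_2},c_3\oplus x_{b_3})$ lexicographically. The conclusion is that the order on $\Delta$ is the lexicographic order on $\mathbb F_2^3$, twisted by a translation. This is the precise form of ``coordinate order restricted to a three-space is affine''. We may relabel $\Delta\cong\mathbb F_2^3$ so that the order becomes the standard binary order $0<1<\dots<7$, with bit $1$ the most significant.

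Under this identification $L$ becomes a two-dimensional linear subspace of $\mathbb F_2^3$, and $G,G'$ are $L$ and its complement. There are seven choices of $L$, given by the kernels of the nonzero functionals $f$. For each we count $I$ for the pair $(G,G')$, which reduces to a short enumeration. For $f=c_1$ we get $G=\{0,1,2,3\}$ and $I=0$. For $f=c_2$ we get $\{0,1,4,5\}$ versus $\{2,3,6,7\}$, and $I=4$ (the pairs $4>2,3$ and $5>2,3$). The remaining five cases are handled in the same way; alternatively one can argue that $I\equiv\sum_{g\in G}\#\{g'<g\}$, and that the fiber structure gives even parity. If the pair is given in the order $(G',G)$ instead, then $I$ changes by $16-I$, which is also even, so the statement is symmetric. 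Note that translating $\Delta$ permutes which coset is called $G$, so it is enough to verify the claim for every unordered pair.

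The main obstacle is the first step: justifying cleanly that the restricted order is an affine (translated lexicographic) order, because lower bits of the $\ell_i$ might a priori interfere. The echelon-basis argument should settle this. Comparison is decided at the top differing bit, and that bit is always one of the $b_i$, because two elements that agree at $b_1,\dots,b_{i-1}$ have equal $c_1,\dots,c_{i-1}$. Hence the differing $c$'s are exactly visible at their leading bits.
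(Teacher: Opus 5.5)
Your plan has the same skeleton as the paper's proof: reduce to the parity of the inversion count $I$, then describe the integer order on $\Delta$ affinely. The first step, however, is misstated, and in the second step you replace the paper's parity argument with a case enumeration that is mostly left unchecked. In the first step, the condition you impose on the basis (that $\ell_i$ vanishes at $b_j$ for $j<i$) is automatic, since $b_j>b_i$. It does not give the bit $x_{b_i}\oplus c_i$ at position $b_i$: the bit at $b_2$ is $x_{b_2}\oplus c_2\oplus c_1(\ell_1)_{b_2}$, and similarly at $b_3$. These prefix-dependent terms sit at the deciding bits themselves and can flip comparisons, so the order is not lexicographic in $(c_i\oplus x_{b_i})$. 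For example, take $\Delta=\mathbb F_2^3$, $x=0$, $\ell_1=6$, $\ell_2=2$, $\ell_3=1$. The coefficient vectors $(1,0,0)$ and $(1,1,0)$ give the labels $6$ and $4$, so $(1,1,0)$ precedes $(1,0,0)$. The interference you worried about is real, and the echelon argument as you wrote it does not remove it.

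The repair is short. Either row-reduce so that each $\ell_i$ vanishes at all other leading bits (the paper's choice), or use the bit triple $z\mapsto(z_{b_1},z_{b_2},z_{b_3})$ itself as coordinates. That map is an affine bijection $\Delta\to\mathbb F_2^3$ with unitriangular linear part. Your final paragraph already shows that two distinct points first differ at some $b_k$ after agreeing at $b_1,\dots,b_{k-1}$, so the integer order is the lexicographic order of these bits. An affine bijection sends the two parallel planes to the two cosets of a two-dimensional subspace, so your reduction to $L$ and its complement then goes through.

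In the second step you check two of the seven directions. The remaining five, $f=c_3,\ c_1+c_2,\ c_1+c_3,\ c_2+c_3,\ c_1+c_2+c_3$, give $I=6,8,8,8,8$, so the enumeration does close, but you need to write it out. The paper avoids the case split. If $g_i$ is the $i$th smallest element of $G$, exactly $\mathrm{pos}_\Delta(g_i)-(i-1)$ elements of $G'$ lie below it, so $I=\sum_{g\in G}\mathrm{pos}_\Delta(g)-6$. Modulo two, only the last digit $c_3$ of $\mathrm{pos}_\Delta=4c_1+2c_2+c_3$ survives. Since $c_3$ is affine, it is constant or balanced on the plane $G$, so $\sum_{g\in G}c_3(g)$ is even. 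This is what your ``fiber structure'' remark should say, with $g'$ ranging over all of $\Delta$. It treats every plane uniformly, so you no longer need to track which coset is $G$ or use the $16-I$ symmetry.
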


\begin{proof}
First, coordinate order on $\Delta$ is affine: choose a basis $v_{1},v_{2},v_{3}$ of $\mathrm{dir}\Delta$ with distinct leading bits $j_{1}>j_{2}>j_{3}$, each $v_{k}$ vanishing at the other two leading bits, and the point $a\in\Delta$ with vanishing bits at $j_{1},j_{2},j_{3}$. Two points with coefficient vectors $c\ne c'$ first differ at bit $j_{k}$ for the minimal $k$ with $c_{k}\ne c_{k}'$, where the bit value is $c_{k}$; hence integer order on $\Delta$ is lexicographic in $(c_{1},c_{2},c_{3})$ and the position of a point, $\mathrm{pos}_{\Delta}=4c_{1}+2c_{2}+c_{3}$, has affine binary digits. Now the shuffle sign of moving sorted $G$ in front of sorted $G'$ is $(-1)^{N}$ with $N=\sum_{g\in G}\mathrm{pos}_{\Delta}(g)-(0{+}1{+}2{+}3)$. Modulo $2$ only $\sum_{g}c_{3}(g)$ survives, and an affine $\Ftwo$-valued function restricted to the plane $G$ is constant or balanced, so this sum is even; with $0{+}1{+}2{+}3=6$ even, $N$ is even.
\end{proof}

\begin{proof}[Proof of Theorem~\ref{thm:pairex}]
By sesquilinearity it suffices to prove $b_{P}^{\dagger}b_{Q}=b_{Q}^{\dagger}b_{P}$ for coordinate edges $P=\{p_{1}<p_{2}\}$, $Q$; the general case then follows from
$b_{\alpha}^{\dagger}b_{\beta}=\sum_{P,Q}\overline{\alpha_{P}}\beta_{Q}\,b_{P}^{\dagger}b_{Q}
=\sum_{P,Q}\overline{\alpha_{P}}\beta_{Q}\,b_{Q}^{\dagger}b_{P}=b_{\bar\beta}^{\dagger}b_{\bar\alpha}$. Note first that $b_{P}\psi_{L}=0$ unless $w_{P}:=p_{1}\oplus p_{2}\in L$, in which case exactly one plane survives: with $F=P\sqcup\mathcal E$ the plane of direction $L$ containing $P$ (the complementary edge $\mathcal E$ is parallel to $P$, since $F$ has vanishing XOR),
$b_{P}\psi_{L}=\sqrt q\,\eta(P,\mathcal E)\,e_{\mathcal E}$, where $\eta(A,B)$ denotes the shuffle sign $e_{A}\wedge e_{B}=\eta(A,B)e_{A\sqcup B}$; the contraction itself leaves no extra sign by the convention $\iota_{e_{P}}=\iota_{e_{p_{2}}}\iota_{e_{p_{1}}}$, which strips leading factors.

Now compare $\langle b_{P}\psi_{L},b_{Q}\psi_{M}\rangle$ with $\langle b_{Q}\psi_{L},b_{P}\psi_{M}\rangle$; for $P=Q$ they coincide trivially, so assume $P\ne Q$. A nonzero left side requires a common complementary edge $\mathcal E$, parallel to both $P$ and $Q$; then $w_{P}=w_{Q}=:w$, the edges $P,Q,\mathcal E$ are distinct parallel lines of direction $\{0,w\}$, and their affine span is a three-space $\Delta$ partitioned into $P,Q,\mathcal E$ and a fourth parallel edge $\mathcal K$. The plane of direction $L$ containing $Q$ is $\Delta\setminus(P\sqcup\mathcal E)=Q\sqcup\mathcal K$, and the plane of direction $M$ containing $P$ is $P\sqcup\mathcal K$; the same configuration with $\mathcal K$ in place of $\mathcal E$ shows that the two sides vanish together. When nonzero,
\begin{align}
 \langle b_{P}\psi_{L},b_{Q}\psi_{M}\rangle&=q\,\eta(P,\mathcal E)\eta(Q,\mathcal E),\notag\\
 \langle b_{Q}\psi_{L},b_{P}\psi_{M}\rangle&=q\,\eta(Q,\mathcal K)\eta(P,\mathcal K).
\end{align}
Lemma~\ref{lem:evenshuffle} applied to the two parallel splits of $\Delta$ gives
$e_{P\sqcup\mathcal E}\wedge e_{Q\sqcup\mathcal K}=e_{\Delta}=e_{Q\sqcup\mathcal E}\wedge e_{P\sqcup\mathcal K}$; substituting $e_{P\sqcup\mathcal E}=\eta(P,\mathcal E)e_{P}\wedge e_{\mathcal E}$ etc.\ and commuting the disjoint even-degree blocks yields
$\eta(P,\mathcal E)\eta(Q,\mathcal K)=\eta(Q,\mathcal E)\eta(P,\mathcal K)$, i.e., $\eta(P,\mathcal E)\eta(Q,\mathcal E)=\eta(P,\mathcal K)\eta(Q,\mathcal K)$ after multiplying by two squared signs. The two matrix elements coincide.
\end{proof}

\subsection{The two-link bound}
\label{app:twolinkproof}

Fix a unit $x\in\mathbb C^{d}$, put $R=x^{\perp}$ with orthonormal basis $\{r_{a}\}_{a=1}^{d-1}$, and write $a_{a}:=a_{r_{a}}$. Elements of $\Lambda^{2}R\otimes W_{\mathrm F}$ are written $\Psi=\sum_{a<b}r_{a}\wedge r_{b}\otimes\psi_{ab}$ with $\psi_{ba}=-\psi_{ab}$, and the two-link map \eqref{eq:twolinkdef} lands in $\Lambda^{5}R$ since $a_{x}\psi\in\Lambda^{3}x^{\perp}$. Define the \emph{holomorphic hopping map} and the \emph{two-hole maps}
\begin{align}
 (\mathcal H_{x}\Psi)_{a}&=\sum_{b}\iota_{r_{b}}\iota_{x}\,\psi_{ab},\notag\\
 c_{x,\alpha}&=\iota_{\bar\alpha}\,\iota_{\bar x}\big|_{W_{\mathrm F}}:W_{\mathrm F}\to\mathbb C^{d}
 \quad(\text{antilinear in }\alpha),
 \label{eq:hopdef}
\end{align}
and the Hermitian operator $\mathcal X_{x}$ on $\Lambda^{2}R\otimes W_{\mathrm F}$ by its matrix elements on frame edges,
\begin{equation}
 \langle\alpha\otimes\psi,\ \mathcal X_{x}(\beta\otimes\varphi)\rangle
 =\langle c_{x,\beta}\psi,\ c_{x,\alpha}\varphi\rangle .
 \label{eq:Xxdef}
\end{equation}

\begin{proposition}[Exact CAR expansion]
\label{prop:carid}
On $\Lambda^{2}R\otimes W_{\mathrm F}$,
\begin{equation}
 \widetilde T_{x}^{\dagger}\widetilde T_{x}
 =q\,\mathbb 1-\mathcal H_{x}^{\dagger}\mathcal H_{x}+\mathcal X_{x} .
 \label{eq:carid}
\end{equation}
\end{proposition}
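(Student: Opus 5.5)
We compute $\|\widetilde T_{x}\Psi\|^{2}$ directly for $\Psi=\sum_{a<b}r_{a}\wedge r_{b}\otimes\psi_{ab}$ and normal-order with the canonical anticommutation relations. Writing
\begin{equation*}
 \widetilde T_{x}\Psi=\sum_{a<b}a_{a}^{\dagger}a_{b}^{\dagger}a_{x}\psi_{ab},
\end{equation*}
we obtain
\begin{equation*}
 \|\widetilde T_{x}\Psi\|^{2}=\sum_{a<b,\,c<e}\langle a_{x}\psi_{ab},\,a_{b}a_{a}a_{c}^{\dagger}a_{e}^{\dagger}a_{x}\psi_{ce}\rangle.
\end{equation*}
We then move the two annihilators $a_{b}a_{a}$ through the creators $a_{c}^{\dagger}a_{e}^{\dagger}$. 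This produces three layers, according to how many of $\{a,b\}$ are contracted against $\{c,e\}$. These are a full-contraction term with Kronecker deltas $\delta_{ac}\delta_{be}-\delta_{ae}\delta_{bc}$, a one-contraction term of the type $\delta_{ac}\,a_{e}^{\dagger}a_{b}$, and a zero-contraction term with both annihilators acting to the right. Because $a_{x}\psi\in\Lambda^{3}R$ and the $r_{a}$ are orthogonal to $x$, no $x$-terms are generated, so the bookkeeping stays inside $\Lambda R$.

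\emph{Full contraction.} Here the deltas collapse the sum to $\sum_{a<b}\|a_{x}\psi_{ab}\|^{2}$. By Corollary~\ref{cor:occup} this equals $q\sum_{a<b}\|\psi_{ab}\|^{2}=q\|\Psi\|^{2}$, which gives the $q\,\mathbb 1$ term.

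\emph{Zero contraction.} After anticommuting, this term reads $\langle a_{e}a_{c}a_{x}\psi_{ab},\,a_{b}a_{a}a_{x}\psi_{ce}\rangle$ up to sign. In the notation of Eq.~\eqref{eq:Xxdef}, with $\alpha=r_{a}\wedge r_{b}$ and $\beta=r_{c}\wedge r_{e}$, it is exactly $\langle c_{x,\beta}\psi_{ab},c_{x,\alpha}\psi_{ce}\rangle$. Summed over frame edges, this is the definition of $\mathcal X_{x}$. We must check that the sign from the reordering $\iota_{\bar\alpha}\iota_{\bar x}$ versus $a_{b}a_{a}a_{x}$ matches the convention $\iota_{u\wedge v}=\iota_{v}\iota_{u}$; we expect it to be a plus sign because both sides use the same ordering of three annihilators.

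\emph{One contraction (the expected obstacle).} This term has the form
\begin{equation*}
 -\sum_{a,b,e}\langle a_{e}a_{x}\psi_{ab},\,a_{b}a_{x}\psi_{ae}\rangle,
\end{equation*}
after using antisymmetry in the index pairs to combine the four delta placements with a common sign. It is a cross Gram: the hole index $e$ on the left is paired with the hole index $b$ on the right, so it is not yet a square. We would rewrite each factor as a double contraction $b_{r_{e}\wedge x}$ and $b_{r_{b}\wedge x}$ restricted to $W_{\mathrm F}$. Then we would apply pair exchange (Theorem~\ref{thm:pairex}) in the form
\begin{equation*}
 b_{\alpha}^{\dagger}b_{\beta}=b_{\bar\beta}^{\dagger}b_{\bar\alpha},
\end{equation*}
to swap which two-form is contracted on which side. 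This turns the summand into
\begin{equation*}
 \langle \iota_{r_{b}}\iota_{x}\psi_{ab'},\,\iota_{r_{b'}}\iota_{x}\psi_{ab}\rangle
\end{equation*}
with matched outer index $a$. Summing over the inner indices $b,b'$ then assembles $\sum_{a}\|\sum_{b}\iota_{r_{b}}\iota_{x}\psi_{ab}\|^{2}=\|\mathcal H_{x}\Psi\|^{2}$, with the overall minus sign.

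The delicate points in this step are as follows. The pair exchange is stated with conjugated arguments, so the frame $\{r_{a}\}$ and $x$ may be complex, and we must keep the conjugations so that only the basis-independent mode sums of Lemma~\ref{lem:modesum} appear. We may first take a real frame by this lemma, with $x$ fixed. Finally, we must verify that the $\delta_{ae}$-type cross terms, which carry the opposite sign, combine with the $\delta_{ac}$ terms via $\psi_{ba}=-\psi_{ab}$ rather than cancel. If pair exchange alone does not close the square, we would additionally invoke uniform occupation on the diagonal $b=e$ pieces to absorb the leftover terms.
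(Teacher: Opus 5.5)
Your proposal is correct and follows the paper's proof: the same normal-ordered three-layer expansion of $\|\widetilde T_{x}\Psi\|^{2}$. In it, the delta--delta layer gives $q\,\mathbb 1$ by uniform occupation. The fully normal-ordered layer is $\langle\Psi,\mathcal X_{x}\Psi\rangle$ with a plus sign, since $c_{x,r_{a}\wedge r_{b}}=a_{b}a_{a}a_{x}$. The single-contraction cross Gram becomes $-\|\mathcal H_{x}\Psi\|^{2}$ by pair exchange alone: the four delta placements do combine to $-S$, and no uniform-occupation correction is needed.

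Two fixes are needed in your pair-exchange step. First, keep the conjugations, $a_{b'}a_{x}=\iota_{\bar r_{b'}}\iota_{\bar x}=b_{\bar x\wedge\bar r_{b'}}$ on $W_{\mathrm F}$. Theorem~\ref{thm:pairex} with $\alpha=\bar x\wedge\bar r_{b'}$ and $\beta=\bar x\wedge\bar r_{b}$ then gives $\langle\iota_{r_{b}}\iota_{x}\psi_{ab},\iota_{r_{b'}}\iota_{x}\psi_{ab'}\rangle$, in which the contracted index matches the coefficient index. The formula you displayed has them mismatched and is still a cross Gram that does not sum to a square. Second, drop the real-frame reduction: for complex $x$, $x^{\perp}$ has no real orthonormal basis, and the conjugated bookkeeping makes it unnecessary.
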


\begin{proof}
All four operators are Hermitian, so it suffices to match quadratic forms. With $\varphi_{ab}:=a_{x}\psi_{ab}$, the representation $\widetilde T_{x}\Psi=\tfrac12\sum_{a,b}a_{a}^{\dagger}a_{b}^{\dagger}\varphi_{ab}$ and three anticommutations give the normal-ordered kernel
\begin{align}
 a_{b}a_{a}a_{c}^{\dagger}a_{d}^{\dagger}
 ={}&\delta_{ac}\delta_{bd}-\delta_{ad}\delta_{bc}
 -\delta_{ac}a_{d}^{\dagger}a_{b}+\delta_{bc}a_{d}^{\dagger}a_{a}\notag\\
 &+\delta_{ad}a_{c}^{\dagger}a_{b}-\delta_{bd}a_{c}^{\dagger}a_{a}
 +a_{c}^{\dagger}a_{d}^{\dagger}a_{b}a_{a} ,
\end{align}
whose three groups are evaluated separately in $\|\widetilde T_{x}\Psi\|^{2}=\tfrac14\sum_{a,b,c,d}\langle\varphi_{ab},a_{b}a_{a}a_{c}^{\dagger}a_{d}^{\dagger}\varphi_{cd}\rangle$.

\emph{Zero contractions.} By antisymmetry and Corollary~\ref{cor:occup}, the two delta products give $\tfrac12\sum_{a,b}\|a_{x}\psi_{ab}\|^{2}=q\|\Psi\|^{2}$.

\emph{One contraction.} Set $S:=\sum_{a,b,c}\langle a_{b}\varphi_{ac},a_{c}\varphi_{ab}\rangle$. Each of the four single-delta terms contributes $-\tfrac14 S$: the first directly after relabeling, the second and third after one relabeling that costs one antisymmetry sign, and the fourth after two relabelings whose signs cancel; the total is $-S$. On $W_{\mathrm F}$ the composite $a_{b}a_{x}=\iota_{\bar r_{b}}\iota_{\bar x}$ equals $b_{\bar x\wedge\bar r_{b}}$, and Theorem~\ref{thm:pairex} with $\alpha=\bar x\wedge\bar r_{b}$, $\beta=\bar x\wedge\bar r_{c}$ converts the cross-Gram into a square:
\begin{equation}
 \langle a_{b}a_{x}\psi_{ac},a_{c}a_{x}\psi_{ab}\rangle
 =\langle \iota_{r_{c}}\iota_{x}\psi_{ac},\ \iota_{r_{b}}\iota_{x}\psi_{ab}\rangle ,
\end{equation}
so that summing over $b,c$ factorizes $S=\sum_{a}\|(\mathcal H_{x}\Psi)_{a}\|^{2}=\|\mathcal H_{x}\Psi\|^{2}$.

\emph{Two contractions.} The remaining term is $\tfrac14\sum_{a,b,c,d}\langle a_{d}a_{c}\varphi_{ab},a_{b}a_{a}\varphi_{cd}\rangle$; since $a_{b}a_{a}\varphi_{cd}=c_{x,r_{a}\wedge r_{b}}\psi_{cd}$ and the summand is invariant under $a\leftrightarrow b$ and $c\leftrightarrow d$, it equals the sum over unordered frame edges, which is $\langle\Psi,\mathcal X_{x}\Psi\rangle$ by Eq.~\eqref{eq:Xxdef}.
\end{proof}

\begin{proposition}[Compressed swap]
\label{prop:swapid}
Let $J_{x}\alpha=x\wedge\alpha$ (an isometry $\Lambda^{2}R\to\Lambda^{3}\mathbb C^{d}$), let
\begin{equation}
 \mathcal V_{x}:=q^{-1/2}D^{\dagger}J_{x}:\Lambda^{2}R\to\mathbb C^{d}\otimes W_{\mathrm F} ,
\end{equation}
let $\Sigma_{W_{\mathrm F}}$ be the literal swap of two factors $W_{\mathrm F}$, and set
\begin{equation}
 \mathsf S_{x}:=(\mathcal V_{x}^{\dagger}\otimes\mathbb 1)
 (\mathbb 1\otimes\Sigma_{W_{\mathrm F}})(\mathcal V_{x}\otimes\mathbb 1)
 \ \ \text{on }\Lambda^{2}R\otimes W_{\mathrm F} .
 \label{eq:swapdef}
\end{equation}
Then $\mathcal V_{x}^{\dagger}\mathcal V_{x}=\mathbb 1$, so $\mathsf S_{x}$ is a self-adjoint compression of a unitary with $-\mathbb 1\preceq\mathsf S_{x}\preceq\mathbb 1$, and
\begin{equation}
 \mathcal X_{x}=q\,\mathsf S_{x} .
 \label{eq:swapid}
\end{equation}
\end{proposition}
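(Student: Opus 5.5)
The proof splits into two parts: the isometry property of $\mathcal V_{x}$, and then the identification of the Gram matrix $\mathcal X_{x}$ with $q$ times the compressed swap, checked on product vectors.

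\emph{Isometry.} By Theorem~\ref{thm:annU}, $DD^{\dagger}=q\,\mathbb 1$ on $\Lambda^{3}\mathbb C^{d}$. Since $J_{x}$ is an isometry,
\begin{equation}
 \mathcal V_{x}^{\dagger}\mathcal V_{x}=q^{-1}J_{x}^{\dagger}DD^{\dagger}J_{x}=J_{x}^{\dagger}J_{x}=\mathbb 1 .
\end{equation}
Hence $\mathcal V_{x}\otimes\mathbb 1$ is an isometry. It follows that $\mathsf S_{x}$ is the compression of the self-adjoint unitary $\mathbb 1\otimes\Sigma_{W_{\mathrm F}}$, which gives $-\mathbb 1\preceq\mathsf S_{x}\preceq\mathbb 1$.

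\emph{Expanding $\mathcal V_{x}$.} By Lemma~\ref{lem:adjres}, for $\alpha\in\Lambda^{2}R$,
\begin{equation}
 \mathcal V_{x}\alpha=q^{-1/2}\sum_{z}e_{z}\otimes\PiF(e_{z}\wedge x\wedge\alpha).
\end{equation}
Inserting this into Eq.~\eqref{eq:swapdef}, the matrix element $\langle\alpha\otimes\psi,\mathsf S_{x}(\beta\otimes\varphi)\rangle$ becomes
\begin{equation}
 q^{-1}\sum_{z}\bigl\langle\PiF(e_{z}\wedge x\wedge\alpha)\otimes\psi,\ \varphi\otimes\PiF(e_{z}\wedge x\wedge\beta)\bigr\rangle ,
\end{equation}
which factors as
\begin{equation}
 q^{-1}\sum_{z}\langle e_{z}\wedge x\wedge\alpha,\varphi\rangle\,\langle\psi,e_{z}\wedge x\wedge\beta\rangle ,
\end{equation}
using that $\psi,\varphi\in W_{\mathrm F}$ absorb $\PiF$.

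\emph{Expanding $\mathcal X_{x}$.} By Eq.~\eqref{eq:Xxdef}, we need $\langle c_{x,\beta}\psi,c_{x,\alpha}\varphi\rangle$, where $c_{x,\alpha}\varphi=\iota_{\bar\alpha}\iota_{\bar x}\varphi\in\mathbb C^{d}$. Its $z$-component is, by the adjoint convention $(\iota_{v})^{\dagger}=\bar v\wedge\cdot$,
\begin{equation}
 \langle e_{z},\iota_{\bar\alpha}\iota_{\bar x}\varphi\rangle=\langle x\wedge\alpha\wedge e_{z},\varphi\rangle ,
\end{equation}
up to an orientation sign. The same holds for $\psi$ with $\beta$. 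The inner product in $\mathbb C^{d}$ then becomes
\begin{equation}
 \sum_{z}\overline{\langle x\wedge\beta\wedge e_{z},\psi\rangle}\,\langle x\wedge\alpha\wedge e_{z},\varphi\rangle .
\end{equation}
This has exactly the form of the swap matrix element times $q$: the two sign factors agree, and moving $e_{z}$ past the three-form $x\wedge\alpha$ costs the same sign in both factors, so the signs square away. This proves $\mathcal X_{x}=q\,\mathsf S_{x}$ on frame edges. Both sides are sesquilinear in the same way, antilinear in $(\alpha,\psi)$ and linear in $(\beta,\varphi)$, so the identity extends to all of $\Lambda^{2}R\otimes W_{\mathrm F}$.

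\emph{Main obstacle.} I expect the bookkeeping of conjugations and index slots to be the delicate step. The definition~\eqref{eq:Xxdef} deliberately crosses $\beta$ with $\psi$ and $\alpha$ with $\varphi$, and it is exactly this crossing that produces the swap of the two $W_{\mathrm F}$ factors. The variances must also be consistent: $\iota_{\bar\alpha}$ is antilinear in $\alpha$, while the mode sum over $z$ has to be of one of the basis-independent types in Lemma~\ref{lem:modesum}. I would verify these points carefully on real coordinate wedges, where all conjugations are trivial, and only then extend by sesquilinearity.
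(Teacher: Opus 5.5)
Your proposal is correct and follows the paper's proof essentially step for step. The isometry comes from the surjectivity half $DD^{\dagger}=q\,\mathbb 1$ of Theorem~\ref{thm:annU}, and $\mathcal V_{x}$ is expanded in modes via Lemma~\ref{lem:adjres}. Moving $e_{z}$ past the three-form $x\wedge\alpha$ gives the same sign in both factors, so it cancels, and completeness of $\{e_{z}\}$ then identifies the sum with $\langle c_{x,\beta}\psi,c_{x,\alpha}\varphi\rangle$. Your hedge ``up to an orientation sign'' in the adjoint step is unnecessary: by $\iota_{\bar\alpha}\iota_{\bar x}=\iota_{\bar x\wedge\bar\alpha}$ and $(\iota_{\gamma})^{\dagger}=\bar\gamma\wedge\cdot$, the identity $\langle e_{z},c_{x,\alpha}\varphi\rangle=\langle x\wedge\alpha\wedge e_{z},\varphi\rangle$ is exact, and as you note, any fixed sign there would cancel anyway.
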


\begin{proof}
Isometry: $\mathcal V_{x}^{\dagger}\mathcal V_{x}=q^{-1}J_{x}^{\dagger}DD^{\dagger}J_{x}=J_{x}^{\dagger}J_{x}=\mathbb 1$ by Eq.~\eqref{eq:annU}; here the surjectivity half, $DD^{\dagger}=q\mathbb 1$ on all of $\Lambda^{3}\mathbb C^{d}$, is what enters. By Lemma~\ref{lem:adjres}, $\mathcal V_{x}\alpha=q^{-1/2}\sum_{z}e_{z}\otimes p_{z}(\alpha)$ with $p_{z}(\alpha):=\PiF(e_{z}\wedge x\wedge\alpha)$, so
\begin{equation}
 q\,\langle\alpha\otimes\psi,\mathsf S_{x}(\beta\otimes\varphi)\rangle
 =\sum_{z}\langle p_{z}(\alpha),\varphi\rangle\langle\psi,p_{z}(\beta)\rangle .
\end{equation}  
The factors are evaluated by moving $e_{z}$ to the last position, which produces a sign because $x\wedge\alpha$ is a three-form, and by taking the adjoint of the contraction: for $\varphi\in W_{\mathrm F}$,
$\langle p_{z}(\alpha),\varphi\rangle=\langle e_{z}\wedge x\wedge\alpha,\varphi\rangle =-\langle(x\wedge\alpha)\wedge e_{z},\varphi\rangle =-\langle e_{z},\iota_{\bar\alpha}\iota_{\bar x}\varphi\rangle =-\langle e_{z},c_{x,\alpha}\varphi\rangle$, using $(\iota_{\gamma})^{\dagger}=\bar\gamma\wedge\cdot$ for the three-form $\gamma=\bar x\wedge\bar\alpha$. The two signs cancel in the product, and completeness of $\{e_{z}\}$ gives $q\langle\alpha\otimes\psi,\mathsf S_{x}(\beta\otimes\varphi)\rangle =\langle c_{x,\beta}\psi,c_{x,\alpha}\varphi\rangle$, which is Eq.~\eqref{eq:Xxdef}.
\end{proof}

\begin{proof}[Proof of Theorem~\ref{thm:twolink}]
Substitute Eq.~\eqref{eq:swapid} into Eq.~\eqref{eq:carid}: this rearranges into Eq.~\eqref{eq:twolink}, whose two summands are positive semidefinite ($\mathcal H_{x}^{\dagger}\mathcal H_{x}\succeq0$ and $\mathbb 1-\mathsf S_{x}\succeq0$). The final claim holds because $AA^{\dagger}$ and $A^{\dagger}A$ share their nonzero spectrum.
\end{proof}

\subsection{The reference eight-form and transport}
\label{app:eightform}

\begin{proposition}[Reference-form identity]
\label{prop:eightform}
Eq.~\eqref{eq:eightform} holds for every $d=2^{n}$, $n\ge3$.
\end{proposition}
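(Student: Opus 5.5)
I will compute $\mathscr R=\sum_{z}a_{z}\PiF a_{z}$ directly in the coordinate wedge bases and compare it entry by entry with $qK_{\Theta}$. By Corollary~\ref{cor:codeform}, $\PiF=q\sum_{L}\sum_{F,F'\parallel L}\ketbra{e_{F}}{e_{F'}}$, with all coefficients $+q$. So for coordinate sets $|S|=5$ and $|T|=3$,
\begin{equation*}
 \langle e_{T},\mathscr R e_{S}\rangle=\sum_{z}\langle a_{z}^{\dagger}e_{T},\PiF a_{z}e_{S}\rangle .
\end{equation*}
A term with label $z$ is nonzero only if $z\notin T$ and $z\in S$. It also requires $T\cup\{z\}=F'$ and $S\setminus\{z\}=F$ to be two distinct parallel planes; the case $F=F'$ is excluded because it would force $T\subset S$ with $|S\setminus T|=2$, contradicting $|T\cup\{z\}|=|S\setminus\{z\}|=4$ with $S\setminus\{z\}=T\cup\{z\}$ impossible since $z\in$ one and not the other. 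More carefully, $F$ and $F'$ must be parallel. If they are equal, then $T\cup\{z\}=S\setminus\{z\}$, which is impossible. So they are disjoint, and $S\sqcup T=F\sqcup F'\,\triangle$ with $z$ counted in both. Since $z\in S\cap T^{c}$ and $z\in F'$, $z\notin F$, the union $F\cup F'$ has eight points and is a three-space $\Delta$. In that case $S\cap T=\emptyset$ and $S\sqcup T=\Delta$, and $z$ is determined as the unique point of $S$ lying in $F'$, since $|S\cap F'|=1$.

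Conversely, suppose $S\sqcup T=\Delta$ is a three-space. I count the admissible $z$. We need $z\in S$ such that $T\cup\{z\}$ is a plane. Every three points of $T$ complete uniquely to a plane via $z=t_{1}\oplus t_{2}\oplus t_{3}$, which lies in $\Delta$ by affineness and not in $T$, so $z\in S$. Then $S\setminus\{z\}$ is automatically the parallel complement in $\Delta$. Hence exactly one $z$ contributes, with magnitude $q$. Otherwise the entry vanishes. This matches the support of $K_{\Theta}$: $\langle e_{S}\wedge e_{T},\Theta_{n}\rangle$ is nonzero exactly when $S\sqcup T$ is a three-space, and there it equals the shuffle sign $\eta(S,T)$.

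It remains to match signs, which I expect to be the main obstacle. The $z$ term gives
\begin{equation*}
 q\,\langle e_{z}\wedge e_{T},e_{F'}\rangle\,\langle e_{F},\iota_{e_{z}}e_{S}\rangle .
\end{equation*}
Write $\iota_{e_{z}}e_{S}=\epsilon_{1}e_{F}$, with $\epsilon_{1}$ the sign of moving $z$ to the front of $S$, so that $e_{S}=\epsilon_{1}e_{z}\wedge e_{F}$. Similarly, $e_{z}\wedge e_{T}=\epsilon_{2}e_{F'}$. The entry is then $q\epsilon_{1}\epsilon_{2}$. On the other side,
\begin{equation*}
 e_{S}\wedge e_{T}=\epsilon_{1}\,e_{z}\wedge e_{F}\wedge e_{T}=\epsilon_{1}\,e_{F}\wedge e_{z}\wedge e_{T},
\end{equation*}
because $e_{F}$ has even degree. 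This equals $\epsilon_{1}\epsilon_{2}\,e_{F}\wedge e_{F'}=\epsilon_{1}\epsilon_{2}\,e_{\Delta}$ by the even-shuffle Lemma~\ref{lem:evenshuffle}. So the signs agree, giving $\langle e_{T},\mathscr Re_{S}\rangle=q\langle e_{S}\wedge e_{T},\Theta_{n}\rangle$.

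Two points need care. First, $e_{F}$ must come before $e_{F'}$ in the order the lemma allows; the lemma holds for either parallel plane placed first, because the swap of two even blocks is free. Second, the conjugation and reality conventions must be consistent: all coordinates here are real, so the antilinear slot causes no issue. Equality on all coordinate matrix elements then extends by linearity to $\mathscr R=qK_{\Theta}$. Finally, $n\ge3$ is needed only so that three-spaces exist; for smaller $n$ both sides vanish.
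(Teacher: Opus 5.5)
Your proposal is correct and follows the paper's proof essentially step for step: you evaluate $\mathscr R=\sum_z a_z\PiF a_z$ on coordinate wedges and show that each nonzero entry receives exactly one contribution, from $z=t_1\oplus t_2\oplus t_3$, of magnitude $q$; you then match its sign to $\mathrm{sgn}(S,T)$ by moving $e_z$ across the even-degree plane wedge and applying the even-shuffle lemma. Your first attempt at excluding $F=F'$ is garbled, but the ``more careful'' argument that replaces it is sound: $z$ lies in $T\cup\{z\}$ but not in $S\setminus\{z\}$.
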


\begin{proof}
By definition, $\langle e_{T},K_{\Theta}e_{S}\rangle=\mathrm{sgn}(S,T)$ if $S\sqcup T$ is a three-space, where $e_{S}\wedge e_{T}=\mathrm{sgn}(S,T)\,e_{S\sqcup T}$ for disjoint $S,T$, and zero otherwise. We compute $\langle e_{T},\mathscr Re_{S}\rangle$ from $\mathscr R=\sum_{z}a_{z}\PiF a_{z}$ (Lemma~\ref{lem:adjres}). A nonzero $z$-term requires $z\in S$, the four-set $G:=S\setminus\{z\}$ to be a plane ($a_{z}e_{S}=(-1)^{p_{S}(z)-1}e_{G}$, with $p_{S}(z)$ the position of $z$ in $S$), and, after $\PiF e_{G}=q\sum_{F\parallel G}e_{F}$ and the outer contraction, $T=F\setminus\{z\}$ for the unique plane $F=C_{z}(\mathrm{dir}\,G)$. Then $F$ and $G$ are disjoint parallel planes, so $\Delta:=F\sqcup G=S\sqcup T$ is a three-space. Conversely, if $S\sqcup T=\Delta$ is a three-space, the only candidate is $z=\bigoplus_{t\in T}t$, and it works: $z\in S\setminus T$, $F=T\cup\{z\}$ is a plane, and $G=\Delta\setminus F=S\setminus\{z\}$ is the plane parallel to it. Each matrix element therefore receives exactly one contribution, of magnitude $q$. For its sign, the two contractions contribute $(-1)^{p_{S}(z)-1+p_{F}(z)-1}$; writing $e_{S}=(-1)^{p_{S}(z)-1}e_{z}\wedge e_{G}$ and $e_{F}=(-1)^{p_{F}(z)-1}e_{z}\wedge e_{T}$ and using Lemma~\ref{lem:evenshuffle} in the form $e_{G}\wedge e_{F}=e_{\Delta}$, one moves $e_{z}$ across the even-degree $e_{G}$ at no cost and finds $e_{S}\wedge e_{T}=(-1)^{p_{S}(z)-1+p_{F}(z)-1}e_{\Delta}$: the contribution sign is exactly $\mathrm{sgn}(S,T)$.
\end{proof}

\begin{proposition}[Particle--hole antisymmetry]
\label{prop:phanti}
For all coordinate modes $j,k$: $a_{j}\mathscr Ra_{k}^{\dagger}+a_{k}\mathscr Ra_{j}^{\dagger}=0$.
\end{proposition}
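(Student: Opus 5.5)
The plan is to reduce the claim to an antisymmetry of matrix elements in the coordinate wedge basis, which follows from the reference-form identity of Proposition~\ref{prop:eightform}. The operator $a_{j}\mathscr R a_{k}^{\dagger}$ maps $\Lambda^{4}\mathbb C^{d}$ to $\Lambda^{4}\mathbb C^{d}$: it first creates mode $k$, giving a five-form; then $\mathscr R=qK_{\Theta}$ sends five-forms to three-forms; and finally $a_{j}$ would lower the degree once more. I will therefore take $a_{j}$ to act on the three-form output as the relevant one-particle map of the transport identity, and adopt the degree conventions under which $\mathscr R=DC^{\dagger}$ was defined. Since all maps involved are real in the coordinate basis ($\Theta_{n}$ has coefficients $+1$ on coordinate wedges), it suffices to check real matrix elements $\langle e_{T'},\,a_{j}\mathscr R a_{k}^{\dagger}e_{S'}\rangle$ between coordinate wedges.

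The key step is to rewrite each term as a pairing against $\Theta_{n}$. Using $\langle e_{T},K_{\Theta}e_{S}\rangle=\langle e_{S}\wedge e_{T},\Theta_{n}\rangle$, the term $\langle \xi, a_{j}\mathscr R a_{k}^{\dagger}\sigma\rangle$ becomes $q\langle (e_{k}\wedge\sigma)\wedge(e_{j}\wedge\xi),\Theta_{n}\rangle$. Here $a_{j}^{\dagger}=e_{j}\wedge\cdot$ is moved to the other slot, and for real coordinate vectors no conjugation issue arises. The swapped term gives $q\langle (e_{j}\wedge\sigma)\wedge(e_{k}\wedge\xi),\Theta_{n}\rangle$. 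In the wedge product $e_{k}\wedge\sigma\wedge e_{j}\wedge\xi$, exchanging the positions of the single vectors $e_{k}$ and $e_{j}$ yields an overall sign of $-1$. This holds whatever the degree of $\sigma$: write the product as $e_{k}\wedge\sigma\wedge e_{j}\wedge\xi=(-1)^{|\sigma|}e_{k}\wedge e_{j}\wedge\sigma\wedge\xi$, and use the antisymmetry $e_{k}\wedge e_{j}=-e_{j}\wedge e_{k}$. Since $\Theta_{n}$ is a genuine form, pairing is linear in the top form, so the two contributions cancel. The case $j=k$ is immediate because $e_{j}\wedge e_{j}=0$.

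I expect the main obstacle to be sign and degree bookkeeping rather than anything structural. Three points need care. First, I must confirm that the pairing convention $\langle e_{S}\wedge e_{T},\Theta_{n}\rangle$ extends bilinearly from $\mathscr R$ to the composed operator, with $a_{j}$ transposed onto the output slot as $e_{j}\wedge$. Second, I must check that Lemma~\ref{lem:modesum}-type variance issues do not introduce conjugations; this is harmless because $j,k$ are coordinate modes and every matrix element is real. Third, I must fix the ordering factor $(-1)^{|\sigma|}$, which is common to both terms and so drops out. Once these are in place, the identity is precisely the statement that the flattening of an antisymmetric eight-form is antisymmetric under exchanging one particle leg with one hole leg.
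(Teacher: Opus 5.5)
Your proposal is correct and follows the paper's proof: you reduce to $K_{\Theta}$ via $\mathscr R=qK_{\Theta}$, transpose $a_j$ onto the output slot as $e_j\wedge\cdot$, write each term as $q\langle e_k\wedge\sigma\wedge e_j\wedge\xi,\Theta_n\rangle$, and cancel the two terms with $e_k\wedge e_j+e_j\wedge e_k=0$. The paper moves $e_j$ across the even-degree $e_A$ ($|A|=4$), while you note that the sign $(-1)^{|\sigma|}$ is common to both terms; either works. One slip to correct: $a_j\mathscr R a_k^{\dagger}$ maps $\Lambda^{4}\mathbb C^{d}\to\Lambda^{2}\mathbb C^{d}$, not $\Lambda^{4}\to\Lambda^{4}$, so $\xi$ must be a two-form, which is what your pairing computation already uses.
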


\begin{proof}
By Proposition~\ref{prop:eightform} it suffices for $K_{\Theta}$. On real wedges,
$\langle e_{B},a_{j}K_{\Theta}a_{k}^{\dagger}e_{A}\rangle
=\langle e_{k}\wedge e_{A}\wedge e_{j}\wedge e_{B},\Theta_{n}\rangle
=\langle e_{k}\wedge e_{j}\wedge e_{A}\wedge e_{B},\Theta_{n}\rangle$
for $|A|=4$, $|B|=2$, moving the one-form $e_{j}$ across the even-degree $e_{A}$; adding the $(j\leftrightarrow k)$ term produces the factor $e_{k}\wedge e_{j}+e_{j}\wedge e_{k}=0$.
\end{proof}

\begin{theorem}[Transport identity]
\label{thm:transport}
For all $u,v\in\mathbb C^{d}$, as operators on the exterior algebra (the only nonvanishing component maps $\Lambda^{3}\mathbb C^{d}$ to $\Lambda^{2}\mathbb C^{d}$),
\begin{equation}
 \sum_{z}a_{z}\,\iota_{u}\,\PiF\,a_{u}^{\dagger}a_{v}^{\dagger}\,a_{z}
 =\iota_{v}\iota_{u}\,\PiF\,a_{u}^{\dagger} ,
 \label{eq:transport}
\end{equation}
every $u$- and $v$-slot being \emph{linear} (bilinear contractions, wedge creations).
\end{theorem}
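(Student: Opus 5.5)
The plan is to reduce the transport identity to the particle--hole antisymmetry of the reference channel $\mathscr R=\sum_{z}a_{z}\PiF a_{z}$ (Proposition~\ref{prop:phanti}), using only canonical anticommutation relations. Throughout, I write the bilinear slots in coordinates, $\iota_{u}=\sum_{j}u_{j}a_{j}$ and $a_{u}^{\dagger}=u\wedge\cdot=\sum_{k}u_{k}a_{k}^{\dagger}$, with $a_{j}=\iota_{e_{j}}$ and $a_{k}^{\dagger}=e_{k}\wedge\cdot$ real coordinate modes. These obey $\{a_{z},a_{j}\}=0$ and $\{a_{z},a_{k}^{\dagger}\}=\delta_{zk}$, hence $\{a_{z},a_{w}^{\dagger}\}=w_{z}$ for any $w\in\mathbb C^{d}$. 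No conjugation appears anywhere, which matches the remark after Eq.~\eqref{eq:transport} that all $u$- and $v$-slots are linear.

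\emph{Step 1 (pull $\iota_{u}$ out and normal-order).} Since contractions anticommute, $a_{z}\iota_{u}=-\iota_{u}a_{z}$, so
\begin{equation*}
 \mathrm{LHS}=-\iota_{u}\sum_{z}a_{z}\PiF a_{u}^{\dagger}a_{v}^{\dagger}a_{z}.
\end{equation*}
Next I move $a_{z}$ leftward through the two creations, using
\begin{equation*}
 a_{u}^{\dagger}a_{v}^{\dagger}a_{z}=a_{z}a_{u}^{\dagger}a_{v}^{\dagger}-u_{z}a_{v}^{\dagger}+v_{z}a_{u}^{\dagger}.
\end{equation*}
I will double-check the two signs, but they are forced by $\{a_{z},a_{w}^{\dagger}\}=w_{z}$. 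Summing against $a_{z}\PiF$ then gives three pieces. The first is $\mathscr R\,a_{u}^{\dagger}a_{v}^{\dagger}$. The second is $-\iota_{u}\PiF a_{v}^{\dagger}$, since $\sum_{z}u_{z}a_{z}=\iota_{u}$. The third is $+\iota_{v}\PiF a_{u}^{\dagger}$.

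\emph{Step 2 (collect terms).} Inserting the three pieces gives
\begin{equation*}
 \mathrm{LHS}=-\iota_{u}\mathscr R\,a_{u}^{\dagger}a_{v}^{\dagger}+\iota_{u}\iota_{u}\PiF a_{v}^{\dagger}-\iota_{u}\iota_{v}\PiF a_{u}^{\dagger}.
\end{equation*}
The middle term vanishes because $\iota_{u}^{2}=0$. The last term equals $+\iota_{v}\iota_{u}\PiF a_{u}^{\dagger}$ by anticommutation of contractions, and this is exactly the right-hand side of Eq.~\eqref{eq:transport}. What remains is to show $\iota_{u}\mathscr R\,a_{u}^{\dagger}=0$.

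\emph{Step 3 (main point).} Expanding bilinearly,
\begin{equation*}
 \iota_{u}\mathscr R\,a_{u}^{\dagger}=\sum_{j,k}u_{j}u_{k}\,a_{j}\mathscr R a_{k}^{\dagger}=\tfrac12\sum_{j,k}u_{j}u_{k}\bigl(a_{j}\mathscr Ra_{k}^{\dagger}+a_{k}\mathscr Ra_{j}^{\dagger}\bigr),
\end{equation*}
where the second equality symmetrizes over $j\leftrightarrow k$. Each bracket vanishes by Proposition~\ref{prop:phanti}, which in turn rests on Proposition~\ref{prop:eightform} identifying $\mathscr R$ with $q$ times the flattening of the eight-form $\Theta_{n}$. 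This is where the structure of $W_{\mathrm F}$ actually enters; the rest is CAR bookkeeping.

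The step I expect to need most care is not conceptual but the bookkeeping of the complex case. It is essential that every $u$-slot is bilinear, so that the symmetric tensor $u_{j}u_{k}$, rather than $\bar u_{j}u_{k}$, is contracted against the antisymmetric kernel. This is precisely why Eq.~\eqref{eq:transport} is written with $\iota_{u}$ rather than $a_{u}=\iota_{\bar u}$; I will verify that no hidden conjugation enters through $\PiF$, which is real in the coordinate wedge basis. Finally, the claim about degrees follows by tracking grades: each side raises the grade by $1$ and lowers it by $2$, so the only nonvanishing component maps $\Lambda^{3}\mathbb C^{d}$ to $\Lambda^{2}\mathbb C^{d}$, since $\PiF$ is supported on $\Lambda^{4}\mathbb C^{d}$.
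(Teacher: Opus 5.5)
Your proof is correct and follows the paper's argument: you normal-order $a_z$ through the two creations, which gives $\mathscr R\,a_u^{\dagger}a_v^{\dagger}$ plus two single-contraction terms, and then you remove the leftover terms using anticommutation of contractions and the particle--hole antisymmetry of Proposition~\ref{prop:phanti} contracted against the symmetric array $u_ju_k$. The only difference is cosmetic: the paper expands in coordinate modes first and bundles both leftover terms into a defect $G_l(j,k)$ that is antisymmetric in $(j,k)$, whereas you keep the bilinear slots and discard $\iota_u\iota_u\PiF a_v^{\dagger}$ directly.
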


\begin{proof}
For coordinate modes $j,k,l$, two anticommutations normal-order the mode sum:
\begin{align}
 F(j,k,l):={}&\sum_{z}a_{z}a_{j}\PiF a_{k}^{\dagger}a_{l}^{\dagger}a_{z}\notag\\
 ={}&a_{l}a_{j}\PiF a_{k}^{\dagger}-a_{k}a_{j}\PiF a_{l}^{\dagger}
 -a_{j}\,\mathscr R\,a_{k}^{\dagger}a_{l}^{\dagger},
\end{align}
using $\sum_{z}a_{z}a_{j}\PiF a_{z}=-a_{j}\mathscr R$. Hence the defect $G_{l}(j,k):=F(j,k,l)-a_{l}a_{j}\PiF a_{k}^{\dagger}$ satisfies $G_{l}(j,k)+G_{l}(k,j)=0$ for every $l$: its first part is antisymmetric by $\{a_{j},a_{k}\}=0$ and its second by Proposition~\ref{prop:phanti}. Expanding Eq.~\eqref{eq:transport} in coordinates, the difference of the two sides is $\sum_{l}v_{l}\sum_{j,k}u_{j}u_{k}G_{l}(j,k)=0$, a symmetric coefficient array against an antisymmetric kernel. Polarization in the linear slots is legitimate precisely because the identity is bilinear in $u$; the Hermitian variant, with $a_{u}$ in place of $\iota_{u}$, has the non-symmetric coefficients $\overline{u_{j}}u_{k}$ and is false at complex $u$.
\end{proof}

\subsection{The five-plane bound}
\label{app:pcfiveproof}

Fix orthonormal $x,y\in\mathbb C^{d}$ and let $R:=(\mathbb Cx\oplus\mathbb Cy)^{\perp}$ with orthonormal basis $\{r_{a}\}_{a=1}^{d-2}$. Every $\psi\in W_{\mathrm F}$ decomposes orthogonally by $x,y$ occupancy,
\begin{equation}
 \psi=\psi_{0}+x\wedge A_{x}\psi+y\wedge A_{y}\psi+x\wedge y\wedge A_{2}\psi ,
 \label{eq:occdec}
\end{equation}
with $\psi_{0}\in\Lambda^{4}R$, $A_{x}\psi,A_{y}\psi\in\Lambda^{3}R$, $A_{2}\psi\in\Lambda^{2}R$; directly from the decomposition,
\begin{align}
 &a_{x}\psi=A_{x}\psi+y\wedge A_{2}\psi,\quad
 a_{y}\psi=A_{y}\psi-x\wedge A_{2}\psi,\notag\\
 &A_{2}=a_{y}a_{x}\big|_{W_{\mathrm F}},\qquad
 A_{x}^{\dagger}\theta=\PiF(x\wedge\theta),\ \ \text{etc.}
 \label{eq:amps6}
\end{align}
Define, on tuples $\Psi=(\psi_{a})_{a}$, $\psi_{a}\in W_{\mathrm F}$, and on $W_{\mathrm F}$,
\begin{align}
 &C_{2}\Psi=\sum_{a}r_{a}\wedge A_{2}\psi_{a},\qquad
 \mathcal G_{x}\Psi=\sum_{a}\iota_{r_{a}}\iota_{x}\,\psi_{a},\notag\\
 &\mathsf L:=\iota_{y}\iota_{x}\big|_{W_{\mathrm F}}=b_{x\wedge y},\quad
 \mathsf Q:=A_{2}^{\dagger}A_{2},\quad
 \mathsf M:=C_{2}C_{2}^{\dagger},
 \label{eq:decmaps}
\end{align}
and analogously $\mathcal G_{y}$. Four exchange identities decouple these objects.

\begin{lemma}[Pair-exchange intertwinings]
\label{lem:LQ6}
$\mathsf L^{\dagger}\mathsf L=\mathsf Q$, and $A_{x}^{\dagger}C_{2}=\mathsf L^{\dagger}\mathcal G_{x}$, $A_{y}^{\dagger}C_{2}=\mathsf L^{\dagger}\mathcal G_{y}$.
\end{lemma}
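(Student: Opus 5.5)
The plan is to prove all three identities as quadratic-form or matrix-element identities on $W_{\mathrm F}$, using Theorem~\ref{thm:pairex} as the only structural input beyond the occupancy decomposition~\eqref{eq:occdec}.

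\emph{First identity.} We start with $\mathsf L^{\dagger}\mathsf L=\mathsf Q$. By Eq.~\eqref{eq:amps6}, $A_{2}=a_{y}a_{x}|_{W_{\mathrm F}}=\iota_{\bar y}\iota_{\bar x}|_{W_{\mathrm F}}$. Since $\iota_{u\wedge v}=\iota_{v}\iota_{u}$, this equals $b_{\bar x\wedge\bar y}$. On the other side, $\mathsf L=\iota_{y}\iota_{x}|_{W_{\mathrm F}}=b_{x\wedge y}$. Hence
\[
\mathsf Q=A_{2}^{\dagger}A_{2}=b_{\bar x\wedge\bar y}^{\dagger}b_{\bar x\wedge\bar y}.
\]
Theorem~\ref{thm:pairex} with $\alpha=\beta=\bar x\wedge\bar y$ turns this into $b_{x\wedge y}^{\dagger}b_{x\wedge y}=\mathsf L^{\dagger}\mathsf L$. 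This step is immediate once the conjugations are tracked.

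\emph{Second identity.} Next we prove $A_{x}^{\dagger}C_{2}=\mathsf L^{\dagger}\mathcal G_{x}$, evaluating both sides on a single tuple component $\psi_{a}$ and pairing against $\varphi\in W_{\mathrm F}$.

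For the left side, $A_{x}^{\dagger}\theta=\PiF(x\wedge\theta)$, so
\[
\langle\varphi,A_{x}^{\dagger}(r_{a}\wedge A_{2}\psi_{a})\rangle=\langle\varphi,x\wedge r_{a}\wedge a_{y}a_{x}\psi_{a}\rangle=\langle a_{r_{a}}a_{x}\varphi,\,a_{y}a_{x}\psi_{a}\rangle,
\]
up to a sign fixed by moving the creations to adjoint annihilations. This is a matrix element $\langle b_{\gamma}\varphi,b_{\delta}\psi_{a}\rangle=\langle\varphi,b_{\gamma}^{\dagger}b_{\delta}\psi_{a}\rangle$ with $\gamma=\bar x\wedge\bar r_{a}$ and $\delta=\bar x\wedge\bar y$.

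Pair exchange converts this into $\langle\varphi,b_{\bar\delta}^{\dagger}b_{\bar\gamma}\psi_{a}\rangle$, where
\[
b_{\bar\gamma}\psi_{a}=\iota_{r_{a}}\iota_{x}\psi_{a},\qquad b_{\bar\delta}=b_{x\wedge y}=\mathsf L.
\]
Summing over $a$ then yields $\langle\varphi,\mathsf L^{\dagger}\mathcal G_{x}\Psi\rangle$.

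\emph{Third identity.} The $y$-version follows by the same computation with $x\leftrightarrow y$. The sign coming from $x\wedge y=-y\wedge x$ inside $\mathsf L$ must be matched against the sign in $a_{y}\psi=A_{y}\psi-x\wedge A_{2}\psi$ from Eq.~\eqref{eq:amps6}.

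\emph{Main obstacle.} The hard part is the sign and conjugation bookkeeping in the last two identities. We need:
\begin{itemize}
\item that the adjoint of $x\wedge r_{a}\wedge\cdot$ is $a_{r_{a}}a_{x}$ in the correct order, with $(\iota_{v})^{\dagger}=\bar v\wedge\cdot$;
\item that $r_{a}$, $x$ and $y$ enter only through the sesquilinear pattern allowed by Theorem~\ref{thm:pairex}, namely antilinear in the first slot and linear in the second.
\end{itemize}
To check this, I would first verify the identity for real coordinate $x,y,r_{a}$, where all conjugations are trivial and the signs can be read off the occupancy decomposition. I would then extend to general $x,y,r_{a}$ by confirming that each side has the same (anti)linearity in each vector. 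For $r_{a}$ the frame sum over $a$ is of the basis-independent type of Lemma~\ref{lem:modesum}, since $r_{a}$ appears once created and once annihilated.
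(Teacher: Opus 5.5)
Your proof is the paper's. The first identity is Theorem~\ref{thm:pairex} at $\alpha=\beta=\bar x\wedge\bar y$ (equivalently $x\wedge y$), together with $A_{2}=b_{\bar x\wedge\bar y}$ and $\mathsf L=b_{x\wedge y}$. The second is the chain $\langle\varphi,x\wedge r_{a}\wedge A_{2}\psi_{a}\rangle=\langle b_{\bar x\wedge\bar r_{a}}\varphi,b_{\bar x\wedge\bar y}\psi_{a}\rangle=\langle\mathsf L\varphi,\iota_{r_{a}}\iota_{x}\psi_{a}\rangle$, summed over $a$.

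Your hedges can be dropped:
\begin{itemize}
\item $(a_{x}^{\dagger}a_{r_{a}}^{\dagger})^{\dagger}=a_{r_{a}}a_{x}$ holds exactly, with no sign.
\item The $y$-version is the same chain with $y\wedge r_{a}$ in the outer slot. It produces the same $\mathsf L$ with nothing to match; the sign in $a_{y}\psi$ never enters.
\item Pair exchange holds for arbitrary complex two-forms, so no real-coordinate check or (anti)linearity extension is needed. That extension would not have been rigorous anyway: agreement at coordinate vectors plus matching sesquilinearity does not by itself fix the identity.
\end{itemize}
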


\begin{proof}
Theorem~\ref{thm:pairex} at $\alpha=\beta=x\wedge y$ gives $b_{x\wedge y}^{\dagger}b_{x\wedge y}=b_{\bar x\wedge\bar y}^{\dagger}b_{\bar x\wedge\bar y}$, and $b_{\bar x\wedge\bar y}=a_{y}a_{x}|_{W_{\mathrm F}}=A_{2}$ by Eq.~\eqref{eq:amps6}: the first identity. For the second, with $\varphi\in W_{\mathrm F}$,
\begin{align}
 \langle\varphi,A_{x}^{\dagger}C_{2}\Psi\rangle
 &=\sum_{a}\langle\varphi,\ x\wedge r_{a}\wedge A_{2}\psi_{a}\rangle\notag\\
 &=\sum_{a}\langle b_{\bar x\wedge\bar r_{a}}\varphi,\ b_{\bar x\wedge\bar y}\psi_{a}\rangle\notag\\
 &=\sum_{a}\langle b_{x\wedge y}\varphi,\ b_{x\wedge r_{a}}\psi_{a}\rangle
 =\langle\mathsf L\varphi,\mathcal G_{x}\Psi\rangle ,
\end{align}
by Theorem~\ref{thm:pairex} and $b_{x\wedge r_{a}}=\iota_{r_{a}}\iota_{x}$; the $y$-version is identical, with the \emph{same} $\mathsf L$.
\end{proof}

\begin{lemma}[Transport intertwinings]
\label{lem:HC2}
$\mathcal G_{x}C_{2}^{\dagger}=\mathsf LA_{x}^{\dagger}$ and $\mathcal G_{y}C_{2}^{\dagger}=\mathsf LA_{y}^{\dagger}$ on $\Lambda^{3}R$.
\end{lemma}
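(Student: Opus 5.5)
The plan is to verify both identities through their matrix elements. We pair them against $\mathcal G_x C_2^\dagger$ and $\mathsf L A_x^\dagger$ evaluated on a test vector $\theta\in\Lambda^{3}R$. After that we reduce the statement to the transport identity, Theorem~\ref{thm:transport}, with $u=x$ and $v=y$ (for the $y$-version, $u=y$, $v=x$, up to a sign absorbed by $\mathsf L=b_{x\wedge y}$).

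First we make the left-hand side explicit. By Eq.~\eqref{eq:amps6} we have $A_2^\dagger\omega=\PiF(x\wedge y\wedge\omega)$ on $\Lambda^{2}R$, up to the ordering sign fixed by the occupancy decomposition. Hence
\[
C_2^\dagger\theta=\bigl(A_2^\dagger a_{r_a}\theta\bigr)_a=\bigl(\PiF(x\wedge y\wedge a_{r_a}\theta)\bigr)_a .
\]
Applying $\mathcal G_x$ gives
\[
\mathcal G_xC_2^\dagger\theta=\sum_a \iota_{r_a}\iota_x\PiF\, a_x^\dagger a_y^\dagger a_{r_a}\theta .
\]
Here we use the linear contraction $\iota_{r_a}$ on the outside and the Hermitian annihilation $a_{r_a}=\iota_{\bar r_a}$ on the inside. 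Lemma~\ref{lem:modesum} then lets us replace the frame sum over $R$ by the coordinate mode sum $\sum_z a_z(\cdot)a_z$. To do so, we extend the sum from $\{r_a\}$ to the full basis $\{r_a\}\cup\{x,y\}$. We must check that the two added terms $z=x$ and $z=y$ vanish on $\theta\in\Lambda^{3}R$: this holds because $a_x\theta=a_y\theta=0$. The mixed-variance form $\iota_{f}\,X\,a_{f}$ is exactly the basis-independent type in Lemma~\ref{lem:modesum}. What remains is to commute $\iota_x$ past $\iota_{r_a}$, at the cost of one sign, so that the expression becomes $-\sum_z a_z\,\iota_x\,\PiF\, a_x^\dagger a_y^\dagger a_z\theta$, which is precisely the left side of Eq.~\eqref{eq:transport}.

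Next we apply Theorem~\ref{thm:transport}. It converts this expression into $\iota_y\iota_x\PiF a_x^\dagger\theta=\mathsf L\,\PiF(x\wedge\theta)=\mathsf L A_x^\dagger\theta$, using $A_x^\dagger\theta=\PiF(x\wedge\theta)$ from Eq.~\eqref{eq:amps6}. The $y$-identity follows identically: we exchange the roles of $x$ and $y$ in the transport identity and note $\iota_x\iota_y=-\iota_y\iota_x$, which matches the sign change of the $y\wedge$ versus $x\wedge y$ ordering in Eq.~\eqref{eq:occdec}.

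The main obstacle is sign and variance bookkeeping, not any conceptual step. We must confirm that every slot in $u=x$, $v=y$ is linear, as Theorem~\ref{thm:transport} requires. This holds because $x$ enters only through $\iota_x$ and $a_x^\dagger=x\wedge$, and $y$ only through $a_y^\dagger$. We must also confirm that the signs from $A_2^\dagger$, from reordering $\iota_{r_a}\iota_x$, and from the decomposition conventions cancel. I would fix these by checking the identity on a real coordinate frame ($x=e_{x_0}$, $y=e_{y_0}$, $\theta$ a coordinate wedge), where Corollary~\ref{cor:codeform} makes both sides explicit. Linearity and Lemma~\ref{lem:modesum} then extend the identity to arbitrary orthonormal $x,y$.
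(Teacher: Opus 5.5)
Your route is the paper's: you write $\mathcal G_xC_2^\dagger\theta=\sum_a\iota_{r_a}\iota_x\PiF a_x^\dagger a_y^\dagger a_{r_a}\theta$, complete the frame by $x,y$ (the added terms die because $a_x\theta=a_y\theta=0$), pass to the coordinate mode sum with Lemma~\ref{lem:modesum}, and evaluate with Theorem~\ref{thm:transport}. There is, however, one wrong step. You ``commute $\iota_x$ past $\iota_{r_a}$ at the cost of one sign'' and obtain $-\sum_z a_z\iota_x\PiF a_x^\dagger a_y^\dagger a_z\theta$. Taken at face value, your chain then proves $\mathcal G_xC_2^\dagger=-\mathsf LA_x^\dagger$, and the minus sign silently disappears in the next line.

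No reordering is needed. Set $X:=\iota_x\PiF a_x^\dagger a_y^\dagger$. Your expression is already $\sum_a\iota_{r_a}Xa_{r_a}$, and Lemma~\ref{lem:modesum} turns it into $\sum_z a_zXa_z$. This is exactly the left side of Eq.~\eqref{eq:transport} at $(u,v)=(x,y)$, with the outer $a_z$ in the slot of $\iota_{r_a}$, so no sign arises. Actually commuting would pull $\iota_x$ outside the mode sum, and the result is not of transport form at all. Nor is there a residual sign in $A_2^\dagger=\PiF a_x^\dagger a_y^\dagger$, since $A_2=a_ya_x|_{W_{\mathrm F}}$. The sign is not cosmetic. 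It is what yields $\mathsf MA_x=+A_x\mathsf Q$ in Proposition~\ref{prop:decouple}, and the opposite sign would break the reducing-subspace argument for the five-plane bound. So it must come out of the derivation, not be fixed afterwards.

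For the $y$-version, make the two compensating signs explicit. Rewrite $a_x^\dagger a_y^\dagger=-a_y^\dagger a_x^\dagger$ inside $C_2^\dagger$ so that Theorem~\ref{thm:transport} applies with $(u,v)=(y,x)$. This returns $-\iota_x\iota_y\PiF a_y^\dagger\theta=\iota_y\iota_x\PiF a_y^\dagger\theta=\mathsf LA_y^\dagger\theta$. Nothing is ``absorbed by $\mathsf L$''; the two flips cancel.

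Finally, your fallback does not work as an argument: checking a real coordinate frame and extending ``by linearity and Lemma~\ref{lem:modesum}'' to arbitrary orthonormal $x,y$. The identity is quadratic in $x$, its domain $\Lambda^3R$ depends on $(x,y)$, and $\PiF$ is not $U(d)$-invariant. Coordinate instances therefore miss exactly the cross terms $u_ju_k$, $j\neq k$, which the transport theorem controls by polarization against an antisymmetric kernel. Such a check can at most confirm a sign that the structural computation has already determined.
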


\begin{proof}
From Eq.~\eqref{eq:amps6}, $(C_{2}^{\dagger}\theta)_{a}=\PiF a_{x}^{\dagger}a_{y}^{\dagger}a_{r_{a}}\theta$, so $\mathcal G_{x}C_{2}^{\dagger}\theta=\sum_{a}\iota_{r_{a}}\iota_{x}\PiF a_{x}^{\dagger}a_{y}^{\dagger}a_{r_{a}}\theta$. Complete $\{r_{a}\}$ by $x$ and $y$: the added terms vanish on $\Lambda^{3}R$ (the $x$-term contains $\iota_{x}\iota_{x}=0$, and both end in $a_{x}\theta=0$ or $a_{y}\theta=0$), so by Lemma~\ref{lem:modesum} the sum equals the coordinate sum, and Theorem~\ref{thm:transport} with $(u,v)=(x,y)$ evaluates it as $\iota_{y}\iota_{x}\PiF a_{x}^{\dagger}\theta=\mathsf LA_{x}^{\dagger}\theta$. For the $y$-version, anticommute $a_{x}^{\dagger}a_{y}^{\dagger}=-a_{y}^{\dagger}a_{x}^{\dagger}$ and apply the theorem with $(u,v)=(y,x)$; the two sign flips cancel and the same $\mathsf L$ appears.
\end{proof}

\begin{proposition}[Decoupling]
\label{prop:decouple}
$\mathsf MA_{x}=A_{x}\mathsf Q$ and $\mathsf MA_{y}=A_{y}\mathsf Q$.
\end{proposition}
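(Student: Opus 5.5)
The plan is a direct algebraic computation from the two lemmas just proved: Lemma~\ref{lem:LQ6} and Lemma~\ref{lem:HC2}. We treat the $x$-identity; the $y$-identity follows verbatim, since both lemmas hold with the same $\mathsf L$ for $y$. Write $\mathsf M A_{x}=C_{2}C_{2}^{\dagger}A_{x}$. The aim is to move $A_{x}$ past $C_{2}^{\dagger}$ by taking adjoints of the intertwinings.

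First, take the adjoint of the second identity of Lemma~\ref{lem:LQ6}. From $A_{x}^{\dagger}C_{2}=\mathsf L^{\dagger}\mathcal G_{x}$ we get $C_{2}^{\dagger}A_{x}=\mathcal G_{x}^{\dagger}\mathsf L$, as maps $W_{\mathrm F}\to\bigoplus_{a}W_{\mathrm F}$. Hence
\begin{equation*}
 \mathsf M A_{x}=C_{2}\,\mathcal G_{x}^{\dagger}\,\mathsf L .
\end{equation*}
Next, take the adjoint of Lemma~\ref{lem:HC2}. From $\mathcal G_{x}C_{2}^{\dagger}=\mathsf L A_{x}^{\dagger}$ we get $C_{2}\mathcal G_{x}^{\dagger}=A_{x}\mathsf L^{\dagger}$, as maps $\Lambda^{2}\mathbb C^{d}\to\Lambda^{3}R$. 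Substituting gives $\mathsf MA_{x}=A_{x}\mathsf L^{\dagger}\mathsf L$. The first identity of Lemma~\ref{lem:LQ6}, $\mathsf L^{\dagger}\mathsf L=\mathsf Q$, then yields $\mathsf MA_{x}=A_{x}\mathsf Q$.

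The only point needing care is domains and codomains. Lemma~\ref{lem:HC2} is stated on $\Lambda^{3}R$, so its adjoint $C_{2}\mathcal G_{x}^{\dagger}$ must be read as landing in $\Lambda^{3}R$. This matches the codomain of $C_{2}$, because each $r_{a}\wedge A_{2}\psi_{a}$ lies in $\Lambda^{3}R$. The operator $\mathcal G_{x}^{\dagger}$ is applied only to vectors in the range of $\mathsf L$, which is contained in $\Lambda^{2}\mathbb C^{d}$, so the composition is well defined. Since $A_{x}$ maps $W_{\mathrm F}$ into $\Lambda^{3}R$, the adjoint identity $C_{2}^{\dagger}A_{x}=\mathcal G_{x}^{\dagger}\mathsf L$ is used on exactly the right spaces.

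We expect no real obstacle once the two lemmas are in hand. The main check is that the adjoint of the transport intertwining is taken with matching domains, which is routine bookkeeping.
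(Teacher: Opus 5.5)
Your proposal is correct and is essentially the paper's own proof: take adjoints of Lemmas~\ref{lem:LQ6} and~\ref{lem:HC2} to get $C_{2}^{\dagger}A_{x}=\mathcal G_{x}^{\dagger}\mathsf L$ and $C_{2}\mathcal G_{x}^{\dagger}=A_{x}\mathsf L^{\dagger}$, then chain $\mathsf MA_{x}=C_{2}\mathcal G_{x}^{\dagger}\mathsf L=A_{x}\mathsf L^{\dagger}\mathsf L=A_{x}\mathsf Q$. Your remark on domains is also right. Because $C_{2}$ and $A_{x}$ both take values in $\Lambda^{3}R$, the adjoint of the identity stated on $\Lambda^{3}R$ is legitimate.
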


\begin{proof}
Adjoints of Lemmas~\ref{lem:LQ6} and~\ref{lem:HC2} give $C_{2}^{\dagger}A_{x}=\mathcal G_{x}^{\dagger}\mathsf L$ and $C_{2}\mathcal G_{x}^{\dagger}=A_{x}\mathsf L^{\dagger}$; hence $\mathsf MA_{x}=C_{2}(C_{2}^{\dagger}A_{x})=C_{2}\mathcal G_{x}^{\dagger}\mathsf L=A_{x}\mathsf L^{\dagger}\mathsf L=A_{x}\mathsf Q$.
\end{proof}

\begin{proposition}[Occupation bound]
\label{prop:occbound}
With $C_{R}\Psi:=\sum_{a}r_{a}\wedge\psi_{a}$,
\begin{equation}
 C_{R}^{\dagger}\,n_{x}\,C_{R}=q\,\mathbb 1-\mathcal G_{x}^{\dagger}\mathcal G_{x}\ \preceq\ q\,\mathbb 1 ,
\end{equation}
and consequently $\mathsf M\preceq q\,\mathbb 1$.
\end{proposition}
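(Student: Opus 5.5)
We need to prove $C_R^\dagger n_x C_R = q\mathbb 1-\mathcal G_x^\dagger\mathcal G_x$ on tuples $\Psi=(\psi_a)$, $\psi_a\in W_{\mathrm F}$. The plan is to compare matrix elements on both sides. For tuples $\Psi,\Phi$ we have
\[
\langle\Phi,C_R^\dagger n_xC_R\Psi\rangle=\sum_{a,b}\langle a_x(r_b\wedge\varphi_b),\,a_x(r_a\wedge\psi_a)\rangle .
\]
Since $r_a\perp x$, the anticommutation relation gives $a_x(r_a\wedge\psi_a)=-r_a\wedge a_x\psi_a$. The resulting inner product $\langle r_b\wedge a_x\varphi_b,\,r_a\wedge a_x\psi_a\rangle$ is then expanded with $\{a_{r_b},a_{r_a}^\dagger\}=\delta_{ab}$:
\[
\langle r_b\wedge\eta,\,r_a\wedge\xi\rangle=\delta_{ab}\langle\eta,\xi\rangle-\langle a_{r_a}\eta,\,a_{r_b}\xi\rangle .
\]

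\emph{Diagonal (zero-contraction) term.} With $\eta=a_x\varphi_a$ and $\xi=a_x\psi_a$, Corollary~\ref{cor:occup} gives $\langle a_x\varphi_a,a_x\psi_a\rangle=q\langle\varphi_a,\psi_a\rangle$. This contributes exactly $q\langle\Phi,\Psi\rangle$.

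\emph{Cross (one-contraction) term.} What remains is $-\sum_{a,b}\langle a_{r_a}a_x\varphi_b,\;a_{r_b}a_x\psi_a\rangle$. In the notation of pair exchange, $a_{r_a}a_x|_{W_{\mathrm F}}=b_{\bar x\wedge\bar r_a}$. Theorem~\ref{thm:pairex} with $\alpha=\bar x\wedge\bar r_a$ and $\beta=\bar x\wedge\bar r_b$ swaps the two contracted edges, giving
\[
\langle b_{\bar x\wedge\bar r_a}\varphi_b,\,b_{\bar x\wedge\bar r_b}\psi_a\rangle=\langle b_{x\wedge r_b}\varphi_b,\,b_{x\wedge r_a}\psi_a\rangle .
\]
Since $b_{x\wedge r_a}=\iota_{r_a}\iota_x$, the double sum factorizes into $\langle\sum_b\iota_{r_b}\iota_x\varphi_b,\;\sum_a\iota_{r_a}\iota_x\psi_a\rangle=\langle\mathcal G_x\Phi,\mathcal G_x\Psi\rangle$. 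This is the same manipulation as the one-contraction layer in Proposition~\ref{prop:carid}.

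The main care point is the bookkeeping of signs and conjugations in this cross term. The exchange must be applied in its conjugated form so that sesquilinearity matches, and the sign from moving $a_x$ past $r_a$ must be tracked. Because the same sign appears in both slots, it squares away. The pair-exchange step itself is the only place where $W_{\mathrm F}$ enters nontrivially; everything else is CAR algebra. Positivity of $\mathcal G_x^\dagger\mathcal G_x$ then yields $C_R^\dagger n_xC_R\preceq q\mathbb 1$.

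For the consequence $\mathsf M\preceq q\mathbb 1$, write $C_2=n_y n_x$-type compressions of $C_R$ applied to $A_2$. Concretely, $C_2\Psi=\sum_a r_a\wedge A_2\psi_a=a_y a_x C_R\Psi$, because $a_ya_x(r_a\wedge\psi_a)=r_a\wedge a_ya_x\psi_a$: the two sign flips cancel. Hence
\[
\mathsf M=C_2C_2^\dagger=a_ya_xC_RC_R^\dagger a_x^\dagger a_y^\dagger .
\]
Its nonzero spectrum coincides with that of $C_R^\dagger a_x^\dagger a_y^\dagger a_y a_x C_R=C_R^\dagger n_x n_y C_R$. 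Using $n_y\preceq\mathbb 1$ and the fact that $n_x$ and $n_y$ commute, this is at most $C_R^\dagger n_x C_R\preceq q\mathbb 1$, which proves the claim.
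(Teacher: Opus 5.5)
Your proof is correct and follows the paper's argument essentially line by line. You use the CAR expansion of $\langle C_R\Phi,n_xC_R\Psi\rangle$, uniform occupation (Corollary~\ref{cor:occup}) for the $\delta_{ab}$ term, and the conjugated pair exchange to factor the cross-Gram into $\langle\mathcal G_x\Phi,\mathcal G_x\Psi\rangle$; the consequence then follows from $C_2=a_ya_xC_R$, $n_xn_y\preceq n_x$, and the shared nonzero spectrum of $C_2C_2^\dagger$ and $C_2^\dagger C_2$. The only blemish is the garbled phrase about ``$n_yn_x$-type compressions,'' which your concrete identity $C_2=a_ya_xC_R$ immediately supersedes.
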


\begin{proof}
With $\varphi_{a}:=a_{x}\psi_{a}$, since $a_{x}(r_{a}\wedge\psi_{a})=-r_{a}\wedge\varphi_{a}$,
\begin{equation}
 \langle C_{R}\Psi,n_{x}C_{R}\Psi\rangle
 =\sum_{a,b}\langle\varphi_{a},(\delta_{ab}-a_{r_{b}}^{\dagger}a_{r_{a}})\varphi_{b}\rangle ,
\end{equation}
whose diagonal part is $q\|\Psi\|^{2}$ by Corollary~\ref{cor:occup} and whose cross-Gram, after the pair exchange $\langle b_{\bar x\wedge\bar r_{b}}\psi_{a},b_{\bar x\wedge\bar r_{a}}\psi_{b}\rangle=\langle b_{x\wedge r_{a}}\psi_{a},b_{x\wedge r_{b}}\psi_{b}\rangle$, factorizes into $\|\mathcal G_{x}\Psi\|^{2}$. (With free input of exterior degree one there is no two-contraction level, so no swap term appears.) For the consequence, two anticommutations give $C_{2}=a_{y}a_{x}C_{R}$, whence
\begin{equation}
 C_{2}^{\dagger}C_{2}=C_{R}^{\dagger}a_{x}^{\dagger}a_{y}^{\dagger}a_{y}a_{x}C_{R}
 \preceq C_{R}^{\dagger}n_{x}C_{R}\preceq q\,\mathbb 1 ,
\end{equation}
and $\mathsf M=C_{2}C_{2}^{\dagger}$ shares the nonzero spectrum of $C_{2}^{\dagger}C_{2}$.
\end{proof}

\begin{proof}[Proof of Theorem~\ref{thm:pcfive}]
Inserting Eq.~\eqref{eq:occdec} into $B_{xy}(v\otimes\psi)=a_{y}a_{x}(v\wedge\psi)$ resolves $B_{xy}$ over the slots of $\mathbb C^{d}=\mathbb Cx\oplus\mathbb Cy\oplus R$ as $A_{y}$, $-A_{x}$, and $C_{2}$, respectively; the slots being orthogonal,
\begin{equation}
 B_{xy}B_{xy}^{\dagger}=A_{x}A_{x}^{\dagger}+A_{y}A_{y}^{\dagger}+\mathsf M .
 \label{eq:BBsplit}
\end{equation}
From the orthogonality of the components in Eq.~\eqref{eq:amps6} and Corollary~\ref{cor:occup},
\begin{equation}
 A_{x}^{\dagger}A_{x}=A_{y}^{\dagger}A_{y}=q\,\mathbb 1-\mathsf Q,
 \qquad A_{x}^{\dagger}A_{y}=0 .
 \label{eq:AxAx}
\end{equation}
Let $E:=\mathrm{ran}\,A_{x}\oplus\mathrm{ran}\,A_{y}$, an orthogonal sum. For $u\in W_{\mathrm F}$, Eqs.~\eqref{eq:BBsplit}--\eqref{eq:AxAx} and Proposition~\ref{prop:decouple} give
\begin{equation}
 B_{xy}B_{xy}^{\dagger}A_{x}u=A_{x}(q\,\mathbb 1-\mathsf Q)u+A_{x}\mathsf Qu=q\,A_{x}u ,
\end{equation}
and likewise for $A_{y}$: the subspace $E$ is invariant with $B_{xy}B_{xy}^{\dagger}|_{E}=q\,\mathbb 1$, hence reducing. On $E^{\perp}=\ker A_{x}^{\dagger}\cap\ker A_{y}^{\dagger}$ the first two terms of Eq.~\eqref{eq:BBsplit} vanish, $E^{\perp}$ is $\mathsf M$-invariant by Proposition~\ref{prop:decouple}, and $\mathsf M\preceq q\,\mathbb 1$ by Proposition~\ref{prop:occbound}. Both reducing blocks obey the bound. The equivalence follows from $n_{x}n_{y}=a_{x}^{\dagger}a_{y}^{\dagger}a_{y}a_{x}$ for $x\perp y$, so that $C^{\dagger}n_{x}n_{y}C=B_{xy}^{\dagger}B_{xy}$. The bound is attained whenever $E\ne0$, so the constant cannot be improved; $E\ne0$ is explicit at coordinate pairs: for $x=e_{0}$, $y=e_{1}$ and any direction $L$ with $1\notin L$, $A_{x}\psi_{L}=\pm\sqrt q\,e_{L\setminus\{0\}}\ne0$.
\end{proof}

\begin{proof}[Proof of Corollary~\ref{cor:fiveplane}]
By Lemma~\ref{lem:adjres}, $\|C^{\dagger}\Omega_{Y}\|^{2}=\sum_{z}\|\PiF(a_{z}\Omega_{Y})\|^{2}
=\langle\Omega_{Y},\sum_{z}a_{z}^{\dagger}\PiF a_{z}\Omega_{Y}\rangle$, a Hermitian-variance mode sum, basis-independent by Lemma~\ref{lem:modesum}; evaluating it in an orthonormal basis extending $\{v_{j}\}$, the completion terms annihilate $\Omega_{Y}$ and $a_{v_{j}}\Omega_{Y}=\iota_{\bar v_{j}}\Omega_{Y}=\pm\,v_{1}\wedge\cdots\widehat{v_{j}}\cdots\wedge v_{5}$, giving the first equality in Eq.~\eqref{eq:fiveplane}. For the bound, take $x=v_{1}$, $y=v_{2}$, $\eta=v_{3}\wedge v_{4}\wedge v_{5}$; then $\langle v\otimes\psi,C^{\dagger}\Omega_{Y}\rangle=\langle v\wedge\psi,x\wedge y\wedge\eta\rangle=\langle B_{xy}(v\otimes\psi),\eta\rangle$, so $C^{\dagger}\Omega_{Y}=B_{xy}^{\dagger}\eta$ and $\|C^{\dagger}\Omega_{Y}\|^{2}=\langle\eta,B_{xy}B_{xy}^{\dagger}\eta\rangle\le q$ by Theorem~\ref{thm:pcfive}.
\end{proof}

\subsection{The blossom identification}
\label{app:blossomproof}

\begin{lemma}[Blossom identification]
\label{lem:blossomid}
Let $u_{1},\dots,u_{6}$ be orthonormal, $r\in[6]$, and let $\Omega$ be the unit volume form of $\mathrm{span}\{u_{i}:i\ne r\}\subseteq u_{r}^{\perp}$. Then
\begin{equation}
 \bigl\|\widetilde T_{u_{r}}^{\dagger}\Omega\bigr\|^{2}
 =\sum_{\substack{A\in\binom{[6]}4\\ r\in A}}\|\PiF u_{A}\|^{2} .
\end{equation}
\end{lemma}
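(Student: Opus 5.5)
We compute $\widetilde T_{u_r}^{\dagger}\Omega$ by testing it against the product elements of $\Lambda^{2}x^{\perp}\otimes W_{\mathrm F}$, with $x=u_{r}$, and then evaluate the squared norm with a frame resolution, in the same way as the proof of Corollary~\ref{cor:fiveplane}.

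\emph{Step 1: an explicit formula for the adjoint.} For $\alpha\in\Lambda^{2}x^{\perp}$ and $\psi\in W_{\mathrm F}$,
\[
\langle\alpha\otimes\psi,\widetilde T_{x}^{\dagger}\Omega\rangle=\langle\alpha\wedge a_{x}\psi,\Omega\rangle=\langle a_{x}\psi,\iota_{\bar\alpha}\Omega\rangle .
\]
The degree-three form $\iota_{\bar\alpha}\Omega$ lies in $\Lambda^{3}x^{\perp}$. Using $a_{x}^{\dagger}=x\wedge\cdot$, this becomes $\langle\psi,\PiF(x\wedge\iota_{\bar\alpha}\Omega)\rangle$. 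We choose an orthonormal basis of $x^{\perp}$ that extends $\{u_{i}\}_{i\ne r}$ and expand $\alpha$ over the edges $\epsilon_{ab}=r_{a}\wedge r_{b}$. This gives
\[
\widetilde T_{x}^{\dagger}\Omega=\sum_{a<b}\epsilon_{ab}\otimes\PiF\bigl(x\wedge\iota_{\bar\epsilon_{ab}}\Omega\bigr).
\]
Any edge that contains a completion vector outside $\mathrm{span}\{u_{i}:i\ne r\}$ annihilates $\Omega$ under $\iota$. Only the $\binom52=10$ edges $\{i,j\}\subset[6]\setminus\{r\}$ survive. For those edges, $\iota_{\bar\epsilon_{ij}}\Omega=\pm$ the wedge of the three remaining $u$'s.

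\emph{Step 2: collect terms.} By orthonormality of the edge frame,
\[
\|\widetilde T_{x}^{\dagger}\Omega\|^{2}=\sum_{\{i,j\}}\bigl\|\PiF\bigl(u_{r}\wedge u_{[6]\setminus\{r,i,j\}}\bigr)\bigr\|^{2}.
\]
The wedge $u_{r}\wedge u_{[6]\setminus\{r,i,j\}}$ equals $\pm u_{A}$ with $A=[6]\setminus\{i,j\}\ni r$. The correspondence $\{i,j\}\leftrightarrow A$ is a bijection between the ten edges avoiding $r$ and the ten four-sets containing $r$. The signs drop out under the norm, which yields the claimed identity.

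\emph{Main obstacle: basis independence.} The delicate point is justifying that the frame expansion in Step 1 is independent of the chosen basis and conjugation convention. The map $\alpha\mapsto\iota_{\bar\alpha}$ is antilinear, so the expansion must be written with a frame-summed Hermitian-variance resolution of the form covered by Lemma~\ref{lem:modesum}. We will check that the sum $\sum_{a<b}\epsilon_{ab}\otimes(\cdot)$ uses the pairing $\langle\epsilon_{ab},\alpha\rangle$ consistently. If the variance looks wrong, we fall back on computing $\langle\widetilde T_{x}^{\dagger}\Omega,\widetilde T_{x}^{\dagger}\Omega\rangle=\langle\Omega,\widetilde T_{x}\widetilde T_{x}^{\dagger}\Omega\rangle$ directly in a real frame adapted to the $u_{i}$.
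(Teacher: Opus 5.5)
Your proposal is correct and is essentially the paper's proof. You use the same wedge--contraction adjointness to identify the $\alpha$-component of $\widetilde T_{u_r}^{\dagger}\Omega$ as $\PiF(u_r\wedge\iota_{\bar\alpha}\Omega)$, in an orthonormal frame of $u_r^{\perp}$ containing the $u_i$, $i\ne r$. Only the ten edges avoiding $r$ survive, and Parseval together with the edge/four-set complementation bijection gives the identity, as in the paper.

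The basis-independence concern you flag is moot, and no mode-sum lemma is needed. You are simply expanding one fixed vector in the orthonormal basis $\{\epsilon_{ab}\}$ of $\Lambda^{2}u_r^{\perp}$. The antilinearity of $\alpha\mapsto\iota_{\bar\alpha}$ matches the antilinear first slot of the inner product that defines the components.
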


\begin{proof}
For $\alpha$ in the orthonormal basis of $\Lambda^{2}u_{r}^{\perp}$ induced by an orthonormal basis of $u_{r}^{\perp}$ containing the $u_{i}$, $i\ne r$, and $\psi\in W_{\mathrm F}$, wedge--contraction adjointness gives
\begin{equation}
 \langle\alpha\otimes\psi,\ \widetilde T_{u_{r}}^{\dagger}\Omega\rangle
 =\langle\alpha\wedge a_{u_{r}}\psi,\ \Omega\rangle
 =\langle\psi,\ u_{r}\wedge\iota_{\bar\alpha}\Omega\rangle ,
\end{equation}
so the $\alpha$-component of $\widetilde T_{u_{r}}^{\dagger}\Omega$ is $\PiF(u_{r}\wedge\iota_{\bar\alpha}\Omega)$. It vanishes unless $\alpha=u_{i}\wedge u_{j}$ with $i,j\ne r$, in which case $u_{r}\wedge\iota_{\bar\alpha}\Omega=\pm u_{\widehat{ij}}$, the wedge over $\widehat{ij}=[6]\setminus\{i,j\}\ni r$; summing the squared norms over the orthonormal $\alpha$ gives the claim.
\end{proof}

\subsection{The matching polytope: proof of Theorem~\ref{thm:cfsix}}
\label{ssec:matching}

We use the notation of Sec.~\ref{ssec:mechanism}: $N\succeq0$ has rank at most six, $u_{1},\dots,u_{6}$ are orthonormal eigenvectors of $N$ for the eigenvalues $t_{1}\ge\cdots\ge t_{6}\ge0$, extended arbitrarily if the rank is smaller, and $p_{A}=\|\PiF u_{A}\|^{2}$ and $t_{A}$ are as in Eq.~\eqref{eq:cFexp}.
Complementation $e\mapsto\widehat e:=[6]\setminus e$ is a bijection between the fifteen edges of the complete graph $K_{6}$ and the fifteen four-subsets of $[6]$, and the edge variables are
\begin{equation}
 x_{e}:=\frac{p_{\widehat e}}{q},\qquad e\in\tbinom{[6]}2 .
 \label{eq:edgevars}
\end{equation}
The three properties of $W_{\mathrm F}$ listed in Sec.~\ref{ssec:mechanism} give the three families of constraints of the following proposition: the degree constraint comes from the five-plane bound (Corollary~\ref{cor:fiveplane}), the triangle constraint from the one-hole unitarity (Theorem~\ref{thm:annU}), and the five-vertex blossom constraint from the two-link bound (Theorem~\ref{thm:twolink}).

\begin{proposition}[Matching rows]
\label{prop:rows}
For every $i\in[6]$, every triple $\{i,j,k\}\subset[6]$, and every $r\in[6]$,
\begin{equation}
 \sum_{j\ne i}x_{ij}\le1,\qquad
 x_{ij}+x_{ik}+x_{jk}\le1,\qquad
 \sum_{\substack{i'<j'\\ i',j'\ne r}}x_{i'j'}\le2 .
 \label{eq:matchrows}
\end{equation}
\end{proposition}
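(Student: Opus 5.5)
The plan is to prove the three families of rows separately, each from the structural fact named for it just before the statement. Throughout, we use that $u_{1},\dots,u_{6}$ are orthonormal, so the wedges $u_{A}$ with $A\in\binom{[6]}4$ are orthonormal and $p_{A}=\|\PiF u_{A}\|^{2}=q\,x_{\widehat A}$.

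\emph{Degree rows.} Fix $i$ and apply Corollary~\ref{cor:fiveplane} to the five-dimensional subspace $Y=\mathrm{span}\{u_{k}:k\ne i'\}$ for a suitable vertex $i'$. The five wedges of four vectors of $Y$ are $u_{A}$ with $A=[6]\setminus\{i',j\}$, $j\ne i'$. Their complementary edges are exactly the edges $\{i',j\}$ incident to $i'$. The corollary gives $\sum_{j\ne i'}q\,x_{i'j}\le q$, which is the degree row at $i=i'$.

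\emph{Triangle rows.} Fix a triple $\{i,j,k\}$ and let $\{l,m,o\}$ be its complement. The four-sets complementary to the edges $ij,ik,jk$ are $\{k,l,m,o\}$, $\{j,l,m,o\}$ and $\{i,l,m,o\}$. All three contain the three-wedge $\tau=u_{l}\wedge u_{m}\wedge u_{o}$, so $u_{\widehat{ij}}=\pm u_{k}\wedge\tau$ and similarly for the others. By Lemma~\ref{lem:adjres}, $D^{\dagger}\tau=\sum_{z}e_{z}\otimes\PiF(e_{z}\wedge\tau)$. Since the frame sum is basis independent (the same argument as Lemma~\ref{lem:modesum}), we can use a basis containing $u_{i},u_{j},u_{k}$, and then
\[
 \|D^{\dagger}\tau\|^{2}=\sum_{f}\|\PiF(f\wedge\tau)\|^{2}\ \ge\ \sum_{w\in\{i,j,k\}}\|\PiF(u_{w}\wedge\tau)\|^{2}.
\]
Theorem~\ref{thm:annU} gives $DD^{\dagger}=q\mathbb 1$, hence $\|D^{\dagger}\tau\|^{2}=q\|\tau\|^{2}=q$. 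The right-hand side above equals $q(x_{ij}+x_{ik}+x_{jk})$, which proves the triangle row.

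\emph{Blossom rows.} Fix $r$ and let $\Omega$ be the unit volume form of $\mathrm{span}\{u_{i}:i\ne r\}$. Lemma~\ref{lem:blossomid} gives
\[
 \|\widetilde T_{u_{r}}^{\dagger}\Omega\|^{2}=\sum_{A\ni r}p_{A}=q\sum_{e\not\ni r}x_{e},
\]
because $r\in A$ exactly when the complementary edge avoids $r$. Theorem~\ref{thm:twolink} gives $\widetilde T_{u_{r}}\widetilde T_{u_{r}}^{\dagger}\preceq 2q\mathbb 1$, so the left side is at most $2q$. Dividing by $q$ gives the blossom row.

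The main obstacle is the triangle row, specifically justifying the basis independence of $\sum_{z}\|\PiF(e_{z}\wedge\tau)\|^{2}$. This needs the Hermitian-variance form $\sum_{z}\langle\tau,a_{z}\PiF a_{z}^{\dagger}\tau\rangle$, which is the third identity of Lemma~\ref{lem:modesum}. We also need to check that $e_{z}\wedge\cdot$ is indeed $a_{e_{z}}^{\dagger}$ for real coordinate vectors. The other two rows are direct substitutions into results already proved.
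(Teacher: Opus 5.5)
Your proposal is correct and follows the paper's proof row by row. The degree rows come from Corollary~\ref{cor:fiveplane} applied to $\mathrm{span}\{u_{j}:j\ne i\}$ (your ``suitable vertex'' $i'$ is simply $i$). The triangle rows come from $DD^{\dagger}=q\mathbb 1$, written as the basis-independent Hermitian-variance sum $\sum_{z}\|\PiF(e_{z}\wedge u_{\mathcal R})\|^{2}$ and truncated to the completions $u_{i},u_{j},u_{k}$; the step you flag as the main obstacle is handled in the paper exactly as you propose, via Lemma~\ref{lem:adjres} and the third identity of Lemma~\ref{lem:modesum}. The blossom rows come from Lemma~\ref{lem:blossomid} together with $\widetilde T_{u_{r}}\widetilde T_{u_{r}}^{\dagger}\preceq2q\mathbb 1$.
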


\begin{proof}
Each row is one inequality tested on a frame form. \emph{Degree}: the five four-subsets of $Y_{i}:=\mathrm{span}\{u_{j}:j\ne i\}$ drawn from the frame are exactly the $\widehat{ij}$, $j\ne i$, so Corollary~\ref{cor:fiveplane} applied to $Y_{i}$ gives $\sum_{j\ne i}p_{\widehat{ij}}\le q$. \emph{Triangle}: with $\mathcal R:=[6]\setminus\{i,j,k\}$ and $u_{\mathcal R}$ the corresponding unit three-form, Theorem~\ref{thm:annU} gives the exact three-body marginal
\begin{equation}
 q=\|D^{\dagger}u_{\mathcal R}\|^{2}
 =\sum_{z}\|\PiF(e_{z}\wedge u_{\mathcal R})\|^{2} ,
 \label{eq:threebody}
\end{equation}
a Hermitian-variance mode sum, evaluable in any orthonormal basis (Lemmas~\ref{lem:adjres} and~\ref{lem:modesum}); retaining only the completions $u_{i},u_{j},u_{k}$ and discarding the rest leaves $p_{\widehat{jk}}+p_{\widehat{ik}}+p_{\widehat{ij}}\le q$. \emph{Blossom} (the five-vertex odd-set row): with $x=u_{r}$ and $\Omega$ the unit volume form of $\mathrm{span}\{u_{i}:i\ne r\}$, the identification $\|\widetilde T_{u_{r}}^{\dagger}\Omega\|^{2}=\sum_{A\ni r}p_{A}$ (Lemma~\ref{lem:blossomid}) and Theorem~\ref{thm:twolink} give $\sum_{A\ni r}p_{A}\le2q$.
\end{proof}

\begin{proof}[Proof of Theorem~\ref{thm:cfsix}]
The vector $x=(x_{e})$ is nonnegative and, by Proposition~\ref{prop:rows}, satisfies the degree constraints $x(\delta(i))\le1$ and the odd-set constraints $x(E(U))\le(|U|-1)/2$ for every odd $U\subseteq[6]$: the nontrivial sizes are $|U|=3$ (triangle rows) and $|U|=5$ (blossom rows). By Edmonds' matching-polytope theorem~\cite{Edmonds1965matching}, $x$ is a convex combination of indicator vectors of matchings of $K_{6}$. For the nonnegative weights $w_{e}:=t_{\widehat e}$, Eqs.~\eqref{eq:cFexp} and~\eqref{eq:edgevars} give
\begin{equation}
 \cF(N)=q\sum_{e}w_{e}x_{e}\ \le\ q\max_{\mathcal M}\sum_{e\in\mathcal M}w_{e},
\end{equation}
the maximum over matchings, attained at a perfect matching since $w\ge0$. Assume first $t_{i}>0$ for all $i$ and set $a_{i}:=t_{i}^{-1}$, so that $w_{ij}=(\prod_{k}t_{k})\,a_{i}a_{j}$ with $a_{1}\le\cdots\le a_{6}$. If a perfect matching pairs $6$ with $i<5$ and $5$ with $j<5$, replacing those two edges by $\{5,6\}$ and $\{i,j\}$ changes the weight by $(\prod_{k}t_{k})(a_{5}-a_{i})(a_{6}-a_{j})\ge0$; repeating on $\{1,2,3,4\}$ shows that $\{12,34,56\}$ is optimal. Hence
\begin{equation}
 \cF(N)\le q\,(w_{12}+w_{34}+w_{56})
 =q\,(t_{3456}+t_{1256}+t_{1234}),
\end{equation}
which is Eq.~\eqref{eq:cf6}; vanishing $t_{i}$ follow by continuity. Finally, the right-hand side is the CB value: the four-subsets of the six-set $\{0,\dots,5\}\subset\Ftwo^{n}$ with vanishing XOR are exactly $\{0,1,2,3\}$, $\{0,1,4,5\}$, and $\{2,3,4,5\}$, each a plane with $\|\PiF e_{F}\|^{2}=q$ and all other coordinate $p_{A}$ vanishing, so $\cF(T)=q\,(t_{1}t_{2}t_{3}t_{4}+t_{1}t_{2}t_{5}t_{6}+t_{3}t_{4}t_{5}t_{6})$.
\end{proof}

\subsection{The rank-seven compression bound}
\label{ssec:ocseven}

The second sector inequality follows from an operator bound for compressions of the antisymmetric stabilizer projector to low-dimensional supports, which we now state.

\begin{theorem}[Rank-seven compression bound]
\label{thm:ocseven}
For every $d=2^{n}$, $n\ge3$, and every subspace $\mathcal T\subset\mathbb C^{d}$ with $\dim\mathcal T\le7$,
\begin{equation}
 P_{\Lambda^{4}\mathcal T}\,\PiF\,P_{\Lambda^{4}\mathcal T}\ \preceq\ q\,P_{\Lambda^{4}\mathcal T} .
\end{equation}
The constant is sharp: if $\mathcal T$ contains the coordinate directions of a plane $F$, then $\langle e_{F},\PiF e_{F}\rangle=q$.
\end{theorem}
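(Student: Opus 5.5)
The proof will be an induction over affine subspaces, driven by the hyperplane resolution, Eq.~\eqref{eq:resfinal} (Proposition~\ref{prop:hyperres}). For an affine subspace $H\subseteq\Ftwo^{n}$ of dimension $k$, define the local defect $\mathcal D_{H}:=q_{H}\mathbb 1-\Pi_{H}$ on $\Lambda^{4}\mathbb C^{H}$. The statement we will prove by induction on $k$ is
\[
P_{\Lambda^{4}\mathcal T}\,\mathcal D_{H}\,P_{\Lambda^{4}\mathcal T}\succeq0
\]
for every subspace $\mathcal T\subseteq\mathbb C^{H}$ with $\dim\mathcal T\le7$. For $H=\Ftwo^{n}$ we have $\Pi_{H}=\PiF$ and $q_{H}=q$, so this is the claimed bound; it suffices to treat $\mathcal T\subseteq\mathbb C^{d}$.

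\emph{Base case $k=3$.} Here $H$ is an eight-point three-space and $q_{H}=1/2$. The space $\Lambda^{4}\mathbb C^{8}$ has dimension $70$, and $\Pi_{H}$ has rank $g_{2}=7$. Its basis vectors are the $\psi_{L}=\tfrac1{\sqrt2}(e_{F}+e_{F'})$, one for each of the seven directions $L$, where $F\sqcup F'=H$. We have to show that $\Pi_{H}\preceq\tfrac12$ on $\Lambda^{4}\mathcal T$ whenever $\dim\mathcal T\le7$. Such a $\mathcal T$ lies in a hyperplane $v^{\perp}$ of $\mathbb C^{8}$, so it suffices to take $\mathcal T=v^{\perp}$ with $v$ a unit vector. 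The Hodge star on $\Lambda^{4}\mathbb C^{8}$ maps $e_{F}$ to $\pm e_{F'}$, and Lemma~\ref{lem:evenshuffle} ($e_{F}\wedge e_{F'}=+e_{\Delta}$) fixes the sign so that $\star\psi_{L}=\psi_{L}$. The key identity I expect is therefore that $\Pi_{H}$ is the projection onto the span of the $\psi_{L}$, on which the real antilinear star acts as the identity. For $\omega\in\Lambda^{4}v^{\perp}$ the form $\star\bar\omega$ is divisible by $v$, so $\langle\omega,\star\bar\omega\rangle=0$ up to conjugation conventions. Combined with $\langle\psi_{L},\omega\rangle=\tfrac12\langle\psi_{L},\omega+\star\bar\omega\rangle$, this should give $\|\Pi_{H}\omega\|^{2}\le\tfrac12\|\omega\|^{2}$. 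Checking the antilinearity and the sign bookkeeping is routine.

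\emph{Inductive step $k\ge4$.} Sandwich Eq.~\eqref{eq:resfinal} between two copies of $P_{\Lambda^{4}\mathcal T}$. The coefficient $4/(2^{k}-8)$ is positive, so it remains to show
\[
P_{\Lambda^{4}\mathcal T}\bigl(2q_{H}P^{(4)}_{H''}-\Pi_{H''}\bigr)P_{\Lambda^{4}\mathcal T}\succeq0
\]
for each hyperplane $H''$. Since $q_{H''}=2q_{H}$, the bracket equals $P^{(4)}_{H''}\mathcal D_{H''}P^{(4)}_{H''}$. The difficulty is that $P_{\Lambda^{4}\mathcal T}$ and $P^{(4)}_{H''}$ do not commute, so I cannot simply restrict $\mathcal T$ to $H''$. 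The route I would try is this. $\mathcal D_{H''}$ is supported in $\Lambda^{4}\mathbb C^{H''}$, and for $\omega\in\Lambda^{4}\mathcal T$ the component $P^{(4)}_{H''}\omega$ is $\Lambda^{4}$ of the restriction $\pi\omega$, where $\pi$ is the coordinate projection onto $\mathbb C^{H''}$; on decomposable tensors, $\Lambda^{4}(\pi)$ is exactly the projection onto the $\Lambda^{4}\mathbb C^{H''}$ component. Then $P^{(4)}_{H''}\omega\in\Lambda^{4}(\pi\mathcal T)$ with $\dim\pi\mathcal T\le7$, and the induction hypothesis applied to $\pi\mathcal T\subseteq\mathbb C^{H''}$ gives $\langle\omega,P^{(4)}_{H''}\mathcal D_{H''}P^{(4)}_{H''}\omega\rangle\ge0$.

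I expect the main obstacle to be the base case, in particular the claim that the star structure forces the bound $\tfrac12$ on every seven-dimensional subspace and not only on coordinate ones. A rank-counting fallback is also available. Since $\Lambda^{4}\mathcal T$ has dimension $35$ inside $70$ dimensions and the star pairs $\Lambda^{4}v^{\perp}$ with $v\wedge\Lambda^{3}v^{\perp}$, I would verify directly that the compression of $\Pi_{H}$ has norm equal to $\tfrac12$ via the isometric map $\omega\mapsto(\omega+\star\bar\omega)/\sqrt2$. Sharpness is immediate from Corollary~\ref{cor:codeform}.
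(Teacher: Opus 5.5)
\textbf{The inductive step is fine; the base case has a genuine gap at complex hyperplanes.}

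Your inductive step is sound and is the paper's argument in subspace language. The paper phrases the hypothesis for maps $A:E\to\mathbb C^{H}$ and uses $P^{(4)}_{H''}\Lambda^{4}A=\Lambda^{4}(P_{H''}A)$. This is exactly your observation that $P^{(4)}_{H''}$ maps $\Lambda^{4}\mathcal T$ into $\Lambda^{4}(\pi\mathcal T)$ with $\dim\pi\mathcal T\le 7$.

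The gap is in the base case, precisely in the step you call routine. The two facts you combine need two different stars.
\begin{itemize}
\item \emph{Complex-linear Hodge star $\star$.} This is a real matrix, with $\star e_F=e_{F'}$ by the even-shuffle lemma. It fixes every vector of $W^{H}_{\mathrm F}$, so $\langle\psi_L,\omega\rangle=\langle\psi_L,\tfrac12(\mathbb 1+\star)\omega\rangle$ holds. But $\omega\in\Lambda^{4}v^{\perp}$ means $\iota_{\bar v}\omega=0$. Hence $\star\omega$ is divisible by $\bar v$, not by $v$, and $\langle\omega,\star\omega\rangle$ need not vanish.
\item \emph{Antilinear map $J\omega=\star\bar\omega$.} Here you do get $\langle\omega,J\omega\rangle=0$. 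But $J$ fixes only the \emph{real} span of the $\psi_L$: $\langle\psi_L,J\omega\rangle=\overline{\langle\psi_L,\omega\rangle}$. So $\tfrac12\langle\psi_L,\omega+J\omega\rangle=\mathrm{Re}\,\langle\psi_L,\omega\rangle$. Applying the resulting bound to $\omega$ and to $i\omega$ gives only $\|\Pi_H\omega\|^{2}\le\|\omega\|^{2}$. Your fallback map $\omega\mapsto(\omega+\star\bar\omega)/\sqrt2$ is only real-linear and loses the same factor of two.
\end{itemize}

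\textbf{This is an obstruction to the method, not bookkeeping.} Take an isotropic vector such as $v=(e_0+ie_1)/\sqrt2$. Then $\bar v\in v^{\perp}$, and $\Lambda^{4}v^{\perp}$ contains nonzero self-dual forms. An example is $\bar v\wedge\alpha$ with $\alpha\in\Lambda^{3}\mathrm{span}(e_2,\dots,e_7)$ a suitable eigenvector (eigenvalue $\pm i$) of the six-dimensional Hodge star. On such $\omega$, self-duality yields no bound at all.

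\textbf{What your argument does prove.} The linear-star version works verbatim when $v$ is real up to a phase, because then $\star\omega\in v\wedge\Lambda^{3}v^{\perp}\perp\Lambda^{4}v^{\perp}$. Since coordinate projections preserve reality, your induction proves the theorem for conjugation-invariant $\mathcal T$. The theorem, however, is needed for complex $\mathcal T=\mathrm{ran}\,VMV^{\dagger}$.

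\textbf{What is missing.} Self-duality only sees the $35$-dimensional self-dual space. The base case needs a finer property of the seven-dimensional $W^{H}_{\mathrm F}$: its one-body density is exactly uniform, including vanishing cross terms between different directions. In coordinates this is $\langle a_j\psi_L,a_k\psi_M\rangle=\tfrac12\delta_{jk}\delta_{LM}$. Equivalently, $\Pi_H n_x\Pi_H=\tfrac12\Pi_H$ for every \emph{complex} unit $x$, which is Lemma~\ref{lem:localU}, coming from $D_H^{\dagger}D_H=q_H\mathbb 1$. With it, $\mathbb 1-n_x=P_{\Lambda^{4}x^{\perp}}$ on four-forms gives $\Pi_H P_{\Lambda^{4}x^{\perp}}\Pi_H=\tfrac12\Pi_H$. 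Therefore $P_{\Lambda^{4}x^{\perp}}\Pi_H P_{\Lambda^{4}x^{\perp}}$ has nonzero spectrum $\{\tfrac12\}$, which is the paper's base case.
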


The proof (Appendix~\ref{app:selfsim}) is an induction over affine subspaces, in the stronger map form: for every $k$-dimensional affine subspace $H$ ($3\le k\le n$) and every linear map $A:E\to\mathbb C^{H}$ with $\dim E\le7$,
\begin{equation}
 (\Lambda^{4}A)^{\dagger}\,\Pi_{H}\,(\Lambda^{4}A)\ \preceq\
 q_{H}\,(\Lambda^{4}A)^{\dagger}(\Lambda^{4}A) .
 \label{eq:mapcap}
\end{equation}
The resolution~\eqref{eq:resfinal}, congruenced by $\Lambda^{4}A$, expresses the defect at scale $k$ as a positive combination of defects at scale $k-1$; at the eight-point base a rank-seven map misses one direction of the three-space, and uniform occupation proves the bound there. Taking $k=n$ and $A$ the inclusion $\mathcal T\hookrightarrow\mathbb C^{d}$ gives Eq.~\eqref{eq:ocseven}. Notably, pair exchange never enters: the rank-seven bound is a one-hole-layer statement, while the matching facets above are two-hole-layer statements.

Theorem~\ref{thm:rsevencap}, including its equality case, follows from Theorem~\ref{thm:ocseven} by the argument given in Sec.~\ref{ssec:mechanism}.

\subsection{Self-similarity of the antisymmetric stabilizer subspace and the proof of Theorem~\ref{thm:ocseven}}
\label{app:selfsim}

This subsection proves Theorem~\ref{thm:ocseven} in the map form~\eqref{eq:mapcap}. Pair exchange never enters: the recursion uses only the one-hole layer of $W_{\mathrm F}$---Theorem~\ref{thm:annU} and uniform occupation, applied inside every affine subspace---together with two counting identities.

For a $k$-dimensional affine subspace $H\subseteq\Ftwo^{n}$ ($2\le k\le n$) write $\mathbb C^{H}:=\mathrm{span}\{e_{z}:z\in H\}$ and $q_{H}:=4/2^{k}$, and define the local basis vectors and projector
\begin{equation}
 \psi^{H}_{L}:=\sqrt{q_{H}}\!\!\sum_{F\subset H,\ F\parallel L}\!\!e_{F},
 \qquad
 \Pi_{H}:=\sum_{L\le\mathrm{dir}H}\ketbra{\psi^{H}_{L}}{\psi^{H}_{L}},
 \label{eq:localflat}
\end{equation}
the sums over the two-dimensional subspaces $L$ of $\mathrm{dir}\,H$ and, for each $L$, over the $2^{k-2}$ planes of direction $L$ inside $H$; all wedge coordinates carry the ambient orientation. For $H=\Ftwo^{n}$ this is $\PiF$ by Corollary~\ref{cor:codeform}.

\begin{lemma}[Local unitarity]
\label{lem:localU}
Theorem~\ref{thm:annU} holds verbatim for every local subspace $W_{\mathrm F}^{H}:=\mathrm{ran}\,\Pi_{H}$: the tensor $D_{H}(v\otimes\psi)=\iota_{v}\psi$ obeys $D_{H}^{\dagger}D_{H}=D_{H}D_{H}^{\dagger}=q_{H}\,\mathbb 1$, and consequently $\Pi_{H}a_{u}^{\dagger}a_{v}\Pi_{H}=q_{H}\langle v,u\rangle\Pi_{H}$ for $u,v\in\mathbb C^{H}$.
\end{lemma}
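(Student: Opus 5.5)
The plan is to repeat the proof of Theorem~\ref{thm:annU} word for word, with $\Ftwo^{n}$ replaced by the affine subspace $H$. The only combinatorial inputs there were three facts. First, every triple of coordinate labels completes uniquely to a plane. Second, every plane through a fixed point with a fixed direction is unique. Third, the counts match. All three survive restriction to $H$, and it is enough to check them inside $H$.

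\emph{Sign-free basis.} I first note that $\psi^{H}_{L}$, for $L\le\mathrm{dir}\,H$, are orthonormal unit vectors. There are $2^{k-2}=q_{H}^{-1}$ disjoint parallel planes of direction $L$ inside $H$. Distinct directions give disjoint plane sets, because a plane determines its direction. So $\Pi_{H}$ is an orthogonal projector of rank $g_{2}(H)=(2^{k}-1)(2^{k}-2)/6$. The ambient orientation of the wedges plays no role in this step: only the squared norms and the support pattern enter, and each $e_{F}$ carries sign $+$ by definition.

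\emph{Annihilation tensor.} Next, fix $z\in H$ and $L\le\mathrm{dir}\,H$, and let $C_{z}(L)=z+L\subset H$ be the unique plane through $z$ with direction $L$. In $\iota_{e_{z}}\psi^{H}_{L}$ only that plane contributes, which gives $\pm\sqrt{q_{H}}\,e_{T}$ with $T=C_{z}(L)\setminus\{z\}\subset H$. For the converse, take a three-subset $T\subset H$. The candidate $z=t_{1}\oplus t_{2}\oplus t_{3}$ lies in $H$, since an odd sum of points of an affine subspace stays in it. It is not in $T$. The set $T\cup\{z\}$ is a plane, and its direction $L$ is spanned by differences of points of $H$, so $L\le\mathrm{dir}\,H$.

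\emph{Counting and conclusion.} The counts match: $2^{k}\,g_{2}(H)=\binom{2^{k}}{3}$. Hence $q_{H}^{-1/2}D_{H}$ is a signed permutation from $\{e_{z}\otimes\psi^{H}_{L}\}_{z\in H}$ onto $\{e_{T}\}_{T\subset H,|T|=3}$. This gives $D_{H}^{\dagger}D_{H}=D_{H}D_{H}^{\dagger}=q_{H}\mathbb 1$, where $D_{H}$ is understood on $\mathbb C^{H}\otimes W^{H}_{\mathrm F}\to\Lambda^{3}\mathbb C^{H}$. This domain is the point to state carefully: contracting with a vector outside $\mathbb C^{H}$ kills $W^{H}_{\mathrm F}$.

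The occupation identity then follows exactly as in Corollary~\ref{cor:occup}. For $u,v\in\mathbb C^{H}$ we have $a_{u}\psi=D_{H}(\bar u\otimes\psi)$. Therefore $\langle a_{u}\psi,a_{v}\varphi\rangle=q_{H}\langle v,u\rangle\langle\psi,\varphi\rangle$, which is the same as $\Pi_{H}a_{u}^{\dagger}a_{v}\Pi_{H}=q_{H}\langle v,u\rangle\Pi_{H}$.

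The only mildly delicate step is the closure check $z\in H$ together with $L\le\mathrm{dir}\,H$. I expect it to go through by the affine argument above, and I expect no real obstacle.
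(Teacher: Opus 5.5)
Your proposal is correct and follows the paper's own proof. The paper argues in two lines that the bijection argument of Theorem~\ref{thm:annU} is intrinsic to the affine structure of $H$: it uses that $t_{1}\oplus t_{2}\oplus t_{3}\in H$, and that the count $2^{k}g_{2}(H)=\binom{2^{k}}{3}$ holds at scale $k$. You spell out the same steps in more detail, namely orthonormality, uniqueness of $z+L$, the check $L\le\mathrm{dir}\,H$, and the derivation of the occupation identity. One small imprecision, which does not affect the argument: only contraction with vectors in $(\mathbb C^{H})^{\perp}$, not with arbitrary vectors outside $\mathbb C^{H}$, annihilates $W^{H}_{\mathrm F}$, and this is exactly why $\mathbb C^{H}\otimes W^{H}_{\mathrm F}$ is the right domain for $D_{H}$.
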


\begin{proof}
The bijection argument of Theorem~\ref{thm:annU} is intrinsic to the affine structure of $H$: for a three-subset $T\subset H$, the completion $z=t_{1}\oplus t_{2}\oplus t_{3}$ is an affine combination of an odd number of points of $H$ and therefore lies in $H$, and the count $2^{k}\dim W_{\mathrm F}^{H}=\binom{2^{k}}{3}$ is the same Gaussian-binomial identity at scale $k$.
\end{proof}

\begin{lemma}[Base case $k=3$]
\label{lem:basecap}
Eq.~\eqref{eq:mapcap} holds for every three-space $H$.
\end{lemma}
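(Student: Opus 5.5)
The plan is to reduce the map form~\eqref{eq:mapcap} at $k=3$ to a statement about a single hyperplane-complement in $\mathbb C^{H}\cong\mathbb C^{8}$, and then to use local unitarity (Lemma~\ref{lem:localU}) together with the numerical coincidence $q_{H}=4/2^{3}=\tfrac12=1-q_{H}$.

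\emph{Step 1: reduction to a vector inequality.} An inequality of the form $X^{\dagger}\Pi_{H}X\preceq q_{H}X^{\dagger}X$ with $X=\Lambda^{4}A$ is equivalent to the bound
\begin{equation*}
 \|\Pi_{H}\xi\|^{2}\le q_{H}\|\xi\|^{2}
 \qquad\text{for every }\xi\in\mathrm{ran}\,\Lambda^{4}A .
\end{equation*}
Indeed, for any input $\eta$ set $\xi=X\eta$; then the two quadratic forms are $\|\Pi_{H}\xi\|^{2}$ and $q_{H}\|\xi\|^{2}$.

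Next, $\mathrm{ran}\,\Lambda^{4}A\subseteq\Lambda^{4}\mathcal T$ with $\mathcal T:=\mathrm{ran}\,A\subseteq\mathbb C^{H}$. Since $\dim\mathcal T\le\dim E\le7<8=\dim\mathbb C^{H}$, there is a unit vector $v\in\mathbb C^{H}$ orthogonal to $\mathcal T$. Hence every such $\xi$ lies in $\Lambda^{4}v^{\perp}$, that is, it satisfies $a_{v}\xi=0$. If $\dim\mathcal T<4$, then $\Lambda^{4}\mathcal T=0$ and there is nothing to prove.

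\emph{Step 2: a variational bound.} We have $\|\Pi_{H}\xi\|=\sup_{\psi}|\langle\psi,\xi\rangle|$, where the supremum runs over unit $\psi\in W_{\mathrm F}^{H}$. Because $a_{v}\xi=0$, the CAR relation $\{a_{v},a_{v}^{\dagger}\}=\mathbb 1$ gives
\begin{equation*}
 \langle\psi,\xi\rangle
 =\langle\psi,(a_{v}a_{v}^{\dagger}+a_{v}^{\dagger}a_{v})\xi\rangle
 =\langle\psi,a_{v}a_{v}^{\dagger}\xi\rangle
 =\langle a_{v}^{\dagger}\psi,a_{v}^{\dagger}\xi\rangle .
\end{equation*}
Cauchy--Schwarz then bounds this by $\|a_{v}^{\dagger}\psi\|\,\|a_{v}^{\dagger}\xi\|$. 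The second factor equals $\|\xi\|$, because $a_{v}^{\dagger}=v\wedge\cdot$ is an isometry on $\Lambda v^{\perp}$. For the first factor, the CAR relation again gives
\begin{equation*}
 \|a_{v}^{\dagger}\psi\|^{2}=\|\psi\|^{2}-\|a_{v}\psi\|^{2}.
\end{equation*}
By local uniform occupation (Lemma~\ref{lem:localU} with $u=v$), $\|a_{v}\psi\|^{2}=q_{H}\|\psi\|^{2}$. Therefore $\|a_{v}^{\dagger}\psi\|^{2}=(1-q_{H})\|\psi\|^{2}=\tfrac12=q_{H}$. Taking the supremum over $\psi$ yields $\|\Pi_{H}\xi\|^{2}\le q_{H}\|\xi\|^{2}$, which is Eq.~\eqref{eq:mapcap} at $k=3$.

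\emph{Where care is needed.} The delicate point is not the estimate itself but its hypotheses. First, local unitarity must apply to complex unit $v\in\mathbb C^{H}$, not only to coordinate vectors. This holds because Lemma~\ref{lem:localU} states $\Pi_{H}a_{u}^{\dagger}a_{v}\Pi_{H}=q_{H}\langle v,u\rangle\Pi_{H}$ for arbitrary $u,v\in\mathbb C^{H}$. Second, the conventions $a_{v}=\iota_{\bar v}$ and $a_{v}^{\dagger}=v\wedge\cdot$ must be used consistently, so that the condition $v\perp\mathcal T$ really gives $a_{v}\xi=0$. Third, the argument uses the hypothesis $\dim E\le7$ only to produce $v$, and it needs $1-q_{H}\le q_{H}$, which holds exactly at the eight-point scale. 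This is consistent with the sharpness remark: two parallel planes fill all eight points and leave no missing direction $v$.
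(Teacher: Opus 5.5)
Your argument is correct and is essentially the paper's proof. Both pick a unit vector orthogonal to the range of $A$, which exists since $\mathrm{rank}\,A\le7<8$, and both use local uniform occupation $\Pi_{H}n_{v}\Pi_{H}=q_{H}\Pi_{H}$ with $q_{H}=\tfrac12$; your quantity $\|a_{v}^{\dagger}\psi\|^{2}=\langle\psi,(\mathbb 1-n_{v})\psi\rangle$ is exactly the paper's identity $\Pi_{H}P_{\Lambda^{4}S}\Pi_{H}=\tfrac12\Pi_{H}$. The differences are only presentational: you use Cauchy--Schwarz and work directly on vectors in $\mathrm{ran}\,\Lambda^{4}A$, whereas the paper uses the shared nonzero spectrum of $X^{\dagger}X$ and $XX^{\dagger}$ followed by a polar decomposition.
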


\begin{proof}
Here $\mathbb C^{H}\cong\mathbb C^{8}$ and $q_{H}=\tfrac12$. Since $\mathrm{rank}\,A\le7<8$, the range of $A$ lies in $S:=x^{\perp}\cap\mathbb C^{H}$ for some unit $x\in\mathbb C^{H}$. On the four-particle sector of $\mathbb C^{H}$ the projector onto $\Lambda^{4}S$ is $\mathbb 1-n_{x}$, and the local uniform-occupation identity (Lemma~\ref{lem:localU}) gives $\Pi_{H}n_{x}\Pi_{H}=\tfrac12\Pi_{H}$, whence
\begin{equation}
 \Pi_{H}\,P_{\Lambda^{4}S}\,\Pi_{H}=\tfrac12\,\Pi_{H}
 \ \Longrightarrow\
 P_{\Lambda^{4}S}\,\Pi_{H}\,P_{\Lambda^{4}S}\preceq\tfrac12\,P_{\Lambda^{4}S},
\end{equation}
the implication because the two compressed products are $A^{\dagger}A$ and $AA^{\dagger}$ for $A=\Pi_{H}P_{\Lambda^{4}S}$, hence share their nonzero spectrum. For a general map, write the polar decomposition $A=U|A|$ with $U$ a partial isometry into $S$; compressing the display above to $\mathrm{ran}\,\Lambda^{4}U$ and congruencing by $\Lambda^{4}|A|$ gives Eq.~\eqref{eq:mapcap}.
\end{proof}

\begin{proposition}[Hyperplane resolution]
\label{prop:hyperres}
Let $H$ be a $k$-dimensional affine subspace, $k\ge4$, let $m:=2^{k-2}$, let $H''$ range over the affine hyperplanes of $H$, and let $P^{\mathrm{pl}}_{H}$ be the coordinate projector onto the plane wedges $e_{F}$, $F\subset H$. Then, on $\Lambda^{4}\mathbb C^{H}$,
\begin{align}
 \sum_{H''}P^{(4)}_{H''}&=\Bigl(\frac m2-1\Bigr)\mathbb 1+\frac m2\,P^{\mathrm{pl}}_{H},
 \label{eq:res1}\\
 \sum_{H''}\Pi_{H''}&=P^{\mathrm{pl}}_{H}+(m-2)\,\Pi_{H} .
 \label{eq:res2}
\end{align}
Consequently the resolution~\eqref{eq:resfinal} holds.
\end{proposition}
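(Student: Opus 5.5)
Both identities are equalities of explicit operators on $\Lambda^{4}\mathbb C^{H}$, so I would verify them entrywise in the coordinate wedge basis $\{e_{S}\}_{S\subset H,\,|S|=4}$, with all wedges carrying the ambient orientation. Only one counting fact about affine geometry over $\Ftwo$ is needed. Inside a $k$-dimensional affine space $H$, the number of affine hyperplanes $H''$ containing a given affine subspace of dimension $j<k$ is $2^{k-j}-1$. Indeed, a hyperplane containing an affine subspace $K$ is determined by a hyperplane of $\mathrm{dir}\,H$ containing $\mathrm{dir}\,K$, and there are $2^{k-j}-1$ of those; the translate is then fixed by requiring $K\subset H''$. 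The set $S$ lies in $H''$ if and only if its affine span does. For a four-set $S$, the affine span has dimension $2$ exactly when $S$ is a plane, and dimension $3$ otherwise.

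For Eq.~\eqref{eq:res1}, each $P^{(4)}_{H''}$ is the diagonal coordinate projector onto the wedges $e_{S}$ with $S\subset H''$. The left-hand side is therefore diagonal, and its entry at $e_{S}$ counts the hyperplanes containing $S$. This count is $2^{k-2}-1=m-1$ if $S$ is a plane and $2^{k-3}-1=m/2-1$ otherwise. The right-hand side is also diagonal, with entry $m/2-1$ on non-planes and $m/2-1+m/2=m-1$ on planes, so the two sides agree.

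For Eq.~\eqref{eq:res2}, every term on both sides is supported on the plane wedges of $H$, so I only need matrix elements $\langle e_{F},\cdot\,e_{G}\rangle$ for planes $F,G\subset H$. By Eq.~\eqref{eq:localflat}, $\langle e_{F},\Pi_{H''}e_{G}\rangle$ equals $q_{H''}=4/2^{k-1}=2/m$ when $F$ and $G$ are parallel and both lie in $H''$, and vanishes otherwise. All signs are $+$ here, since the local vectors are all-plus in the ambient orientation. Similarly, $\langle e_{F},\Pi_{H}e_{G}\rangle=q_{H}=1/m$ for parallel $F,G$. The entries of the two sides then compare as follows.
\begin{itemize}
\item Diagonal entries ($F=G$): the left-hand side is $(m-1)\cdot\tfrac{2}{m}$, and the right-hand side is $1+(m-2)/m$. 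These agree.
\item Distinct parallel planes: $F\sqcup G$ is a three-space, so exactly $m/2-1$ hyperplanes contain both. The left-hand side is $(m/2-1)\cdot\tfrac{2}{m}=(m-2)/m$, which equals the right-hand side $(m-2)q_{H}$.
\item Non-parallel planes: both sides vanish.
\end{itemize}
This proves Eq.~\eqref{eq:res2}.

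Finally, I would combine the two identities, using $2q_{H}=2/m$:
\[
\sum_{H''}\bigl(2q_{H}P^{(4)}_{H''}-\Pi_{H''}\bigr)
=\tfrac{2}{m}\Bigl(\tfrac m2-1\Bigr)\mathbb 1+P^{\mathrm{pl}}_{H}-P^{\mathrm{pl}}_{H}-(m-2)\Pi_{H}
=(m-2)\bigl(q_{H}\mathbb 1-\Pi_{H}\bigr).
\]
Since $k\ge4$ gives $m-2=(2^{k}-8)/4>0$, dividing by $m-2$ yields Eq.~\eqref{eq:resfinal}.

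The only step where care is needed is the claim that the local projectors have all-plus entries in a common orientation. I expect this to be the main obstacle, so I would check it first. It should follow because $\psi^{H}_{L}$ is defined as the all-plus sum in the ambient orientation, so no even-shuffle argument is required. Everything else is the hyperplane count.
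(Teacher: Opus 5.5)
Your proposal is correct and is essentially the paper's proof. Both verify the identities entrywise in the wedge basis, using the count of $2^{k-j}-1$ affine hyperplanes through a $j$-dimensional affine subspace; the paper just groups the check of Eq.~\eqref{eq:res2} into direction blocks $E_L$ and phrases the counts in the $(k-2)$-dimensional quotient geometry, where your direct count of hyperplanes through a plane and through the three-space $F\sqcup G$ gives the same numbers $m-1$ and $m/2-1$, and the final combination into Eq.~\eqref{eq:resfinal} is identical.
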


\begin{proof}
Both sides of Eq.~\eqref{eq:res1} are diagonal in the wedge basis of $\Lambda^{4}\mathbb C^{H}$, and the entry at $e_{S}$ counts the hyperplanes of $H$ containing the affine span of $S$: a $j$-dimensional affine subspace of $H$ lies in $2^{k-j}-1$ of them. A plane ($j=2$) is counted $m-1$ times; a four-set that is not a plane has affine span of dimension exactly three (four distinct points of affine rank two would themselves form a plane) and is counted $m/2-1$ times; since $m-1=(m/2-1)+m/2$, this is Eq.~\eqref{eq:res1}.

For Eq.~\eqref{eq:res2}, fix a direction $L\le\mathrm{dir}H$ and restrict to the block $E_{L}$ spanned by the $m$ coordinates $e_{C_{v}}$, $C_{v}$ the planes of direction $L$ in $H$; distinct direction blocks are orthogonal and exhaust the supports of both sides beyond $P^{\mathrm{pl}}_{H}$'s diagonal. A hyperplane $H''$ meets the block iff $L\le\mathrm{dir}H''$, in which case the planes of direction $L$ inside $H''$ form an affine hyperplane $h$ of the $(k{-}2)$-dimensional quotient geometry, and the $L$-component of $\Pi_{H''}$ is the rank-one projector onto $\sqrt{2/m}\sum_{v\in h}e_{C_{v}}$, with all-plus coefficients by construction~\eqref{eq:localflat}. A point of the quotient lies on $m-1$ of its hyperplanes and a pair of distinct points on $m/2-1$, so on $E_{L}$ the left side of \eqref{eq:res2} has diagonal $\tfrac2m(m-1)$ and off-diagonal $\tfrac2m(\tfrac m2-1)$, which is exactly $\mathbb 1_{E_{L}}+(m-2)\ketbra{\psi^{H}_{L}}{\psi^{H}_{L}}$; summing over $L$ gives \eqref{eq:res2}. Combining, with $q_{H}=1/m$ and $2q_{H}=2/m$,
\begin{align}
 \sum_{H''}\Bigl(\frac2m P^{(4)}_{H''}-\Pi_{H''}\Bigr)
 &=\Bigl(1-\frac2m\Bigr)\mathbb 1+P^{\mathrm{pl}}_{H}-P^{\mathrm{pl}}_{H}-(m-2)\Pi_{H}\notag\\
 &=(m-2)\Bigl(\frac1m\,\mathbb 1-\Pi_{H}\Bigr),
\end{align}
which is Eq.~\eqref{eq:resfinal} since $4/(2^{k}-8)=1/(m-2)$.
\end{proof}

\begin{proof}[Proof of Theorem~\ref{thm:ocseven}]
We prove Eq.~\eqref{eq:mapcap} by induction on $k$; Lemma~\ref{lem:basecap} is the base. For $k\ge4$ and $A:E\to\mathbb C^{H}$ with $\dim E\le7$, congruence Eq.~\eqref{eq:resfinal} by $\Lambda^{4}A$. For each hyperplane $H''$, $\Pi_{H''}=\Pi_{H''}P^{(4)}_{H''}$ and $P^{(4)}_{H''}\Lambda^{4}A=\Lambda^{4}(P_{H''}A)$ by functoriality of $\Lambda^{4}$, with $P_{H''}$ the orthogonal projector $\mathbb C^{H}\to\mathbb C^{H''}$; the $H''$-summand is therefore
\begin{equation}
 \bigl(\Lambda^{4}(P_{H''}A)\bigr)^{\dagger}
 \bigl(q_{H''}\,\mathbb 1-\Pi_{H''}\bigr)
 \bigl(\Lambda^{4}(P_{H''}A)\bigr)\ \succeq\ 0
\end{equation}
by the induction hypothesis at level $k-1$ ($2q_{H}=q_{H''}$, and $P_{H''}A$ has rank at most seven). All summands of the congruenced \eqref{eq:resfinal} are positive semidefinite, proving \eqref{eq:mapcap} at level $k$. Taking $k=n$ and $A$ the inclusion $\mathcal T\hookrightarrow\mathbb C^{d}$ gives Eq.~\eqref{eq:ocseven}. Sharpness: $\PiF e_{F}=q\sum_{F'\parallel F}e_{F'}$ gives $\langle e_{F},\PiF e_{F}\rangle=q$ whenever $e_{F}\in\Lambda^{4}\mathcal T$.
\end{proof}

\section{Proof of the comparator certificate}
\label{app:witness}

This appendix proves Theorem~\ref{thm:witness}. The argument has three steps. The stabilizer purity transfers as a norm; the transfer budget is bounded by two unitary invariants; and the invariants are evaluated on the pure pair.

\emph{(i) Norm transfer.} Let $\rho,\sigma$ be Hermitian on $\mathbb{C}^{d^{2}}$ and $\Delta=\rho-\sigma$. Equation~\eqref{eq:pauliform} exhibits $\mathcal F(V;\cdot)^{1/4}=d^{-1/2}\|(\Tr(RU\cdot U^{\dagger}))_{R}\|_{\ell_{4}}$ as a norm of a linear image of the state. The reverse triangle inequality therefore bounds $|\mathcal F(V;\rho)^{1/4}-\mathcal F(V;\sigma)^{1/4}|$ by $[\frac1{d^{2}}\sum_{R}[\Tr(RU\Delta U^{\dagger})]^{4}]^{1/4}$, pointwise in $(V_{A},V_{B})$. Suppose the latter is at most $B^{1/4}$ for every unitary conjugate of $\Delta$. Then the two stabilizer purities are uniformly close over the local unitaries, and so are their maxima: $|\mathcal F_{*}(\rho)^{1/4}-\mathcal F_{*}(\sigma)^{1/4}|\le B^{1/4}$.

\emph{(ii) Moment bound.} Let $\Delta$ be Hermitian of rank at most two, with $\tau=\Tr\Delta^{2}$ and $\delta$ the product of its two nonzero eigenvalues $a,b$. Set $y_{R}=\Tr(R\Delta)$. Parseval gives $\sum_{R}y_{R}^{2}=d^{2}\tau$, and $|y_{R}|\le\|\Delta\|_{1}=|a|+|b|$ with $(|a|+|b|)^{2}=\tau+2|\delta|$. Hence $\frac1{d^{2}}\sum_{R}y_{R}^{4}\le\max_{R}y_{R}^{2}\cdot\frac1{d^{2}}\sum_{R}y_{R}^{2}\le\tau(\tau+2|\delta|)$. Both $\tau$ and $\delta$ are unitary invariants, so the same bound holds for every conjugate, as step (i) requires.

\emph{(iii) Pure-pair evaluation.} $\Delta_{\nu}=\ketbra{\Psi}{\Psi}-\ketbra{\phi_{\nu}}{\phi_{\nu}}$ is a difference of two rank-one operators, hence of rank at most two. In the basis with $\hat e_{1}=\ket{\Psi}$,
$\Delta_{\nu}=\bigl(\begin{smallmatrix}1-z&-\sqrt{z(c-z)}\\-\sqrt{z(c-z)}&-(c-z)\end{smallmatrix}\bigr)$,
whence $\tau_{\nu}=1+c_{\nu}^{2}-2z_{\nu}$ and $\delta_{\nu}=z_{\nu}-c_{\nu}\le0$; here $z_{\nu}\le c_{\nu}$ is Cauchy--Schwarz. Since $\tau_{\nu}+2|\delta_{\nu}|=(1+c_{\nu})^{2}-4z_{\nu}$, the bound of step (ii) is exactly $B^{\flat}_{\nu}$ of Eq.~\eqref{eq:Bflat}, and step (i) gives Eq.~\eqref{eq:transfer}.

For the certificate, exactness and degree-eight homogeneity give $\mathcal F_{*}(\nu)^{1/4}=c_{\nu}\,\mathcal F_{\mathrm{CB}}(\hat\nu)^{1/4}$. The upper side of Eq.~\eqref{eq:transfer} and the a priori bound $\mathcal F_{*}\le1$ give Eq.~\eqref{eq:Unu}. Taking the infimum over $\mathcal C$ gives Eq.~\eqref{eq:bracket}. \hfill$\qed$

\section{Stabilizer purity of dyadic-staircase spectra}
\label{app:shell}

Here, we prove Theorem~\ref{thm:shellpurity} and Theorem~\ref{thm:ceiling}. Throughout, $m_{k}:=|D_{k}|$ denotes the shell sizes ($m_{0}=1$, $m_{k}=2^{k-1}$ for $k\ge1$), shell indices run over $0\le k\le n$, and $\boldsymbol\nu$ is a dyadic-staircase spectrum with probabilities $\lambda_{i}=\nu_{i}^{2}=w_{k}/m_{k}$ for $i\in D_{k}$; normalization is not assumed, and every statement is degree-eight homogeneous.
We write $\mathrm{sh}(a)$ for the shell of a nonzero label $a\in\Ftwo^{n}$, that is, the unique $k\ge1$ with $a\in D_{k}$, which is one plus the position of the highest nonzero bit of $a$, and we use repeatedly that $\sum_{k}1/m_{k}\le3$ and $\sum_{k>r}1/m_{k}\le1/m_{r}$ for $r\ge1$.

\subsection{Channel decomposition and pivot formulas}

By the diagonal sector values of Eq.~\eqref{eq:diagsectors} and the equation following it, the CB value of any diagonal spectrum splits into three nonnegative Pauli channels,
\begin{equation}
   \mathcal F_{\mathrm{CB}}(\boldsymbol\nu)
   =P_{2}^{2}+3\sum_{\delta\ne0}\mathsf C_{\delta}^{2}+96\sum_{\dim L=2}s_{L}^{2},
   \label{eq:channels}
\end{equation}
where $P_{2}:=\sum_{i}\lambda_{i}^{2}$, $\mathsf C_{\delta}:= \sum_{x}\lambda_{x}\lambda_{x\oplus\delta}$, and $s_{L}:=\sum_{F\parallel L}\prod_{x\in F}\nu_{x}$ is the \emph{plane sum}, which runs over the $d/4$ planes $F$ of direction $L$ and multiplies the four amplitudes of each; in the notation of Appendix~\ref{app:sectors}, $\mathsf C_{\delta}=d\,w_{\delta}$ and $s_{L}=\tfrac d4 w_{L}$.
The correlation $\mathsf C_{\delta}$ sums over the pairs of labels $\{x,x\oplus\delta\}$ that differ by the nonzero label $\delta$, which we call the direction of the pair, and every direction lies in one shell $r=\mathrm{sh}(\delta)\ge1$.
The plane sum $s_{L}$ is attached to a two-dimensional subspace $L\le\Ftwo^{n}$, which has a unique basis $\{g_{1},g_{2}\}$ whose highest nonzero bits lie in distinct shells $a:=\mathrm{sh}(g_{1})>b:=\mathrm{sh}(g_{2})\ge1$ and in which $g_{1}$ vanishes at the highest bit of $g_{2}$; we call $(a,b)$ the \emph{pivot pair} of $L$.
Counting the bits of $g_{1}$ and $g_{2}$ that remain free, the set $\mathsf L_{ab}$ of two-dimensional subspaces with pivot pair $(a,b)$ has exactly $m_{a}m_{b}/2$ elements.
For staircase spectra all three channels depend only on these shell data.

\begin{proposition}[Pivot formulas]
  \label{prop:pivot}
  For a staircase spectrum, the correlation $\mathsf C_{\delta}$ depends on the direction $\delta$ only through its shell $r=\mathrm{sh}(\delta)$,
  \begin{equation}
   \mathsf C_{\delta}=\mathsf C_{r}
   :=\frac{2w_{r}}{m_{r}}\sum_{j<r}w_{j}
   +\sum_{k>r}\frac{w_{k}^{2}}{m_{k}},
   \qquad \mathrm{sh}(\delta)=r,
   \label{eq:linepivot}
\end{equation}
and the plane sum $s_{L}$ depends on $L$ only through its pivot pair, $s_{L}=\mathsf B_{ab}$ for $L\in\mathsf L_{ab}$, with
\begin{align}
   \mathsf B_{ab}
   :={}&\frac14\sum_{k>a}\frac{w_{k}^{2}}{m_{k}}
   +\frac{w_{a}}{2m_{a}}\sum_{b<t<a}w_{t}\notag\\
   &+\frac{w_{a}}{m_{a}}\sqrt{\frac{w_{b}}{m_{b}}}\sum_{j<b}\sqrt{m_{j}w_{j}} .
   \label{eq:planepivot}
\end{align}
\end{proposition}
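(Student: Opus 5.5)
The plan is to compute both channels directly from the definitions $\mathsf C_{\delta}=\sum_{x}\lambda_{x}\lambda_{x\oplus\delta}$ and $s_{L}=\sum_{F\parallel L}\prod_{x\in F}\nu_{x}$. On a staircase, $\lambda_{x}=w_{k}/m_{k}$ depends only on $\mathrm{sh}(x)$, where we set $\mathrm{sh}(0)=0$ and so $D_{0}=\{0\}$. The key combinatorial input is the behaviour of the shell under XOR. Let $r=\mathrm{sh}(\delta)$, so that $r-1$ is the position of the highest bit of $\delta$. Then $\mathrm{sh}(x\oplus\delta)=\mathrm{sh}(x)$ whenever $\mathrm{sh}(x)>r$, and $\mathrm{sh}(x\oplus\delta)=r$ whenever $\mathrm{sh}(x)<r$. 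When $\mathrm{sh}(x)=r$, the highest bit cancels and $x\oplus\delta$ lies in a lower shell. I will state this as a short lemma, since both formulas rest on it.

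For $\mathsf C_{\delta}$ I split the sum over $x$ by $k=\mathrm{sh}(x)$ and use the lemma in each range. The shells $k>r$ contribute $m_{k}(w_{k}/m_{k})^{2}=w_{k}^{2}/m_{k}$ each. The shells $j<r$ contribute $m_{j}\,(w_{j}/m_{j})(w_{r}/m_{r})=w_{j}w_{r}/m_{r}$ each. The terms with $x\in D_{r}$ are exactly the partners of the $j<r$ terms under the involution $x\mapsto x\oplus\delta$, because $x\oplus\delta$ then runs over all labels below $2^{r-1}$. They therefore contribute the same amount again, which produces the factor $2$ in Eq.~\eqref{eq:linepivot}. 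Since only $r$ entered, $\mathsf C_{\delta}=\mathsf C_{r}$.

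For the plane sum I use the pivot basis $\{g_{1},g_{2}\}$ of $L$, with $a=\mathrm{sh}(g_{1})>b=\mathrm{sh}(g_{2})$ and $g_{1}$ vanishing at the top bit of $g_{2}$. Each plane $F=x+L$ has four points, and I classify the planes by the shell $t$ of the largest point in $F$. I apply the lemma twice, first to the pair $\{x,x\oplus g_{1}\}$ and then to the pair $\{x,x\oplus g_{2}\}$, and this gives three cases. (1) If $t>a$, all four points share shell $t$, so the product is $(w_{t}/m_{t})^{2}$. Shell $t$ contains $m_{t}/4$ planes, which gives $\tfrac14 w_{t}^{2}/m_{t}$. (2) If $b<t<a$ after translating by $g_{1}$, the points sit as two in shell $a$ and two in shell $t$. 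Here I need to count how many planes of direction $L$ have this shell pattern: each point $y$ of shell $t$ with $t\ne a$ lies in exactly one plane, and within that plane its $g_{2}$-partner has the same shell. There are therefore $m_{t}/2$ planes, and each contributes $(w_{a}/m_{a})(w_{t}/m_{t})$, giving $\tfrac{w_{a}}{2m_{a}}w_{t}$. (3) If the low pair lies below $b$, there is one point in shell $j<b$, its $g_{2}$-partner in shell $b$, and two points in shell $a$. This gives $m_{j}$ planes, each with product $\tfrac{w_{a}}{m_{a}}\sqrt{\tfrac{w_{j}}{m_{j}}}\sqrt{\tfrac{w_{b}}{m_{b}}}$, and the sum over these is the third term of Eq.~\eqref{eq:planepivot}, using $m_{j}\sqrt{w_{j}/m_{j}}=\sqrt{m_{j}w_{j}}$.

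The main obstacle is the case analysis for $s_{L}$. I have to make sure the three cases exhaust all $d/4$ planes, with the count check $\sum_{t>a}m_{t}/4+\sum_{b<t<a}m_{t}/2+\sum_{j<b}m_{j}=d/4$ holding via $m$-identities. I also have to use the condition that $g_{1}$ vanishes at the top bit of $g_{2}$, which is what guarantees that adding $g_{2}$ never moves a point between shells $a$ and $b$ in unexpected ways. I will check the counting first on small cases, such as $n=3$ with $(a,b)=(3,1)$, before writing the general argument. The final statement that $s_{L}$ depends only on $(a,b)$ then follows because none of the counts used any data of $L$ beyond the pivot pair.
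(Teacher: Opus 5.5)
Your proposal is correct and follows essentially the paper's argument. The line channel is split by $\mathrm{sh}(x)$ exactly as in the paper, and your three plane cases are the paper's split into planes with nonzero bits above position $a-1$ and planes with two points in shell $a$ plus a low pair $\{y,y\oplus g_2\}$; the paper handles the latter by a two-to-one parametrization by $y$, while you count planes per shell pattern directly. Two small remarks: the condition that $g_1$ vanishes at the top bit of $g_2$ is never actually used (only $\mathrm{sh}(g_1)=a>\mathrm{sh}(g_2)=b$ enters; that condition merely fixes the basis uniquely), and your cases (2)--(3) are classified by the shell of the low pair, not of the largest point as you announce.
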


\begin{proof}
\emph{Lines.} Classify the pairs in $\mathsf C_{\delta}=\sum_{x}\lambda_{x}\lambda_{x\oplus\delta}$ by $\mathrm{sh}(x)$. For $\mathrm{sh}(x)=k>r$ the top bit of $x$ is untouched, so $x\oplus\delta\in D_{k}$: these $m_{k}$ terms contribute $m_{k}(w_{k}/m_{k})^{2}$. For $\mathrm{sh}(x)=j<r$ the bit $r{-}1$ is switched on, so $x\oplus\delta\in D_{r}$: contribution $m_{j}(w_{j}/m_{j})(w_{r}/m_{r})$. For $\mathrm{sh}(x)=r$ the bit $r{-}1$ is switched off, and as $x$ runs over $D_{r}$ the partner $x\oplus\delta$ runs bijectively over $[0,m_{r})$: contribution $(w_{r}/m_{r})\sum_{j<r}m_{j}(w_{j}/m_{j})$, equal to the previous group. Summing proves Eq.~\eqref{eq:linepivot}; no lower bit of $\delta$ entered.

\emph{Planes.} Write $s_{L}=\tfrac14\sum_{x}\prod_{y\in x\oplus L}\nu_{y}$ over the planes $\{x,x{\oplus}g_{1},x{\oplus}g_{2},x{\oplus}g_{1}{\oplus}g_{2}\}$. The bits above position $a{-}1$ are common to all four points of a plane. If they are nonzero, all four points share one shell $s>a$, giving $\tfrac14\sum_{s>a}m_{s}(w_{s}/m_{s})^{2}$: the first term of Eq.~\eqref{eq:planepivot}. Otherwise the plane splits into the two points containing $g_{1}$'s pivot bit---both in shell $a$, since no higher bit is set---and a low pair $\{y,y\oplus g_{2}\}$ with $y\in[0,m_{a})$; each such plane is determined by its low pair, a two-to-one parametrization by $y$. Hence the remaining contribution is
\begin{equation*}
 \frac{w_{a}}{2m_{a}}\sum_{y<m_{a}}\nu_{y}\nu_{y\oplus g_{2}},
\end{equation*}
and the inner sum is a line correlation below shell $a$, classified as before: pairs with $\mathrm{sh}(y)=t\in(b,a)$ stay in shell $t$ and give $\sum_{b<t<a}w_{t}$, while pairs joining a shell $j<b$ to shell $b$ give $2\sqrt{w_{b}/m_{b}}\sum_{j<b}m_{j}\sqrt{w_{j}/m_{j}}$. This is Eq.~\eqref{eq:planepivot}; again no lower bit of $g_{1},g_{2}$ entered, proving constancy on the pivot class.
\end{proof}

\subsection{Proof of Theorem~\ref{thm:shellpurity}}

\emph{Lower bound.} By Eq.~\eqref{eq:Phimatel} the compressed operator of a diagonal spectrum is diagonal in the $\Phi$-basis with nonnegative entries $S(u,v)/d$, so $\mathcal F_{\mathrm{CB}}=\sum_{u,v}S(u,v)^{2}$ with $S(u,v):=\sum_{x}\nu_{x}\nu_{x\oplus u}\nu_{x\oplus v}\nu_{x\oplus u\oplus v}$. 
For $u,v\in[0,m_{s})$ and $x\in D_{s}$, the plane $\{x,x\oplus u,x\oplus v,x\oplus u\oplus v\}$ lies entirely in $D_{s}$, since $u$ and $v$ only change bits below the leading bit of $x$, so $S(u,v)\ge\sum_{s:\,m_{s}>\max(u,v)}w_{s}^{2}/m_{s}$; since $(\sum_{s}a_{s})^{2}\ge\sum_{s}a_{s}^{2}$ for nonnegative $a_{s}$,
\begin{equation*}
 \mathcal F_{\mathrm{CB}}\ \ge\ \sum_{u,v}\ \sum_{s:\,m_{s}>\max(u,v)}\frac{w_{s}^{4}}{m_{s}^{2}}
 =\sum_{s}\frac{w_{s}^{4}}{m_{s}^{2}}\cdot m_{s}^{2}
 =\sum_{s}w_{s}^{4}.
\end{equation*}

\emph{Upper bound.} Write $\sigma:=\sum_{k}w_{k}^{4}$ and bound the three channels of Eq.~\eqref{eq:channels} through Proposition~\ref{prop:pivot}. All estimates below use only the Cauchy--Schwarz inequality, the arithmetic--geometric mean inequality on fourth powers, and geometric summability of the shell sizes.

\emph{(i) Purity channel.} By Cauchy--Schwarz,
$P_{2}^{2}=\bigl(\sum_{k}w_{k}^{2}/m_{k}\bigr)^{2}\le\bigl(\sum_{k}1/m_{k}^{2}\bigr)\sigma=\tfrac73\sigma$, since $\sum_{k}m_{k}^{-2}=1+1+\tfrac14+\tfrac1{16}+\dots=\tfrac73$.

\emph{(ii) Line channel.} With $A_{r}:=\sum_{j<r}w_{j}$, $X_{r}:=2w_{r}A_{r}/m_{r}$, and $T_{r}:=\sum_{k>r}w_{k}^{2}/m_{k}$, the channel is $\|X+T\|^{2}$ in the weighted norm $\|f\|^{2}:=3\sum_{r}m_{r}f_{r}^{2}$, and Minkowski gives $\|X+T\|\le\|X\|+\|T\|$. For $T$, Cauchy--Schwarz with weights $1/m_{k}$ and $\sum_{k>r}1/m_{k}=1/m_{r}$ give $m_{r}T_{r}^{2}\le\sum_{k>r}w_{k}^{4}/m_{k}$, and summing over $r\ge1$ counts each $k$ at most $(k-1)/m_{k}\le\tfrac12$ times: $\|T\|^{2}\le\tfrac32\sigma$. For $X$, $A_{r}^{2}\le r\sum_{j<r}w_{j}^{2}$ and $w_{r}^{2}w_{j}^{2}\le\tfrac12(w_{r}^{4}+w_{j}^{4})$ give
\begin{equation*}
 \sum_{r}\frac{w_{r}^{2}A_{r}^{2}}{m_{r}}
 \le\frac12\sum_{r}\frac{r^{2}}{m_{r}}w_{r}^{4}
 +\frac12\sum_{j}w_{j}^{4}\sum_{r>j}\frac{r}{m_{r}}
 \le\Bigl(\frac98+2\Bigr)\sigma,
\end{equation*}
using $\sup_{r}r^{2}/m_{r}=\tfrac94$ and $\sum_{r\ge1}r/m_{r}=4$, so $\|X\|^{2}\le12\cdot\tfrac{25}8\sigma=\tfrac{75}2\sigma$. Since $\sqrt{75/2}+\sqrt{3/2}=6\sqrt{3/2}$, the line channel is at most $54\sigma$.

\emph{(iii) Plane channel.} With the class multiplicities, the channel is $\|\mathsf B_{1}+\mathsf B_{2}+\mathsf B_{3}\|_{\mathrm{pl}}^{2}$ for the three terms of Eq.~\eqref{eq:planepivot}, in the weighted norm $\|X\|_{\mathrm{pl}}^{2}:=48\sum_{a>b\ge1}m_{a}m_{b}X_{ab}^{2}$; Minkowski bounds it by $\bigl(\|\mathsf B_{1}\|_{\mathrm{pl}}+\|\mathsf B_{2}\|_{\mathrm{pl}}+\|\mathsf B_{3}\|_{\mathrm{pl}}\bigr)^{2}$. For $\mathsf B_{1}=\tfrac14T_{a}$: $\sum_{b<a}m_{b}\le m_{a}$ and, by Cauchy--Schwarz with the geometric weights $m_{a}/m_{k}=2^{a-k}$,
$(m_{a}T_{a})^{2}\le\sum_{k>a}2^{a-k}w_{k}^{4}$, whose $a$-sum is at most $\sigma$: $\|\mathsf B_{1}\|_{\mathrm{pl}}^{2}\le3\sigma$. For $\mathsf B_{2}$: Cauchy--Schwarz gives $\bigl(\sum_{b<t<a}w_{t}\bigr)^{2}\le(i-1)\sum_{b<t<a}w_{t}^{2}$ with $i:=a-b$, so
\begin{equation*}
 \|\mathsf B_{2}\|_{\mathrm{pl}}^{2}
 \le12\sum_{a>b}2^{-i}(i-1)\!\!\sum_{b<t<a}\!\!\frac{w_{a}^{4}+w_{t}^{4}}2
 \le6\bigl(3+3\bigr)\sigma=36\sigma,
\end{equation*}
since $\sum_{i\ge1}(i-1)^{2}2^{-i}=3$ both for the $w_{a}^{4}$ count and, after resumming the double gap, for the $w_{t}^{4}$ count. For $\mathsf B_{3}$: splitting $\sqrt{m_{j}w_{j}}=\bigl[m_{j}2^{(b-j)/2}\bigr]^{1/2}\bigl[w_{j}2^{-(b-j)/2}\bigr]^{1/2}$ in Cauchy--Schwarz and using the exact edge-sensitive sum $\sum_{j<b}m_{j}2^{(b-j)/2}=m_{b}\bigl(1+\sqrt2-2^{-(b-1)/2}\bigr)\le(1+\sqrt2)\,m_{b}$,
\begin{equation*}
 \Bigl(\sum_{j<b}\sqrt{m_{j}w_{j}}\Bigr)^{2}
 \le(1+\sqrt2)\,m_{b}\sum_{j<b}w_{j}\,2^{-(b-j)/2},
\end{equation*}
so $\|\mathsf B_{3}\|_{\mathrm{pl}}^{2}\le48(1+\sqrt2)\sum_{a>b}\sum_{j<b}2^{-(a-b)}2^{-(b-j)/2}\,w_{a}^{2}w_{b}w_{j}$. By the arithmetic--geometric mean inequality, $w_{a}^{2}w_{b}w_{j}\le\tfrac12w_{a}^{4}+\tfrac14w_{b}^{4}+\tfrac14w_{j}^{4}$, and each of the three resulting quartic sums carries the doubly geometric kernel, whose total over the two gap variables is $(\sum_{i\ge1}2^{-i})(\sum_{l\ge1}2^{-l/2})=1+\sqrt2$: hence $\|\mathsf B_{3}\|_{\mathrm{pl}}^{2}\le48(1+\sqrt2)^{2}\sigma$. The plane channel is therefore at most $\bigl[\sqrt3+6+4\sqrt3(1+\sqrt2)\bigr]^{2}\sigma$.

Collecting, $\mathcal F_{\mathrm{CB}}\le\bigl[\tfrac73+54+\bigl(\sqrt3+6+4\sqrt3(1+\sqrt2)\bigr)^{2}\bigr]\sigma=C_{\mathrm{sh}}\,\sigma$. \hfill$\qed$

\subsection{Proof of Theorem~\ref{thm:ceiling}}

Let $\boldsymbol\mu$ be sorted and normalized, with probabilities $\lambda_{i}$ ($0$-indexed). Define the \emph{lower staircase} $\check\lambda_{i}:=\lambda_{2^{k}-1}$ for $i\in D_{k}$: entrywise $\check\lambda\le\lambda$, and its mass obeys the condensation bound
\begin{equation}
 \sum_{k}\check w_{k}\ \ge\ \tfrac12,
 \qquad \check w_{k}:=m_{k}\lambda_{2^{k}-1},
 \label{eq:condense}
\end{equation}
because $\sum_{i\in D_{k+1}}\lambda_{i}\le m_{k+1}\lambda_{2^{k}-1}=2\check w_{k}$ while the two lowest shells are copied verbatim. By the CB monotonicity established in Sec.~\ref{ssec:partners}, the lower bound of Theorem~\ref{thm:shellpurity}, and the power-mean inequality over the $n+1$ shells,
\begin{equation*}
 \mathcal F_{\mathrm{CB}}(\boldsymbol\mu)
 \ \ge\ \sum_{k}\check w_{k}^{4}
 \ \ge\ \frac{\bigl(\sum_{k}\check w_{k}\bigr)^{4}}{(n+1)^{3}}
 \ \ge\ \frac1{16\,(n+1)^{3}} .
\end{equation*}
For the entropy form, truncate first: with $s:=\lceil2S_{1}\rceil+1$ and $R:=2^{s}$, sortedness gives $\lambda_{i}\le1/(i+1)$, hence $\log_{2}(1/\lambda_{i})\ge s$ for $i\ge R-1$ and $\sum_{i\ge R-1}\lambda_{i}\le S_{1}/s<\tfrac12$: the rank-$R$ head, an entrywise minorant, carries mass greater than $\tfrac12$ and occupies the shells $0,\dots,s$. Applying the display above to the head, with its condensation losing another factor two and $s+1\le2S_{1}+3$ shells,
\begin{equation*}
 \mathcal F_{\mathrm{CB}}(\boldsymbol\mu)\ \ge\ \frac{(1/4)^{4}}{(2S_{1}+3)^{3}}
 =\frac1{256\,(2S_{1}+3)^{3}} . \tag*{\qed}
\end{equation*}

\subsection{Binomial spectra of imperfect dimers}
\label{app:dimer}

This subsection proves Theorem~\ref{thm:dimer}. Throughout, $q=1-p\in(0,\frac12)$ is fixed and constants depend only on $p$. The Schmidt probabilities of $\ket{\Psi_{m,p}}$ are the levels
\begin{equation}
 \lambda^{(j)}=p^{m-j}q^{j},
 \qquad\text{multiplicity }\binom mj,
 \qquad j=0,\dots,m,
 \label{eq:binomlevels}
\end{equation}
already sorted, since $q<p$. Let $b_j=\binom mj\,p^{m-j}q^{j}$ be the total mass of level $j$, the $\mathrm{Bin}(m,q)$ weight, and let $R_j=\sum_{\ell\le j}\binom m\ell$ be the rank at the end of level $j$.

\emph{Shell-max bound.} Fix $k\ge1$ and let $j$ be the level containing the label $2^{k-1}$, the top of shell $k$, so that $2^{k-1}<R_j$. Then
\begin{equation}
 \hat w_k
 =2^{k-1}\lambda^{(j)}
 \le R_j\lambda^{(j)}
 =\sum_{\ell\le j}b_\ell\Bigl(\frac qp\Bigr)^{j-\ell}
 \le\frac{p}{p-q}\,\max_\ell b_\ell,
 \label{eq:dimermax}
\end{equation}
and $\hat w_0=b_0$. A standard Stirling estimate gives $\max_\ell b_\ell\le(pqm)^{-1/2}$, so every shell obeys $\hat w_k=O_p(m^{-1/2})$. With the universal bound $\sum_k\hat w_k\le2$ of Sec.~\ref{ssec:value},
\begin{equation}
 \sum_k\hat w_k^{4}
 \le2\bigl(\max_k\hat w_k\bigr)^{3}
 =O_p(m^{-3/2}).
 \label{eq:dimerupper}
\end{equation}

\emph{Shell count of the bulk.} The level of a spectral sample is $\mathrm{Bin}(m,q)$, with variance $pqm$. By Chebyshev, the window $j_1\le j\le j_2$ with $|j-qm|\le2\sqrt{pqm}$ carries mass at least $\frac34$. For consecutive levels, $R_j/R_{j-1}=1+\binom mj/R_{j-1}\le1+\binom mj/\binom m{j-1}=1+\frac{m-j+1}{j}$, which is bounded by $1+3/q$ on the window once $m\ge m_p$. The window therefore spans at most
\begin{equation}
 \log_2\frac{R_{j_2}}{R_{j_1-1}}
 \le\bigl(4\sqrt{pqm}+2\bigr)\log_2\Bigl(1+\frac3q\Bigr)
 =O_p(\sqrt m)
 \label{eq:dimerspan}
\end{equation}
in logarithmic rank, and hence meets $K=O_p(\sqrt m)$ dyadic shells $\mathcal K$. The shell-max masses dominate the true shell masses, so $\sum_{k\in\mathcal K}\hat w_k\ge\frac34$, and the power-mean inequality gives
\begin{equation}
 \sum_k\hat w_k^{4}
 \ \ge\ \sum_{k\in\mathcal K}\hat w_k^{4}
 \ \ge\ \frac{(3/4)^{4}}{K^{3}}
 \ =\ \Omega_p(m^{-3/2}).
 \label{eq:dimerlower}
\end{equation}

\emph{Assembly.} Equations~\eqref{eq:dimerupper} and~\eqref{eq:dimerlower} give $\sum_k\hat w_k^{4}=\Theta_p(m^{-3/2})$, and Eq.~\eqref{eq:shellparticipation} brackets both $\mathcal M_{\mathrm{NL}}$ and $\mathcal M_{\mathrm{CB}}$ between $\frac32\log_2m-O_p(1)$ and $\frac32\log_2m+O_p(1)$. \hfill$\qed$

\section{Numerical evaluation of the nonlocal SRE of the Ising chain}
\label{app:numerics}

In this appendix, we describe how we obtain the single-particle entanglement spectra used to evaluate the quantities shown in Fig.~\ref{fig:ising}.
The ground state of the open chain of Eq.~\eqref{eq:tfim} is a fermionic Gaussian state, so its reduced density matrix is determined by the mode occupations $p_j$~\cite{PeschelEisler2009}, from which the Schmidt probabilities follow.
We describe three approaches: direct diagonalization for $N\le10^{4}$ at every $g$, a closed-form correlation kernel at criticality used up to $N=3\times10^{6}$, and a fitted entanglement-energy ladder that extends the calculations to larger critical chains. The linear-algebra routines are those of SciPy~\cite{Virtanen2020SciPy}.

\subsection{Single-particle entanglement spectrum}
\label{app:numerics-modes}

\textit{Exact spectra.---}
The Jordan--Wigner transformation~\cite{LiebSchultzMattis1961,Pfeuty1970} maps the open chain of Eq.~\eqref{eq:tfim} to a quadratic form in $2N$ Majorana operators $\gamma_{1},\dots,\gamma_{2N}$ with $\{\gamma_{m},\gamma_{n}\}=2\delta_{mn}$,
\begin{equation}
 H=i\sum_{m<n}A_{mn}\gamma_{m}\gamma_{n},
 \qquad
 A_{2i-1,2i}=g,\quad A_{2i,2i+1}=1,
 \label{eq:majoranaH}
\end{equation}
Up to a diagonal unitary transformation, the Hermitian matrix $iA$ is the real symmetric tridiagonal matrix $T$, which we diagonalize with a tridiagonal eigensolver.
The ground state occupies the single-particle levels of negative energy, and its covariance matrix $\Gamma_{mn}=\frac{i}{2}\langle[\gamma_{m},\gamma_{n}]\rangle$ follows from the spectral decomposition of $T$~\cite{PeschelEisler2009}.
The eigenvalues of the block of $i\Gamma$ on the first $2\ell$ Majorana operators, $\ell=N/2$, come in pairs $\pm\nu_{j}$ with $0\le\nu_{j}\le1$, and the occupations of the modes of the half chain are $p_{j}=(1+\nu_{j})/2$~\cite{PeschelEisler2009}, from which $\mathcal M_{\mathrm{FNL}}$ of Eq.~\eqref{eq:mfnl} follows directly.
The computational cost is dominated by forming the $N\times N$ block of $\Gamma$ from the dense eigenvectors of $T$ and by diagonalizing it, both $O(N^{3})$ operations.

\textit{Critical spectra.---}
At $g=1$ we use instead the explicit form of the correlation matrix of the open chain.
In the ordering $(\gamma_{1},\gamma_{3},\dots;\gamma_{2},\gamma_{4},\dots)$ the matrix $A$ is block off-diagonal with the $N\times N$ block $M=g\mathbb 1-S$, where $S$ is the shift matrix, and the correlations between the two species of Majorana operators are $G=VU^{\sf T}$, with $M=U\Sigma V^{\sf T}$ the singular value decomposition of $M$~\cite{LiebSchultzMattis1961,PeschelEisler2009}. The occupations of the half chain are $p_{j}=(1+\nu_{j})/2$ with $\nu_{j}$ the singular values of the block $G_{\ell}=(G_{mn})_{1\le
  m,n\le\ell}$.
For $g=1$ the singular value decomposition of $M$ is known exactly for every $N$~\cite{Pfeuty1970}: $MM^{\sf T}$ is the tridiagonal matrix with entries $(-1,2,-1)$ and corner entry $(MM^{\sf T})_{NN}=1$, so the singular values are $2\sin(\theta_{k}/2)$ with $\theta_{k}=(2k-1)\pi/(2N+1)$, and the sum over $k$ in $G=VU^{\sf T}$ can be performed in closed form,
  \begin{equation}
  \begin{gathered}
   G_{mn}=\frac{2}{2N+1}\Bigl[\tau\bigl(m+n-\tfrac12\bigr)+\tau\bigl(n-m+\tfrac12\bigr)\Bigr],\\
   \tau(x)=\frac{\sin^{2}(N\phi_{x})}{\sin\phi_{x}},
   \qquad
   \phi_{x}=\frac{\pi x}{2N+1}.
  \end{gathered}
   \label{eq:kernel}
\end{equation}
The first term depends on $m+n$ and the second on $n-m$, so $G_{\ell}$ is the sum of a Hankel and a Toeplitz matrix, the structure that an open boundary imposes on the correlation matrices of free-fermion chains~\cite{FagottiCalabrese2011,Deift2011Toeplitz}, and its action on a vector consists of two convolutions, which we evaluate by fast Fourier transforms in $O(N\log N)$ operations without storing any matrix.
We compute the largest eigenvalues $\eta_{j}=1-\nu_{j}^{2}$ of $\mathbb 1-G_{\ell}^{\sf T}G_{\ell}$ with the implicitly restarted Lanczos method of ARPACK and retain the modes with $\eta_{j}\ge10^{-12}$. The others have $p_{j}=1$ to double precision and do not contribute to the considered quantities. The number of retained modes grows only logarithmically with $N$, to less than a few tens for $N=10^{7}$.
We use this approach for $N \le 3\times10^{6}$ in Fig.~\ref{fig:ising}(c).

\textit{The entanglement-energy ladder.---}
The occupations $p_{j}$ define the single-particle entanglement energies $\varepsilon_{j}=\ln[(1+\nu_{j})/(1-\nu_{j})]$, so that $p_{j}=\frac12[1+\tanh(\varepsilon_{j}/2)]$, and the Schmidt
  probabilities are the products over the occupation patterns $\boldsymbol n\in\{0,1\}^{m}$ of the $m$ retained modes,
  \begin{equation}
   \lambda_{\boldsymbol n}=\prod_{j}p_{j}^{1-n_{j}}(1-p_{j})^{n_{j}}
   =\lambda_{\boldsymbol 0}\,e^{-\sum_{j}n_{j}\varepsilon_{j}},
   \qquad
   \lambda_{\boldsymbol 0}=\prod_{j}p_{j},
   \label{eq:productspectrum}
  \end{equation}
  so that the $\varepsilon_{j}$ are the single-particle levels of the entanglement Hamiltonian $-\ln\rho_{A}$, which is quadratic in the fermions of the half chain~\cite{PeschelEisler2009}.
  For the critical chain these levels are approximately equidistant, with a spacing proportional to $1/\ln N$~\cite{PeschelKaulkeLegeza1999,ChungPeschel2001,PeschelEisler2009}.
  We define $b\equiv\pi^{2}/\varepsilon_{1}$, for which the exact chains give $b-\ln N=2.83$ at $N=10^{3}$, increasing to $2.88$ at $N=1.2\times10^{7}$.
  The equidistant ladder $\varepsilon_{j}=\pi^{2}(2j-1)/b$ is not accurate enough for our purpose: at $N=10^{6}$ the exact levels deviate from it by $1\%$ at $j=2$ and by $15\%$ at $j=20$,
  which shifts $\mathcal M_{\mathrm{FNL}}$ by $0.04$.

We therefore write the levels as the known ladder times a correction that we expand in the small parameter $1/b$, keeping two terms,
  \begin{equation}
   \varepsilon_{j}
   =\frac{\pi^{2}(2j-1)}{b}
   \Bigl[R(x_{j})+\frac{S(x_{j})}{b}\Bigr],
   \qquad
   x_{j}=\frac{2j-1}{b},
   \label{eq:ladder}
  \end{equation}
with the assumption that $R$ and $S$ depend on $j$ and $N$ only through $x_{j}$.
We determine $R$ and $S$ from the exact ladders at $N=10^{3},3\times10^{3},\dots,10^{6}$ by interpolating the scaled energies $\varepsilon_{j}b/[\pi^{2}(2j-1)]$ onto a grid of $64$ points in $0.055\le x\le2.4$ and fitting, at every grid point, the seven values by $R(x)+S(x)/b$, with $R(0)=1$ and $S(0)=0$ fixed by the definition of $b$.
We test this assumption finding that it reproduces the exact $\mathcal M_{\mathrm{FNL}}$ and $\mathcal M_{\mathrm{CB}}$ from the corner matrix approach at $N=3\times10^{6}$ and $1.2\times10^{7}$ to $2\times10^{-4}$.
Modes with $\varepsilon_{j}>40$ have $p_{j}=1$ to double precision and are dropped.
Beyond these sizes, Eq.~\eqref{eq:ladder} is an extrapolation, which we use for $b$ from $18.5$ to $68$ with the size label $N\simeq e^{b-2.88}$, up to $N\simeq10^{28}$.
Hence, the points beyond $N=1.2\times10^{7}$ in Fig.~\ref{fig:ising}(c) are exact evaluations on this modeled spectrum, whose accuracy relies only on the above test.


%

\end{document}